\documentclass[aps,prx,twocolumn,superscriptaddress,nofootinbib]{revtex4}
\pdfoutput=1

\usepackage{footmisc}
\usepackage{hyperref}
\usepackage{mathtools}
\usepackage{amsmath,amssymb,graphicx,xcolor,braket,hyphenat,makeidx,amsthm}
\usepackage{bbm}%,amsart}
\usepackage{relsize}
\usepackage{comment}
\usepackage{appendix}
\usepackage{thmtools,thm-restate}
%\usepackage{proof-at-the-end}
%\usepackage{ltxgrid}

%\usepackage[top=3.4cm, bottom=3.4cm, left=3.4cm, right=3.4cm]{geometry}

%\def\changemargin#1#2{\list{}{\rightmargin#2\leftmargin#1}\item[]}
%\let\endchangemargin=\endlist 

% The follwoing is so that we can repeat Corollary 1 in appendix without giving it a new number.
%\makeatletter
%\newtheorem*{rep@theorem}{\rep@title}
%\newcommand{\newreptheorem}[2]{%
%	\newenvironment{rep#1}[1]{%
%		\def\rep@title{#2 \ref{##1}}%
%		\begin{rep@theorem}}%
%		{\end{rep@theorem}}}
%\makeatother

\theoremstyle{theorem}

\newtheorem{definition}{Definition}%[section]
\newtheorem{theorem}{Theorem}
\newtheorem{theorem*}{Theorem}
\newtheorem{corollary}{Corollary}
\newtheorem{corollary*}{Corollary}
%\newreptheorem{corollary}{Corollary} % This is so that we can repeat Corollary 1 in appendix without giving it a new number.
\newtheorem{lemma}{Lemma}
\newtheorem{remark}{Remark}

\usepackage{lipsum}

\newcommand\blfootnote[1]{%
	\begingroup
	\renewcommand\thefootnote{}\footnote{#1}%
	\addtocounter{footnote}{-1}%
	\endgroup
}

%Packages

\usepackage{dsfont,bbm}

%Colours

%commands
\newcommand{\ketbra}[2]{\ket{#1}\!\!\bra{#2}}

\newcommand{\tr}{\textup{tr}}
\newcommand{\be}{\begin{equation}}
\newcommand{\ee}{\end{equation}}
\newcommand{\nn}{{\mathbbm{N}}}
\newcommand{\nnp}{{\mathbbm{N}_{> 0}}}
\newcommand{\nno}{{\mathbbm{N}_{\geq 0}}}
\newcommand{\rrp}{{\mathbbm{R}_{> 0}}}

\newcommand{\hh}{{\mathbbm{H}}}
\newcommand{\rr}{{\mathbbm{R}}}
\newcommand{\cc}{{\mathbbm{C}}}
\newcommand{\zz}{{\mathbbm{Z}}}
\newcommand{\me}{\mathrm{e}}
\newcommand{\mi}{\mathrm{i}}
\newcommand{\id}{{\mathbbm{1}}}
\newcommand{\bo}{\mathcal{O}}
\newcommand{\lo}{o}

\newcommand{\Mspace}{\vspace{0.2cm}}
\newcommand{\MspaceFig}{\vspace{0.15cm}}

\def\ba#1\ea{\begin{align}#1\end{align}}

\newcommand{\doc}{\text{manuscript}}  % name of type of document (letter, manuscript, communication etc) Use "\doc~" with no spaces afterwards (if space needed, othothwerise used "\doc")
 %same but capital letter

\newcommand{\app}{\text{supplemental}}
\newcommand{\wso}{\text{Quasi-Ideal}}
\newcommand{\Wso}{\text{Quasi-Ideal}}

\newcommand{\cl}{\text{C}}
%\newcommand{\G}{\text{G}}
%\newcommand{\A}{\text{s}}

%\newcommand{\con}{\text{Con}}

%Comments

%Following use \usepackage[normalem]{ulem} for Strck out commands:

\newcommand\mpwST[1]{MW:{\let\helpcmd\sout\parhelp#1\par\relax\relax} }
\long\def\parhelp#1\par#2\relax{%
	\helpcmd{#1}\ifx\relax#2\else\par\parhelp#2\relax\fi%
}
%strke-through  with NO colour: \mpwS{hello} (see https://tex.stackexchange.com/questions/246971/more-robust-strike-through-cross-out)

%strke-through text with colour: \mpwSC{ Hellow world}

%----------------------OLD comments below------

\usepackage[normalem]{ulem}

%\newcommand{\mpwSCT}[1]{\textcolor{blue} {MW:\sout{ #1}}}

 %strke-through text with colour: \mpwSTM{ Hellow world}
 %strke-through maths with NO colour: \mpwSTM{\hat h}

 %strke-through maths with colour: \be \mpwSTM{\hat h} \ee

 %strke-through maths with NO colour: \be \mpwSTM{\hat h} \ee

\newcommand{\proj}[1]{\ketbra{#1}{#1}}

\newcommand*{\cC}{\mathcal{C}}
\newcommand*{\cI}{\mathcal{I}}
\newcommand*{\cM}{\mathcal{M}}
\newcommand*{\cN}{\mathcal{N}}
\newcommand*{\cP}{\mathcal{P}}
\newcommand*{\cT}{\mathcal{T}}

\newcommand{\tickmatrix}{\mathcal{T}}
\newcommand{\notickmatrix}{\mathcal{N}}
\newcommand{\conv}{\; * \;}
\newcommand{\onenorm}[1]{\left| \! \left| #1 \right| \! \right|_\Sigma}
\newcommand{\mat}[1]{\left[ #1 \right]}

\newcommand{\green}{\color{green}}
\newcommand*{\RR}[1]{{\green [RR: #1]}}

%\nofiles %(Remove this command if you want to include appendix in man text file)

\begin{document}
%\author{Mischa P. Woods$^1$ \and Ralph Silva$^2$ \and Gilles P\"utz$^1$ \and \and Sandra Stupar-Rankovic$^1$ \and Renato Renner$^1$}	
\title{Quantum clocks are more precise than classical ones 
%ALTERNATIVES: 
%Quantum-over-classical advantage of clocks \\
%Superiority of quantum over classical clocks \\
%Quantum-over-classical advantage for time measurements
}
%\affiliation{Institute for Theoretical Physics, ETH Zurich, Switzerland}
%\affiliation{Group of Applied Physics, University of Geneva, Switzerland}

%\date{$^1$ Institute for Theoretical Physics, ETH Zurich, Switzerland$\quad$\\
%	$^2$ Group of Applied Physics, University of Geneva, Switzerland}

\author{Mischa P. Woods$^*$}
\affiliation{Institute for Theoretical Physics, ETH Zurich, Switzerland}
\affiliation{Department of Computer Science, University College London, UK}
\author{Ralph Silva$^*$}
\affiliation{Institute for Theoretical Physics, ETH Zurich, Switzerland}
\affiliation{D\'epartement de Physique Appliqu\'ee, Universit\'e de Gen\`eve, Switzerland}
\author{Gilles P\"utz}
\affiliation{Institute for Theoretical Physics, ETH Zurich, Switzerland}
\author{Sandra Stupar}
\affiliation{Institute for Theoretical Physics, ETH Zurich, Switzerland}
\author{Renato Renner}
\affiliation{Institute for Theoretical Physics, ETH Zurich, Switzerland}

\begin{abstract}
  A \emph{clock} is, from an information-theoretic perspective, a system that emits information about time. One may therefore ask whether the theory of information imposes any constraints on the maximum precision of clocks. Here we show a quantum-over-classical advantage for clocks or, more precisely, the task of generating information about what time it is. The argument is based on information-theoretic considerations: we analyse how the precision of a clock scales with its size, measured in terms of the number of bits that could be stored in it. We find that a quantum clock can achieve a quadratically improved precision compared to a purely classical one of the same size.
  \iffalse 
\RR{Maybe we could add somewhere that the considerations are information-theoretic in the sense at they do not depend on physical constants. Indeed, a major difference to earlier work on clock accuracies is that our bounds are invariant under a rescaling of the energies (because they are invariant under any affine transformation of the time parameter). In particular, the precision bounds therefore do not \emph{directly} depend on the clock's mass.}
  \fi
\end{abstract}
\maketitle

\section{Introduction}\label{Introduction}

\blfootnote{$^*$M.W. and R.S. contributed equally to the results.} Timekeeping is one of the oldest ways in which humanity has organised its activities, dating back to ancient civilisations that observed the solar cycles. Eventually, we invented our own devices to mark the passage of time, and the advancements in these \emph{clocks} allowed for revolutionary capabilities such as maritime navigation, and enabled the industrial revolution. The best clocks today are very sophisticated and need a quantum description to understand how they work \cite{RevModPhys.87.637}. The next generation of quantum clocks will enable new applications, such as faster telecommunications, non-satellite based GPS systems, and also foster advances in fundamental physics, e.g., in the context of gravitational wave detection \cite{qrevolution}. 

%On the other hand, quantum theory is not compatible with clocks of ever increasing precision, so we expect there to be fundamental limits to their precisionT.

However, quantum theory suggests that there is a limit to the maximum precision of clocks. In contrast to position, momentum and energy, time cannot be made into an ``ideal observable", that is to say, one whose outcomes deterministically determine time without error \cite{pauli1, pauli2,PauliGeneralPrinciples,holevo2011probabilistic}. Furthermore, a clock must not only evolve with time, but also \textit{emit} information about its state to the outside world \cite{RaLiRe15}, like in the case of a ticking watch, or bell tower. It is thus vulnerable to the disturbance inherent to any quantum measurement \cite{FuchsPeres96}; as can be seen in the settings of autonomous quantum control~\cite{WSO16} and thermodynamics~\cite{Pauletal2017}. 
%In fact, the discussion of time itself as an observable in quantum theory is not new, and it is now well-established from the argument of Pauli and the ensuing discussion,\footnote{See \RS{Supplementary Information, part \ref{supp:Pauli}} for a overview of the discussion.} that a quantum system of finite size or energy cannot follow the passage of time in an ideal, continuous sense, because an ideal time operator does not exist. Furthermore, quantum theory adds another impediment to time-keeping via the information disturbance trade-off \cite{FuchsPeres96}. That is, the units of time that a clock produces are traditionally based on periodic processes, such as the transition between two energy levels of an atom \cite{SI2006}. However, not any system with a time-evolving state can serve as a clock. A necessary requirement is that the clock also \emph{emits} information about its state to the outside. According to quantum theory, this transfer of information about the clock's internal state will in general disturb the state. This principle has been formalised in \cite{RaLiRe15}, and two recent examples of such physical clocks are the Quasi-Ideal clock used for quantum control in \cite{WSO16}, and the thermodynamic clock in \cite{Pauletal2017}.

So we currently find ourselves at an interesting juncture:  on the one hand, clocks are increasingly more precise | and just as pendulum clocks enabled the industrial revolution, the next generation of atomic clocks will do the same for a new technological age. However, on the other hand, quantum mechanics suggests that there must be a limit to their increasing precision. As an analogy, consider the birth of thermodynamics in the late 18th century: even as heat engines were developed and improved upon, Clausius, Carnot and others found fundamental limits to their efficiency by relating it to temperature and heat. In the case of clocks a natural question is thus: \emph{Can we relate their precision to physical variables such as entropy}% \cite{Pauletal2017}
\emph{, energy, size, or information contents, and by doing so, quantify the fundamental limits to their precision?}% we are thus motivated to relate their precisionT to physical variables such as entropy \cite{Pauletal2017}, energy, or size, and by doing so, quantify the fundamental limits to their precision. 

To clarify what is meant by ``a clock" in this work, we distinguish between two types of devices for measuring time: timepieces that output time information on request, like a stopwatch, and clocks that output time information autonomously, like a chiming clock. They serve different purposes. Stop watches are used to measure a time interval between events triggered by  external processes (e.g., between the event that a train leaves the station at~$A$ and the event that it arrives at~$B$) \cite{OptimalStopwatch}. Conversely, chiming clocks ``generate'' events themselves, which may then be used to trigger external events (e.g., that the train leaves the station at~$A$),  see Fig.~\ref{fig:clockcartoon1}.%\footnote{A chiming clock may of course also be used to measure a time interval: simply count how many ticks generated by the clock fit into the interval. However, such a use appears to be rather wasteful in terms of the time information that needs to be generated.}
\Mspace 

This work is concerned with the second type of time-keeping. Hence, from now on (and with the exception of the review of earlier work at the beginning of Section~\ref{sec:Modeling clocks}) we use the term \emph{clock} for devices that output information about time autonomously.\footnote{The word ``clock'' derives from the Medieval Latin ``clocca'', which means ``bell''. The hourly ringing of the bells may be regarded as an autonomous process.}  Specifically, we take a clock to be a device that generates a sequence of individual events, which we call \emph{ticks}. For the purpose of this discussion, we assume that the ticks are the only information output by the clock. \Mspace

%Thus we find ourselves at an exciting juncture. Clocks are increasingly more precise, and just as pendulum clocks aided the industrial revolution, the next generation of atomic clocks will do the same for a new technological age, via faster telecommunications, non-satellite based GPS systems, powerful scanners, and their application to fundamental physics, such as to improve the sensitivity of gravitational wave detectors. On the other hand, quantum theory is not compatible with clocks of ever increasing precision, so we expect there to be fundamental limits to their precisionT. This situation is analogous to the development in heat engines in the $18^\text{th}$ century: now we search for the fundamental relationship between clock precision and physical variables | in the same way that Clausius and Carnot searched for the fundamental relationship between engine efficiency; work, heat and temperature.
% from now on  we use the term \emph{clock} for devices that output information about time
We investigate the effect of the size of a clock, motivated by the general observation that the disturbance suffered by large mechanical clocks by the act of reading-off time appears insignificant, while tiny clocks are more prone to be disturbed. There are a number of ways to quantify the size of a clock, e.g. by its mass \cite{SaleckerWigner58}. We take an information-theoretic approach, and consider the size of the state space of the clock, which is the number of perfectly distinguishable states that it can be in, or alternatively, the dimension $d$ of its associated Hilbert space. Indeed, a clock of size~$d$ is a clock that could in principle store at most $\log_2 d$ bits of information in its internal state, and thus $d$ is a measure of its information contents. In the context of stopwatches, bounds on the precision given a bound on the size were derived in \cite{OptimalStopwatch} (also see reviews \cite{TiQMVol1,TiQMVol2} for related references).

Moreover, it is interesting to ask whether quantum features in clocks could provide an advantage. In order to make a comparison, one can introduce the notion of a classical clock as a quantum clock which has lost its quantum properties through decoherence.

This \doc~proves a fundamental connection between the size of a quantum or classical clock and its attainable precision. Namely we find that there exist quantum mechanical clocks based on \cite{WSO16}, whose precision represents a quadratic improvement over the best classical clocks of the same size.

\begin{figure}[!htb]
	%\minipage{0.32\textwidth}
	\includegraphics%[scale=0.32]{Clock_Cartoon_LadyV3}
	[scale=0.28]{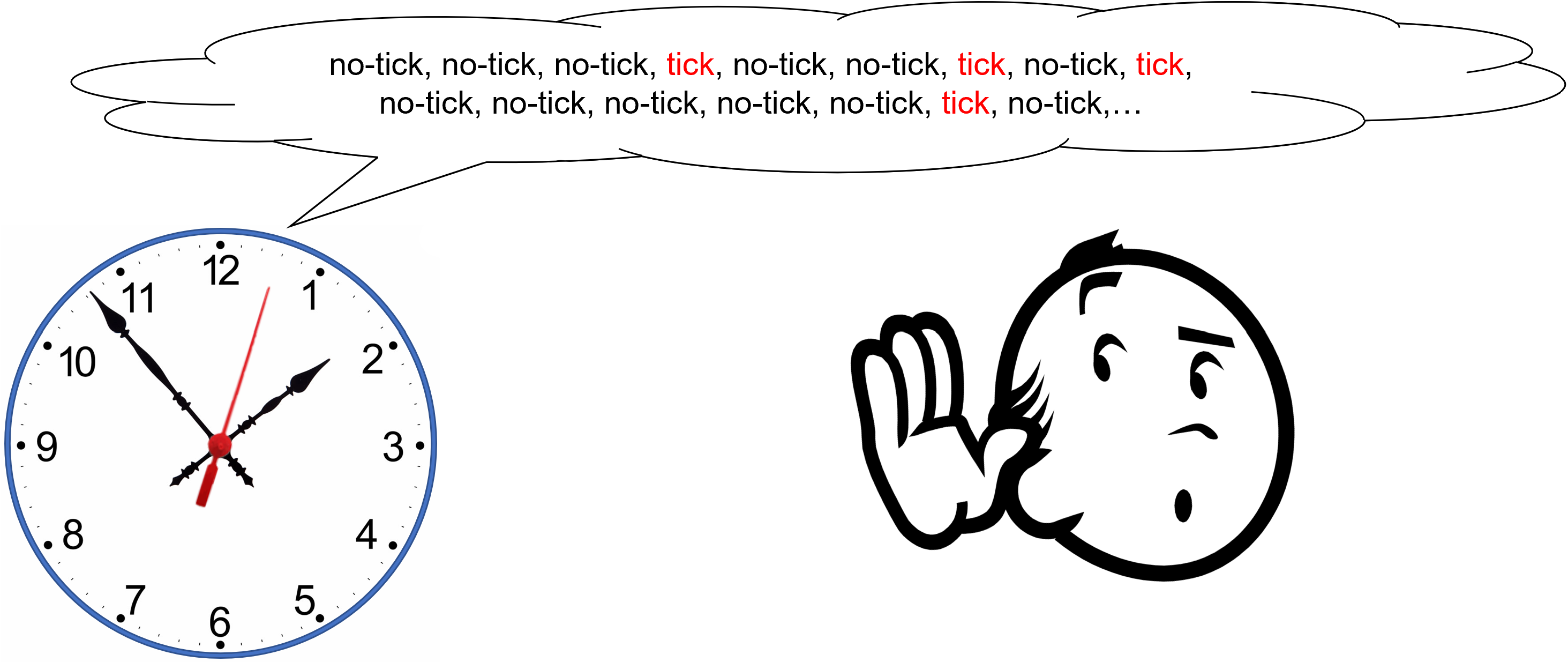}%[scale=0.27]{Clock_Cartoon_Lady}Website with usage policy: http://www.stickpng.com/img/people/women/vintage-lady-listening 
	%	http://www.stickpng.com/tos
	%{ClockCartoon}
	\caption{\footnotesize{Illustration of a (chiming) clock. A (chiming) clock produces a continuous stream of ``no-tick", ``tick" information. The ``no-ticks" represent the silence between ticks. Note that the silence between ticks is just as important as the ticks themselves for the functioning of a clock. To illustrate how (chiming) clocks and stopwatches serve different purposes, consider two examples: case a) Two people agree to meet at a given time in the future, say 38 hours from now. If one has a (chiming) clock, one can simply count the emitted ticks until the time interval of 38 hours has passed. However, in contrast, if one has a stopwatch there is no external trigger one can use and measuring the stopwatch at a random (unknown) time, would result in a measure zero probability of getting (even approximately) the outcome of 38 hours later. Case b) A race takes place and one wants to know the time of the winner. Here there is an external trigger (the winner crossing the finish line), and a stopwatch will suffice. In case b) a (chiming) clock will also suffice of course. So a (chiming) clock may be used to replace a stopwatch but not vice versa.
			\label{fig:clockcartoon1}}}
\end{figure}

The precision of a clock can then be defined via the regularity of its ticks. We ask the simple operational question: \textit{How many ticks can a clock output before the uncertainty in its next tick has grown to be as large as the interval between ticks?} This precision measure, introduced in~\cite{Pauletal2017}, is referred to as $R$ (Section \ref{Accuracy of Quantum Clocks}).%In the case in which the clock is reset to its initial state after every tick, this measure of precision is in direct correspondence with the one introduced for thermodynamic clocks in~\cite{Pauletal2017}.

We use the term \emph{quantum clock} for a clock whose dynamics is not subject to any constraints other than those imposed by quantum theory. Their internal state can therefore be represented by a density operator in a $d$-dimensional Hilbert space, and the transition from the clock's state at a time $t_1$ to its state at time $t_2$  corresponds to a trace-preserving completely positive map. We also consider the special case of \emph{classical clocks}, where decoherence is assumed to occur on a timescale that is much shorter than the processes responsible for the generation of ticks. Their state space is therefore restricted to a fixed set consisting of $d$ perfectly distinguishable states and their probabilistic mixtures | the ``classical'' states. In this case, a state transition from time $t_1$ to $t_2$ is most generally represented by a stochastic map. \Mspace

Our main results are bounds on the precision which depend on the clock's size~$d$.  On the one hand, we prove that, for any fixed $\eta > 0$, there exist quantum clocks %of large enough dimension
 whose precision scales as
\begin{align} \label{eq_quantumlowerbound}
  R_{\mathrm{quantum}} \gtrsim d^{2-\eta}. %\geq \Omega(d^{2-\eta}).
\end{align}
That is, quantum clocks can have a precision that grows essentially quadratically in the clock's size for large $d$. We prove this statement by construction, showing that the so-called \wso~clocks proposed in~\cite{WSO16} can achieve this scaling under the appropriate circumstances. On the other hand, we prove that the precision of any classical clock is upper bounded by
\begin{align} \label{eq_classicalupperbound}
  R_{\mathrm{classical}} \leq d,
\end{align}
and show that a simple stochastic clock, studied in \cite{ATGRandomWalk} in the context of the Alternate Ticks Game, saturates this bound. Combining Eqs.~\eqref{eq_quantumlowerbound} and~\eqref{eq_classicalupperbound}, we conclude that for large size~$d$, quantum clocks outperform classical ones quadratically in terms of their precision~$R$.\Mspace

\section{Modelling clocks}\label{sec:Modeling clocks}

To motivate our framework for describing clocks, we first have a look at existing models that have been considered in the literature and discuss their features and limitations. (An extensive review on prior literature regarding clocks and the general issue of time in quantum mechanics can be found in \cite{TiQMVol1,TiQMVol2}.) \Mspace

Pauli regarded an ``ideal clock'' as a device that has an observable $T$ whose value is in one-to-one correspondence to the time parameter~$t$ in the quantum-mechanical equation of motion. The observable $T$ would need to satisfy $\frac{d}{dt}T = \id$. Furthermore, since neither $T$ nor the Hamiltonian of the system, $H$, should depend on time explicitly, they would  need to satisfy the commutation relation $i [H, T] = \id$.\footnote{We set $\hbar =1$, so that $\frac{i}{\hbar} [H, T] = i [H,T] = \id$.} Pauli then argued that this implies that $H$ has as its spectrum the full real line \cite{Pashby14}. %\footnote{The argument may be made more precise as follows (cf.\ \cite{Pashby14} for details). Expanding an exponential in $T$, $e^{i \gamma T}$, as a power series one can show that $i [H, T] = \id$ implies $[H, e^{i \gamma T}] = \gamma e^{i \gamma T}$ and, hence, $e^{-i \gamma T} H e^{i \gamma T} = H + \gamma \id$, i.e., the action of $e^{i \gamma T}$ on $H$ corresponds to a shift of its spectrum by $\gamma$. But since  $e^{-i \gamma T} H e^{i \gamma T}$ is unitarily equivalent to $H$, this can only be the case if the spectrum of $H$ is $\rr$.}
Since such Hamiltonians are unphysical, he concluded that an observable $T$ with the desired properties, and hence an ideal clock, cannot exist~\cite{pauli1,PauliGeneralPrinciples}.\footnote{We note that this conclusion has been challenged and it has been argued that the relation $i [H, T] = \id$ can be satisfied for Hamiltonians $H$ with semi-bounded spectrum if one considers operators with restricted domains of definition (see~\cite{Pashby14} for a discussion). Such restrictions however still correspond to unphysical assumptions, such as infinite potentials to keep a particle in a confined region.}  As such, these objects are referred to as \emph{Idealised clocks}. %\mpw{As such, any clock for which there exists a self-adjoint operator $\hat T$, which satisfies the canonical commutation relation with the clock's Hamiltonian; $\mi [H,T]=\id$, we will refer to as an \textit{idealised clock}.}
%\RS{Suggestion: Place the remainder of this paragraph as a footnote? - The previous footnote seems unnecessary.}
\Mspace

This raises the question whether one can at least approximate an Idealised clock. Salecker and Wigner~\cite{SaleckerWigner58} and Peres~\cite{Peres80} considered finite-dimensional constructions. Specifically, they showed that for any dimension~$d$ and for any fixed time interval~$\Delta$ there exists a clock, which we will refer to as the \emph{SWP clock}, whose Hamiltonian satisfies
\begin{align*}
  \forall\, k \in \{0, \ldots, d-1\} : \quad e^{i H \Delta} \ket{\theta_k} = \ket{\theta_{k+1 \, (\mathrm{mod} \, d)}}
\end{align*} 
where $\smash{\{\ket{\theta_i}\}_{i=0}^{d-1}}$ is the \emph{SWP basis} | an orthonormal basis of the clock's Hilbert space. Hence, if the clock was initialised to state $\ket{\theta_0}$ and if one did read the clock at a time $t \in \{0, \Delta, 2 \Delta, \ldots \}$ by applying a projective measurement with respect to the SWP basis, the outcome would be precise information about time $n \, (\mathrm{mod} \, d)$. However, in between these particular points in time, the amplitudes of the basis states are in general all non-zero~\cite{Gross2012}. Hence, if the clock was measured, say, at  $t = \frac{5}{2} \Delta$, the outcome would be uncertain\footnote{At intermediate time intervals, the variance of the state w.r.t. the basis states $\ket{\theta_0}$ is as much as $\sqrt{d}$.}. In addition, such a measurement would disturb the clock's state, effectively resetting it to a random time. This problem was resolved in recent work by some of us, with the introduction of the so-called \emph{\wso~clock} \cite{WSO16}, which is able to approximate the dynamical behaviour of Pauli's Idealised clock while maintaining a finite dimension. Another approach to time operators for clocks, is to consider covariant time observables~(see e.g.,~\cite{BUSCH1994357}) that are unsharp. We will not discuss these here, since they do not bear upon the question of precision.
\Mspace

The constructions from~\cite{SaleckerWigner58,Peres80} do however not include a mechanism to output time information autonomously. Hence, to use the terminology introduced earlier, they are stopwatches rather than chiming clocks. To extract time information from them, one would have to apply measurements from the outside. But then the outcome depends on when and how these measurements are performed. Thus, in order to reasonably talk about their precision | in terms of operationally motivated quantities | we need a more complete description. In \cite{WSO16}, a potential term was added to the Hamiltonian. In the case that this potential is pure imaginary, it will allow us to model information about time being extracted autonomously. This feature, together with the definition of quantum clocks as outlined in the following section, will allow for the precision of quantum clocks to be bounded. 

\subsection{Quantum Clocks}\label{Quantum Clocks}

The modelling of clocks that we use here follows the operational approach introduced in~\cite{RaLiRe15} with some adjustments. %In a nutshell, we insist that our clock together with its measurement apparatus satisfy three basic properties | see Fig. \ref{fig:clock_details}. 
We now explain this setup in detail.

 A $d$-dimensional quantum clock consists of a (generally open) quantum system~$C$ which we call the \emph{clockwork}. % whose evolution is assumed to be time-independent.\footnote{Since any realistic clock is subject to noise, i.e., environmental perturbations that may vary over time, this assumption is not usually met perfectly.  But since our aim is to explore information-theoretic constraints on a clock's precision, we work here under the assumption that the clock is shielded from time-dependent perturbations. In fact, perturbations with a known time-dependence would need to be counted as an additional resource from which the clock could extract time information.} 
 The transition of a clockwork's state $\rho_{C,t}$ at some time $t$ to its state $\rho_{C, t+\Delta}$ at a later time $t + \Delta$ can hence most generally be described by a trace-preserving completely positive map
\begin{align*}
  \cM^{\Delta}_{C \to C} : \quad \rho_{C, t} \mapsto \rho_{C, t+\Delta}  ,
\end{align*}
which depends on~$\Delta \in \rr_{\geq 0}$ but not on $t \in \rr$.  Note that these maps form a family parameterised by $\Delta \in \rr_{\geq 0}$. For the particular choice $\Delta = 0$ it is the identity map,
\begin{align} \label{eq_identity}
  \cM^{(0)}_{C \to C} = \cI_C  .
\end{align}
Furthermore, the maps are mutually commutative under composition, i.e., 
\begin{align} \label{eq_composition}
  \cM^{\Delta + \Delta'}_{C \to C} = \cM^{\Delta'}_{C \to C} \circ  \cM^{\Delta}_{C \to C} =  \cM^{\Delta}_{C \to C} \circ  \cM^{\Delta'}_{C \to C}  ,
\end{align}
for any $\Delta, \Delta' \in \rr_{\geq 0}$. In other words, the evolution of $C$  is determined by a one-parameter family of maps, $\{\cM^{\Delta}_{C \to C}\}_{\Delta \in \rr_{\geq 0}}$, and which are Markovian. The Markovianity assumption is necessary, otherwise the generators of the dynamics could change at regular intervals, providing an unaccounted timing resource for the clock. \Mspace

Assuming that the energy that drives the clockwork's evolution is finite, we may additionally assume that the clockwork's state changes at a finite speed. This means that the function $\Delta \mapsto \smash{\cM_{C \to C}^{\Delta}}$ is continuous. But, using Eqs.~\eqref{eq_identity} and~\eqref{eq_composition}, this is in turn equivalent to the requirement that
\begin{align} \label{eq_continuity}
  \lim_{\Delta \to 0} \cM^{\Delta}_{C \to C} = \cI_{C}  ,
\end{align}
which may be regarded as a strengthening of Eq.~\eqref{eq_identity}. \Mspace

Since we assumed that the clockwork's evolution is time-independent, its description in terms of the entire family $\smash{\cM^{\Delta}_{C \to C}}$, for $\Delta \in \rr_{\geq 0}$, is highly redundant. Indeed, using Eq.~\eqref{eq_composition} we may write
\begin{align} \label{eq_mapDeltadelta}
  \cM^{\Delta}_{C \to C} = \lim_{\delta \to 0} \bigl(\cM^{\delta}_{C \to C}\bigr)^{\lfloor \frac{\Delta}{\delta} \rfloor},
\end{align}
where we have used the notation
\begin{align}
(\cM^{\delta}_{C \to C})^k = \underbrace{\cM^{\delta}_{C \to C} \circ \cdots \circ \cM^{\delta}_{C \to C}}_{\text{$k$ times}}.
\end{align}
It thus suffices to specify the evolution map for arbitrarily small time parameters, which we will in the following denote by~$\delta$. (The evolution is thus governed by the Lindblad equation, a fact that we will exploit in Section~\ref{sec_generators}).\Mspace

\begin{figure}[t]
	%\minipage{0.32\textwidth}
	%\begin{centering}
	\mbox{}\hspace{-1.9mm}\includegraphics[trim= 6.5cm 4.6cm  8.8cm 4.4cm, clip=true, scale=0.35]%[width=\linewidth]
	{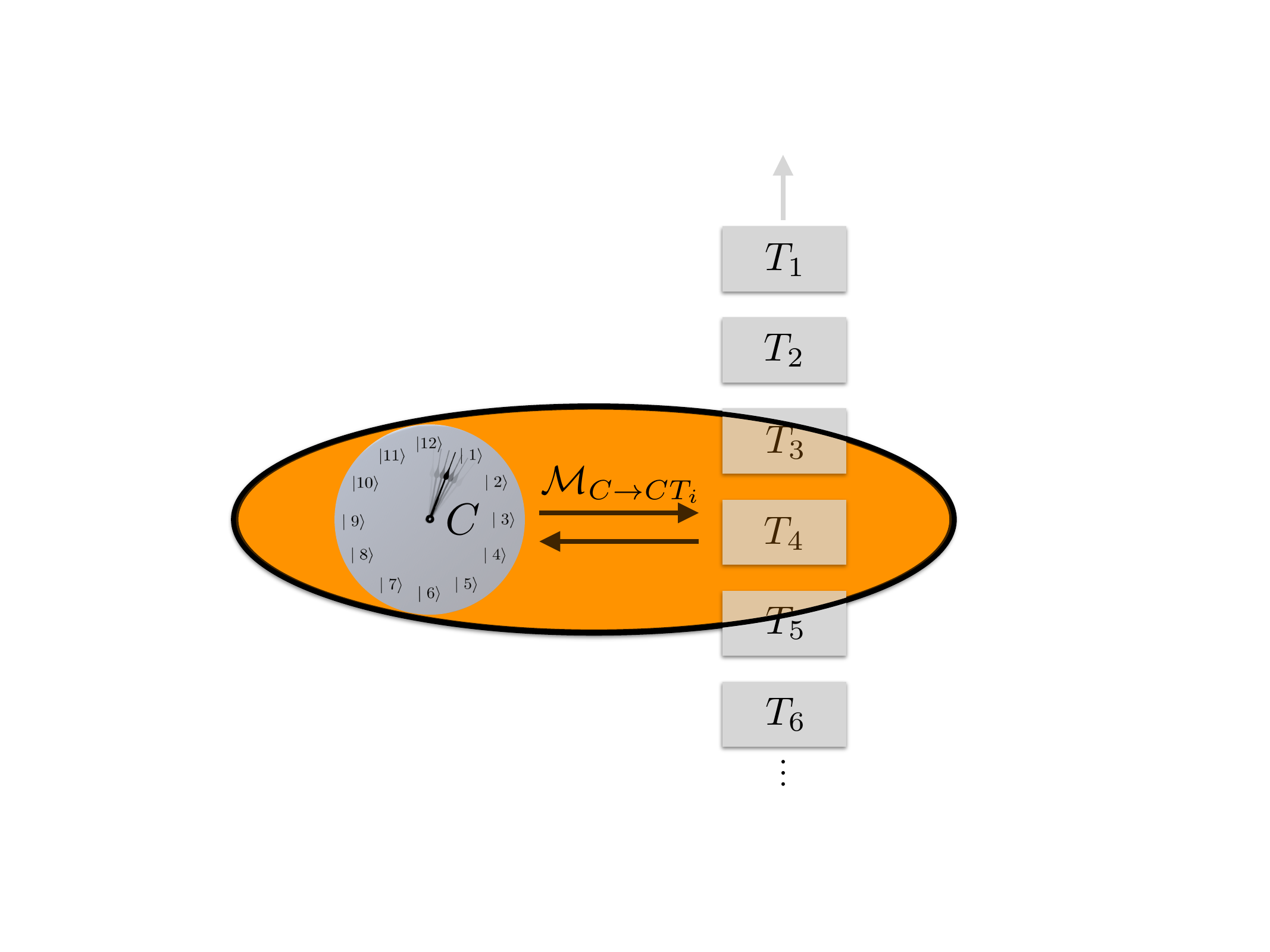}%Website with usage policy: http://www.stickpng.com/img/people/women/vintage-lady-listening 
	%	http://www.stickpng.com/tos
	%{ClockCartoon}
	%\end{centering}
	\caption{
		\footnotesize{\textbf{Information-theoretic clock model.} A clock is modelled as a device that emits information about time. This flow of information is described by small discrete time steps $\delta$, specified by maps $\cM_{C \to C T_i}^\delta$ \cite{RaLiRe15}. In each time step $\delta$ the clockwork $C$ interacts with a \emph{tick register} $T_i$, $i\in\nn$. The latter models the part of the information that the clock emits to the outside and hence is unavailable to the clock at any later step. Since we are considering the case in which time is continuous, in our analysis we take the continuous limit in which each time step is infinitesimal. Conversely, $C$ must include all information that remains accessible to the clock over more than one time step. The emission of information into the registers $\{T_i\}_i$ induces quantum back-action on~$C$, and thus a degradation of the information it contains.}}%}
	\label{fig:clockcartoon}
\end{figure}

The maps $\cM^{\delta}_{C \to C}$ describe the evolution of the state on~$C$. But, as argued above, we are generally interested in the information that the clock transmits to the outside. This can be included in our description by virtue of extensions of the maps $\smash{\cM^{\delta}_{C \to C}}$. That is, we consider maps $\smash{\cM^{\delta}_{C \to C T_i}}$ whose target space is a composite system, consisting of $C$ and an additional system $T_i$, such that
\begin{align} \label{eq_extension}
  \cM^{\delta}_{C \to C} = \tr_{T_i} \circ \cM^{\delta}_{C \to C T_i}  .
\end{align}
We call~$T_i$ \emph{tick registers}, alluding to the idea that the basic elements of information emitted by a clock are its ``ticks'', see Fig~\ref{fig:clockcartoon}. Note that while the model of a clock considered here involves an unbounded sequence of finite-dimensional registers that carry the time information it generates, one can show, see~\cite{mischa}, that one is able to achieve the precision as reported here for quantum and classical clocks with a single finite register attached to the clock. The model in~\cite{mischa} can achieve this by only utilising a new register state when the clock ticks, in contrast to requiring a new qubit register, $T_i$, for every infinitesimal time step. Therefore, its register only needs to be as large as the number of ticks one wishes to record with it. Furthermore, this alternative model has a master equation description for the entire register and clockwork; and hence, contrary to the model considered here, does not require additional degrees of freedom to account for the alignment of the clockwork with a new register $T_i$ at every infinitesimal time step. %Nevertheless, for every quantum and classical clock according to the model used in this \doc{}, there is a corresponding quantum and classical clock in the model in~\cite{mischa} with the same precision, and vice-versa. Thus the results which we derive here about the precision of clocks readily apply in the model of~\cite{mischa}.\Mspace

After these general remarks, we are now ready to state the technical definition. In the following we let all the tick registers $T_i$ be isomorphic to a single tick register denoted $T$. 
\begin{definition}[From \cite{RaLiRe15}] \label{def_quantumclock}
  A \emph{(quantum) clock} is a pair $(\rho_C^0, \{\smash{\cM^{\delta}_{C \to C T}}\}_{\delta})$, consisting of a density operator $\rho^0_C$ on a $d$-dimensional Hilbert space~$C$ together with a family of trace preserving and completely positive maps $\smash{\cM^{\delta}_{C \to C T}}$ from $C$ to $C \otimes T$, where $T$ is an arbitrary system, such that the following limits exist and take on the value
  \begin{align} \label{eq_quantumclockcondition}
     \lim_{\Delta \to 0}  \lim_{\delta \to 0} \bigl(\tr_T \circ \cM^{\delta}_{C \to C T}\bigr)^{\lfloor \frac{\Delta}{\delta} \rfloor} = \cI_{C}  .
  \end{align}
  \end{definition}

Using Eq.~\eqref{eq_mapDeltadelta}, it is easy to see that any family of maps whose reduction to~$C$ satisfies Eqs.~\eqref{eq_composition} and~\eqref{eq_continuity} also satisfies Eq.~\eqref{eq_quantumclockcondition}. The converse is however not necessarily true. Nevertheless, given a family of maps $\{\smash{\cM_{C \to C T}^{\delta}}\}_\delta$ as in Def.~\ref{def_quantumclock}, one may always define a family of maps
\begin{align} \label{eq_limitclock}
 \left\{ \bar{\cM}^{\Delta}_{C \to C} \right\}_\Delta= \left\{ \lim_{\delta \to 0} \bigl(\tr_T \circ \cM^{\delta}_{C \to C T}\bigr)^{\lfloor \frac{\Delta}{\delta} \rfloor}  \right\}_\Delta,
\end{align}
which meet both Eqs.~\eqref{eq_composition} and~\eqref{eq_continuity}. In this sense, specifying a map that satisfies Eq.~\eqref{eq_quantumclockcondition} is indeed sufficient to define the continuous and time-independent evolution of a clock. What is more, one may be concerned about technical issues which can arise when dealing with infinite tensor product spaces. Since in any finite time interval, the clock can only tick finitely many times, at any given moment, the state of the register is a classical ensemble of states, each of which contains infinitely many registers $T_i$ in the ``no-tick" state, and only finitely many in the ``tick'' state. Thus the resulting infinite dimensional tensor product space is well-defined in our case; see Sec. 2.5. of~\cite{weaver2001mathematical}.\Mspace %Such issues do not arise here since the registers are only ever in one of two orthogonal states, and because the interaction between the clockwork and the registers is local; that is to say, the clockwork only ever interacts with one register $T_i$ in every infinitesimal time step.

The definition does not yet impose any constraints on the tick register, $T$. Since we want to compare different clocks, it will however be convenient to  assume that $T$ contains two designated orthogonal states, $\ket{1}$ and $\ket{0}$, which we interpret as ``tick'' and ``no tick'', respectively.  The idea is that ticks are the most basic units of time information that a clock can emit. Roughly speaking, a tick indicates that a certain time interval has passed since the last tick. \Mspace

To know if the clock has ticked after the application of the map $\cM^\delta_{C \to C T}$, one has to measure the tick register in the ``tick" basis $\{\ket{0},\ket{1}\}$. In general, this represents an additional map on the clockwork and register, as Def. \ref{def_quantumclock} allows for the tick register to be coherent in the tick basis, and even entangled with the clockwork system $C$. However, in this work we are only concerned with the probability distribution of ticks (as we characterise the performance of the clock from this alone), and so we incorporate the additional measurement into the map $\cM^\delta_{C \to C T}$ itself. This is equivalent to requiring the map to restrict the state of the clockwork and tick register to be block-diagonal states in the basis $\{\ket{0},\ket{1}\}$.\Mspace

Furthermore we consider the behaviour of the tick register in the limit $\delta\to 0$. In principle, the probability of a tick in this limit need not be zero. However, such a clock would correspond to one that has a probability of ticking on every application of the map $\cM^\delta_{C \to C T}$ \emph{independently of the state of the clockwork system}, and thus does not provide any information about time.\footnote{More precisely, we could express such a clock via the convex combination of two maps, one that does have a zero tick probability for $\delta \to 0$, and one that does not. The second one would provide no time information, and thus only worsen the performance of the clock.} \Mspace

Following the above considerations, we continue with clocks whose maps $\cM_{C \to C T}^\delta$ provide states on $CT$ that are diagonal in the tick basis $\{\ket{0},\ket{1}\}$, and also satisfy the limit
\begin{align}
	\tr_{C} \left[ \lim_{\delta \to 0} \cM_{C \to C T}^\delta (\rho_C) \right] = \proj{0}_T.
\end{align}

\begin{comment}
The information on whether the clock has ticked may be extracted in one of two ways. One may allow the tick register to evolve in any manner consistent with the above definition. In this case, the state of the register can be coherent in the $\{\ket{0},\ket{1}\}$ basis, and entangled with the clock system $C$. To extract the (classical) information of the probability of the clock having ticked, a measurement on the register is required. However, if one is only concerned with the probability of a tick having occurred, as we do in this work, one can treat the tick register itself as classical, restricting its evolution to incoherent jumps in the $\{\ket{0},\ket{1}\}$ basis. Here the map $\cM^{\delta}_{C \to C T}$ itself encodes the measurement of whether the clock has ticked or not, requiring no further external measurements. Henceforth we treat the tick register as a classical bit.\Mspace
\end{comment}

One may feel inclined to think of a clock whose ``ticks" convey additional information, such as the number of previous ticks produced by the clock. For example, often a church bell will produce different chimes to specify the passing of different hours. To treat this within our model, one may think of a (classical) counter, which merely counts the number of tick registers in the state $\ket{1}$. This way, if a tick occurs, one can read the counter and discern the time. Clearly the counter does not need any additional timing devices to function. Importantly, since such a counter only interacts with the tick registers and not the clockwork, it does not directly affect the evolution of the clockwork system $C$.\Mspace

This concludes our discussion of the generic model of clocks. Real life clocks may also be subject to additional  constraints, such as unavoidable de-coherence or power constraints \cite{Erker},\cite{Pauletal2017}. Since we are considering finite dimensional maps from the clockwork to itself which are continuous, this naturally leads to a finite power consumption, and de-coherence is addressed later with our classical clock case. We furthermore comment on aspects of the clock model in the conclusions, Section \ref{sec:conclusions}.

\subsection{Representation in Terms of Generators} \label{sec_generators}\label{sec:continuouslimit}

%\RR{The numbering of lemmas and theorems is still slightly strange. I do not see where the 0 comes from?}

As explained above, the specification of the individual maps of the family $\{\cM^{\delta}_{C \to C T}\}_\delta$ is  redundant. The following lemma, which is basically a variant of the Lindblad representation theorem \cite{Lindblad},  asserts that the family can equivalently be specified in terms of generators. 

\begin{restatable}{lemma}{lemclockgenerators} \label{lem_clockgenerators}
  Let $(\rho^0_C, \{\cM_{C \to C T}^\delta\}_\delta)$ be a clock with a classical tick register, having as a basis the states $\{\ket{0}, \ket{1}\}$. Then there exists a Hermitian operator $H$ as well as two families of orthogonal operators $\{L_j\}_{j=1}^m$ and $\{J_j\}_{j=1}^m$ on $C$ such that
  \begin{align}\label{eq:MC to CT gens}
  \begin{split}
    &\cM^{\delta}_{C \to C T}(\rho_C)
    = \rho_C \otimes \proj{0}_T\\
     &- \delta\Bigl(  i [H, \rho]  + \sum_{j=1}^m \frac{1}{2} \{L^{\dagger}_j L_j + J^{\dagger}_j J_j, \rho\} - L_j \rho L_j^{\dagger}\Bigr) \otimes \proj{0}_T\\
      &+ \delta \sum_{j=1}^m J_j \rho J_j^{\dagger} \otimes \proj{1}_T + F^\delta_{C \to C T}(\rho_C),
   \end{split}
  \end{align} 
  for $\delta > 0$, and where $F^\delta_{C \to C T} = O(\delta^2)$. Conversely, given any Hermitian operator $H$ and orthogonal families of operators $\{L_j\}_{j=1}^m$ and $\{J_j\}_{j=1}^m$ on $C$, Eq. \eqref{eq:MC to CT gens} defines a clock $(\rho^0_C, \{\cM^\delta_{C \to C T}\})$ with a classical tick register.
  
  In the case of classical clocks with basis $\{\ket{c_n}\}$, $H$ is the zero operator  and the operators  $L_j$ and $J_j$ can all be chosen to be proportional to operators of the form $\ket{c_m} \! \! \bra{c_n}$.
\end{restatable}

The proof of this Lemma, which is provided in Appendix~\ref{sec_GeneratorLemmaProof}, follows the description in Section 3.5.2 of~\cite{PreskillLectureNotes}. We call $\rho^0_C$ the \emph{initial state} of the clockwork. Furthermore, the operators $J_j$ are called \emph{tick generators}.\footnote{While Eq. \ref{eq:MC to CT gens} does not define a dynamical semigroup $CT \rightarrow CT$, it is possible to do so, see \app, Sec. \ref{appsec:Lindbladian}.}

In addition to determining when the clock ticks, the tick generators $J_j$ also define the clockwork's state after a tick. Clocks for which this state coincides with the initial state $\rho^0_C$ are of special interest, for they have a particularly appealing mathematical structure and %\ralphnew{as we prove for the case of classical clocks}
 are optimal in terms of their precision in the case of classical clocks.%\footnote{We expect the same is true for quantum clocks, although a proof has not yet been constructed.}

% \cite{PreskillLectureNotes} Lecture Notes for Ph219/CS219: Quantum Information

% Communications in Mathematical Physics
%April 2008, Volume 279, Issue 1, pp 147?168 | Cite as
%dDividing Quantum Channels

\begin{definition}\label{def:reset clock}
A \emph{reset clock} is a quantum clock $(\rho_C^0, \{ \cM^{\delta}_{C \to C T}\}_{\delta})$ whose tick generators induce a mapping to the clock's initial state\footnote{More generally, the tick generators induce a mapping to some fixed state, but there is very little loss of generality setting the initial state to be the same, since only the first tick of the clock is affected, every subsequent tick behaves as if the initial state is the fixed state.}, i.e., 
\begin{equation} \label{eq:reset clock}
\sum_{j=1}^m J_j \rho J_j^\dag \propto \rho_C^0\quad \forall \rho\in\mathcal{S}(\mathcal{H}_C).
\end{equation} 
\end{definition}

%\subsection{The continuous limit of clocks}\label{sec:continuouslimit}

One may also use the Lindbladian generators to describe the evolution of the clockwork system $C$ as continuous, parametrised by a real variable $t$. From Lemma \ref{lem_clockgenerators}, the following differential equation governs the evolution of the clockwork,
\begin{align}
\begin{split}
	\frac{d}{dt}\rho_{C}(t) =& \lim_{\delta\rightarrow 0} \frac{ \tr_T\Big[ \cM^{\delta}_{C \to C T} (\rho_C(t))\Big] - \rho_C(t)}{\delta} \\
	=& -i [H, \rho_C(t)]  + \sum_{j=1}^m L_j \rho_C(t) L_j^{\dagger} + J_j \rho_C(t) J_j^{\dagger}\\ 
	&- \frac{1}{2} \{L^{\dagger}_j L_j + J^{\dagger}_j J_j, \rho_C(t)\}.\label{eq:continuousquantumclock}\end{split}
\end{align}
For the tick register, one may take the same limit to find the probability density of a tick being recorded, via the probability that the register is in the state $\ket{1}$,
\begin{align}
	P(t) &= \lim_{\delta\rightarrow 0} \frac{ \tr_{CT} \Big[ \proj{1}_T \cM_{C \to C T}^\delta (\rho_C(t)) \Big] }{\delta}\nonumber\\
	& = \tr_C \left[ \sum_{j=1}^m J_j \rho_C(t) J_j^\dag \right].\label{eq:continuousquantumtickdensity}
\end{align}
This limit and the sequence of ticks is discussed in more detail in Section \ref{Accuracy of Quantum Clocks}.\Mspace

Furthermore, consider the case of a clock in which one focuses on a single tick, and tracks the state of the clock only up to the first tick. In this case one can remove the ``tick" channel $\sum_{j=1}^m J_j \rho_C J_j^\dagger$ from the Lindbladian of the clock in Eq. (\ref{eq:continuousquantumclock}), as it represents the state of the clockwork \emph{after} a tick (see Lemma \ref{lem_clockgenerators}). Thus the description of the entire family of tick generators $\{J_j\}_{j=1}^m$ is redundant. Labelling the state of the clockwork for just a single tick as $\rho_C^{(1)}$, its dynamics are given by (taking Eq. (\ref{eq:continuousquantumclock}) with the tick channel removed)
\begin{align}
	\frac{d}{dt}\rho_{C}^{(1)}(t) =& -i [H, \rho_C^{(1)}(t)] - \{V,\rho_C^{(1)}(t)\} \nonumber\\
	&
	+ \sum_{j=1}^m L_j \rho_C^{(1)}(t) L_j^{\dagger} - \frac{1}{2} \{L^{\dagger}_j L_j, \rho_C^{(1)}(t)\},\label{eq:resetclockdynamics}
\end{align}
where 
\be 
V = \frac{1}{2} \sum_{j=1}^m J_j^\dagger J_j
\ee
is an arbitrary positive operator representing the ticking dynamics of the clockwork. In this case, the probability density of \textit{the first tick} being recorded is, from Eq. \ref{eq:continuousquantumtickdensity},
\begin{align}
	P^{(1)}(t) &= \tr_C \left[ 2 V \rho_C^{(1)}(t) \right].
\end{align}

%\RS{This factor of 2 in the above equation, have to check whether we're consistent later in the proof}

This proves useful in the case of reset clocks. As we discuss later, the ticks of a reset clock are a sequence of independent and identically distributed random variables, and thus the first tick suffices to characterise such a clock.

\subsection{Example}\label{sec:example}

When describing a clock, one may want to distinguish between the intrinsic evolution of the state of the clockwork and the mechanism that transfers information about this state to the outside. A rather generic way to do this is to describe the evolution of the clockwork by a Hamiltonian $\hat{H}_C$ on the system $C$, and the transfer of information to the outside by a continuous measurement of the system's state with respect to a fixed basis $\{\ket{t_i}\}_{i =0}^{d-1}$, which we will refer to as the \emph{time basis}. In order to ensure that the measurement does not disturb the clock's state too much, the coupling between clockwork and measurement mechanism must be weak. We quantify it in the following by assigning a coupling parameter $V_i \in \rr_{\geq 0}$ to each of the elements $\ket{t_i}$ of the time basis and consider a reset clock (Def. \ref{def:reset clock}). We could then define a quantum clock $(\proj{\psi_0}, \{ \cM^{\delta}_{C \to C T}\}_{\delta})$ with initial state $\proj{\psi_0}$ and maps
\begin{align}
\begin{split}
  \cM^{\delta}_{C \to C T} : \quad \rho_C \,  \mapsto  \, e^{-i \delta \hat{H}_C} \hat{M}^\delta_0  \rho_C \hat{M}^\delta_0{}^{\dagger} e^{i \delta \hat{H}_C} & \otimes \proj{0} \\ + \sum_{j=0}^{d-1}  e^{-i \delta \hat{H}_C}\hat{M}^\delta_{1,j}  \rho_C \hat{M}^\delta_{1,j}{}^{\dagger}  e^{i \delta \hat{H}_C}&\otimes \proj{1} \label{eq_resetclockdefinition}
\end{split}
\end{align}
where $\smash{\hat{M}^\delta_{1,j} = \sqrt{2 \delta V_j} \ket{\psi_0} \!\! \bra{t_j} }$ and $\smash{  \hat{M}^\delta_0 = \sqrt{\id_C - 2 \delta \hat{V}_C} }$ with $\hat{V}_C =  \sum_{i=0}^{d-1} V_i \proj{t_i}.$  \Mspace

For sufficiently small $\delta$, the quantities $\smash{\{ \hat{M}^\delta_{1,i}{}^{\dagger} \hat{M}^\delta_{1,i}\}_{i=0}^{d-1}}$ together with $\hat{M}^\delta_{0}{}^\dag \hat{M}^\delta_{0}$ form a positive-operator valued measure (POVM), since
\begin{align*}
\hat{M}^\delta_0{}^{\dagger} \hat{M}^{\delta}_0 + \sum_{i=0}^{d-1}   \hat{M}^\delta_{i,1}{}^{\dagger} \hat{M}^{\delta}_{i,1} 
= \Bigl| \id_C - 2 \delta \hat{V}_C \Bigr| + 2 \delta \hat{V}_C  .
\end{align*}
As such, one can interpret Eq. \eqref{eq_resetclockdefinition}, in the following light. The initial state of the clockwork $\rho_C$ is measured via the POVMs, followed by allowing the clockwork to freely evolve according to its internal Hamiltonian $\smash{\hat H_C}$ for an infinitesimal time step $\delta$ and repeating the process. In accordance with Eq. \eqref{eq_resetclockdefinition} one would then associate the outcome ``no-tick" with the POVM element $\hat{M}^\delta_{0}{}^\dag \hat{M}^\delta_{0}$ and the ``tick" outcome with the elements $\smash{\{ \hat{M}^\delta_{1,i}{}^{\dagger} \hat{M}^\delta_{1,i}\}_{i=0}^{d-1}}$. Since the POVM defines a measurement with classical outcome, one may regard the tick as a classical value, i.e., the tick register could be assumed to be classical in this case. \Mspace

Furthermore, by expanding in $\delta$, Eq. \eqref{eq_resetclockdefinition} can be written in the form
\begin{align} 
\cM^{\delta}_{C \to C T} : \quad &\rho_C \,  \mapsto \rho_C \otimes \proj{0} \nonumber\\
&- \delta\Bigl(  i [\hat H_C, \rho]  + \sum_{j=1}^m \frac{1}{2} \{ J^{\dagger}_j J_j, \rho\}\Bigr) \otimes \proj{0}\nonumber\\
& + \delta \sum_{j=1}^m J_j \rho J_j^{\dagger} \otimes \proj{1}+ O(\delta^2),\label{eq_resetclockdefinition 2}
\end{align}
where we have defined $J_j=\sqrt{2V_j}\ketbra{\psi_0}{t_j}$. With the further identifications $H=\hat H_C$, and $\{L_j\}_{j=1}^m$ with the set of zero operators, we see that Eq. \eqref{eq_resetclockdefinition 2} is in the form prescribed by Lemma \eqref{lem_clockgenerators}. This ensures that the map $\cM^{\delta}_{C \to C T}$ is indeed a clock, according to our definition \ref{eq_quantumclockcondition}. Consequently, it is easily verified that the $J_j$ operators satisfy Eq. \eqref{eq:reset clock} and the clock is thus a reset clock. It also follows from Section \ref{sec:continuouslimit} that in the continuous limit of clocks, the probability of not getting a ``tick" in the time interval $[0,t]$ followed by a tick in the interval time $t,t+\delta t$ is
\be 
\delta t P^{(1)}(t)= \delta t\,  \tr_C \left[ 2 V \rho_C^{(1)}(t) \right] =  \delta t\,  \tr_C \left[ 2\hat V_C \rho_C^{(1)}(t) \right],
\ee 
where $\rho^{(1)}_C(t)=\ketbra{\bar{\psi}_t}{\bar{\psi}_t}$, with 
\be \label{eq:psi bar t ex}
\ket{\bar{\psi}_t}= \me^{-\mi t \hat H_C- t \hat V_C} \ket{\psi_0}.
\ee

\subsection{Classical Clocks as a Special Case}\label{sec:classicalspecialcase}

Classical physics is widely believed to arise from quantum mechanics due to a mechanism called decoherence. It is a naturally occurring process caused by phenomena in which the quantum state becomes incoherent in some preferred basis \cite{2007decoherence,UnversalDeco,QuantumDarwinism}. Roughly speaking, a classical clock may be regarded as a clock that satisfies Def.~\ref{def_quantumclock}, but whose state space is restricted to classical states due to decoherence effects which happen on a time-scale much shorter than the times between ticks.

We allow for any preferred basis. Let us denote it by an arbitrary fixed orthonormal basis $\{\ket{c_k}\}_{k=0}^{d-1}$, of the Hilbert space $C$ of the clockwork:
\begin{align*}
\cC_{\{\ket{c_k}\}} = \bigg\{\rho_C = \sum_{i=0}^{d-1} p_i \proj{c_i} : \, p_i \geq 0, \sum_{i=0}^{d-1} p_i = 1\bigg\}.
\end{align*}

\begin{definition} \label{eq_classicalclock}
  A clock $(\rho^0_C, \{\cM^{\delta}_{C \to C T}\}_{\delta})$ is called \emph{classical} if there exists a basis $\{\ket{c_k}\}_k$ (called the \emph{classical basis}) such that
  \begin{align*}
     \rho^0_C \in \cC_{\{\ket{c_k}\}} \quad \text{and} \quad \tr_T\circ\cM^\delta_{C \to C T}(\cC_{\{\ket{c_k}\}}) \subseteq \cC_{\{\ket{c_k}\}}
  \end{align*}
  for all $\delta\geq 0$. 
\end{definition}

Since the dynamics is restricted to a single basis, we only require the vector of diagonal elements in that basis to describe the clock, and we label this by 
\begin{align}
	v_C &= \sum_m v_{C,m} \mathbf{e}_m,
\end{align}
where $\mathbf{e}_m$ represents the basis vector $\ket{c_m}_C\!\bra{c_m}$ and $v_{C,m} = \braket{c_m|\rho_C|c_m}$ are the diagonal elements of the clock in the preferred basis.

With these definitions in hand, we find that the clock generators take on the simple form of stochastic generators, namely:
\begin{restatable}{corollary}{collclassicalclocks}\label{coll:classical clocks} 
	Let $(v^0_C, \{\cM_{C \to C T}^\delta\}_\delta)$ be a classical clock and suppose that the tick register has basis $\{\ket{0}, \ket{1}\}$. Then there exist $d\times d$-matrices $\cN$ and $\cT$ such that
	\begin{align*}
		\cM^{\delta}_{C \to C T}&(v_C) = v_C \otimes \proj{0} \\
			&+ \delta \left( \cN v_C \otimes \proj{0} + \cT v_C \otimes \proj{1} \right) + O (\delta^2).
	\end{align*}
	with $\cT$ being an entry-wise positive matrix, and the sum $\cN + \cT$ being an infinitesimal generator (also known as a transition rate matrix). More precisely,
	\begin{align} 
	\cN_{mn} & \begin{cases} \leq 0 & \text{for $m=n$} \\ \geq 0 & \text{for $m \neq n$} \end{cases} \\
	\cT_{mn} & \geq 0
	\end{align}
	for any $m,n$ and
	\begin{align} \label{eq_sumticknotick}
	\sum_{m=1}^d \cN_{mn} + \sum_{m=1}^d \cT_{mn} = 0
	\end{align}
	for any $n$.\footnote{Eq. \eqref{eq_sumticknotick} will be relaxed in the \app~by replacing the ``$=$" sign with ``$\leq $". By doing so, we prove that our results for classical clocks hold under more general circumstances. The example of the maximally precise classical clock in Section \ref{sec:example classical} satisfies Eq. \eqref{eq_sumticknotick}.}
\end{restatable}

See Appendix~\ref{sec_ClassicalClocksCorollaryProof} for a proof of this corollary. Analogous to the quantum case, we see that $\cT$ is the classical version of the tick generator.

In the case of quantum clocks, we used the Lindbladian generators rather than the maps to describe the evolution of the clockwork as continuous and parametrised by $t$ (Section \ref{sec:continuouslimit}). We can do the same for classical clocks, by taking the limit $\delta \rightarrow 0$, as in Eq. \ref{eq:continuousquantumclock}. However, in the classical case, since the state is always diagonal w.r.t. a fixed orthonormal basis, we only require the dynamics of the vector of diagonal elements, which is seen to be
\begin{align}
\frac{d}{dt} v_C(t) &= \left( \cN + \cT \right) v_C(t).
\end{align}

As in the quantum case, in the continuous limit we replace the register by a probability density of a tick being recorded, Eq. \ref{eq:continuousquantumtickdensity}, found to be
\begin{align}
P(t) &= \left|\left| \cT v_C(t) \right|\right|_1.
\end{align}

Furthermore, if one is focused on a single tick, as in Eq. \ref{eq:resetclockdynamics}, the reduced dynamics of the state of the clock for a single tick, $v_C^{(1)}$ is simply
\begin{align}
	\frac{d}{dt} v_C^{(1)}(t) &= \cN v_C(t).
\end{align}

\subsection{Precision of Clocks}\label{Accuracy of Quantum Clocks}

As mentioned in the introduction, we use the regularity of the tick output of a clock as a measure for its precision. We will now introduce definitions that allow us to express this quantity formally in terms of the clock maps. \Mspace

A clock (Def. \ref{def_quantumclock}) after $N$ applications of the map $\mathcal{M}^\delta_{C\to CT}$ gives rise to a probability distribution $P_{T_j} (t_N)$ corresponding to the probability that $j-1$ ticks have occurred during the $N-1$ applications of the map, and the $j^\textup{th}$ has occurred at the $N^\textup{th}$ application of the map. In the limiting case of continuous clocks discussed in Section \ref{sec:continuouslimit}, the probability $P_{T_j} (t_N)$, converges to a probability density, such that $P_{T_j} (t)\delta t$ is the probability that $j-1$ ticks have occurred in the interval $[0,t)$ and the $j^\textup{th}$ has occurred during the interval $[t,t+\delta t]$ for infinitesimal $\delta t>0$. Such probability densities are also known as delay functions or waiting times. In particular, we call $P_{T_{j}}(t)$ the delay function associated with the $j^\text{th}$ tick.%In particular, $P_{T_1 \cdots T_k}(t_1, \ldots, t_k)$ can only be non-zero for $t_1 < t_2 < \cdots < t_k$.
\Mspace

The expected time $\mu_j$ of the $j^\text{th}$ tick and its variance $\sigma_j^2$ are then given by 
\begin{align*}
\mu_j &  = \int_{0}^{\infty} d t P_{T_{j}}(t) t,  \\
\sigma_j^2 &  = \int_{0}^{\infty}  d t P_{T_{j}}(t) (t-\mu_j)^2   ,
\end{align*} 
for any $j \in \nn_{> 0}$. Based on these quantities, we can now define the \emph{clock precisions} $R_j$. Note that this is different from the single \emph{clock precision} $R$, which will be defined below for the particular case of reset clocks.

\begin{definition}\label{def:clock accuracies}
	The \emph{clock precisions} $(R_j)_{j\in\nnp}$ of a clock $(\rho_0, \{\cM^{\delta}_{C \to CT}\}_{\delta})$ is a sequence of real numbers, where the $j^\text{th}$ element $R_j$ is the precision of the $j^\text{th}$ tick,
	\begin{align}
	R_j =  \frac{\mu_j^2}{\sigma_j^2}  . 
	\end{align}
\end{definition}

We will use this definition to define a partial ordering of clocks. For any two clocks $A$, and $B$, with clock accuracies $(R_j^A)_{j\in\nnp}$ and $(R_j^B)_{j\in\nnp}$ respectively, we will say that $A$ is \emph{more precise} than $B$ iff every tick of  $A$ is more precise than the corresponding tick of $B$, i.e., iff $R^A_j > R^B_j$  $\forall j\in\nnp$. It is this definition that we refer to when we later prove that quantum clocks are more precise than classical ones.  \Mspace

The characterisation of clocks provided by definition \ref{def:clock accuracies} has a number of nice properties. Firstly, it is scale invariant, meaning that the values $R_j$ are invariant under the mapping $t$ to $t/a$, for constants $a>0$. In other words, it is a measure of the closeness of the tick intervals to each other rather than to an external timescale, and is not affected by whether these ticks took place over a short or long time scale. Physically, this means that for every clock with precisions $(R_j)_{j \in \nn_{> 0}}$, and mean tick times $\mu_1, \mu_2,\mu_3,\dots$, there is another clock with the same precision, but with the ticks occurring on average at times $a\mu_1, a\mu_2,a\mu_3,\dots$. The new clock is constructed from the old clock by mapping $t$ to $t/a$, which is equivalent to rescaling the generators $\{L_j\}_{j=1}^m$, $\{J_j\}_{j=1}^m$ and the Hamiltonian $H$, introduced in Lemma \ref{lem_clockgenerators}, by constant factors. \Mspace

%\RR{The concept of ``delay functions'' appears for the first time in the paragraph below. It may make sense to introduce and explain this a bit earlier.} 

Furthermore, we can now appreciate the simplicity of reset clocks (Def. \ref{def:reset clock}). Since every time such a clock produces a tick, it is reset to its initial state, the ticks represent a sequence of independent events, which are identically distributed. It thus follows that the delay function of the $j^\text{th}$ tick, $P_{T_j}(t)$, is the convolution of the delay function associated with the 1$^\text{st}$ tick $P_{T_1}(t)$, with itself $j$ times. This in turn, gives rise to a simple relationship between the precisions $(R_j)_{j \in \nn_{> 0}}$ of reset clocks, (see \app~\ref{app:delaysequence}, and Remark \ref{remark:delaysequence} for a detailed argument)
\be \label{eq:R j reset}
R_j= j R_1,
\ee  
and takes on a particularly satisfactory interpretation. Namely, the precision of the 1$^\text{st}$ tick $R_1$, is the number of ticks the clock generates (on average), before the next tick has a standard deviation equal to the mean time between ticks, $\mu_1$. As such, roughly speaking, the clock's useful lifetime is $\mu_1 R_1$, beyond which one can no-longer distinguish between subsequent ticks. To compare two reset clocks, it follows that one only needs to compare their $R_1$ values. Given the special significance of $R_1$, we will sometimes simply refer to it as $R$.\Mspace 

A similar interpretation is also possible for the $R_j$ value of later ticks. For the purpose of illustration, suppose that the mean time between ticks, $\mu_1$ is one second. Then $R_{60}$ corresponds to the number of minutes (on average) that the clock can generate until the tick corresponding to the next minute has a standard deviation which is equal to one minute. As such, while according to Eq. \eqref{eq:R j reset}, $R_{60}$ is 60 times larger than $R_1$, this is not to say that ``the 60$^\text{th}$ tick is more precise than the 1$^\text{st}$ tick."\Mspace

 %\input{resultsV2}
 
 \iffalse
 
 {\bf Definitions needed in the above sections for the next sections:}
 \begin{itemize}
 	\item Classical clock\\
 	\vspace{-0.5cm}
 	\item Quantum Clock\\
 	\vspace{-0.5cm}
 	\item reset clock: here the definition should be such that the combination "classical reset-clock" and "quantum reset-clock" make sense.
 	\item time basis (OK, this is defined in the example)
 \end{itemize}
 
 \fi

 \section{Fundamental limitations for classical and quantum clocks}
 
 In this section, we will state our findings and explain their relevance and connection to related fields. There are two main theorems. The following one, which is about limitations on classical clocks, and Theorem \ref{thm:quantum lower bound}, which shows how quantum clocks can outperform classical clocks.
 
 %\begin{theorem}[OLD: Upper bound for classical clocks]
 %	For every $d$-dimensional classical clock with a clock precision $R_j$ of the $j$th tick, there is a $d$-dimensional classical reset clock with an precision of its $j$th tick $R'_j$, satisfying 
 %	\be 
 %	R_i\leq R'_i
 %	\ee
 %	for all $j\in\nnp$ and $d\in\nnp$. Furthermore, all $d$-dimensional classical reset clocks satisfy 
 %	\be \label{eq:R1 R2 ... classical}
 %	R_1=R_2=R_3=\ldots \leq d    \quad\mpw{\text{or with j's here if altervative def. is used}}
 %	\ee for all $d\in\nnp$. Finally, there exists a $d$-dimensional reset clock which saturates the bound in Eq. \eqref{eq:R1 R2 ... classical} for all $d\in\nnp$.
 %\end{theorem}
 \begin{restatable}{theorem}{upperboundclassical}[Upper bound for classical clocks] \label{thm_classicalbound}
 	For every $d$-dimensional classical clock, the clock precisions $(R_j)_{j\in\nnp}$ satisfy
 	\be \label{eq:R1 R2 ... classical}
 	R_j\leq j\, d
 	\ee 
 	for all $d\in\nnp$. Furthermore, for every dimension $d\in\nnp$, there exists a reset clock whose precisions $(R_j)_{j\in\nnp}$ saturate the bound Eq. \eqref{eq:R1 R2 ... classical},
 	\be \label{eq:saturates accuracy classical}
 	R_j= j\, d.
 	\ee 
 	%	for all $d\in\nnp$.
 \end{restatable}
 \begin{proof}
 	See Section \ref{Sec:classical clocks theorem proof} for a proof of the above inequalities and Section \ref{sec:example classical} for an explicit construction of an optimal $d$-dimensional reset clock which saturates bound Eq. \eqref{eq:R1 R2 ... classical}. This clock is further discussed in Fig. \ref{fig:3clocks}  a).
 \end{proof}
 
 While the proof is a bit involved, there is an intuitive explanation to why reset clocks are optimal. If the clock is reset to its initial state after the 1st tick, then one can simply choose the initial state and dynamical map which has the highest possible precision for the 1st tick. Intuitively, the only way a non-reset clock could have a superior precision for later ticks than this one, would be for one to adjust the mean time of the following tick in the sequence to be longer or shorted than the previous one to make up for any lost or gained time due to the previous tick being ``early" or ``late". However, determining whether the clock ticked too early or late would require an additional clock, which is not available within the model.\Mspace

\begin{figure*}%\begin{figure}[!h]%[!htb]
	
	%	\minipage{0.49\textwidth}
	%	\includegraphics[scale=0.1, width=\linewidth
	%	]{3_clocksV4_figA_V2.png}
	%	\endminipage\hfill
	%	\minipage{0.49\textwidth}
	%	\includegraphics[width=\linewidth]{3_clocksV4_figB_V2.png}
	%	\endminipage\hfill
	%	\Mspace
	%	\minipage{0.49\textwidth}%
	%\includegraphics[width=\linewidth]{3_clocksV4_figC_V3_Now_b).png}%{3_clocksV4_figC_V2.png}
	%	\endminipage
	
	\centering
	\begin{minipage}[b]{0.49\textwidth}
		\includegraphics[width=\textwidth]{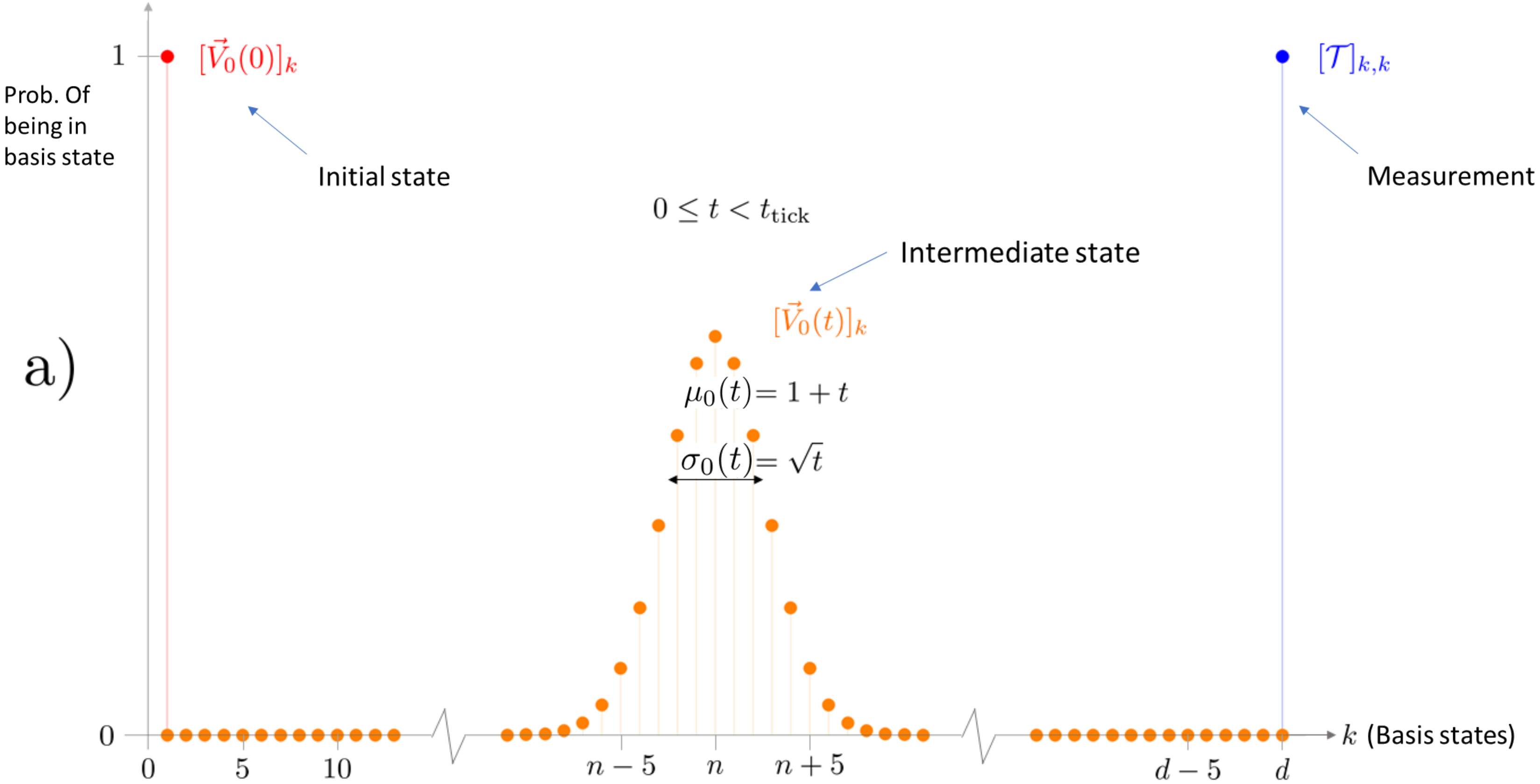}
		%\caption{Flower one.}
	\end{minipage}
	\hfill
	\begin{minipage}[b]{0.47\textwidth}
		\includegraphics[width=\textwidth]{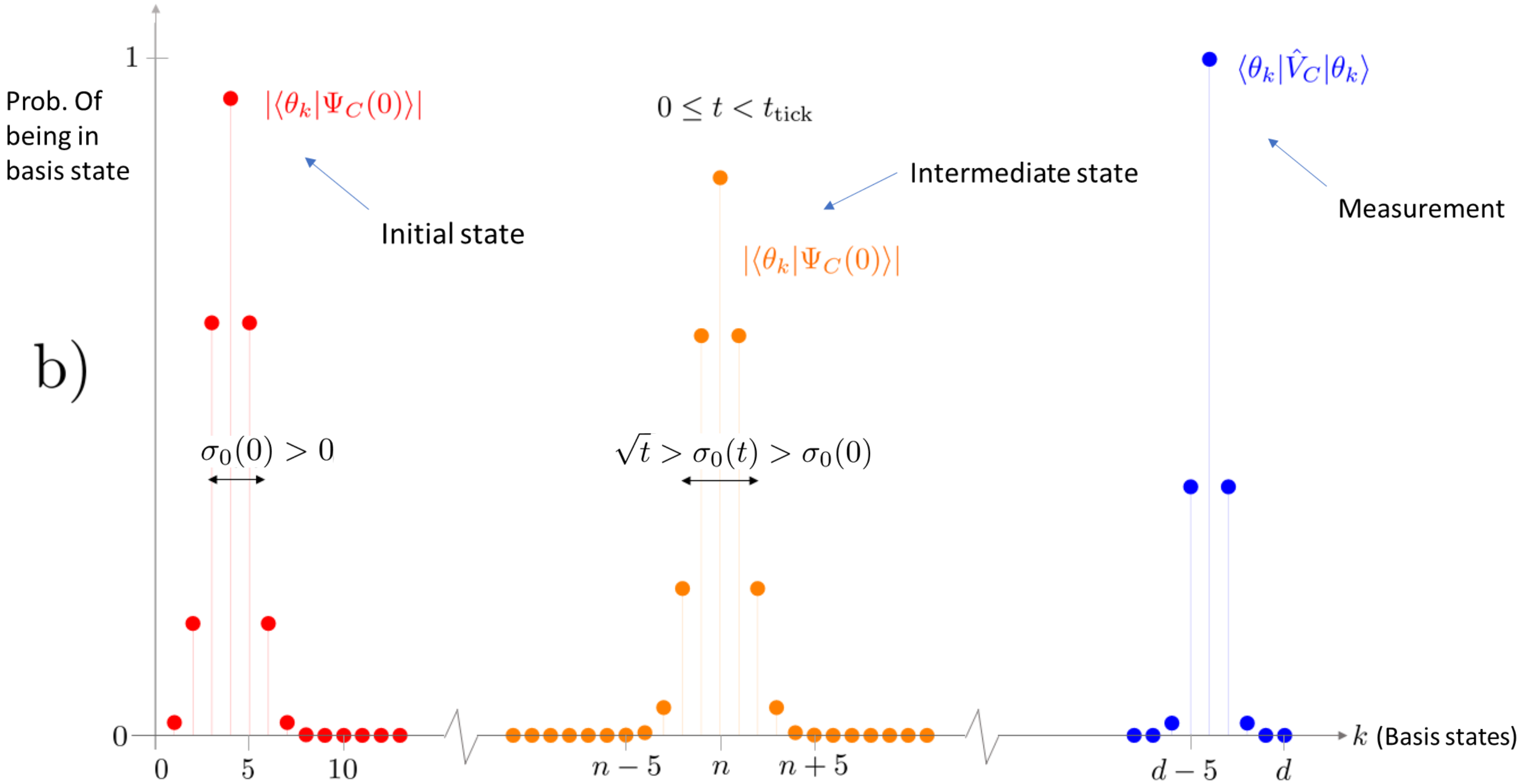}%{3_clocksV4_figC_V2.png}
		%\caption{Flower two.}
	\end{minipage}
	\hfill
	\begin{minipage}[b]{0.47\textwidth}
		\includegraphics[width=\textwidth]{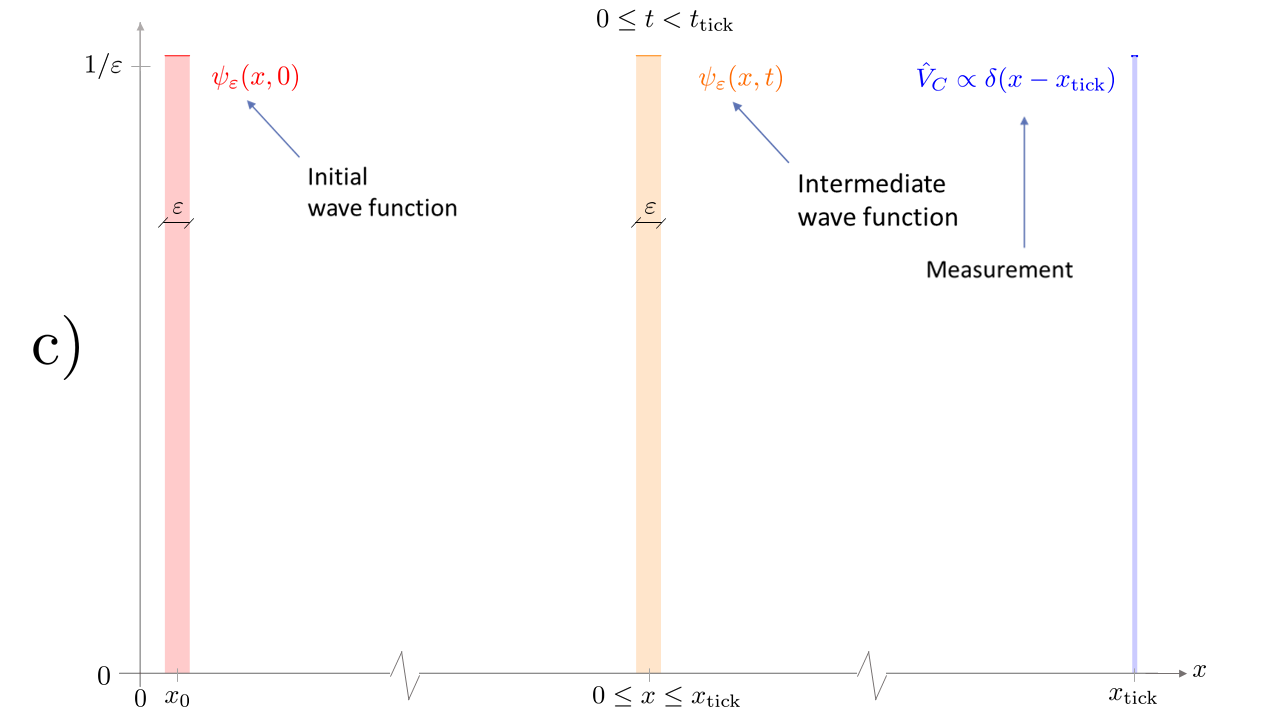}%{3_clocksV4_figB_V2.png}
		%\caption{Flower two.}
	\end{minipage}
	
	\caption{\footnotesize{\textbf{Comparison of distributions of two clocks states before the 1st tick has occurred: a) An optimal $d$-dimensional classical clock. b) The \wso~(quantum) clock~\cite{WSO16}. c) Pauli's Idealised quantum clock.} In a nutshell, the purpose of these figures is to highlight how classical clocks disperse more than quantum clocks. In contrast, a hypothetical perfect clock would not disperse at all. Intuitively, this is why quantum clocks can be more precise than classical ones | but not perfect. $k$ labels orthogonal states; corresponding to real vectors in a) and pure quantum states in b).
			\MspaceFig\newline
			\textbf{a)} The $d$-dimensional probability vector $V_0(t)\in\cP_d$ associated with the clock having not ticked, starts off at $t=0$ with certainty at the 1st site, $V_0(0)=\textbf{e}_1$ (red plot). Its mean $\mu_0(t)$ then moves with uniform velocity towards the right with a standard deviation $\sigma_0(t)$ increasing with $\sqrt{t}$ (orange plot). The tick generator $\mathcal{T}$, is chosen so that the clock can only ``tick" from the last site, $\textbf{e}_d$ and the clock is reset (blue plot). This clock, whose full details are reserved for the \app~\ref{app:optimalclassical}, will ``tick" once it has reached the site $n=d$, at which point it will have dispersed considerably, and as such, its precision is limited to $R_1=d$. Furthermore, since it is a reset clock, later ticks are optimally precise, $R_j=j\, d$.
			\MspaceFig\newline
			\textbf{b)} The \wso~clock starts in a distribution in the time basis which is highly peaked around $\ket{\theta_0}$, resulting in a small standard deviation $\sigma_0$ in the time basis (red plot). The amplitudes of its distribution shift/move in time towards a large concentration around $\ket{\theta_{d-1}}$, where a ``tick" is measured with high probability and the clock is reset to its initial state (blue plot). During the time intervals between ticks this clock will disperse, but less than the classical clock in a) due to quantum interference. This results in a smaller standard deviation $\sigma_0(t)$ than in a) (orange plot). Furthermore, it will also be disturbed by time measurements, causing further unwanted dispersion. However, there is a trade-off | the smaller the standard deviation of the initial state, the more precise initial time measurements will be, but the larger the dispersion due to dynamics and measurements will be too, making later time measurements less reliable. Even so, due to quantum constructive/destructive interference, quantum mechanics allows for a $d$-dimensional state to disperse less as it travels to the region where a tick has the highest probability of occurring, than an optimal $d$ dimensional classical clock, such as in a). As such, this quantum clock can surpass the classical bound; see Theorem \ref{thm:quantum lower bound}.\MspaceFig\newline
 			\textbf{c)} The Idealised clock of Pauli starts with an arbitrarily highly peaked wave-function at position $x=0$ (red plot). It then moves according to $\psi(t,x)=\psi(0,x-t)$; towards $x=x_\textup{tick}$ (orange plot). At all times, its standard deviation is a constant $\epsilon>0$, which can be chosen to be arbitrarily small. It is not disturbed at all by time measurements, and ``ticks" exactly at time $t_\textup{tick}=x_\textup{tick}$ (blue plot), resulting in perfect precision $R_1=\infty$. Furthermore, one can add additional Dirac-delta distributions to the potential centred around $2x_\textup{tick}, 3 x_\textup{tick},\ldots$ without effecting the standard deviation of the Idealised clock. This results in perfect precision for all later ticks; $R_2=R_3=\ldots=\infty$.
	}}\label{fig:3clocks}
\end{figure*}
 
 The optimal reset clocks which saturate the bound in Eq. \eqref{eq:R1 R2 ... classical}, provide insight into our results. For these classical clock examples, the clock starts at one end of a length $d$ nearest neighbour chain, with the tick generator's support region located at the other end. The clock dynamics produce a classical continuous biased random walk along this chain, see Fig a) \ref{fig:3clocks} for details. The error in telling the time is a consequence of the state dispersing as it travels along the chain. Indeed, for these clocks, the standard result from random walk theory which predicts that the standard deviation in a state is proportional to the square root of the distance travelled, approximately holds. This dispersive behaviour achieves $R_1=d$ in the optimal case.\Mspace
 
 Before making a comparison with the quantum clocks described in this \doc, it is illustrative to compare this result with a recent clock in the literature, \cite{Pauletal2017}. Here a quantum clock is powered by two thermal baths, at different temperatures. This temperature difference drives a random walk of an atomic particle up a $d$-dimensional ladder, which spontaneously decays back to the initial state when reaching the top of the ladder, emitting a ``tick" in the decay process. As such, it is a reset clock whose precision $R_1$ depends on the entropy generated by the clock. Interestingly, in the limit of weak coupling and vanishing frequency of ticks, it is found \cite{Pauletal2017} that the clock dynamics becomes classical, represented by a biased random walk up the ladder. The precision of the clock is then
 \be 
 R\approx d \left(\frac{p_\uparrow-p_\downarrow}{p_\uparrow+p_\downarrow} \right) \approx d \left( \frac{e^{\Delta S/d} - 1}{e^{\Delta S/d} + 1} \right)
 \ee 
 where $p_\uparrow$, $p_\downarrow$ are the probabilities of moving up/down the ladder, induced by the thermal baths, and $\Delta S$ is the entropy generated by the clock for each instance that it ticks. As such, as far as the criterion of dimensionality is concerned, this classical thermodynamic clock is always less precise than the optimal classical clock in Theorem \ref{thm_classicalbound} by a constant factor, and only approaches optimal precision in the limit of infinite entropy generation $\Delta S \to \infty$. %So while thermodynamics imposes limitations to our ability to measure time, these limitations can be surpassed by choosing non-thermal setups.%, as is common in many protocols.
 \Mspace
 
 On the other hand, we can also compare the classical clock in Fig.  \ref{fig:3clocks} a) with the behaviour of the Idealised clock of Pauli, introduced in Section \ref{sec:Modeling clocks}. Here the clock Hamiltonian is the generator of translations, and it is not disturbed by continuous measurements, thus leading to no dispersion, and a clock precision of $R_1=\infty$, see Fig. \ref{fig:3clocks} c). Of course, as previously discussed, this high precision is unfortunately an artefact of requiring both infinite energy and dimension.\Mspace
 
 The important question is whether one can do better than the classical clock, and achieve higher precision with a quantum clock. For this task, we consider the \wso~clock introduced in Section \ref{sec:Modeling clocks} which is formed by taking a complex Gaussian amplitude superposition of the SWP clocks, namely
 \be \label{eq:WSO clock}
 \ket{\Psi_\textup{\wso~}} := \sum_{\mathclap{\substack{k\in \mathcal{S}_d(k_0)}}} A e^{-\frac{\pi}{{\sigma_0}^2}(k-k_0)^2} e^{i 2\pi n_0(k-k_0)/d} \ket{\theta_k},
 \ee
 where $\mathcal{S}_d(k_0)$ is a set of $d$ consecutive integers centred about $k_0\in\rr$, $n_0\in(0,d)$ determines the mean energy of the clock state and ${\sigma_0}$ its width in the $\ket{\theta_k}$ basis. Its clock Hamiltonian $\hat H_C$ is the 1st $d$ levels of a quantum harmonic oscillator with level spacing $\omega$; $\hat H_C=\sum_{n=0}^{d-1} n\omega \ketbra{E_n}{E_n}$. The dynamics of the clock is generated via the (possibly non-Hermitian) operator $\hat H=\hat H_C+\hat V_C$, where $\hat V_C=\sum_{i=0}^{d-1} V_i\ketbra{\theta_i}{\theta_i}$ with, %for generality, coefficients $V_i$ with non-positive imaginary part.
 $\{\ket{\theta_k}\}$ the complementary basis to $\{\ket{E_k}\}$, formed by taking the discrete Fourier Transform. As such, $\hat H_C$ and $\hat V_C$ are diagonal in complementary bases to each other. This setup was introduced in \cite{WSO16} with the aim of studying unitary control of other quantum systems. For this objective, 
%  In the protocol, the clock underwent unitary dynamics without being measured | here we will use its construct to see how well quantum clocks can measure time. Indeed, it readily fits into our protocol developed here for measuring time by using the potential $\hat V_C$ to implement continuous measurements rather than unitary operations. For this, we will here consider the special case in which the coefficients $V_i$ are pure imaginary, rather than real.
 the clock underwent unitary dynamics without being measured. Here we will use its construct to see how well quantum clocks can measure time. Indeed, the potential $\hat V_C$ can be  chosen to correspond to continuous measurements rather than unitary evolution by making it anti-Hermitian instead of Hermitian.
 It then follows that one can use the \wso~clock setup to perform continuous measurements on a reset clock as described in Section \ref{sec:example}. In particular, the dynamics under $\hat H=\hat H_C+\hat V_C$ takes the same form as Eq. \eqref{eq:psi bar t ex} on making the basis identification $\{\ket{\theta_k}=\ket{t_k} \}_k$ | which we now make.\Mspace
 
 If one chooses the width of the Gaussian in Eq. \eqref{eq:WSO clock} to be  $\sigma_0=\sqrt{d}$, then the width in the complementary basis $\{\ket{E_k}\}$ is also $\sqrt{d}$. In such cases, a precision $R_1$ proportional to $d$ can be achieved. However, if we choose a width that is narrower but not too narrow, namely $\sigma_0 = d^{\eta/2}$ for small $\eta>0$, then %due to constructive/destructive quantum interference in the complex Gaussian amplitude tails in Eq. \ref{eq:WSO clock}, 
the \wso~clock is able to mimic approximately the dynamical behaviour of the Idealised clock, while maintaining finite energy and dimension \cite{WSO16}, see Fig. \ref{fig:3clocks}  b).   The following theorem formalises this. In addition to $\sigma_0 = d^{\eta/2}$, other parameters such as the particular potential $\hat V_C$ and coefficients $n_0, k_0$ used in the theorem are specified in the proof.

 \begin{theorem}[Achievable precision for quantum clocks]\label{thm:quantum lower bound}
 Consider the quantum clock construction in Section \ref{sec:example} and use a Quasi-Ideal Clock. For all fixed constants $0<\eta\leq1$, and appropriately chosen parameters, the \wso~Clock's precision satisfies
 	\be \label{eq:lower quantum}
 	R_1\geq d^{2-\eta} + \lo (d^{2-\eta}) %\frac{d^{2-\eta}}{{\sigma_0}^2}+\lo\left(\frac{d^{2-\eta}}{{\sigma_0}^2}\right),
 	\ee
 	in the large $d$ limit, where we have used little-o notation. %\mpw{perhaps no need to say this is little-o notation? Also which notation to use? c.f. Eq. \eqref{eq_quantumlowerbound}}%\textup{h.o.t} means higher order terms as  $d\rightarrow\infty$.% and we have assumed that ${\sigma_0}$ is chosen s.t. $\lim_{d\rightarrow \infty} {\sigma_0}$ exists.
 	Furthermore, since it is a reset clock; the precisions $(R_j)_{j\in\nnp}$ of later ticks satisfy
 	\be 
 	R_j= j\, R_1,
 	\ee
 	for all $j,d\in\nnp$.
 \end{theorem}
 \begin{proof}
 	See Section \ref{sec:Quantum Clocks appendix} for a proof of a slightly more general version of the theorem. %We also provide an explicit construction of $\hat V_C$. 
 	The main difficulty of the proof is to come up with a potential $\hat V_C$ which satisfies all the necessary properties | if its derivatives are too large, the clock dynamics are too disturbed by the continuous  measurements, yet if they are not large enough, the measurements will not capture enough time information from the clock.
 \end{proof}
 
 Recently, an upper bound, which scales as $d^2$, has been derived for the precision of all quantum clocks \cite{2004.07857}. This proves that the lower bound derived in Theorem \ref{thm:quantum lower bound} for quantum clocks is tight | at least for the 1st tick.

 \section{Discussion of the Quantum Bound: relationship to related fields and open problems}\label{sec:relation to other bounds} 
 
 In this section, we discuss the relationship of the quantum bound to other concepts and fields which have been associated with time and clocks in quantum mechanics in the past: time-energy uncertainty relations, quantum metrology, quantum speed limits.
 
 \begin{figure}[!htb]
 	
 	%\includegraphics[scale=0.15]{3clocks_fig_aV1}
 	%	\label{}
 	
 	%\includegraphics[scale=0.15]{3clocks_fig_bV1}%[width=0.7\linewidth]{3clocks_fig_bV1}
 	%	\label{}
 	
 	%\includegraphics[scale=0.15]{3clocks_fig_cV1}%{1root}%{fig:eff plots R}
 	%	\label{}
 	
 	\minipage{0.45\textwidth}
 	\includegraphics[width=\linewidth
 	]{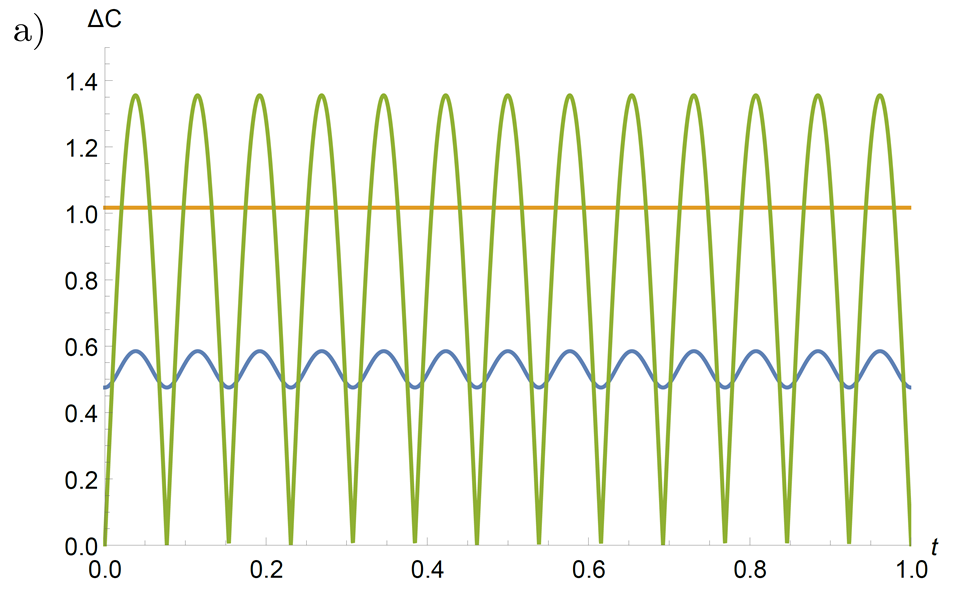}%{timeZurichmischa_Plots_V7.pdf}
 	\endminipage\hfill
 	\minipage{0.46\textwidth}
 	\includegraphics[width=\linewidth]{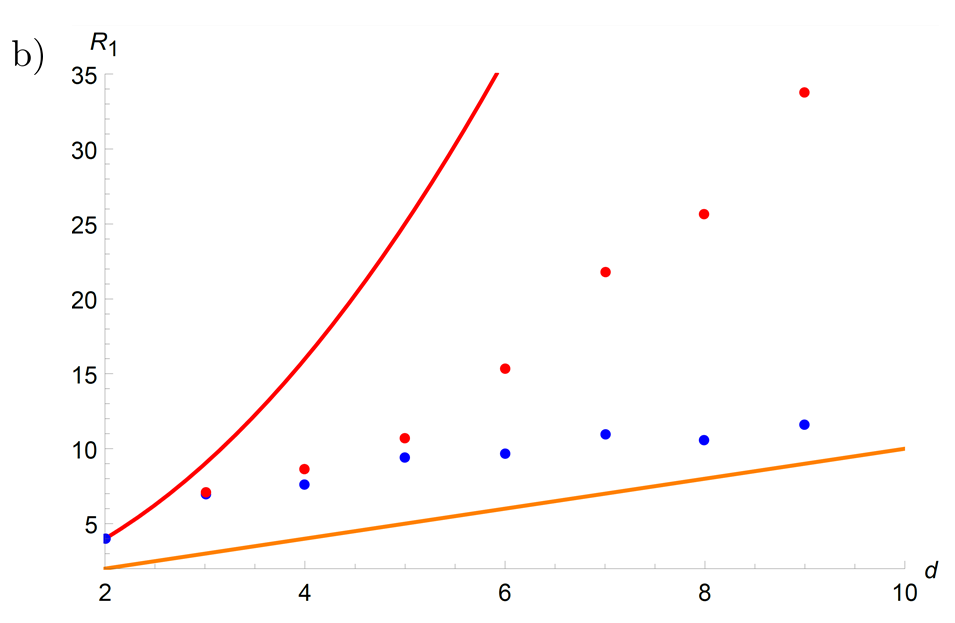}%{RvsDimension.pdf}%{RValuePeresState2.pdf}
 	\endminipage\hfill
 	
 	%\Mspace
 	%	\minipage{0.49\textwidth}%
 	%\includegraphics[width=\linewidth]{3_clocksV4_figC_V2.png}
 	%\endminipage

 	\caption{\footnotesize{\textbf{Comparison of the \wso~and SWP clocks. a) Standard deviations in the time basis as a function of time. b) $R_1$ as a function of clock dimension $d$.}\MspaceFig \newline
 			\textbf{a)} Standard deviation of clock states in the time basis $\Delta C(t)$ for different clocks as a function of time $t$, when time evolved according to their clock Hamiltonian $\hat H_C= \sum_{n=0}^{d-1}\omega \,n\ketbra{E_n}{E_n}$. Time runs from zero to one clock period $T_0=1$ with clock dimension $d=13$. Initial states are: \Wso~clock states for $\sigma_0=\sqrt{d}\approx 3.61$ (orange), $\sigma_0=1.8$ (blue), and a SWP state (green).  \MspaceFig\newline
 			\textbf{b)} Numerical optimization of $R_1$ for the \wso~quantum clock (red data points) and SWP quantum clock (blue data points) for a set of potentials. Both \wso~and SWP achieve $R_1=4$ for $d=2$, however, for large dimensions, the \wso~clock achieves higher precision. Red and orange solid lines ($R_1=d^2$ and $R_1=d$ respectively) are guides to the eye which represent the lower asymptotic bound for the \wso~clock and the upper bounds for the optimal classical clock respectively.  C.F. Theorems \ref{thm:quantum lower bound}, \ref{thm_classicalbound}.
 	}}\label{fig:Peres_Gauss_clock}
 \end{figure}

 \subsection{Time-energy uncertainty relation}\label{sec:t-E uncertainty rel}
 The time--energy uncertainty relation,
 \be \label{eq:E t uncert rel}
 \Delta t \Delta E\geq \frac{1}{2},
 \ee
 has been a controversial concept ever since its conception during the early days of quantum theory, with Bohr, Heisenberg, Pauli and Schr\"odinger giving it different interpretations and meanings. It is no longer so controversial, thanks to works such as~\cite{Busch2008,PhysRevA.66.052107}, which provide clarifying interpretations to the previous literature, and alternative quantifications, such as the Holevo variance; see, e.g.~\cite{OptimalStopwatch}. Often, at the heart of the controversy, was that in quantum mechanics, as explained in the introduction Section \ref{Introduction}, time was usually associated with a parameter, rather than an operator.\Mspace
 
 Since in the present context, we do have operators for time, a lot of this controversy can be circumvented. Indeed, Peres introduced a time operator, $\hat t_c:=\sum_{k=0}^{d-1} (T_0/d) k\ketbra{\theta_k}{\theta_k}$, where $T_0=2\pi/\omega$ is the period of the clock Hamiltonian $\hat H_C$ \cite{Peres80}. In \cite{WSO16} it was shown that the standard deviation of the initial clock state Eq. \eqref{eq:WSO clock} saturates a time-energy uncertainty relation; $\Delta E\Delta t=1/2$,\footnote{The saturation of the bound $\Delta E\Delta t=1/2$ by the \wso~clock, is up to an additive correction term which decays faster than any polynomial in $d$.} where the standard deviations $\Delta t,$ $\Delta E$ are calculated using the operators $\hat t_c$ and $\hat H_C$ respectively. One may be inclined to believe that one can increase the precision of the clock, by decreasing $\Delta t$ as a consequence of a larger $\Delta E$. While indeed decreasing $\sigma_0$ does have this effect, it would be naive to believe this paints the full picture. \Mspace
 
 To study this effect as the clock moves around the clock-face, let $\Delta E(t)$ be the standard deviation of the clock coefficients de-phased in the energy basis, $\{\braket{E_n|\rho_C(t)|E_n}\}_{n=0}^{d-1}$. We will define the standard deviation in the time basis similarly, but here one has to be careful since the time basis $\{\ket{\theta_k}\}$ has circular boundary conditions, meaning $\ket{\theta_k} = \ket{\theta_{k+1 \, (\mathrm{mod} \, d)}}$. Consequently, the state will ``jump" from the state $\ket{\theta_{d-1}}$ to $\ket{\theta_{0}}$ as it completes one period of its motion. We are not interested in the jumps due to the boundary effects, and therefore will denote $\Delta C(t)$ the standard deviation of the coefficients $\braket{\theta_k|\rho_C(t)|\theta_k}$ with the circular boundary conditions replaced with open boundaries\footnote{When the population $\braket{\theta_k|\rho_C(t)|\theta_k}$ reaches the position $d-1$, rather than subsequently appearing at position $0$, it will be assigned to position $d$, and subsequently to position $d+1$, etc.}. This way, we mimic the hands of a real clock, which visually do not suddenly ``jump" at 12 O'clock. The quantity $\Delta C(t)$ is plotted for the \wso~clock and SWP clocks in Fig. a) \ref{fig:Peres_Gauss_clock}. 
 
 Firstly, % observe from Fig. \ref{fig:Peres_Gauss_clock} a) that 
 if $\sigma_0$ is too small, the clock will disperse too much due to its dynamics and time measurements. Secondly, another way to decrease $\Delta t$ and increase $\Delta E$, is via reducing $T_0$, yet this has no effect on the precision of the clock. This latter observation is also related to how our measure of precision $R_1$, is invariant under a re-scaling of time, as discussed in Section \ref{Introduction}. In other words, for the lower bound in Eq. \ref{eq:lower quantum} to exceed the classically permitted value of $d$, the clock uncertainty in the time basis $\Delta t$, must be smaller than the uncertainty in the energy basis $\Delta E$ during the time in which the clock is running. In Fig. \ref{fig:Peres_Gauss_clock} a), the orange plot has $\Delta C(t)\approx\Delta E(t)\approx \sqrt{1/2}$. This is suboptimal, since the POVMs used to measure ticks are diagonal in the time basis $\{\ket{\theta_k}\}$. Furthermore, the blue curve has a smaller $\Delta C(t)$, and according to Theorem \ref{thm:quantum lower bound} can achieve $R_1>d$. However, if we continue to choose initial clock states with smaller $\Delta C(0)$, we run into a problem, namely in the limit $\sigma_0\rightarrow 0^+$, we recover the SWP clock state (green), which has a very large $\Delta C (t)$ at later times.\Mspace
 
 % This may shed more light on time-energy uncertainty relations, which are highly contentious (see \cite{Busch2008} for a review).
 
 Another interesting aspect regarding time and energy, is what is the internal energy of the clock. We can easily answer this question for the \wso~clock. Note that for $n_0=(d-1)/2$, the mean energy of the \wso~clock is $ \omega\, d/2+\lo\left(d\right)$, thus Theorem \ref{thm:quantum lower bound} implies that high precision can be achieved with a modest linear increase in internal energy.
 
 \subsection{%Relationship to b
 	Quantum metrology}\label{Quantum metrology}
 Metrology is the science of measuring unknown parameters. Here, like in other areas of quantum science, it has been shown that allowing for quantum effects can vastly improve the precision of measurements in comparison with the optimal classical protocols. Quantum metrology is now a mature discipline with a vast literature\cite{GiLlMa11}; and as such, it is appropriate to compare our results about clocks with them.\Mspace
 
 %The basic setup in metrology is as follows. When attempting to measure an unknown parametre in a physical system, we prepare a probe state, let it evolve for a known time $t$ according to $U_\psi:=\me^{-\mi \psi \hat H}$, where $H$ is a known Hamiltonian.  and finally measure the evolved system. If the physical mechanism which governs the dynamics is known, we can deduce the value of the unknown parameter $x$ by comparing the input and output states of the probe. In order to reduce the uncertainty $\delta x$ in $x$ we can prepare $N$ iid copies, measure them and take the average value. From the central limit theorem, the error will decrease by $\delta x/\sqrt{N}$. While this result holds for both separable classical and quantum systems; if the copies are entangled, the total error can decrease more quickly with the number of copies, namely $\delta \psi\leq 1/[N(\lambda_M-\lambda_m)]$. Equality is achievable even when one only allowed to make local quantum measurements, rather than global measurements on all $N$ output probe states.%
 
 The basic setup in metrology is as follows. When attempting to measure an unknown parameter in a physical system, we prepare a probe state, let it evolve% for a known time $t$
 , and finally measure the evolved system. The evolution stage is according to the unitary operator $U_x:=\me^{-\mi x \hat H}$, where $\hat H$ is a known Hamiltonian. %[[If the physical mechanism which governs the dynamics is known, we can deduce the value of the unknown parameter $x$ by comparing the input and output states of the probe. In order to reduce the uncertainty $\delta x$ in $x$ we can prepare $N$ iid copies, measure them and take the average value. From the central limit theorem, the error will decrease by $\delta x/\sqrt{N}$.]] 
 The model of a clock described in this \doc~differs substantially from the metrology setup above. The easiest way to convey the main difference is via a simple example: for the purpose of illustration, imagine you were told to perform a particular task after 5 seconds. If you had a clock you could wait for 5 ticks to pass according to it, and perform your designated task. However, in the case of the metrology set-up one can only estimate the time when one \textit{happened} to measure the probe. As discussed in Section \ref{Introduction}, this marks an important division between time keeping devices, and those based on the standard metrology set-ups described here, fall into the category of stopwatches and not clocks, according to our definitions. One could of course use a very precise stopwatch in combination with a not so precise clock, by measuring the stopwatch when the clock ticks, and resetting the stopwatch immediately after measuring it. While one would know to high precision when in time these ticks occurred, their distribution in time would still be the same as if we did not have access to the stopwatch. As such, one can observe that even this combination cannot be substituted for a more precise clock. 
 \iffalse 
 .\\.\\
 - understand quiry complexity: what makes one $U_1U_2$ 2 rather than 1 black box.
 - introduce standard (parallel) metrology. but allowing for non iid initial state copies (i.e., this way one cannot argue that it is a question of degeneracy.\\
 - point out that our clock is "active'' rather than 'passive'\\
 - compare with a clock of same dimension, our scaling Vs their scaling\\
 
 - Finally, point how the parallel metrology protocol, cannot be improved by more general so-called \textit{sequential} (or \textit{multi-round}) strategies, yet it was shown in \cite{PhysRevLett.96.010401} that while the strategies are indeed more general (in th sense that the parallel strategies can be thought of as a special case of the sequential ones), they cannot achieve a higher precision.
 
 -point out 
 -description of standadrd qunatum metrology:
 1) many copy structure w/ entanglement
 2) active, in the sense that can only estimate the parameter of choice when one decided to measure it.
 \fi
 
 \subsection{Quantum speed limits}
 The quantum speed limit, $\tau_\textup{QSL}$ is the minimum time $t_\textup{QSL}$ required for a state $\ket{\psi(0)}$ to become orthogonal to itself, $\braket{\psi(0)|\psi(t_\textup{QSL})}=0$ under unitary time evolution, $\ket{\psi(t)}=\me^{-\mi t \hat H}$. The celebrated Margolus-Levitin and Mandelstam-Tamm bounds, impose a tight lower limit in terms of the mean and standard deviation of $\hat H$ w.r.t. the initial state $\ket{\psi(0)}$ \cite{QSL_review}. It has found many applications in the field of thermodynamics, metrology, and the study of the rate at which information can be transmitted from a quantum system an external observer \cite{PhysRevD.9.3292,Bekenstein90}.\Mspace
 
 One may also be inclined to think that the fundamental limitations on the precision of clocks is related to how quickly the initial clock state becomes perfectly distinguishable to itself, when measured by an external observer. Indeed, the quicker states become distinguishable, the faster one can extract timing information from the clock state. Unfortunately, a simple example will reveal how the situation of a clock is too subtle to be captured by such simple arguments. The quantum speed limit $t_\textup{QSL}$ of the SWP clock is precisely $T_0/d=2\pi/(\omega d)$, since the states satisfy $\me^{-\mi \hat H_C T_0/d}\ket{\theta_k}=\ket{\theta_{k+1}}$. On the other hand, the \wso~clock has a quantum speed limit of much larger than $T_0$. %infinity since it never becomes orthogonal to itself. %only $T_0=2\pi/\omega$, which is a state with the largest possible quantum speed limit since $\me^{-\mi \hat H_C T_0}=\id$.
 Yet, as described in Fig. \ref{fig:Peres_Gauss_clock} b)%Section \ref{Sania numerics}
 , numerics predict that while the SWP clock has a higher precision than the optimal classical clock, it is less precise than the \wso~clock. The explanation of this is that, contrary to the \wso~clock, the SWP clock | at the expense of a shorter quantum speed limit time | becomes highly spread-out in the time basis for times in-between becoming orthogonal to itself; thus incurring large $\Delta C(t)$ during these times [see green plot in Fig. \ref{fig:Peres_Gauss_clock} a)]. In summery, the quantum speed limit only tells us about how long it takes for a clock state to become distinguishable to itself, but fails to quantify the behaviour in-between. Since a clock has to produce a continuous stream of ``tick", ``no-tick" information, the nature of dynamics of the clock at \textit{all times} is of high importance.

 \section{Conclusion and Outlook}\label{sec:conclusions}
%Clocks are becoming increasingly precise and are an enabling factor in exciting emerging quantum technologies. However | \textit{What are the fundamental limitations to this progress?} | \textit{Will we hit a fundamental limit which is unsurpassable?} | \textit{Are quantum clocks more precise than classical ones?} Wile the nature of time is an age-old problem for physicists and philosophers alike, these basic questions have not received a definitive answer.
 
The workings of a clock requires a subtle interplay between two themes, measurement and dynamics | measure the time marked by the clock too strongly, and its dynamics will be very disturbed, adversely affecting later measurements of time. Yet measure too weakly, and you will not gain much information about time at all. Furthermore, the optimal state for minimising measurement disturbance, may possess a suboptimal dynamical evolution for distinguishing between different times | this poses another trade-off. Finding the optimal clock under measurements and dynamics is a fundamental and challenging problem.\Mspace 
 
Here we have motivated and used a general framework  where any clock is regarded as a $d$- dimensional system that autonomously emits information about time (the ticks) to the outside~\cite{RaLiRe15}, and we have shown a quantum-over-classical advantage for the precision of this time information: to achieve the same precision as a classical clock of size~$d$ a quantum clock only requires, roughly, dimension~$\sqrt{d}$. Moreover, due to recent developments,~\cite{2004.07857}, we know that this quantum-over-classical advantage is tight.

  %While the model of a clock considered here involves an infinite sequence of finite-dimensional registers that carry the time information it generates, we can show~\cite{mischa} that one is able to achieve the accuracies as reported here for quantum and classical clocks with one single finite register attached to the clock. %While the register that carries the time information generated by the clock is modelled here to be infinite-dimensional, we can show~\cite{mischa} that one is able to achieve the accuracies as reported here for quantum and classical clocks with a finite register attached to the clock.\Mspace%While the register that carries the time information generated by the clock is infinite-dimensional, in \cite{mischa} we show, among other things, that one can achieve the same accuracies for the quantum and classical clocks reported here, but with a clock containing only a finite register.
 
 A quantum-over-classical advantage characterised quantitatively by a square root is known for other tasks, in particular database search (where Grover's algorithm provides an advantage over any classical algorithm) or quantum metrology (where joint measurements provide an advantage over individual measurements). We stress however that these results are all of a different kind (see Section \ref{sec:relation to other bounds}). In the case of database search, the relevant quantity is the number of blackbox accesses to the database. In the case of metrology, the time keeping devices that one typically encounters, can only predict the time when they \textit{happened} to be measured; and thus are more akin to stopwatches than clocks, which autonomously emit a periodic time reference.\Mspace
 
 Traditionally, quantum metrology and concepts such as the time-energy uncertainty relation, and quantum speed limits; have been associated with characterising how precisely different physical processes involving time can be carried-out. Here we see that while these concepts have a role to play, a more discerning feature between the precision of classical and quantum clocks is how continuous-time quantum walks, under the right circumstances, allow for a smaller spread in the mean distribution, compared to classical standard walks which are limited to a standard deviation which is proportional to the square root of the mean distance travelled.\Mspace
 
Finally, we turn to discussing potential practical applications of our results. Given the description of a quantum clock with its corresponding Hamiltonian, one may understand the error in the clock's signal as arising from two sources. Firstly, the fundamental limitation of the clock due to its dimension, that we expose here, and secondly, the error due to its Hamiltonian not being perfectly stable; this can be thought of as a type of noise. It is the second type that is the dominant challenge for current atomic clocks that work by frequency stabilisation, a form of error correction. In order for our result to become practically relevant, the control over energy levels must increase to the point that the fundamental limitation exposed here (i.e. its effective dimension $d$) becomes dominant. In the case of atomic clocks, which work by coherently interacting with a qubit via a laser tuned to the energy gap of the qubit, the exact effective dimension $d$ is hard to determine, since the laser itself forms part of the clockwork.\Mspace

At the point where this is attained, the precision will be higher than that of current clocks, whose fractional error (the inverse of $R$) is of the order of $10^{-19}$~\cite{Oelker2019}. Thus the effective dimension\footnote{As noted earlier, we require a state highly coherent over all  $d$ degrees distinguishable states of the clock. In this sense, $d$ should be understood as an effective dimension, characterising the space over which one has full control (corresponding to the logical qubits of a quantum computer).} of a quantum clock that attains this level of precision must be at least of the order of $10^{10}$ (since the scaling of our quantum bound of $R\sim d^2$ is tight).\Mspace

These two conditions: stable Hamiltonians and high-dimensional coherent control, will allow us to build even more precise clocks by making full use of quantum properties, as demonstrated in this paper. A dimension of $10^{10}$ is attainable by $33$ qubits, which is not entirely outside the realm of possibility, given the current state of affair in quantum computing technology\footnote{Note that current quantum computers are able to manipulate a larger number $(50 - 100)$ of \emph{physical} qubits, while the number of \emph{logical} qubits is necessarily much smaller.}. Moreover, future work will investigate by how much the quantum-over-classical advantage is diminished by different types and intensities of noise.\Mspace

Another potential reason for pursuing a quantum clock is to explore the role quantum mechanics plays in gravity. Some theoretical models for quantum gravity predict that the general relativistic effect of time dilation will be slightly altered by quantum theory. For example, one semi-classical approach has predicted that while classical stopwatches are governed by standard time dilation when in the presence of a gravitational field, quantum stopwatches convey a, small yet important, modified time due to quantum fluctuations \cite{1904.02178,Anastopoulos_2018}. Similar effects have been reported in interferometry too \cite{zych2011quantum}. Therefore, while two clocks of the same precision (one classical, the other quantum) may be just as precise for telling the time, the quantum one has added value as a probe of quantum gravity. Recently, other foundational questions regarding clocks have been made, for example, how they can be derived from axiomatic principles \cite{mischa}, and how the precision of a quantum clock is related to a violation of Leggett-Garg-type inequalities, which provide further insight into their non classical nature~\cite{PhysRevResearch.3.033051}.\Mspace
 
  \vspace*{0.4cm} 
\acknowledgements
We thank Carlton Caves, Nicolas Gisin, Dominik Janzing, Christian Klumpp, Yeong-Cherng Liang and Yuxiang Yang for stimulating discussions. We all acknowledge the Swiss National Science Foundation (SNSF) via grant  No.~200020\_165843 and via the NCCR QSIT, Foundations Questions Institute via grant No.~FQXi-RFP-1610. M.W. acknowledges funding from his personal FQXi grant \emph{Finite dimensional Quantum Observers} (No.~FQXi-RFP-1623) for the programme \emph{Physics of the Observer}.
R.S. acknowledges funding from the SNSF via grant No.~200021\_169002 which funded him while at the University of Geneva.
 \vspace*{0.1cm}
 \subsection*{Author Contributions}
 All authors contributed to the ideas and results within this work. M.W. proved the lower bound for quantum clocks (Theorem \ref{thm:quantum lower bound}). R.S. proved the upper bound on classical clocks (Theorem \ref{thm_classicalbound}). S.S. provided numerics on quantum clocks (Fig. \ref{fig:Peres_Gauss_clock}). M.W. and R.R. wrote the main text.
 
 %All authors contributed to the ideas and results within this work. M.W. and R.R. wrote the main text. M.W. proved Theorem 2, the lower bound for quantum clocks. R.S. proved Theorem 1, the upper bound on classical clocks. S.S. did the numerics on quantum clocks in Fig 3 b).

 %END OF resultsV2 SECTION---------------------------------------------------------------------------------------------------------------------------------------------------------------------------------------------------------------------------------------------------------------------------------------------------------------------------------------------------------------------------------------------------------------------------------------

\bibliography{RRBiblioV2,MWBiblioV2,RSBiblioV2}
\bibliographystyle{unsrt}

%\iffalse %(remove if you want appedix to show, also remove the \iff command below)

\newpage
\onecolumngrid
\begin{center}
	{\Huge{Appendices and Table of Contents}}
\end{center}

\appendix
\addappheadtotoc
\tableofcontents

\section{Modelling of Clocks} \label{sec_ModellingClocksProofs}

\subsection{Proof of Lemma~\ref{lem_clockgenerators}} \label{sec_GeneratorLemmaProof}

\lemclockgenerators*

\begin{proof}
  Consider an operator-sum representation of the map $\cM^{\delta}_{C \to C T}$, i.e., 
  \begin{align}\label{eq:C to CT gen in proof}
    \cM^{\delta}_{C \to C T}(\rho_C) = \sum_{j=0}^m M_j \rho_C M_j^{\dagger} \otimes \proj{0}_T + \sum_{j=1}^n N_j \rho_C N_j^{\dagger} \otimes \proj{1}_T ,
  \end{align}
  where $\{M_j\}_{j=0}^m$ and $\{N_j\}_{j=1}^n$ are families of operators on $C$. We will assume without loss of generality that the labelling of the operators $M_j$ is such that $\| M_0 \| \geq \| M_j \|$ for any $j \neq 0$. Define furthermore Hermitian operators $H$ and $K$ such that
  \begin{align}\label{eq:k - i H}
   K - i H  = \frac{1}{\delta} (M_0 - \id)
  \end{align}
  and 
  \begin{align}\label{eq:L_j J_j}
    \begin{split}
    L_j & = \frac{1}{\sqrt{\delta}} M_j \\
    J_j & = \frac{1}{\sqrt{\delta}} N_j  .
    \end{split}
  \end{align}
  Furthermore, we can assume without loss of generality that $n=m$. The operators $M_j$ and $N_j$ can be chosen such that $K$, $H$, $L_j$ and $J_j$ are independent of $\delta$ w.l.o.g., which one can prove as follows. %[One can prove the dependency on $\delta$ of the operators $M_j$, $N_j$ in Eqs. \eqref{eq:k - i H}, \eqref{eq:L_j J_j} can be assumed w.l.o.g. as follows.]
   From Eq. \eqref{eq:C to CT gen in proof} it follows that
  \be \label{eq:C to C gen in proof}
   \cM^{\delta}_{C \to C}(\rho_C) = \sum_{j=0}^m M_j \rho_C M_j^{\dagger} + \sum_{j=1}^n N_j \rho_C N_j^{\dagger}.
  \ee 
  %[As stated in Section \ref{Quantum Clocks}, the defining properties of the map $\cM^{\delta}_{C \to C}(\rho_C)$ require it to be of Lindblad form, with semi-group parameter $\delta$. This was proven by Lindblad \cite{Lindblad}.]
  Given the defining properties stated in Section \ref{Quantum Clocks}, Lindblad proved \cite{Lindblad} that the map $\cM^{\delta}_{C \to C}(\rho_C)$ can be written as a Lindbladian with semi-group parameter $\delta$.\footnote{For the axiomatic definitions of a dynamical semi-group see Definition 2 in \cite{axiomaticDef}. In \cite{Lindblad}, Lindblad states these conditions on the adjoint map, which is equivalent to the conditions stated here since the clock is finite dimensional.} By Taylor expanding the generic expression for $\cM^{\delta}_{C \to C}(\rho_C)$ when expressed in Lindblad form to 1st order in $\delta$, and equating 0$^\text{th}$ and 1$^\text{st}$ order terms with the R.H.S. of \eqref{eq:C to C gen in proof}, the dependency on $\delta$ in Eqs. \eqref{eq:k - i H}, \eqref{eq:L_j J_j} follows.
%  \RR{We still need to show that the operators $K$, H$, L_j$, and $J_j$ can be chosen to be independent of $\delta$.}
  
  Eq. \eqref{eq:C to CT gen in proof} then reads
  \begin{align*}
    \cM^{\delta}_{C \to C T}(\rho_C) 
    = (\id_C + \delta K- \delta i H) \rho_C (\id_C + \delta K+ \delta i H) \otimes \proj{0}_T + \sum_{j=1}^m \delta L_j \rho_C L_j^{\dagger} \otimes \proj{0}_T +  \sum_{j=1}^m \delta J_j \rho_C J_j^{\dagger}  \otimes \proj{1}_T  .
  \end{align*}
  We then have to first order in $\delta$
  \begin{align*}
    \cM^{\delta}_{C \to C T}(\rho_C)
    = \rho_C \otimes \proj{0}_T + \delta \left( \{K, \rho \} - i [H,\rho] \right) \otimes \proj{0} + \delta  \sum_{j=1}^m  \Bigl( L_j \rho_C L_j^{\dagger} \otimes \proj{0}_T +  J_j \rho_C J_j^{\dagger}  \otimes \proj{1}_T \Bigr) + O(\delta^2) .
  \end{align*}
  The requirement that $\cM^{\delta}_{C \to C T}$ be trace-preserving furthermore implies that
  \begin{align*}
     (\id_C + \delta K + \delta i H)  (\id_C + \delta K- \delta i H) 
     + \delta  \sum_{j=1}^m \Bigl(L_j^{\dagger} L_j  + J_j^{\dagger} J_j  \Bigr) = \id_C  .
  \end{align*}
  Hence, again to first order in $\delta$, we have
  \begin{align*}
    2 \delta K = - \delta  \sum_{j=1}^m \Bigl( L_j^{\dagger} L_j - J_j^{\dagger}J_j  \Bigr)+ O(\delta^2)  .
  \end{align*}
      Inserting this into the above yields the claim. \Mspace
      
  Finally, to prove the converse part of the Lemma, we first note that for any Hermitian operator $H$ and families of orthogonal operators $\{L_j\}_{j=1}^m$, $\{J_j\}_{j=1}^m$ from $\mathcal{H}_C$ to $\mathcal{H}_C$, the reduced map $\cM^{\delta}_{C \to C}(\cdot)$ in Eq. \eqref{eq:MC to CT gens} takes on the following form
    \begin{align}\label{eq:MC to CT gens 2}
  \cM^{\delta}_{C \to C}(\cdot)= \tr_T\circ \cM^{\delta}_{C \to C T}(\cdot)= \me^{\delta\mathcal{L}} (\cdot) + O(\delta^2),
  \end{align} 
with the Lindbladian 
\be 
 \mathcal{L}(\cdot)=  - \Bigl(  i [H, (\cdot)]  + \sum_{j=1}^m \frac{1}{2} \{L^{\dagger}_j L_j + J^{\dagger}_j J_j, (\cdot) \} - L_j  (\cdot) L_j^{\dagger}\Bigr)  +  \sum_{j=1}^m J_j (\cdot) J_j^{\dagger}.
  \ee
We thus have
 \begin{align} \label{eq_quantumclockcondition 2}
 \lim_{\Delta \to 0}  \lim_{\delta \to 0} \bigl(\tr_T \circ \cM^{\delta}_{C \to C T}\bigr)^{\lfloor \frac{\Delta}{\delta} \rfloor}   =\lim_{\Delta \to 0}  \lim_{\delta \to 0} \bigl(  \me^{\delta\mathcal{L}} + O(\delta^2)  \bigr)^{\lfloor \frac{\Delta}{\delta} \rfloor} = \lim_{\Delta \to 0}  \lim_{\delta \to 0} \bigl( \me^{\delta {\lfloor \frac{\Delta}{\delta} \rfloor}  \mathcal{L}}  +   O({\lfloor {\Delta}/{\delta} \rfloor}  \delta^2)  \bigr) = \lim_{\Delta\to 0} \me^{\Delta \mathcal{L}}= \cI_{C},
\end{align}
thus proving that the map $\cM^{\delta}_{C \to CT}(\cdot)$ satisfies Eq. \eqref{eq_quantumclockcondition} in Def. \ref{def_quantumclock} and thus is a clock.

  \end{proof}

\subsection{Lindbladian semigroup for clock and register qubit}\label{appsec:Lindbladian}

While Eq. \eqref{eq:MC to CT gens} of the main text does not define a dynamical semigroup $CT \to CT$, one can do so quite simply via the map $e^{\delta \; \mathcal{L}_{CT}} \left( \cdot \right)$, where the Lindbladian on the clockwork and register is
\begin{align}
%\begin{split}
\mathcal{L}_{CT} \left( \cdot \right) =& -i[\tilde{H},(\cdot)] + \sum_{j=1}^m \sum_{a\in{0,1}} \tilde{L}_{ja} (\cdot) \tilde{L}^\dagger_{ja} - \frac{1}{2} \{ \tilde{L}^\dagger_{ja} \tilde{L}_{ja}, (\cdot) \}\nonumber\\
&+ \sum_{j=1}^m \tilde{J}_j (\cdot) \tilde{J}^\dagger_j - \frac{1}{2} \{ \tilde{J}^\dagger_j \tilde{J}_j, (\cdot) \},
%\end{split}
\end{align}
where the extended operators are $\tilde{H} = H \otimes \id_T$, $\tilde{L}_{j0} = L_j \otimes \proj{0}_T$, $\tilde{L}_{j1} = L_j \otimes \proj{1}_T$, and $\tilde{J}_j = J_j \otimes \ket{1}\!\bra{0}_T$. The clock map is then defined via
\begin{align}
\cM^\delta_{C \to C T} \left( \rho_C \right) = e^{\delta \, \mathcal{L}_{CT}} \left( \rho_C \otimes \proj{0}_T \right).
\end{align}
%\Mspace

\subsection{Proof of Corollary~\ref{coll:classical clocks}} \label{sec_ClassicalClocksCorollaryProof}

We prove the corollary using density matrix notation:
\begin{corollary*}%\label{appcoll:classical clocks} 
	Let $(\rho^0_C, \{\cM_{C \to C T}\})$ be a classical clock with basis $\{\ket{c}_j\}_{j=1}^d$ and suppose that the tick register has basis $\{\ket{0}, \ket{1}\}$. Then there exist $d\times d$-matrices $\cN$ and $\cT$ such that
	\begin{align}
	&\cM^{\delta}_{C \to C T}(\rho_C)
	=  \rho_C \otimes \proj{0} +  \delta\,\sum_{m, n} \bra{c_n} \rho_C \ket{c_n}  \proj{c_m}_C \otimes \bigl( \cN_{m n} \proj{0}_T + \cT_{m n} \proj{1}_T \bigr) + O(\delta^2). 
	\end{align}
	with
	\begin{align} 
	\cN_{m n} & \begin{cases} \leq 0 & \text{for $m=n$} \\ \geq 0 & \text{for $m \neq n$} \end{cases} \\
	\cT_{m n} & \geq 0
	\end{align}
	for any $m, n$, and
	\begin{align} \label{appeq_sumticknotick}
	\sum_{m=1}^d \cN_{m n} + \sum_{m=1}^d \cT_{m n} = 0
	\end{align}
	for any $m$.	 
\end{corollary*}

  \begin{proof}
	Let $L_j$ and $J_j$ be the operators defined by Lemma~\ref{lem_clockgenerators} and set
	\begin{align}
	\cN_{m n} & = - \delta_{m, n} \bra{c_m} \sum_j (L_j^{\dagger} L_j + J_j^{\dagger} J_j) \ket{c_m} + \sum_j  |\bra{c_m} L_j \ket{c_n}|^2 \\
	\cT_{m n} &= \sum_j  |\bra{c_m} J_j \ket{c_n}|^2  .
	\end{align}
	It is then straightforward to verify the claimed expression for the map. Furthermore, for any $n \in \{1, \ldots, d\}$, 
	\begin{align}
	\sum_{m=1}^d (\cN_{m n} +  \cT_{m n}) 
	= -  \bra{c_n} \sum_j (L_j^{\dagger} L_j + J_j^{\dagger} J_j) \ket{c_n} + \sum_{j, m} \bra{c_n} L_j^\dagger \proj{c_m} L_j \ket{c_n} + \bra{c_n} J_j^\dagger \proj{c_m} J_j \ket{c_n}  = 0,
	\end{align}
	which proves~\eqref{appeq_sumticknotick}. The non-negativity conditions for $\cN_{n m}$ (for $m \neq n$) and $\cT_{m n}$ (for any $m, n$) hold trivially.  Together with~\eqref{appeq_sumticknotick} they also imply that $\cN_{m n} \leq 0$ for $m = n$. 
\end{proof}

\section{Delay functions, precision, and i.i.d sequences}\label{Delay functions, accuracy, and i.i.d sequences}

The appendix is structured as follows. In \ref{app:delayfunction}, we define a \textit{delay function}, that characterises the probability distribution, w.r.t. time, of when an event occurs. This will eventually be applied to the ticks of the clock. We define the moments of the delay function, and introduce the precision $R$. We discuss the special case of a convolution of delay functions and the scaling of the precision in this case (Remark \ref{remark:delaysequence}). Finally, in \ref{app:Rlemmas}, we prove a number of important lemmas concerning the precision, including one for sequences (i.e. convolutions) of delay functions (Lemma \ref{lemma:delaysequence}, and another  for mixtures (Lemma \ref{lemma:delaymixture}).

\subsection{Delay functions: Definition, and behaviour of the precision}\label{app:delayfunction}

\begin{definition}\label{def:delayfunction}

By a \textbf{delay function}, we refer to a non-negative integrable function of time $t \geq 0$, $\tau \; : \; \mathbb{R}^+_0 \longrightarrow \mathbb{R}^+_0$, that is normalised or sub-normalised,
\begin{align}\label{eq:integrable}
	\int_0^\infty \tau(t) dt = Q&\leq 1.
\end{align}
	
\end{definition}

\begin{definition}\label{def:delayfunctionmoments}
	
We define the \textbf{mean} \textbf{(first moment)}, \textbf{second moment}, and \textbf{variance} of a delay function with respect to the normalized version of the delay function. That is, given a delay function $\tau(t)$ that integrates to $Q$ (see. Eq. \ref{eq:integrable}), the moments are calculated from $\tau(t)/Q$ which is a normalised probability distribution,
\begin{subequations}\label{eq:delayfunctionmoments}
\begin{align}
	Q &= \braket{t^0} = \int_0^\infty \tau(t) dt, \label{eqdef:delayfunctionprobability}\\
	\mu &= \frac{\braket{t^1}}{Q} = \int_0^\infty t \; \frac{\tau(t)}{Q} dt, \label{eqdef:delayfunctionmean}\\
	\chi &= \frac{\braket{t^2}}{Q} = \int_0^\infty t^2 \frac{\tau(t)}{Q} dt, \label{eqdef:delayfunctionchi}
\end{align}
\end{subequations}
while the variance, denoted by $\sigma$, is defined in the usual manner,
\begin{align}\label{eqdef:delayfunctionvariance}
	\sigma &= \sqrt{ \chi - \mu^2 }.
\end{align}

\end{definition}

\begin{remark}

Note that the first and second moments may diverge.

\end{remark}

\begin{definition}\label{def:accuracydelayfunction}

The precision of a delay function $\tau(t)$ is defined by
\begin{align}\label{eqdef:delayfunctionR}
	R \left[ \tau \right] &= \frac{\mu^2}{\sigma^2},
\end{align}
if the first moment $\mu$ of the delay function is finite, and $R=0$ if it diverges.
\end{definition}

For simplicity of expression, we will often omit the functional notation $R[\cdot]$, referring to the precision by simply $R$, augmented with a subscript or superscript when necessary.

\begin{remark}

This definition is discussed in the main text (Section \ref{Accuracy of Quantum Clocks}), in the context of sequences of independent events all described by the same delay function. There we see that the ratio above is the average number of events before the uncertainty in the occurrence of the next event equals the average interval between events. For a discussion of the same, see Appendix \ref{sec:justifyR}.

\end{remark}

\begin{remark}

There are delay functions of arguably high precision, whose precision is not reflected well by $R$ (\cite{Gilles}). Consider, for example, the following pathological\footnote{Pathological in the sense that such a function cannot be generated by the dynamics of finite-dimensional systems of bounded energy.} delay functions parametrised by $\epsilon>0$,
\begin{align}
	\tau(t) &= \left( 1 - \epsilon \right) \delta (t-1) + \epsilon \delta (t - 1/\epsilon^2),
\end{align}
where $\delta(x)$ is the Dirac-Delta function. This delay function corresponds to a large probability of an event occurring at precisely $t=1$, and a small probability $\epsilon$ of the event occurring much later, at $t=1/\epsilon^2$. For small $\epsilon$, the precision $R$ is of the order of $\epsilon$, and goes to zero in the limit $\epsilon \rightarrow 0$, even though the limiting delay function, $\delta(t-1)$, is very precise. Other notions of precision, such as the operational number of alternate ticks ``$N$" \cite{RaLiRe15} may be better suited to characterize these types of delay functions. The precise relationship between $R$ and $N$ is dealt with in detail in \cite{Gilles} (from which this delay function is sourced).
\end{remark}

\begin{remark}[The precision and $\gamma$-value]
	
The precision can alternatively be expressed as
\begin{align}\label{eq:accuracygamma}
	R &= \frac{1}{\frac{\chi}{\mu^2} - 1} = \frac{1}{\gamma - 1}, \quad \text{where} \quad \gamma = \frac{\chi}{\mu^2}.
\end{align}

$\gamma$ has been introduced because it is usually easier to deal with than $R$, being the ratio between two moments of the delay function. Note that the relationship is bijective, and inverted (the smaller $\gamma$ is, the higher the value of $R$). The range of the precision is $R \in [0,\infty)$, corresponding to $\gamma \in (1,\infty)$.
	
\end{remark}

\begin{remark}[The precision is invariant under re-normalisation]

As the precision $R$ is defined via the first and second moments of the \emph{normalised} version of the delay function, changing the normalisation of $\tau(t)$ by multiplying it by a positive constant does not affect $R$.

\end{remark}

\subsubsection{Combining delay functions in sequence - convolution}\label{app:delaysequence}

Consider an arbitrary sequence of delay functions $\{\tau_i\}_{i=1}^M$. Then the following convolution of a subsequence
\begin{subequations}\label{eq:delaysequence}
\begin{align}
	\tau^{(m)}(t) &= \int_0^t dt_{m-1} \iint... \int_0^{t_3} dt_{2} \int_0^{t_2} dt_{1} \tau_1(t_1) \tau_2(t_2-t_1) \tau_3(t_3-t_2) ... \tau_m(t-t_{m-1}) \\
	&= \left( \tau_1 \conv \tau_2 \conv ... \conv \tau_m \right) (t),
\end{align}
\end{subequations}
where $\conv$ denotes convolution, is also a delay function. To see this, note that the integrand above is always non-negative, and thus $\tau^{(m)}$ is also non-negative. Furthermore, one may calculate, by direct integration, that the moments of $\tau^{(m)}$ are given by
\begin{subequations}\label{eq:sequencemoments}
\begin{align}
	Q^{(m)} = \braket{t^0} &= \prod_{i=1}^m Q_i, & \text{where} \quad Q_i &= \int_0^\infty \tau_i(t) dt, \\
	\mu^{(m)} = \frac{\braket{t^1}}{Q^{(m)}} &= \sum_{i=0}^m \mu_i, & \text{where} \quad \mu_i &= \frac{1}{Q_i} \int_0^\infty t \cdot \tau_i(t) dt, \\
	\text{and} \quad \chi^{(m)} = \frac{\braket{t^2}}{Q^{(m)}} &= \left( \sum_{i=0}^m \chi_i + \sum_{\substack{i,j=0 \\i \neq j}}^m \mu_i \mu_j \right), & \text{where} \quad \chi_i &= \frac{1}{Q_i} \int_0^\infty t^2 \cdot \tau_i(t) dt.
\end{align}
\end{subequations}

Since each $Q_i$ is within $[0,1]$, it follows that $Q^{(m)}$ is as well, and thus $\tau^{(m)}$ is a delay function (see Def. \ref{def:delayfunction}). Note that $\{\mu^{(m)},\chi^{(m)}\}$ refer to the moments of the normalized version of the delay function (see Def. \ref{def:delayfunctionmoments}), and are finite if and only if every one of the corresponding moments of the individual delay functions do not diverge.

\begin{remark}\label{remark:delaysequence}

A convolution of delay functions describes the case of a sequence of independent distributed events, and in the particular case of clocks, corresponds to reset clocks, those that go to a fixed state after ticking. In this case, each tick of the clock has an identical delay function w.r.t. the previous tick, (all of the $\tau_i$ are the same and equal to $\tau_1$) and one can use Eq. \ref{eq:sequencemoments} to calculate the moments, and thus the precision of the $m^{th}$ tick,
\begin{align}
	Q^{(m)} &= Q_1^m \\
	\mu^{(m)} &= m \mu_1 \\
	\chi^{(m)} &= m \chi + m(m-1) \mu^2 \\
	\sigma^{(m)} &= \sqrt{m} \sigma_1,
\end{align}
from which we find that the precision $R^{(m)} = m R_1$.
\end{remark}

\subsection{Lemmas on the precision of sequences, mixtures, and scaled delay functions}\label{app:Rlemmas}

\begin{lemma}[The precision of a convolution of delay functions is limited by the sum of the precisions]\label{lemma:delaysequence}

If a delay function $\tau^{(m)}$ is formed out of the convolution of a sequence of delay functions as in Eq. \ref{eq:delaysequence}, then its precision $R\left[\tau^{(m)}\right]$ (Def. \ref{def:accuracydelayfunction}) is upper bounded by the sum of the precisions of the individual delay functions that form the sequence, i.e.
\begin{align}
	R \left[ \tau^{(m)} \right] \leq \sum_{i=1}^m R_i,
\end{align}
where $R_i$ is the precision of the $i^{th}$ delay function $\tau_i$ in the sequence. Furthermore, this optimal precision is achieved if and only if the individual delay functions satisfy
\begin{align}
	\frac{\mu_i}{R_i} &= \frac{\mu_j}{R_j} \quad \forall {i,j} \in \{1,2,...,m\},
\end{align}
or equivalently, that they satisfy
\begin{align}
	\frac{\sigma_i^2}{\mu_i} &= \frac{\sigma_j^2}{\mu_j} \quad \forall i,j \in \{1,2,...,m\},
\end{align}
where $\mu_i$ and $\sigma_i$ are the mean and variance of the $i^{th}$ delay function, Eq. \ref{eq:sequencemoments}.

\end{lemma}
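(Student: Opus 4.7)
The plan is to reduce the claim to a direct application of the Cauchy--Schwarz inequality, after first establishing that the variances add under convolution. First I would use the moments given in Eq.~\eqref{eq:sequencemoments} to compute the variance of $\tau^{(m)}$. Starting from $\sigma^{(m)\,2} = \chi^{(m)} - \bigl(\mu^{(m)}\bigr)^2$ and expanding $\bigl(\sum_i \mu_i\bigr)^2 = \sum_i \mu_i^2 + \sum_{i\neq j} \mu_i \mu_j$, the cross terms in $\chi^{(m)}$ cancel with the cross terms in $\bigl(\mu^{(m)}\bigr)^2$, leaving the familiar identity for independent random variables,
\begin{equation*}
\sigma^{(m)\,2} \;=\; \sum_{i=1}^{m} \bigl(\chi_i - \mu_i^2\bigr) \;=\; \sum_{i=1}^{m} \sigma_i^{2}.
\end{equation*}

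Next I would apply the Cauchy--Schwarz inequality to the vectors with entries $a_i = \mu_i/\sigma_i$ and $b_i = \sigma_i$, which gives
\begin{equation*}
\bigl(\mu^{(m)}\bigr)^{2} \;=\; \Bigl(\sum_{i=1}^{m} a_i b_i\Bigr)^{\!2} \;\leq\; \Bigl(\sum_{i=1}^{m} \frac{\mu_i^{2}}{\sigma_i^{2}}\Bigr)\Bigl(\sum_{i=1}^{m} \sigma_i^{2}\Bigr) \;=\; \Bigl(\sum_{i=1}^{m} R_i\Bigr)\, \sigma^{(m)\,2}.
\end{equation*}
Dividing through by $\sigma^{(m)\,2}$ and using the definition of the accuracy (Def.~\ref{def:accuracydelayfunction}) yields $R[\tau^{(m)}] \leq \sum_i R_i$. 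One small subtlety to address is the possibility of divergent moments or vanishing $\sigma_i$: if any $\mu_i$ diverges then $\mu^{(m)}$ diverges as well, so by convention $R[\tau^{(m)}] = 0$ and the bound holds trivially; if some $\sigma_i = 0$ then the corresponding $R_i = \infty$ and again the bound is trivial.

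Finally I would read off the equality condition directly from the Cauchy--Schwarz equality case: $(a_i)$ and $(b_i)$ must be proportional, i.e.\ there exists a constant $c$ with $\mu_i/\sigma_i = c\,\sigma_i$ for all $i$, equivalently $\sigma_i^{2}/\mu_i = 1/c$ independent of $i$. To obtain the equivalent form stated in the lemma, I would note the algebraic identity
\begin{equation*}
\frac{\mu_i}{R_i} \;=\; \frac{\mu_i \sigma_i^{2}}{\mu_i^{2}} \;=\; \frac{\sigma_i^{2}}{\mu_i},
\end{equation*}
so the two stated conditions are manifestly identical. I do not expect any genuine obstacle here; the only place calling for a little care is the bookkeeping of edge cases (divergent moments, sub-normalisation), since the lemma is phrased for general delay functions rather than strict probability densities, and one should verify that the Cauchy--Schwarz step only uses the normalised moments that Def.~\ref{def:delayfunctionmoments} defines.
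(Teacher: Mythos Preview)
Your proof is correct and takes a genuinely different route from the paper's.

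The paper proceeds by induction on $m$: it writes $\tau^{(m)} = \tau^{(m-1)} * \tau_m$, computes the moments of the two-fold convolution, expresses $R^{(m)}$ as an explicit function of the ratio $\mu^{(m-1)}/\mu_m$, and then differentiates to locate the unique interior maximum $R^{(m)} = R^{(m-1)} + R_m$, attained precisely when $\mu^{(m-1)}/R^{(m-1)} = \mu_m/R_m$. The induction then cascades this two-term equality condition down the chain.

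Your argument bypasses the induction entirely. Once you observe from Eq.~\eqref{eq:sequencemoments} that $\sigma^{(m)\,2} = \sum_i \sigma_i^2$, the bound $R[\tau^{(m)}] \le \sum_i R_i$ is a single-line Cauchy--Schwarz application with vectors $(\mu_i/\sigma_i)$ and $(\sigma_i)$, and the equality characterisation $\sigma_i^2/\mu_i = \mathrm{const.}$ drops out of the standard Cauchy--Schwarz equality case rather than from a calculus optimisation. This is both shorter and conceptually cleaner; it also makes transparent why the condition is the same for every pair $(i,j)$ simultaneously, whereas the paper's inductive argument has to assemble this from successive two-term conditions. The only thing the paper's approach gives that yours does not is the explicit formula for $R^{(m)}$ as a function of the mean ratio in the two-term case, but that intermediate formula is not used elsewhere.

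Your handling of the edge cases (divergent $\mu_i$, vanishing $\sigma_i$) is adequate; note that the paper does not discuss these at all, so you are if anything more careful here.
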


\begin{proof}

We prove the lemma by induction. Consider that we express the delay function of the convolution as
\begin{align}
	\tau^{(m)} &= \tau^{(m-1)} \conv \tau_m, \\
	\text{where} \quad \tau^{(m-1)} &= \tau_1 \conv \tau_2 \conv ... \tau_{m-1}.
\end{align}

Calculating the moments of $\tau^{(m)}$ w.r.t. the above subdivision, via Eq. \ref{eq:sequencemoments}, we get
\begin{align}
	\mu^{(m)} &= \mu^{(m-1)} + \mu_m, \\
	\chi^{(m)} &= \chi^{(m-1)} + \chi_m + 2 \mu^{(m-1)} \mu_m \\
	&= \left( \frac{1}{R^{(m-1)}} + 1 \right) [\mu^{(m-1)}]^2 + \left( \frac{1}{R_m} + 1 \right) \mu_m^2 + 2 \mu^{(m-1)} \mu_m,
\end{align}
where we have used Eq. \ref{eq:accuracygamma} to re-express the second moments $\chi$ w.r.t. the precisions $R$.

Calculating the precision for $\tau^{(m)}$, using Eq.  \ref{eq:accuracygamma} again,
\begin{align}
	R^{(m)} &= \frac{1}{\frac{\chi}{\mu^2} - 1 } = \left( 1 + \frac{\mu^{(m-1)}}{\mu_m} \right)^2 \left( \frac{1}{R_m} + \frac{1}{R^{(m-1)}} \left( \frac{\mu^{(m-1)}}{\mu_m} \right)^2 \right)^{-1}.
\end{align}

To find the optimal precision $R^{(m)}$ given the sub-precisions $R^{(m-1)},R_m$, we optimize the above expression w.r.t. the ratio of means $\mu^{(m-1)}/\mu_m$, which lies in $(0,\infty)$. At the limit points of the ratio, when $\mu^{(m-1)}/\mu_m \rightarrow 0$, one recovers $R^{(m)} \rightarrow R_m$, whereas when $\mu^{(m-1)}/\mu_m \rightarrow \infty$, one recovers $R^{(m)} \rightarrow R^{(m-1)}$.

In between, there is a single extremal (maximum) value, found by differentiating w.r.t. $\mu^{(m-1)}/\mu_m$, corresponding to
\begin{align}
	\quad R^{(m)} &= R^{(m-1)} + R_m,
\end{align}
when the means satisfy
\begin{align}
	\frac{\mu^{(m-1)}}{R^{(m-1)}} &= \frac{\mu_m}{R_m} = \frac{ \mu^{(m-1)} + \mu_m }{ R^{(m-1)} + R_m} = \frac{\mu^{(m)}}{R^{(m)}}.
\end{align}

Thus in general $R^{(m)} \leq R^{(m-1)} + R_m$.

One continues by induction, maximizing the precision of $R^{(m-1)}$ by splitting $\tau^{(m-1)}$ into the convolution of $\tau^{(m-2)}$ and $\tau_{m-1}$. Analogously to the above, one obtains that
\begin{align}
	R^{(m)} \leq R^{(m-2)} + R_{m-1} + R_m,
\end{align}
with equality if and only if
\begin{align}
	\frac{\mu^{(m-2)}}{R^{(m-2)}} &= \frac{\mu_{m-1}}{R_{m-1}} = \frac{\mu_m}{R_m}.
\end{align}

Proceeding in the same manner, one recovers the lemma.

\end{proof}

\begin{definition}\label{def:partialnorm}

We define the \textbf{partial norm} $P_t[\tau]$of a delay function $\tau$ (Def. \ref{def:delayfunction}) to be the finite integral
\begin{align}
	P_t[\tau] &= \int_0^t \tau(t^\prime) dt^\prime.
\end{align}

Thus $P_t[\tau] \leq Q \leq 1$ for all $t$.
\end{definition}

\begin{lemma}\label{lemma:delaynorm}

Given a delay function $\tau^{(m)}$ that is a convolution of a sequence of delay functions (Eq. \ref{eq:delaysequence}, the partial norm (Def. \ref{def:partialnorm}) of $\tau^{(m)}$ is upper bounded by the products of the corresponding partial norms of the individual delay functions in the sequence, i.e.
\begin{align}
	P_t\left[ \tau^{(m)} \right] \leq \prod_{i=0}^m P_t \left[ \tau_i \right].
\end{align}

\end{lemma}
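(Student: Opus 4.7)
The plan is to prove the inequality by reducing the convolution integral, which is naturally supported on the simplex $\{t_i \geq 0,\ \sum_i t_i \leq t\}$, to a product over the cube $\{0 \leq t_i \leq t\}$ that contains it.

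First I would unfold the $m$-fold convolution defining $\tau^{(m)}$. Iterating Eq.~\eqref{eq:delaysequence}, the convolution can be written as an integral over ordered time arguments $0 \leq t_1 \leq t_2 \leq \cdots \leq t_{m-1} \leq s$. Changing variables to the gap variables $s_i = t_i - t_{i-1}$ (with $t_0 = 0$ and $s_m = s - t_{m-1}$), this is equivalent to integrating $\prod_{i=1}^m \tau_i(s_i)$ over $\{s_i \geq 0,\ \sum_i s_i = s\}$. Consequently,
\begin{equation}
P_t[\tau^{(m)}] \;=\; \int_0^t \tau^{(m)}(s)\, ds \;=\; \int_{\Delta_m(t)} \prod_{i=1}^m \tau_i(s_i)\, ds_1 \cdots ds_m,
\end{equation}
where $\Delta_m(t) = \{(s_1, \ldots, s_m) \in \mathbb{R}_{\geq 0}^m : s_1 + \cdots + s_m \leq t\}$ is the standard $m$-simplex of side~$t$.

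Next I would use the fact that every coordinate $s_i$ on the simplex $\Delta_m(t)$ is automatically bounded by $t$, i.e.\ $\Delta_m(t) \subseteq [0,t]^m$. Since the integrand $\prod_i \tau_i(s_i)$ is non-negative (delay functions are non-negative by Def.~\ref{def:delayfunction}), enlarging the domain of integration only increases the integral, giving
\begin{equation}
\int_{\Delta_m(t)} \prod_{i=1}^m \tau_i(s_i)\, ds_1 \cdots ds_m \;\leq\; \int_{[0,t]^m} \prod_{i=1}^m \tau_i(s_i)\, ds_1 \cdots ds_m \;=\; \prod_{i=1}^m \int_0^t \tau_i(s_i)\, ds_i \;=\; \prod_{i=1}^m P_t[\tau_i],
\end{equation}
by Fubini's theorem. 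This yields the claimed bound.

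Alternatively, I would record the same fact inductively, starting from the base case $m=2$: a direct change of variables $(u,v) = (s', s-s')$ in $P_t[\tau_1 * \tau_2] = \int_0^t \int_0^s \tau_1(s') \tau_2(s-s')\, ds'\, ds$ yields an integral over $\{u,v \geq 0,\ u+v \leq t\}$, which is contained in $[0,t]^2$. The general case then follows from writing $\tau^{(m)} = \tau^{(m-1)} * \tau_m$ and combining the base case with the induction hypothesis. There is no significant obstacle here; the only subtlety is making sure one handles the change of variables and Fubini carefully enough to accommodate delay functions that are merely non-negative and integrable (not necessarily bounded), which is automatic since the integrand is non-negative and measurable.
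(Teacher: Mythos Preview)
Your proof is correct and takes essentially the same approach as the paper. The paper proves the bound by induction on $m$, using at each step that the arguments of both factors in $\int_0^t \int_0^{t'} \tau_{m+1}(t'-t'')\tau^{(m)}(t'')\,dt''\,dt'$ lie in $[0,t]$ and then factorizing over the cube; your direct simplex-in-cube argument is the same observation carried out in one shot, and you even record the inductive variant the paper uses as your alternative.
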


\begin{proof}

Proof by induction. For a sequence of a single delay function, the statement of the lemma holds trivially (and is an equality). Next consider that the statement is proven for a sequence of $m$ arbitrary delay functions. Appending a single delay function $\tau_{m+1}$, we have that the convolution and its partial norm are given by
\begin{align}
	\tau^{(m+1)} (t) &= \left( \tau^{(m)} \conv \tau_{m+1} \right) (t), \\
	P_t \left[ \tau^{(m+1)} \right] &= \int_0^t \left( \int_0^{t^\prime} \tau_{m+1} (t^\prime - t^{\prime \prime} ) \tau^{(m)} (t^{\prime\prime}) dt^{\prime\prime} \right) dt^\prime.
\end{align}
In the integral above, the argument of $\tau^{(m)}$ runs within the interval $[0,t^\prime]$, which is contained in $[0,t]$, as $t^\prime$ itself runs within $[0,t]$. Furthermore, the argument of $\tau_{m+1}$, which is $(t^\prime - t^{\prime\prime})$, is also constrained to be within the interval $[0,t]$. Thus we can upper bound the above integral by
\begin{align}
	P_t \left[ \tau^{(m+1)} \right] &\leq \int_0^t \int_0^t \tau_{m+1}(x) \tau^{(m)}(y) dy dx \\
	\therefore \; P_t \left[ \tau^{(m+1)} \right] &\leq P_t \left[ \tau_{m+1} \right] P_t \left[ \tau^{(m)} \right] \\
	&\leq P_t \left[ \tau_{m+1} \right] \prod_{i=0}^m P_t \left[ \tau_i \right] = \prod_{i=0}^{m+1} P_t \left[ \tau_i \right].
\end{align}

\begin{comment}
\begin{align}
	 \therefore P_t \left[ \tau^{(m+1)} \right] &= \int_0^t \tau^{(m+1)}(t^\prime) dt^\prime \\
	&= \int_0^t \left( \int_0^{t^\prime} \tau_{m+1} (t^\prime - t^{\prime \prime} ) \tau^{(m)} (t^{\prime\prime}) dt^{\prime\prime} \right) dt^\prime \\
	&= \int_0^t \left( \int_0^{t-t^\prime} \tau_{m+1} (y) \tau^{(m)} (t^\prime) dy \right) dt^\prime.
\end{align}
\end{comment}

\end{proof}

\subsubsection{Combining delay functions in mixtures}

Another manner in which one can combine delay functions is by mixing them in a convex combination.

\begin{lemma}[The precision of a mixture is bounded by the best precision from among its components] \label{lemma:delaymixture}

\bigskip

Let a delay function be given by a sum of delay functions,
\begin{align}
	\tau(t) &= \sum_{i=1}^m \tau_i(t),
\end{align}
where each $\tau_i$ is a sub-normalized delay function (with non-zero zeroth moment), and the sum is also either normalised or sub-normalised. Then the precision $R$ of the mixture is upper bounded by
\begin{align}
	R \left[ \tau \right] \leq \max_i R_i,
\end{align}
where $R_i$ is the precision of the $i^{th}$ delay function $\tau_i$. Furthermore, this inequality is only saturated in the case that every delay function $\tau_i$ has the same mean and precision, i.e.
\begin{align}
	\mu_i = \mu_j \quad \text{and} \quad R_i = R_j \quad \forall i,j \in \{1,2,...,m\}.
\end{align}

\end{lemma}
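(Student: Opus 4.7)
The plan is to work with $\gamma = \chi/\mu^2 = 1 + 1/R$, which is easier to handle than $R$ directly because the moments of the mixture decompose linearly. First I would write $Q = \sum_i Q_i$ and set $p_i = Q_i/Q$, noting that $\sum_i p_i \leq 1$ (equal to $1$ if $\tau$ is normalised, but this does not matter since $R$ is scale-invariant under renormalisation). Direct integration of Definition~\ref{def:delayfunctionmoments} applied to $\tau = \sum_i \tau_i$ yields
\begin{align*}
\mu \;=\; \frac{\sum_i Q_i \mu_i}{\sum_i Q_i} \;=\; \sum_i p_i \mu_i, \qquad \chi \;=\; \sum_i p_i \chi_i,
\end{align*}
so $\mu$ and $\chi$ are just $p$-averages of the individual $\mu_i$ and $\chi_i$.

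Next I would decompose the numerator of $\gamma-1 = (\chi-\mu^2)/\mu^2$ by adding and subtracting $\sum_i p_i \mu_i^2$ and identifying the classical variance decomposition:
\begin{align*}
\chi - \mu^2 \;=\; \sum_i p_i (\chi_i - \mu_i^2) \;+\; \Bigl( \sum_i p_i \mu_i^2 - \bigl(\textstyle\sum_i p_i \mu_i\bigr)^2 \Bigr) \;=\; \sum_i p_i \sigma_i^2 \;+\; \mathrm{Var}_p(\mu_i).
\end{align*}
Substituting $\sigma_i^2 = \mu_i^2/R_i$ (from Def.~\ref{def:accuracydelayfunction}) and letting $R_{\max} = \max_i R_i$, I would lower-bound $\sum_i p_i \mu_i^2/R_i \geq (1/R_{\max})\sum_i p_i \mu_i^2$ and then apply Cauchy--Schwarz (equivalently, Jensen's inequality for $x \mapsto x^2$) in the form $\sum_i p_i \mu_i^2 \geq (\sum_i p_i \mu_i)^2 = \mu^2$. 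Combined with $\mathrm{Var}_p(\mu_i) \geq 0$, this gives
\begin{align*}
\gamma - 1 \;=\; \frac{\sum_i p_i \mu_i^2/R_i + \mathrm{Var}_p(\mu_i)}{\mu^2} \;\geq\; \frac{(1/R_{\max})\,\mu^2 + \mathrm{Var}_p(\mu_i)}{\mu^2} \;\geq\; \frac{1}{R_{\max}},
\end{align*}
which by Eq.~\eqref{eq:accuracygamma} is exactly $R[\tau] \leq R_{\max}$.

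For the saturation clause, equality in the first inequality above requires $R_i = R_{\max}$ for every $i$ with $p_i > 0$ and $\mu_i > 0$; equality in the Cauchy--Schwarz step requires all the $\mu_i$ (with $p_i>0$) to be equal; and the vanishing of $\mathrm{Var}_p(\mu_i)$ is the same condition. Together these yield exactly the stated equality conditions that all $\mu_i$ coincide and all $R_i$ coincide. I do not anticipate a major obstacle here: the main subtlety is only the bookkeeping of the possibly sub-normalised delay functions and ensuring that the case $\mu_i = \infty$ (for which $R_i = 0$ by Def.~\ref{def:accuracydelayfunction}) is handled by observing that a diverging $\mu_i$ forces $\mu$ to diverge, trivially giving $R[\tau]=0$ unless that component has weight zero.
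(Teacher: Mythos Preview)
Your argument is correct and in fact a bit more streamlined than the paper's. The paper proves the lemma by induction: it splits the mixture into two sub-sums, writes
\[
\gamma \;=\; \frac{q^{(1)}[\mu^{(1)}]^2\gamma^{(1)} + q^{(2)}[\mu^{(2)}]^2\gamma^{(2)}}{\bigl(q^{(1)}\mu^{(1)} + q^{(2)}\mu^{(2)}\bigr)^2},
\]
uses convexity of the square on the denominator to get a weighted average of $\gamma^{(1)},\gamma^{(2)}$, concludes $\gamma\geq\min(\gamma^{(1)},\gamma^{(2)})$, and then iterates. You instead handle all $m$ terms at once via the law-of-total-variance identity $\chi-\mu^2 = \sum_i p_i\sigma_i^2 + \mathrm{Var}_p(\mu_i)$, which makes the two sources of inequality (spread of the $R_i$ and spread of the $\mu_i$) visible simultaneously and yields the equality conditions in one shot. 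Both routes ultimately rest on the same convexity, but yours avoids the inductive bookkeeping.

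Two minor remarks. First, since $Q=\sum_i Q_i$, your weights satisfy $\sum_i p_i = 1$ exactly, not merely $\leq 1$; the sub-normalisation of $\tau$ only affects $Q$ itself, not the $p_i$. Second, your Jensen step $\sum_i p_i\mu_i^2 \geq \mu^2$ and your $\mathrm{Var}_p(\mu_i)\geq 0$ are the same inequality, so the equality analysis is slightly over-counted---but the conclusion (all $\mu_i$ equal and all $R_i$ equal) is of course unchanged.
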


\begin{proof}

We prove the statement by induction. Split the set of delay functions that comprise the mixture into two subsets, and label the partial sums as $\tau^{(1)}(t) = \sum_{i=1}^k \tau_i(t)$ and $\tau^{(2)}(t) = \sum_{i=k+1}^m \tau_i(t)$, thus
\begin{align}
	\tau(t) = \tau^{(1)}(t) + \tau^{(2)}(t).
\end{align}

We label the zeroth and (normalised) first moments of the two delay functions as $\{Q^{(1)},\mu^{(1)}\}$ and $\{Q^{(2)},\mu^{(2)}\}$ respectively (Eq. \ref{eq:delayfunctionmoments}), and take their respective $\gamma$-values (ref. Eq. \ref{eq:accuracygamma}) to be $\{\gamma^{(1)},\gamma^{(2)}\}$. Note that both $Q^{(1)},Q^{(2)} > 0$. Thus for the composite delay function, calculating the moments explicitly from Eq. \ref{eq:delayfunctionmoments},
\begin{align}
	Q &= \braket{t^0} = Q^{(1)} + Q^{(2)}, \\
	\mu &= \frac{\braket{t^1}}{Q} = \frac{ Q^{(1)} \mu^{(1)} + Q^{(2)} \mu^{(2)} }{ Q^{(1)} + Q^{(2)} }, \\
	\chi &= \frac{\braket{t^2}}{Q} = \frac{ Q^{(1)} \chi^{(1)} + Q^{(2)} \chi^{(2)} }{ Q^{(1)} + Q^{(2)} } = \frac{ Q^{(1)} [\mu^{(1)}]^2 \gamma^{(1)} + Q^{(2)} [\mu^{(2)}]^2 \gamma^{(2)} }{ Q^{(1)} + Q^{(2)} }
\end{align}

Calculating the $\gamma$-value of the composite delay function,
\begin{align}
	\gamma &= (Q^{(1)} + Q^{(2)})\frac{ Q^{(1)} [\mu^{(1)}]^2 \gamma^{(1)} + Q^{(2)} [\mu^{(2)}]^2 \gamma^{(2)}}{\left( Q^{(1)} \mu^{(1)} + Q^{(2)} \mu^{(2)} \right)^2}.
\end{align}

Denote $q^{(1)} = Q^{(1)}/(Q^{(1)} + Q^{(2)})$ and $q^{(2)} = Q^{(2)}/(Q^{(1)} + Q^{(2)})$, so that $q^{(1)},q^{(2)} >0$ and $q^{(1)} + q^{(2)} = 1$,
\begin{align}\label{eq:convexity}
	\gamma &= \frac{q^{(1)} [\mu^{(1)}]^2 \gamma^{(1)} + q^{(2)} [\mu^{(2)}]^2 \gamma^{(2)}}{\left( q^{(1)} \mu^{(1)} + q^{(2)} \mu^{(2)} \right)^2} \geq \frac{q^{(1)} [\mu^{(1)}]^2 \gamma^{(1)} + q^{(2)} [\mu^{(2)}]^2 \gamma^{(2)}}{q^{(1)} [\mu^{(1)}]^2 + q^{(2)} [\mu^{(2)}]^2}
\end{align}
by the convexity of the square. Next, denote $p^{(1)} = q^{(1)} [\mu^{(1)}]^2/(q^{(1)} [\mu^{(1)}]^2 + q^{(2)}[\mu^{(2)}]^2)$ and $p^{(2)} = q^{(2)} [\mu^{(2)}]^2/(q^{(1)}[\mu^{(1)}]^2 + q^{(2)}[\mu^{(2)}]^2)$, so that $p^{(1)},p^{(2)} >0$ and $p^{(1)} + p^{(2)} = 1$,
\begin{align}
	\gamma &\geq p^{(1)} \gamma^{(1)} + p^{(2)} \gamma^{(2)} \geq \min \{ \gamma^{(1)}, \gamma^{(2)} \}
\end{align}

As $\gamma$ is inversely related to $R$ (Eq. \ref{eq:accuracygamma}), it follows that
\begin{align}
	R \leq \max \{R^{(1)},R^{(2)}\}.
\end{align}

To saturate the upper bound, note that the inequality in Eq. \ref{eq:convexity} is only an equality if
\begin{align}
	\left( q^{(1)} \mu^{(1)} + q^{(2)} \mu^{(2)} \right)^2 = q^{(1)} [\mu^{(1)}]^2 + q^{(2)} [\mu^{(2)}]^2,
\end{align}
which by the strict convexity of the square function, is only satisfied when $\mu^{(1)} = \mu^{(2)}$. In this case, one has that
\begin{align}
	\gamma &= q^{(1)} \gamma^{(1)} + q^{(2)} \gamma^{(2)},
\end{align}
which is equal to the minimum from among $\{\gamma^{(1)},\gamma^{(2)}\}$ if and only if $\gamma^{(1)}=\gamma^{(2)}$. Thus to saturate the upper bound, both the mean and the precision of both delay functions must be equal, in which case, the mixture has the same mean and precision.

Proceeding by further splitting $\tau^{(1)}$ and $\tau^{(2)}$ until one recovers the original mixture, one arrives at the statement of the lemma.
\end{proof}

\subsubsection{Scaling the time-scale of a delay function - the invariance of the precision}

\begin{lemma}\label{lemma:delayscale}

If $\tau(t)$ is a delay function (Def. \ref{def:delayfunction}), then so is
\begin{align}
	\tau^\prime(t) &= a \tau(a t), \quad a>0.
\end{align}

Furthermore, the zeroth moment of $\tau^\prime$ is the same as that of $\tau$, the (normalised) first and second moments (ref. Eq. \ref{eq:delayfunctionmoments}) scale as $\{\mu/a,\chi/a^2\}$, and the precision $R$ is left unchanged.

\end{lemma}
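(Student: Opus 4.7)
The plan is to prove Lemma \ref{lemma:delayscale} by a direct change of variables. The result is essentially a restatement of how moments of a probability distribution transform under an affine rescaling of the underlying variable, combined with the observation that the accuracy $R$ is a dimensionless ratio that is therefore invariant under such a rescaling.

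First I would verify that $\tau'$ is indeed a delay function in the sense of Definition~\ref{def:delayfunction}. Non-negativity of $\tau'$ is immediate, since $\tau \geq 0$ and $a > 0$. Integrability and the claim about the zeroth moment follow from the substitution $u = at$, $du = a \, dt$, which gives
\begin{align*}
Q' = \int_0^\infty \tau'(t) \, dt = \int_0^\infty a \, \tau(at) \, dt = \int_0^\infty \tau(u) \, du = Q \leq 1 .
\end{align*}

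Next I would compute the first and second (unnormalised) moments of $\tau'$ using the same substitution:
\begin{align*}
\int_0^\infty t \, \tau'(t) \, dt = \int_0^\infty \frac{u}{a} \, \tau(u) \, du = \frac{1}{a} \, Q \, \mu ,
\qquad
\int_0^\infty t^2 \, \tau'(t) \, dt = \int_0^\infty \frac{u^2}{a^2} \, \tau(u) \, du = \frac{1}{a^2} \, Q \, \chi .
\end{align*}
Dividing by $Q' = Q$ as required by Definition~\ref{def:delayfunctionmoments}, I obtain $\mu' = \mu/a$ and $\chi' = \chi/a^2$, which are the claimed scalings.

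Finally, to conclude the invariance of the accuracy, I would plug into Definition~\ref{def:accuracydelayfunction}: the variance satisfies $(\sigma')^2 = \chi' - (\mu')^2 = (\chi - \mu^2)/a^2 = \sigma^2/a^2$, and hence
\begin{align*}
R[\tau'] = \frac{(\mu')^2}{(\sigma')^2} = \frac{\mu^2/a^2}{\sigma^2/a^2} = \frac{\mu^2}{\sigma^2} = R[\tau] .
\end{align*}
The case where $\mu$ (or $\sigma$) diverges is handled by the same substitution: if $\mu$ diverges then so does $\mu/a = \mu'$, so that $R[\tau'] = 0 = R[\tau]$ by the convention in Definition~\ref{def:accuracydelayfunction}. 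There is no real obstacle here; the only thing to be careful about is remembering that moments are defined relative to the normalised distribution, so the factor of $Q$ cancels out rather than appearing in the final expressions for $\mu'$ and $\chi'$.
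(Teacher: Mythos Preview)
Your proof is correct and follows essentially the same route as the paper: both verify non-negativity, then use the substitution $u=at$ to compute the moments and conclude $Q'=Q$, $\mu'=\mu/a$, $\chi'=\chi/a^2$. The only cosmetic difference is that the paper finishes via the $\gamma$-value $\gamma'=\chi'/(\mu')^2=\gamma$ (Eq.~\ref{eq:accuracygamma}) rather than computing $\sigma'$ explicitly, but this is equivalent to what you do.
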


\begin{proof}

Since $a>0$, $\tau^\prime(t)$ is a non-negative function. Calculating the zeroth moment,
\begin{align}
	Q^\prime &= \int_0^\infty a \tau(at) dt \\
		&= \int_0^\infty \tau(s) ds, \quad \text{where $s=at$} \\
		&= Q \leq 1.
\end{align}
Thus $\tau^\prime$ is also a delay function. For arbitrary moments, via the same change of variable, one finds that
\begin{align}
	\frac{\braket{t^n}^\prime}{Q^\prime} &= \frac{1}{a^n} \frac{\braket{t^n}}{Q}.
\end{align}
Thus $\mu^\prime = \mu/a$ and $\chi^\prime = \chi/a^2$. Finally, with respect to the $\gamma$-value (ref Eq. \ref{eq:accuracygamma}),
\begin{align}
	\gamma^\prime &= \frac{\chi^\prime}{\mu^{\prime 2}} = \frac{\chi/a^2}{\left(\mu/a\right)^2} = \frac{\chi}{\mu^2} = \gamma,
\end{align}
from which it follows that the precision $R^\prime = R$.

\end{proof}

\begin{remark}

The above lemma implies that the precision is independent of how quickly the event takes place, as would be measured by its frequency/resolution. In other words, the precision of a delay function is measured w.r.t. the natural timescale of the delay function itself, rather than an external reference.

\end{remark}

\section{An optimal classical clock - The Ladder Clock}\label{app:optimalclassical}

\subsection{The Ladder Clock achieves precision $R_j=j\, d$}\label{sec:example classical}

Here we discuss a simple classical clock that saturates the upper bound for the precision $R$ of classical clocks, and as far as we know, is the only one to do so. This is the classical clock used in Fig. \ref{fig:3clocks} a). This clock, more precisely, the discrete version of the continuous clock we discuss below, was introduced in \cite{ATGRandomWalk}. There it was shown to perform with a similar linear scaling in the dimension $d$, but for the alternative definition of precision via the Alternate Ticks Game \cite{RaLiRe15}. This optimal clock may also be approached thermodynamically as in \cite{Pauletal2017}, in the limit of semi-classical dynamics, and of infinite entropy cost.

As discussed in Corollary \ref{coll:classical clocks}, a classical clock is completely specified by a pair $\mathcal{N},\mathcal{T}\in \rr^{(d\times d)}$ of stochastic generators and the initial state of the clock, $V_0\in\cP_d$. In the case of the Ladder Clock, these are
\begin{align}\label{eq:stochastic gens foe exnaple classicla clock}
\mathcal{N}=
\begin{bmatrix}
-1 & 0 & \ldots & & \, & \ldots & 0 \\
 1 &-1 & 0 & \ldots & \, &\, & \vdots \\
 0 & 1 &-1 & 0 & \ldots & \, &\, \\
 \vdots & \ddots & \ddots & \ddots & \ddots &  & \, \\
 \, & \, & \ddots & \ddots &\ddots & \ddots & \vdots \\
  \vdots & \, & \, & 0 & 1 &-1 & 0 \\
  0  & \ldots &\, & \ldots & 0 & 1 &-1
\end{bmatrix}
,\quad\quad 
[\mathcal{T}]_{i,j}=
\begin{cases}
1 &\mbox{if } i=1,j=d\\ 
0 & \mbox{otherwise}
\end{cases},
\end{align}
and the initial state is chosen to be the vector of probabilities $[V_0(0)]_i=\delta_{i,1}$. Physically, the choice of tick generator means that we can only tick from the $d^\text{th}$ site, while the choice of initial state $V_0(0)$ means that the clock starts out with all its population on the 1st site. Furthermore, $\mathcal{N}$ generates movement from the basis state $i$ to the basis state $i+1$, and thus we call this clock the \textit{Ladder Clock}.

As the tick generator $\mathcal{T}$ is rank-1, it follows that the clock is a reset clock (Def. \ref{def:reset clock}). Each tick is thus identical and independent of the others, and the delay function of the $n^{th}$ tick of the clock is just the convolution of that of the first tick with itself $n$ times (for more details, see Appendix \ref{Sec:classical clocks theorem proof}, and specifically \ref{sec:justifyR}). From Remark \ref{remark:delaysequence}, we conclude that the precision of the $n^{th}$ tick is $n R_1$, where $R_1$ is the precision of a single tick.

Furthermore, one can understand a single tick itself as a sequence of $d$ identical and independent events, that of moving from site $1$ to site $2$, $2$ to $3$, and so on, with the final event being the tick moving the state from site $d$ back to site $1$. The precision of a single tick is thus $d$ times the precision of the delay function of a single site jump, which turns out to be $R=1$. In turn, the precision of a single tick of the clock is $R_1=d$, and that of the $n^{th}$ tick is $R_n = n d$.

We proceed to prove these statements from the analytical form of the clock state and delay function of the tick.

\subsection{The time evolution of the Ladder clock}

The unnormalised state of the ladder clock corresponding to the case in which no tick is generated during the time interval $[0,t]$ is given by the equation (for a more detailed discussion on this, see Section \ref{Sec:classical clocks theorem proof})
\be 
V_0(t)=\me^{t \mathcal{N}} V_0(0)= \me^{-t}\, \begin{bmatrix}
	1 & 0 & 0 &\ldots & \, & \,  & \, & \ldots & 0 \\
	f_2(t) & 1 & 0 & 0 & \, & \, & \, &\, & \vdots \\
	f_3(t) &f_2(t) & 1 & 0 & 0 & \, & \, & \, &\, \\
	\vdots &f_3(t)&f_2(t)& 1 & 0 & 0 & \, & \, & \, \\
	\, & \, & \ddots & \ddots &\ddots & \ddots & \ddots & \,& \,\\
	\, & \, & \, &f_3(t)&f_2(t)& 1 & 0 & 0 &\vdots  \\
	\,  & \, &\, & \, &f_3(t)&f_2(t)& 1 & 0 & 0\\
	\vdots & \, &\, &\, & \, &f_3(t)&f_2(t)& 1 & 0 \\
	f_d(t)& \ldots &\, &\, &\, & \ldots &f_3(t)&f_2(t)& 1
\end{bmatrix} 
\begin{bmatrix}
	1  \\
	0  \\
	\vdots  \\
	\vdots \\
	\vdots\\
	\vdots  \\
	\vdots\\
	\vdots \\
	0
\end{bmatrix}
=\me^{-t}\,
\begin{bmatrix}
	1  \\
	f_2(t)  \\
	f_3(t)  \\
	\vdots \\
	\vdots\\
	\vdots  \\
	\vdots\\
	\vdots \\
	f_d(t)
\end{bmatrix}
,\quad f_k(t):= \frac{t^{k-1}}{(k-1)!}.
\label{eq:V of t emaple classical clock}
\ee
A direct calculation of the mean $\mu(t)$ and standard deviation $\sigma(t)$ associated with the population at time $t$ in the limit $0\leq t\ll d$, using Eq. \eqref{eq:V of t emaple classical clock} yields
\ba
\mu(t)&= \sum_{n=1}^d n \,[ V_0(t) ]_n \\
\sigma(t)&=\sqrt{\sum_{n=1}^d (n-\mu(t))^2\, [ V_0(t) ]_n  }.
\ea
From these expressions we see that
\ba
\mu(t)& \approx \sum_{n=1}^\infty n \,[ V_0(t) ]_n= t+1,\quad & \sigma(t)\approx \sqrt{\sum_{n=1}^\infty (n-\mu(t))^2\, [ V_0(t) ]_n  }=\sqrt{t}.
\ea
In other words, the mean is approximately proportional to the distance travelled by the initial state, while the standard deviation is proportional to the square root of the distance travelled by the initial state. Physically, the reason why this is only approximately true for short times, is because when population reaches the last site $n=d$, the population is ``lost" from the no-tick space due to a tick happening.

%A direct calculation of the mean $\mu_n$ and standard deviation $\sigma_n$ associated with the potulation of the $n^\text{th}$ site over time, uning Eq. \eqref{eq:V of t emaple classical clock} yields
%\be  
%\mu_n= \int_0^\infty t\,[ V_0(t) ]_n \, dt = n,\quad \sigma_n=\sqrt{\int_0^\infty (t-\mu_n)^2 [ V_0(t) ]_n \,dt }=\sqrt{n}.
%\ee 
%In other words, the mean is proportial to the distance travelled by the initial state, while the standard deviation is proportional to the square root of the distance travelled by the initial state.
From Eqs. \eqref{eq:stochastic gens foe exnaple classicla clock}, \eqref{eq:V of t emaple classical clock}, it follows that the delay function associated with the 1st tick, $P_{T_1}(t)$  (Defined in Section \ref{Accuracy of Quantum Clocks}) is
\be \label{eq:delay tau}
P_{T_1}(t)=\|\mathcal{T} V_0(t) \|_1= \me^{-t} \frac{t^{d-1}}{(d-1)!}.
\ee
A direct calculation of the mean $\mu_1$, standard deviation $\sigma_1$, and precision $R_1$; for these classical clocks according to the delay function Eq. \eqref{eq:delay tau}, yields
\be 
\mu_1=d,\quad \sigma_1=\sqrt{d}, \quad R_1=\frac{\mu_1^2}{\sigma_1^2}=d.
\ee 
Since the Ladder Clock is a re-set clock, it follows (see Section \ref{Sec:classical clocks theorem proof} for a proof) that the precision of the $j^\text{th}$ tick is $R_j= j R_1$, and thus this completes the proof of Eq. \eqref{eq:saturates accuracy classical} in Theorem \ref{thm_classicalbound}.
%\end{changemargin}

\section{Classically generated Markovian sequences}\label{app:markovianpart}

In This appendix we discuss Markovian dynamics of classical systems. We begin in \ref{app:markoviandynamics} by discussing classical finite-dimensional states, the differential equations that govern the generation of events, and their associated delay functions. We demonstrate the properties of the dynamical generators that we will require later in the proof. In \ref{app:eventsequences} we discuss sequences of events generated by Markovian dynamics, and some of their properties, continuing in \ref{app:indisequences} to differentiate between general sequences of events and \emph{independent} sequences, those in which each event is independent of the dynamics of the prior event. Finally in \ref{app:resetresult}, we prove that the precision of general sequences is upper bound by that of independent sequences, a result that will translate directly to the case of clocks, in which case reset clocks will upper bound the precision of general clocks.

\subsection{Events generated by classical (stochastic) dynamics}\label{app:markoviandynamics}

\subsubsection{Classical states are population vectors}

Here we define the state space for classical Markovian dynamics, of which clocks are a special case. In the main text (Section \ref{sec:classicalspecialcase}), we argued that classical clocks, which are a special case of quantum clocks restricted to a fixed orthonormal basis, may be described by stochastic dynamics on the vector of diagonal elements of the quantum state w.r.t. the restricted basis. In light of this description, we define classical states accordingly.

\begin{definition}\label{def:canonicalbasis}

The \textbf{canonical} basis is a preferred orthonormal basis $\{\mathbf{e}_i\}_{i=0}^d$ for a real vector space of finite dimension $d$. Each $\mathbf{e}_i$ is referred to as a \textbf{canonical state}.
	
\end{definition}

\begin{definition}\label{def:popvector}
	
A \textbf{classical state}, or \textbf{population vector}, is a vector $v \in \mathbb{R}^d$ all of whose elements w.r.t. the canonical basis (Def. \ref{def:canonicalbasis}) are greater or equal to zero, and whose sum of the elements in the canonical basis is less or equal to one. Thus $v = \sum_{i=1}^d v_i \mathbf{e}_i$ is a population vector if and only if
\begin{align}
	v_i &\geq 0 \quad \forall i, \\
	\text{and} \quad \sum_i v_i &\leq 1.
\end{align}
	
\end{definition}

\begin{remark}

In this appendix, we only deal with classical states, and so will omit the prefix classical for simplicity.

\end{remark}

\begin{definition}\label{def:norm}

For this appendix, we define the \textbf{e-sum} (that stands for ``element-sum") of a real vector $v \in \mathbb{R}^d$ as the sum of its elements w.r.t. the canonical basis (Def. \ref{def:canonicalbasis}). Thus if $v = \sum_{i=1}^d v_i \mathbf{e}_i$, then
\begin{align}
	\onenorm{v} &= \sum_{i=1}^d v_i.
\end{align}

\end{definition}

\begin{remark}

Note that for population vectors, the e-sum coincides with the 1-norm, as all of the elements in said basis are non-negative. We use the simple sum because we will deal with negative vectors as well, and the e-sum has the advantage of commuting with other finite sums and finite integrals. 

\end{remark}

\subsubsection{Generators of stochastic (Markovian) dynamics}

From standard probability theory, the differential equation for Markovian dynamics is given by
\begin{align}
	\frac{d}{dt} V(t) &= \hat{M} V(t),
\end{align}
where $V(t)$ is the state at time $t$ and $\hat{M}$ is a time-independent linear operator, referred to as the Kolmogorov generator of forward dynamics. If the dynamics are also responsible for the generation of events, then the generator $\hat{M}$ is split into
\begin{align}
	\hat{M} &= \notickmatrix + \tickmatrix,
\end{align}
where $\tickmatrix$ is associated with the generation of events, and the probability density (probability per unit time) that the event occurs at a time $t$ is given by
\begin{align}\label{eq:probdensityevent}
	P(t) &= \onenorm{\tickmatrix V(t)}.
\end{align}

We proceed to discuss the precise properties of these generators that we require in the proof.

\begin{definition}\label{def:stochasticgenerators}

By a \textbf{pair of stochastic generators}, we mean a pair of linear operators $\{ \notickmatrix, \tickmatrix \}$on a finite dimensional real-valued vector space $\mathbb{R}^d$, that we refer to as the \textbf{non-event generator and event generator} respectively, such that the following conditions are satisfied:
\begin{itemize}
	\item $\notickmatrix$ is an endomorphism, i.e. it takes vectors from $\mathbb{R}^d$ to $\mathbb{R}^d$, and when expressed as a matrix w.r.t. the canonical basis,
	\begin{align}\label{eq:nontickcondition}
		\notickmatrix_{ij} & \begin{cases}
				\leq 0 & \text{for $i=j$}, \\
				\geq 0 & \text{for $i\neq j$}.
			\end{cases}
	\end{align}
	\item $\tickmatrix$ takes vectors from $\mathbb{R}^d$ to a real vector space of possibly different dimension $\mathbb{R}^{d^\prime}$. Expressed as a matrix w.r.t. the canonical bases of both (input and output) spaces, $\tickmatrix$ has $d^\prime$ rows and $d$ columns, and must be element-wise non-negative, i.e.
	\begin{align}\label{eq:tickcondition}
		\tickmatrix_{ij} &\geq 0 \quad \forall i,j.
	\end{align}
	\item The pair of generators must satisfy
	\begin{align}\label{eq:stochasticcondition}
		\sum_{i=1}^{d^\prime} \tickmatrix_{ij} + \sum_{i=1}^d \notickmatrix_{ij} &\leq 0 \quad \forall j.
	\end{align}
\end{itemize}

\end{definition}

\begin{remark}

The non-negative nature of $\tickmatrix$ follows from the requirement that the probability density of the event (Eq. \ref{eq:probdensityevent}) must be non-negative, and the state following the event must also be a population vector. The off-diagonal elements of both $\notickmatrix$ and $\tickmatrix$ represent transition probabilities between different canonical states, and are therefore also non-negative. Finally, the condition on the column sums (Eq. \ref{eq:stochasticcondition}) is necessary for the e-sum norm of the state to be non-increasing, from which the condition on the diagonal elements of $\notickmatrix$ follows.

\end{remark}

\begin{remark}

We use the more general norm non-increasing rather than norm-preserving generators in the appendix because 1) the results apply to the more general case anyway, and 2) we will require results on norm non-increasing dynamics later in the proof.

\end{remark}

\begin{corollary}\label{cor:zerotickmatrix}

If $\{\notickmatrix,\tickmatrix\}$ are a pair of stochastic generators, then so is $\{\notickmatrix,\hat{\mathbf{0}}\}$, where $\hat{\mathbf{0}}$ is the zero operator with the same input and output spaces as $\tickmatrix$. 

\end{corollary}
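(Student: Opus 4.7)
The plan is to verify directly that the pair $\{\notickmatrix, \hat{\mathbf{0}}\}$ satisfies each of the three defining conditions of Definition~\ref{def:stochasticgenerators}. The first condition concerns $\notickmatrix$ alone (that it is an endomorphism of $\mathbb{R}^d$ with the prescribed sign pattern on its matrix entries), and this holds by hypothesis since the non-event generator is unchanged. For the second condition, the zero operator $\hat{\mathbf{0}}$ has the same input and output spaces as $\tickmatrix$ by construction, and is trivially element-wise non-negative since all its matrix entries vanish, so Eq.~\eqref{eq:tickcondition} holds.

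The only step that requires any work is the column-sum inequality Eq.~\eqref{eq:stochasticcondition} applied to the new pair, namely $\sum_{i=1}^{d'} \hat{\mathbf{0}}_{ij} + \sum_{i=1}^d \notickmatrix_{ij} \leq 0$ for every column index $j$. The first sum vanishes by construction, so it suffices to show $\sum_{i=1}^d \notickmatrix_{ij} \leq 0$. The key observation is that for the original pair, every column sum of $\tickmatrix$ is non-negative because $\tickmatrix_{ij} \geq 0$ by Eq.~\eqref{eq:tickcondition}; rearranging the original inequality then yields $\sum_{i=1}^d \notickmatrix_{ij} \leq -\sum_{i=1}^{d'} \tickmatrix_{ij} \leq 0$, which is exactly what is needed.

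There is essentially no obstacle: the corollary is a monotonicity observation, namely that removing a non-negative contribution from the left-hand side of Eq.~\eqref{eq:stochasticcondition} preserves the inequality. The result will be used later to compare a general stochastic evolution to the one where the event channel is switched off, which is why it is worth stating as a standalone fact.
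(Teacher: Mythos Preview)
Your proof is correct and essentially identical to the paper's: both verify the three conditions of Definition~\ref{def:stochasticgenerators} directly, noting that the zero operator trivially satisfies Eq.~\eqref{eq:tickcondition} and then using the non-negativity of $\tickmatrix$ to deduce $\sum_i \notickmatrix_{ij} \leq -\sum_i \tickmatrix_{ij} \leq 0$.
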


\begin{proof}

The zero operator clearly satisfies Eq. \ref{eq:tickcondition}. Furthermore, we verify that Eq. \ref{eq:stochasticcondition} is satisfied for the new pair,
\begin{align}
	\sum_{i=1}^d \notickmatrix_{ij} &\leq - \left( \sum_{i=1}^{d^\prime} \tickmatrix_{ij} \right) < 0 \quad \forall j,
\end{align}
as all of the elements of $\tickmatrix$ are non-negative.

\end{proof}

\begin{remark}

Given an linear endomorphism $\notickmatrix$, we consider it to be a non-event generator if it forms a pair of stochastic generators with the zero operator, which is equivalent to requiring it to for a pair of stochastic generators with at least one element wise generator $\tickmatrix$. On the other hand, every linear operator $\tickmatrix$ that is element-wise non-negative w.r.t. the canonical bases for the input and output space is an event generator, as one can always find a non-event generator with diagonal elements that are negative enough so that the two taken together form a pair of stochastic generators.

\end{remark}

\begin{corollary}\label{cor:zerogenerators}

If $\{\notickmatrix,\tickmatrix\}$ are a pair of stochastic generators (Def. \ref{def:stochasticgenerators}) and $\tickmatrix$ is an endomorphism, i.e. it takes vectors from $\mathbb{R}^d$ to $\mathbb{R}^d$, then the pair $\{\notickmatrix + \tickmatrix, \hat{\mathbf{0}} \}$ is also a pair of stochastic generators. Equivalently, $\notickmatrix + \tickmatrix$ is a non-event generator.

\end{corollary}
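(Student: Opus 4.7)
The plan is to verify each of the three conditions in Definition~\ref{def:stochasticgenerators} for the candidate pair $\{\cN + \cT, \hat{\mathbf{0}}\}$, using the corresponding conditions for the original pair $\{\cN, \cT\}$. Since $\cT$ is assumed to be an endomorphism on $\rr^d$, the sum $\cN + \cT$ is a well-defined linear operator on $\rr^d$, so the domain/codomain requirement for the new non-event generator is immediate. Likewise, the zero operator trivially satisfies the element-wise non-negativity condition Eq.~\eqref{eq:tickcondition}, and the column-sum condition Eq.~\eqref{eq:stochasticcondition} for the new pair reduces exactly to the column-sum condition for the original pair, because the zero operator contributes nothing to the sum.

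The only step requiring any real work is verifying the sign pattern Eq.~\eqref{eq:nontickcondition} for $\cN + \cT$. For off-diagonal entries ($i \neq j$), both $\cN_{ij} \geq 0$ and $\cT_{ij} \geq 0$ hold by assumption, so $(\cN + \cT)_{ij} \geq 0$ is immediate. The subtle case is the diagonal: here $\cN_{jj} \leq 0$ but $\cT_{jj} \geq 0$, so the sign is not a priori clear. I would handle this by isolating the diagonal in the column-sum condition:
\begin{align*}
(\cN + \cT)_{jj} \;=\; \sum_{i=1}^{d}(\cN_{ij} + \cT_{ij}) \;-\; \sum_{i \neq j}(\cN_{ij} + \cT_{ij}) \;\leq\; 0 \;-\; \sum_{i \neq j}(\cN_{ij} + \cT_{ij}),
\end{align*}
where the first sum is $\leq 0$ by Eq.~\eqref{eq:stochasticcondition}, and the second sum is $\geq 0$ because every summand is an off-diagonal (hence non-negative) entry of either $\cN$ or $\cT$. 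Thus $(\cN + \cT)_{jj} \leq 0$, as required.

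Putting these verifications together establishes that $\{\cN + \cT, \hat{\mathbf{0}}\}$ is a pair of stochastic generators in the sense of Def.~\ref{def:stochasticgenerators}, which by the definition/remark preceding the corollary is precisely the statement that $\cN + \cT$ is a non-event generator. The main (and essentially only) obstacle is the diagonal sign argument above; everything else is bookkeeping. Intuitively, the result simply says that if we relabel every ``tick'' transition as a ``no-tick'' transition, the resulting purely internal dynamics are still a valid sub-stochastic Markovian generator.
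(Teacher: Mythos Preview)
Your proof is correct and essentially identical to the paper's: both verify the three conditions of Def.~\ref{def:stochasticgenerators} directly, with the only nontrivial step being the diagonal sign of $\cN+\cT$, handled via the same decomposition $(\cN+\cT)_{jj} = \sum_i(\cN_{ij}+\cT_{ij}) - \sum_{i\neq j}(\cN_{ij}+\cT_{ij})$. The paper actually writes a strict inequality $<0$ at that step, whereas your non-strict $\leq 0$ is the correct match to Eq.~\eqref{eq:nontickcondition}.
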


\begin{proof}

Eq. \ref{eq:stochasticcondition} is trivially satisfied, while the zero operator by definition satisfies Eq. \ref{eq:tickcondition}. The off-diagonal elements of $\notickmatrix$ and $\tickmatrix$ are individually non-negative, and thus the same holds for the sum, satisfying the second condition in Eq. \ref{eq:nontickcondition}. 

Finally, to verify the condition on the diagonal elements (first line of Eq. \ref{eq:nontickcondition}),
\begin{align}
	\notickmatrix_{jj} + \tickmatrix_{jj} &= \sum_{i=1}^d \left( \notickmatrix_{ij} + \tickmatrix_{ij} \right) - \sum_{i\neq j}^d \left( \notickmatrix_{ij} + \tickmatrix_{ij} \right) < 0 \quad \forall j,
\end{align}
as the first sum on the right is non-positive by Eq. \ref{eq:stochasticcondition}, and the second (subtracted) sum comprises solely non-negative numbers, by Eqs. \ref{eq:nontickcondition} and \ref{eq:tickcondition}.

\end{proof}

\begin{corollary}\label{cor:rowgenerators}

If $\{\notickmatrix,\tickmatrix\}$ are a pair of stochastic generators, then so are all of the pairs $\{\notickmatrix,\tickmatrix_j\}$, where $\tickmatrix_j$ is constructed by setting all of the rows in $\tickmatrix$ to zero, save the $j^{th}$ one.

\end{corollary}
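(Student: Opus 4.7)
The plan is to verify directly that the pair $\{\notickmatrix, \tickmatrix_j\}$ satisfies the three defining properties of a pair of stochastic generators listed in Definition~\ref{def:stochasticgenerators}. The first property, concerning $\notickmatrix$ alone (Eq.~\eqref{eq:nontickcondition}), is inherited unchanged from the hypothesis that $\{\notickmatrix, \tickmatrix\}$ is already a pair of stochastic generators, since $\notickmatrix$ is not modified.

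For the element-wise non-negativity of $\tickmatrix_j$ (Eq.~\eqref{eq:tickcondition}), I would simply note that $\tickmatrix_j$ either coincides with $\tickmatrix$ in a given row (the $j^{\text{th}}$ row) or is identically zero there; in both cases all entries are non-negative, since by assumption $\tickmatrix_{ik}\geq 0$ for all $i,k$.

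The only step that requires any calculation is the column-sum condition (Eq.~\eqref{eq:stochasticcondition}). For every column index $k$ one has
\begin{align*}
\sum_{i=1}^{d'} (\tickmatrix_j)_{ik} + \sum_{i=1}^d \notickmatrix_{ik}
\;=\; \tickmatrix_{jk} + \sum_{i=1}^d \notickmatrix_{ik}
\;\leq\; \sum_{i=1}^{d'} \tickmatrix_{ik} + \sum_{i=1}^d \notickmatrix_{ik}
\;\leq\; 0,
\end{align*}
where the first inequality uses that the discarded rows of $\tickmatrix$ contain only non-negative entries (Eq.~\eqref{eq:tickcondition}) and the second is just the column-sum bound for the original pair $\{\notickmatrix,\tickmatrix\}$. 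Hence $\{\notickmatrix,\tickmatrix_j\}$ satisfies all three requirements and is a pair of stochastic generators.

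There is no real obstacle here: the content of the corollary is essentially the observation that zeroing out rows of an element-wise non-negative event generator can only decrease the column sums, so the norm-non-increasing condition of Eq.~\eqref{eq:stochasticcondition} is preserved. The corollary will presumably be applied later to decompose a general event generator $\tickmatrix$ into a sum $\tickmatrix = \sum_j \tickmatrix_j$ of single-row event generators, so that each ``channel'' of events can be analysed independently while remaining within the class of stochastic generators.
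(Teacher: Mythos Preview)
Your proof is correct and follows essentially the same approach as the paper's own proof: verify that $\notickmatrix$ is unchanged (so Eq.~\eqref{eq:nontickcondition} holds), that zeroing rows of a non-negative matrix preserves Eq.~\eqref{eq:tickcondition}, and that discarding non-negative terms from the column sum only decreases it, so Eq.~\eqref{eq:stochasticcondition} is preserved. Your remark about the later use in decomposing $\tickmatrix=\sum_j \tickmatrix_j$ is also accurate.
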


\begin{proof}

Eq. \ref{eq:nontickcondition} is satisfied trivially as the non-event generator $\notickmatrix$ is left unchanged. Setting rows to zero in $\tickmatrix$ also maintains Eq. \ref{eq:tickcondition}. Finally, for Eq. \ref{eq:stochasticcondition}, for the $j^{th}$ column, the modified LHS of the expression is missing some non-zero elements from the original event generator $\tickmatrix$, and is therefore less or equal than the original expression, still satisfying the inequality.

\end{proof}

\begin{corollary}\label{cor:scaledgenerators}

If $\{\notickmatrix,\tickmatrix\}$ are a pair of stochastic generators, then so is $\{a\notickmatrix,a\tickmatrix\}$, where $a>0$.

\end{corollary}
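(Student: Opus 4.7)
The plan is to verify directly that $\{a\mathcal{N}, a\mathcal{T}\}$ satisfies each of the three defining conditions listed in Definition~\ref{def:stochasticgenerators}, exploiting only the fact that multiplying a real number by a positive scalar preserves its sign.

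First I would check the conditions on the non-event generator $a\mathcal{N}$ from Eq.~\eqref{eq:nontickcondition}. Since $(a\mathcal{N})_{ij}=a\,\mathcal{N}_{ij}$ and $a>0$, the sign of each matrix element is preserved, so the diagonal entries remain non-positive and the off-diagonal entries remain non-negative; also $a\mathcal{N}$ is still an endomorphism of $\mathbb{R}^d$ since scaling does not change input/output spaces. Next, for the event generator $a\mathcal{T}$, every entry $(a\mathcal{T})_{ij}=a\,\mathcal{T}_{ij}$ is a product of two non-negative numbers, hence non-negative, verifying Eq.~\eqref{eq:tickcondition}. Finally, for the column-sum condition Eq.~\eqref{eq:stochasticcondition}, I would simply pull out the common factor:
\begin{equation*}
\sum_{i=1}^{d'}(a\mathcal{T})_{ij}+\sum_{i=1}^{d}(a\mathcal{N})_{ij}=a\Bigl(\sum_{i=1}^{d'}\mathcal{T}_{ij}+\sum_{i=1}^{d}\mathcal{N}_{ij}\Bigr)\leq 0\quad\forall j,
\end{equation*}
since the parenthesised quantity is non-positive by assumption and $a>0$.

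There is no real obstacle here: the statement is a direct consequence of the homogeneity of the three defining inequalities under multiplication by a positive scalar. The only subtlety worth flagging is that strict positivity of $a$ is essential — if $a=0$ the pair would still trivially satisfy the conditions but would be physically meaningless, while for $a<0$ both the sign conditions on $\mathcal{N}$ and the non-negativity of $\mathcal{T}$ would be reversed and thus violated.
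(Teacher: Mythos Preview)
Your proof is correct and takes essentially the same approach as the paper, which simply states that by direct substitution into Definition~\ref{def:stochasticgenerators} all the inequalities are maintained since $a>0$. Your version is just a more explicit spelling-out of that one-line argument.
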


\begin{proof}

By direct substitution into Def. \ref{def:stochasticgenerators}, one verifies that all of the inequalities are maintained as the scaling factor $a$ is positive.

\end{proof}

\begin{lemma}[From infinitesimal generators to finite-time transition matrices]\label{lemma:stochasticgenerators}

Given any stochastic non-event generator $\notickmatrix$, and a population vector $V$,
\begin{itemize}
	\item $e^{\notickmatrix t}$ is a transition matrix \cite{probability} for all $t\geq 0$, i.e. If $\mathcal{M} = e^{\notickmatrix t}$, then
	\begin{subequations}
		\begin{align}
			\mathcal{M}_{ij} &\geq 0 \quad \forall i,j, \\
			\sum_i \mathcal{M}_{ij} &\leq 1 \quad \forall j.
		\end{align}
	\end{subequations}
	\item $e^{\notickmatrix t} V$ is a population vector (Def. \ref{def:popvector}) for all $t\geq 0$, and its norm (e-sum) is non-increasing in time.
\end{itemize}

\end{lemma}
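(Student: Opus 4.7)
The plan is to prove the two claims in sequence, with the second following almost immediately from the first. The defining properties of $\notickmatrix$ (Def.~\ref{def:stochasticgenerators} combined with Corollary~\ref{cor:zerotickmatrix}, which allows us to pair $\notickmatrix$ with the zero event generator) are that the diagonal entries are non-positive, the off-diagonal entries are non-negative, and every column sum $\sum_i \notickmatrix_{ij}$ is non-positive. These three facts are precisely what is needed to make $e^{\notickmatrix t}$ a sub-stochastic transition matrix.

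For element-wise non-negativity of $e^{\notickmatrix t}$, I would use the standard ``shift'' trick: pick $\lambda \geq \max_j |\notickmatrix_{jj}|$ and write $\notickmatrix = -\lambda \id + B$ where $B = \notickmatrix + \lambda \id$. By the sign structure of $\notickmatrix$, every entry of $B$ is non-negative. Since $-\lambda \id$ commutes with $B$, we have $e^{\notickmatrix t} = e^{-\lambda t} e^{B t}$, and the power series $e^{Bt} = \sum_{n=0}^\infty (Bt)^n / n!$ has only non-negative entries because sums and products of entry-wise non-negative matrices are entry-wise non-negative. Multiplication by the positive scalar $e^{-\lambda t}$ preserves this, so $(e^{\notickmatrix t})_{ij} \geq 0$ for all $i,j$ and all $t \geq 0$.

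For the column-sum bound, I would look at the row vector $g(t)^T = \mathbf{1}^T e^{\notickmatrix t}$, whose $j$-th component is the $j$-th column sum of $e^{\notickmatrix t}$. Differentiating yields $(g'(t))^T = \mathbf{1}^T \notickmatrix\, e^{\notickmatrix t} = c^T e^{\notickmatrix t}$, where $c^T = \mathbf{1}^T \notickmatrix$ has components $c_j = \sum_i \notickmatrix_{ij} \leq 0$ by Eq.~\eqref{eq:stochasticcondition}. Combining $c \leq 0$ (entry-wise) with the non-negativity of $e^{\notickmatrix t}$ just established, we obtain $g'(t) \leq 0$ entry-wise, so $g(t) \leq g(0) = \mathbf{1}$ for all $t \geq 0$; this is exactly $\sum_i (e^{\notickmatrix t})_{ij} \leq 1$. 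Together with non-negativity, this proves the first bullet.

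For the second bullet, let $V(t) = e^{\notickmatrix t} V$. Each component $V(t)_i = \sum_j (e^{\notickmatrix t})_{ij} V_j$ is a non-negative combination of non-negative numbers, hence non-negative. For the norm, $\onenorm{V(t)} = \mathbf{1}^T e^{\notickmatrix t} V = g(t)^T V$, and the derivative is $c^T e^{\notickmatrix t} V$, which is non-positive since $c \leq 0$ and $e^{\notickmatrix t} V \geq 0$ entry-wise; therefore $\onenorm{V(t)}$ is non-increasing in $t$, and in particular $\onenorm{V(t)} \leq \onenorm{V(0)} \leq 1$, so $V(t)$ is a population vector. I do not anticipate a serious obstacle here; the only mildly delicate point is making sure the sub-stochastic (rather than stochastic) setting is handled cleanly, which is why it is important to invoke Corollary~\ref{cor:zerotickmatrix} so that the column-sum inequality $c_j \leq 0$ is available without reference to any specific event generator $\tickmatrix$.
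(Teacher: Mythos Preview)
Your proposal is correct and follows essentially the same strategy as the paper: establish entrywise non-negativity of $e^{\notickmatrix t}$ from the sign structure of $\notickmatrix$, then use the non-positivity of the column sums of $\notickmatrix$ to show the e-sum is non-increasing. The only cosmetic differences are that you use the shift decomposition $\notickmatrix=-\lambda\id+B$ for non-negativity where the paper uses the limit $(\id+\notickmatrix t/m)^m$, and the paper reverses your order by first proving the norm of $e^{\notickmatrix t}V$ is non-increasing and then specializing to $V=\mathbf{e}_j$ to obtain the column-sum bound.
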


\begin{proof}

One of the definitions of the exponential function is via
\begin{align}
	e^{\notickmatrix t} &= \lim_{m\rightarrow\infty} \left( \mathds{1} + \frac{\notickmatrix t}{m} \right)^m
\end{align}

The matrix within the parentheses has non-negative off-diagonal entries since $\notickmatrix_{ij} > 0$ for $i\neq j$. Furthermore, for the diagonal entries, one can always pick $m$ large enough so that $\mathds{1} + t \notickmatrix_{ii}/m > 0$, and thus for large enough $m$, the above expression is a positive power of a matrix with non-negative entries, which must thus be non-negative element-wise. Thus $e^{\notickmatrix t}$ is element wise non-negative.

To prove that the column sums of $\mathcal{M}$ are less or equal to $1$, we first prove the properties of $e^{\notickmatrix t} V$. As $V$ is a population vector, all of its elements are non-negative, and thus the elements of $e^{\notickmatrix t} V$ are also non-negative. Labelling this vector as $V(t) = \sum_i V_i(t) \mathbf{e}_i$ where each $V_i(t) \geq 0$, we may calculate the rate of change of its e-sum (Def. \ref{def:norm}),
\begin{align}
	\frac{d}{dt} \onenorm{V(t)} &= \onenorm{ \frac{d}{dt} e^{\notickmatrix t} V } \\
	&= \onenorm{ \notickmatrix V(t) } \\
	&= \sum_{ij} \notickmatrix_{ij} V_j(t) \\
	&= \sum_j \left( \sum_i \notickmatrix_{ij} \right) V_j \leq 0
\end{align}
as each of the column sums of $\notickmatrix$ is non-positive (Eq. \ref{eq:nontickcondition}). Thus $V(t)$ is also a population vector for all $t \geq 0$, with non-increasing norm. Finally take $V = \mathbf{e}_j$. As the norm of $V(t)$ is less or equal to $1$ for all $t \geq 0$, one has for the transition matrix $\mathcal{M} = e^{\notickmatrix t}$ that
\begin{align}
	\onenorm{V(t)} &= \onenorm{\mathcal{M} \mathbf{e}_j} \\
	&= \sum_i \mathcal{M}_{ij} \leq 1.
\end{align}

\end{proof}

\subsection{Sequences of events}\label{app:eventsequences}

\begin{definition}\label{def:markoviansequence}

By a \textbf{Markovian sequence of events}, we mean a sequence, finite or infinite, of time-dependent population vectors $\{ V^{(n)}(t) : \mathbb{R}^+ \rightarrow \mathbb{R}^{d^{(n)}} \}$ together with a corresponding sequence of pairs of stochastic generators $\{ \notickmatrix^{(n)} : \mathbb{R}^{d^{(n)}} \rightarrow \mathbb{R}^{d^{(n)}}, \tickmatrix^{(n)} : \mathbb{R}^{d^{(n)}} \rightarrow \mathbb{R}^{d^{(n+1)}} \}$, the dynamics of which are given by
\begin{align}\label{eq:markovdynamics}
	\frac{d}{dt} V^{(n)}(t) &= \begin{cases}
		\notickmatrix^{(0)} V^{(0)}(t) & \text{for $n=0$}, \\
		\notickmatrix^{(n)} V^{(n)}(t) + \tickmatrix^{(n-1)} V^{(n-1)}(t) & \text{for $ n \geq 1 $},
	\end{cases}
\end{align}
and that satisfy the initial conditions
\begin{itemize}
	\item $V^{(0)}(0)$ is a normalised population vector,
	\item $V^{(n)}(0) = \mathbf{0}$ for all $n \geq 1$.
\end{itemize}

Furthermore, we define the \textbf{delay function of the $n^{th}$ event} as
\begin{align}\label{eq:sequencedelay}
	\tau^{(n)}(t) &= \onenorm{ \tickmatrix^{(n-1)} V^{(n-1)}(t) },
\end{align}
for $n \geq 1$. For $n=0$, the delay function is $\tau^{(0)}(t) = \delta (t)$, the Dirac-delta distribution.

\end{definition}

\begin{remark}

Note that the fact that the $V^{(n)}$ are population vectors and that the $\tau^{(n)}$ are delay functions is not immediate from the above definition, but is nevertheless implied, as the following lemmas will establish.

\end{remark}

\begin{lemma}[Events in a Markovian sequence are recursive convolutions]\label{lemma:sequenceconvolution}

For a Markovian sequence of events (Def. \ref{def:markoviansequence}), each vector in the sequence can be expressed as a convolution w.r.t. the previous one, i.e. for $n \geq 1$,
\begin{align}\label{eq:sequenceconvolution}
	V^{(n)}(t) &= \int_0^t e^{\notickmatrix^{(n)} \left( t - t^\prime \right)} \tickmatrix^{(n-1)} V^{(n-1)} \left( t^\prime \right) dt^\prime,
\end{align}
and for $n=0$,
\begin{align}
	V^{(0)}(t) &= e^{\notickmatrix^{(0)} t} V^{(0)}(0).
\end{align}
\end{lemma}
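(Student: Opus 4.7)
The plan is to recognize that for each fixed $n \geq 1$, the evolution equation \eqref{eq:markovdynamics} is a linear first-order inhomogeneous ODE with constant (matrix) coefficient $\notickmatrix^{(n)}$ and inhomogeneous source term $\tickmatrix^{(n-1)} V^{(n-1)}(t)$, while the $n=0$ case is just the homogeneous version. The base case is immediate: $\tfrac{d}{dt} V^{(0)}(t) = \notickmatrix^{(0)} V^{(0)}(t)$ together with the initial condition $V^{(0)}(0)$ has as its unique solution the matrix exponential $e^{\notickmatrix^{(0)} t} V^{(0)}(0)$. For the inductive step one invokes Duhamel's principle (variation of parameters) in the standard way.

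Concretely, I would multiply \eqref{eq:markovdynamics} on the left by the integrating factor $e^{-\notickmatrix^{(n)} t}$, which commutes with $\notickmatrix^{(n)}$, and use the product rule to rewrite the equation as
\begin{align*}
\frac{d}{dt}\bigl[ e^{-\notickmatrix^{(n)} t} V^{(n)}(t)\bigr] = e^{-\notickmatrix^{(n)} t} \tickmatrix^{(n-1)} V^{(n-1)}(t).
\end{align*}
Integrating both sides from $0$ to $t$ and invoking the initial condition $V^{(n)}(0) = \mathbf{0}$ from Def.~\ref{def:markoviansequence} yields
\begin{align*}
e^{-\notickmatrix^{(n)} t} V^{(n)}(t) = \int_0^t e^{-\notickmatrix^{(n)} t'} \tickmatrix^{(n-1)} V^{(n-1)}(t')\, dt'.
\end{align*}
Multiplying on the left by $e^{\notickmatrix^{(n)} t}$ and pulling it under the integral sign gives Eq.~\eqref{eq:sequenceconvolution}. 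Induction on $n$ delivers the full sequence.

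To close the argument I would perform a backward verification: inserting the candidate formula into the original ODE and differentiating under the integral via Leibniz's rule produces the boundary term $e^{\notickmatrix^{(n)} \cdot 0} \tickmatrix^{(n-1)} V^{(n-1)}(t) = \tickmatrix^{(n-1)} V^{(n-1)}(t)$ plus $\notickmatrix^{(n)} V^{(n)}(t)$, matching \eqref{eq:markovdynamics}; the initial condition $V^{(n)}(0) = \mathbf{0}$ is automatic since the integration range collapses at $t=0$.

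I do not expect any real obstacle: the only mild care needed is justifying differentiation under the integral sign, which follows because $V^{(n-1)}(t)$ is continuous (it is itself the solution of a linear ODE with continuous data, established at the previous induction step) and the matrix exponential is smooth in its scalar argument, so the integrand is jointly continuous in $(t, t')$ with a continuous partial derivative in $t$. No appeal to the stochastic structure of the generators (Def.~\ref{def:stochasticgenerators}) is needed for this particular lemma; the result is a purely linear-algebraic consequence of the ODE, and the positivity/norm-non-increasing properties only become relevant when subsequently showing that each $V^{(n)}(t)$ is indeed a population vector and that $\tau^{(n)}$ is a delay function.
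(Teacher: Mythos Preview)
Your proposal is correct and essentially matches the paper's proof. The paper posits the candidate formula for $f^{(n)}(t)$ and verifies by induction, using Leibniz's rule, that it satisfies the defining ODE with the right initial condition---precisely your ``backward verification'' step; your additional forward derivation via the integrating factor is a nice constructive touch that the paper omits, but the core argument is the same.
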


\begin{proof}

Proof by induction. Consider we define the sequence $f^{(n)}(t)$ by
\begin{align}\label{eq:prooftempsequence}
	f^{(n)}(t) &= \begin{cases}
		e^{\notickmatrix^{(0)} t} V^{(0)}(0) & \text{for $n=0$}, \\
		\int_0^t e^{\notickmatrix^{(n)} \left( t - t^\prime \right)} \tickmatrix^{(n-1)} f^{(n-1)} \left( t^\prime \right) dt^\prime & \text{for $n \geq 1$.}
	\end{cases}
\end{align}

For $n=0$, by construction, $f^{(0)}(t)$ has the same initial conditions and obeys the same differential equation as $V^{(0)}(t)$, and is thus equal to $V^{(0)}(t)$.

Proceeding, let $f^{(n-1)}(t) = V^{(n-1)}(t)$ for some $n \geq 1$. Then differentiating $f^{(n)}(t)$ from Eq. \ref{eq:prooftempsequence} using Leibniz's rule, we find that
\begin{align}
	\frac{d}{dt} f^{(n)}(t) &= \frac{d}{dt} \int_0^t e^{\notickmatrix^{(n)} \left( t - t^\prime \right)} \tickmatrix^{(n-1)} f^{(n-1)} \left( t^\prime \right) dt^\prime \\
	&= \notickmatrix^{(n)} \int_0^t e^{\notickmatrix^{(n)} \left( t - t^\prime \right)} \tickmatrix^{(n-1)} f^{(n-1)} \left( t^\prime \right) dt^\prime + \tickmatrix^{(n-1)} f^{(n-1)} \left( t \right) \\
	&= \notickmatrix^{(n)} f^{(n)}(t) + \tickmatrix^{(n-1)} V^{(n-1)}(t),
\end{align}
and thus $f^{(n)}(t)$ obeys the same differential equation as $V^{(n)}(t)$. Their initial conditions are also the same as $f^{(n)}(0) = \mathbf{0} = V^{(n)}(0)$ for $n \geq 1$. Thus $f^{(n)}(t) = V^{(n)}(t)$, and the rest follows by induction.

\end{proof}

\begin{lemma}[States and delay functions of a Markovian sequence of events]\label{lemma:markovstatedelay}

For a Markovian sequence of events (Def. \ref{def:markoviansequence}),
\begin{itemize}
	\item Every vector $V^{(n)}(t)$ in the sequence is a population vector (Def. \ref{def:popvector}).
	\item Every $\tau^{(n)}(t)$ is a delay function (Def. \ref{def:delayfunction}).
\end{itemize}

\end{lemma}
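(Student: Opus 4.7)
The plan is to establish both bullets simultaneously by induction on $n$, using the convolution formula from Lemma~\ref{lemma:sequenceconvolution} and the positivity and norm-monotonicity properties of $e^{\notickmatrix^{(n)} t}$ from Lemma~\ref{lemma:stochasticgenerators}. First I would show element-wise non-negativity of each $V^{(n)}(t)$: the base case $n=0$ is immediate since $V^{(0)}(t) = e^{\notickmatrix^{(0)} t} V^{(0)}(0)$, and for the inductive step Lemma~\ref{lemma:sequenceconvolution} writes $V^{(n)}(t)$ as an integral of $e^{\notickmatrix^{(n)}(t-t')} \tickmatrix^{(n-1)} V^{(n-1)}(t')$, whose three factors are element-wise non-negative by Lemma~\ref{lemma:stochasticgenerators}, Def.~\ref{def:stochasticgenerators}, and the inductive hypothesis respectively. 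This already gives the non-negativity portion of both claims, since then $\tau^{(n)}(t) = \onenorm{\tickmatrix^{(n-1)} V^{(n-1)}(t)} \geq 0$.

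The second step controls e-sums. Differentiating $\onenorm{V^{(n)}(t)}$ via Eq.~\eqref{eq:markovdynamics} and using the column-sum inequality Eq.~\eqref{eq:stochasticcondition} yields the key differential inequalities
\begin{align*}
\frac{d}{dt} \onenorm{V^{(0)}(t)} &\leq -\tau^{(1)}(t), \\
\frac{d}{dt} \onenorm{V^{(n)}(t)} &\leq \tau^{(n)}(t) - \tau^{(n+1)}(t) \quad (n \geq 1).
\end{align*}
Summing over $n$ telescopes, showing that $T(t) := \sum_{n \geq 0} \onenorm{V^{(n)}(t)}$ is non-increasing; combined with $T(0) = \onenorm{V^{(0)}(0)} = 1$, this gives $\onenorm{V^{(n)}(t)} \leq T(t) \leq 1$ for every $n$, so each $V^{(n)}(t)$ is indeed a population vector.

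For the delay-function normalization, I would integrate the same inequalities from $0$ to $T$, using $V^{(n)}(0) = \mathbf{0}$ for $n \geq 1$ and $\onenorm{V^{(n)}(T)} \geq 0$, to obtain the chain $\int_0^T \tau^{(n+1)} \leq \int_0^T \tau^{(n)}$ for $n \geq 1$ and the base estimate $\int_0^T \tau^{(1)} \leq 1 - \onenorm{V^{(0)}(T)} \leq 1$; by induction on $n$ and monotone convergence as $T \to \infty$, this yields $\int_0^\infty \tau^{(n)}(t)\,dt \leq 1$. I do not anticipate any serious obstacle: all the probability-conservation content is packaged in Eq.~\eqref{eq:stochasticcondition}, and the remainder is bookkeeping. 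The one subtle point is that the definition sets $\tau^{(0)}(t) = \delta(t)$, which is a distribution rather than a function, so the lemma should be read as asserting the delay-function property for $n \geq 1$; the initial $\delta$-impulse simply encodes the initial condition $\onenorm{V^{(0)}(0)} = 1$ in the telescoping sum, and finite versus infinite sequences are handled uniformly since the partial sums $\sum_{n=0}^N \onenorm{V^{(n)}(t)}$ are bounded by $1$ for every $N$.
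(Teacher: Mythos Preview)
Your proposal is correct and follows essentially the same route as the paper: non-negativity by induction via the convolution formula of Lemma~\ref{lemma:sequenceconvolution}, the e-sum bound by differentiating the partial sums $\sum_{n=0}^{m}\onenorm{V^{(n)}(t)}$ and invoking the column-sum inequality~\eqref{eq:stochasticcondition}, and the delay-function normalization by integrating the resulting inequality and applying monotone convergence. The only cosmetic difference is that you write the per-$n$ differential inequalities first and then telescope, whereas the paper works directly with the cumulative sum; the content is identical, and your remark that the argument should be read for $n\geq 1$ (since $\tau^{(0)}=\delta$) matches the paper's treatment.
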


\begin{proof}

We first prove by induction that the elements of every $V^{(n)}(t)$ w.r.t. the canonical basis are non-negative. At $t=0$, this is true by definition. Furthermore, by Lemma \ref{lemma:sequenceconvolution}, $V^{(0)}(t)$ evolves via the transition operator $e^{\notickmatrix^{(0)} t}$, and is thus always non-negative, by Lemma \ref{lemma:stochasticgenerators}. Proceeding to $V^{(n)}(t)$ from Eq. \ref{eq:sequenceconvolution},
\begin{align}
	V^{(n)}(t) &= \int_0^t e^{\notickmatrix^{(n)} \left( t - t^\prime \right)} \tickmatrix^{(n-1)} V^{(n-1)} \left( t^\prime \right) dt^\prime,
\end{align}
note that if $V^{(n-1)}(t)$ is non-negative for all $t$, then so is $\tickmatrix^{(n-1)} V^{(n-1)}(t)$ as the elements of $\tickmatrix^{(n-1)}$ are all non-negative (Def. \ref{def:stochasticgenerators}), and since this is multiplied by another transition matrix (Lemma \ref{lemma:stochasticgenerators}), whose elements are all non-negative, the integrand is non-negative, and thus $V^{(n)}(t)$ is also non-negative w.r.t. the canonical basis for $t \geq 0$.

To prove that each vector is a population vector, i.e. has an e-sum (Def. \ref{def:norm}) less or equal to $1$, we prove the statement for the sum of e-sums $\sum_{n=0}^m \onenorm{V^{(n)}(t)}$, $m \geq 0$, from which the weaker statement follows.
\begin{align}
	\frac{d}{dt} \sum_{n=0}^m \onenorm{ V^{(n)}(t) } &=  \sum_{n=0}^m  \onenorm{ \frac{d}{dt} V^{(n)}(t) } \\
	&= \onenorm{ \notickmatrix^{(0)} V^{(0)}(t) } + \sum_{n=1}^m \left( \onenorm{ \notickmatrix^{(n)} V^{(n)}(t) } + \onenorm{ \tickmatrix^{(n-1)} V^{(n-1)}(t) } \right) \\
	&= \sum_{n=0}^{m-1} \left( \onenorm{ \notickmatrix^{(n)} V^{(n)}(t)} + \onenorm{ \tickmatrix^{(n)} V^{(n)}(t) } \right) + \onenorm{ \notickmatrix^{(m)} V^{(m)}(t) } \\
	&= \sum_{n=0}^{m-1} \sum_{j=1}^{d^{(n)}} \left( \sum_{i=1}^{d^{(n)}} \mat{\notickmatrix^{(n)}}_{ij} + \sum_{i=1}^{d^{(n+1)}} \mat{\tickmatrix^{(n)}}_{ij} \right) \mat{V^{(n)}(t)}_j + \sum_{j=1}^{d^{(m)}} \left( \sum_{i=1}^{d^{(m)}} \mat{\notickmatrix^{(m)}}_{ij} \right) \mat{V^{(m)}(t)}_j \label{eq:proofseqsum},
\end{align}
where we have used the definition of the e-sum, Def. \ref{def:norm}. The RHS of the above equation is non-positive, as the two column sums in parentheses above are non-positive (see Def. \ref{def:stochasticgenerators} and Corollary \ref{cor:zerotickmatrix}). Thus
\begin{align}
	\frac{d}{dt} \onenorm{ \sum_{n=0}^m V^{(n)}(t) } &\leq 0,
\end{align}
and thus the sum of the norms is non-increasing. At $t=0$ the sum is $1$ by definition (Def. \ref{def:markoviansequence}), and thus the sum is less or equal to unity for $t \geq 0$, from which it follows that each term in the sum, which is also non-negative, must also be between $0$ and $1$. Thus every $V^{(n)}$ is a population vector for all $t \geq 0$.

\bigskip

To prove that every $\tau^{(m)}$ is a delay function, note in Eq. \ref{eq:sequencedelay} that as $\tickmatrix^{(m-1)}$ and $V^{(m-1)}$ are element-wise non-negative and bounded, it follows that $\tau^{(m)}$ is non-negative and bounded as well. To prove integrability, we add up Eq. \ref{eq:sequencedelay} (with $n=m+1$) and Eq. \ref{eq:proofseqsum} to get
\begin{align}
	\tau^{(m+1)}(t) + \frac{d}{dt} \sum_{n=0}^m \onenorm{ V^{(n)}(t) } &= \sum_{n=0}^{m} \sum_{j=1}^{d^{(n)}} \left( \sum_{i=1}^{d^{(n)}} \mat{\notickmatrix^{(n)}}_{ij} + \sum_{i=1}^{d^{(n+1)}} \mat{\tickmatrix^{(n)}}_{ij} \right) \mat{V^{(n)}(t)}_j,
\end{align}
where $m \geq 0$. The RHS is non-positive from the properties of the column sums of the generators, (Eq. \ref{eq:stochasticcondition}), and we label it as the function $g(t)$. Thus $g(t) \leq 0$.

Integrating from $t=0$ to $t=T$, one obtains
\begin{align}
	\int_0^T \tau^{(m+1)}(t) dt &= - \int_0^T \frac{d}{dt} \sum_{n=0}^m \onenorm{ V^{(n)}(t) } dt + \int_0^T g(t) dt \\
	&= \sum_{n=0}^m \onenorm{ V^{(n)}(0) } - \onenorm{ \sum_{n=0}^m V^{(n)}(T) } + \int_0^T g(t) dt \\
	&= 1 - \sum_{n=0}^m \onenorm{ V^{(n)}(T) } + \int_0^T g(t) dt,
\end{align}
using the initial conditions for the sequence of states (Def. \ref{def:markoviansequence}). As $g(t) \leq 0$ for all $t \geq 0$, the RHS of the above expression is less or equal to $1$ for all $T$. However, the integrand on the LHS is non-negative, and thus the integral is non-decreasing w.r.t. $T$. It follows from the monotone convergence theorem\cite{rudin1976principles} that the limit $\lim_{T\rightarrow\infty} \int_0^T \tau^{(m+1)}(t) dt$ exists and is less or equal to $1$.

Thus every $\tau^{(m+1)}$ for $m \geq 0$, or equivalently, every $\tau^{(m)}$ for $m \geq 1$, is a delay function.

\end{proof}

\begin{corollary}\label{cor:validdelayfunction}

Given a single pair of stochastic generators $\{\notickmatrix,\tickmatrix\}$ acting on $\mathbb{R}^d$, together with an arbitrary population vector $V \in \mathbb{R}^d$, the following is a delay function,
\begin{align}
	\tau(t) &= \onenorm{ \tickmatrix e^{\notickmatrix t} V }.
\end{align}

\end{corollary}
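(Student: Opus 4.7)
The plan is to realise $\tau$ as the delay function of the first event in an appropriate Markovian sequence of events, so that Lemma \ref{lemma:markovstatedelay} applies directly. The construction is a two-stage sequence in the sense of Def. \ref{def:markoviansequence}; the only minor wrinkle is that the definition requires $V^{(0)}(0)$ to be normalised, whereas the corollary allows an arbitrary (possibly sub-normalised) population vector. I would handle this by rescaling.

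First I would dispose of the trivial case $V = \mathbf{0}$, for which $\tau \equiv 0$, which is manifestly a delay function. Otherwise, set $s = \onenorm{V} \in (0,1]$ and define $\tilde V = V/s$, which is a normalised population vector. I would then build a two-stage Markovian sequence by taking $\notickmatrix^{(0)} = \notickmatrix$, $\tickmatrix^{(0)} = \tickmatrix$, and $\notickmatrix^{(1)} = \hat{\mathbf{0}}$ (the zero operator on the codomain $\mathbb{R}^{d'}$ of $\tickmatrix$, which is a valid non-event generator since it trivially satisfies Def. \ref{def:stochasticgenerators}), with initial data $V^{(0)}(0) = \tilde V$ and $V^{(1)}(0) = \mathbf{0}$. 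By Lemma \ref{lemma:sequenceconvolution} we obtain $V^{(0)}(t) = e^{\notickmatrix t} \tilde V$, and hence the first-event delay function of this sequence is
\begin{equation*}
\tau^{(1)}(t) \;=\; \onenorm{\tickmatrix^{(0)} V^{(0)}(t)} \;=\; \onenorm{\tickmatrix \, e^{\notickmatrix t} \tilde V} \;=\; \tau(t)/s,
\end{equation*}
which is a delay function by Lemma \ref{lemma:markovstatedelay}.

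Finally, I would observe that $\tau = s\,\tau^{(1)}$ inherits the delay-function property, being non-negative with $\int_0^\infty \tau(t)\,dt = s \int_0^\infty \tau^{(1)}(t)\,dt \leq s \leq 1$. There is no real obstacle in the argument; the proof is essentially a bookkeeping reduction to the already-established Lemma \ref{lemma:markovstatedelay}, with the linearity of the map $V \mapsto \onenorm{\tickmatrix e^{\notickmatrix t} V}$ absorbing the normalisation mismatch.
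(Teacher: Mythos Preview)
Your proof is correct and follows essentially the same route as the paper: embed $\tau$ as the first-event delay function of a Markovian sequence and invoke Lemma~\ref{lemma:markovstatedelay} together with Lemma~\ref{lemma:sequenceconvolution}. You are in fact more careful than the paper, which simply sets $V^{(0)}(0)=V$ without addressing the normalisation requirement in Def.~\ref{def:markoviansequence}; your rescaling by $s=\onenorm{V}$ (and separate treatment of $V=\mathbf{0}$) cleanly fills that small gap.
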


\begin{proof}

Take an arbitrary Markovian sequence (Def. \ref{def:markoviansequence}) with the choice of $V^{(0)}(0) = V$,  $\tickmatrix^{(0)} = \tickmatrix$, and $\notickmatrix^{(0)} = \notickmatrix$. Apply Lemma \ref{lemma:markovstatedelay} to deduce that $\tau^{(1)}$ is a delay function, and substitute the solution for the time-evolved state $V^{(0)}(t)$ from Lemma \ref{lemma:sequenceconvolution} to the expression of $\tau^{(1)}(t)$ from Def. \ref{def:markoviansequence}.

\end{proof}

\begin{lemma}[Invariance of the dynamics of an event w.r.t. the following event.]\label{lemma:resetinvariance}
	
For any Markovian sequence of events, the state and delay function of the $n^{th}$ event in the sequence (as well as those of every previous event) is invariant w.r.t. a change in the $n^{th}$ event generator $\tickmatrix^{(n)}$ that leaves its column sums unchanged, i.e. if $\tickmatrix^{(n)^\prime}$ is chosen such that its matrix elements satisfy
\begin{align}
	\sum_i \mat{\tickmatrix^{(n)\prime}}_{ij} &= \sum_i \mat{\tickmatrix^{(n)}}_{ij} \quad \forall j.
\end{align}
	
\end{lemma}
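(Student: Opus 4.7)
The plan is to read off the invariance directly from the structure of the coupled ODE system in Def.~\ref{def:markoviansequence}. The crucial observation is that in the evolution equation
\begin{align*}
\frac{d}{dt} V^{(k)}(t) = \notickmatrix^{(k)} V^{(k)}(t) + \tickmatrix^{(k-1)} V^{(k-1)}(t),
\end{align*}
only the generators $\notickmatrix^{(k)}$ and $\tickmatrix^{(k-1)}$ appear explicitly on the right hand side, so $V^{(k)}$ is forced to depend only on the event generators of strictly smaller index and the non-event generators of index at most $k$, together with the initial data. In particular, $\tickmatrix^{(n)}$ never enters the equations that determine $V^{(k)}$ for $k \leq n$.

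First I would do a straightforward induction on $k = 0, 1, \ldots, n$ to conclude that $V^{(k)}(t)$ is uniquely determined by $V^{(0)}(0)$, the family $\{\notickmatrix^{(j)}\}_{j=0}^{k}$ and the family $\{\tickmatrix^{(j)}\}_{j=0}^{k-1}$. The base case is immediate from $V^{(0)}(t) = e^{\notickmatrix^{(0)} t} V^{(0)}(0)$ (Lemma~\ref{lemma:sequenceconvolution}), and the inductive step uses the convolution representation $V^{(k)}(t) = \int_0^t e^{\notickmatrix^{(k)}(t-t')} \tickmatrix^{(k-1)} V^{(k-1)}(t')\, dt'$ from the same lemma, combined with the inductive hypothesis applied to $V^{(k-1)}$. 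Since $\tickmatrix^{(n)}$ does not appear in the list of ingredients for any $k \leq n$, the substitution $\tickmatrix^{(n)} \mapsto \tickmatrix^{(n)\prime}$ leaves all such $V^{(k)}(t)$ pointwise unchanged. The claim for the delay functions is then immediate from $\tau^{(k)}(t) = \onenorm{\tickmatrix^{(k-1)} V^{(k-1)}(t)}$: for $1 \leq k \leq n$ this involves only $\tickmatrix^{(k-1)}$ and $V^{(k-1)}(t)$, neither of which is affected by the replacement, and the case $k=0$ is trivial since $\tau^{(0)}(t) = \delta(t)$ by definition.

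What remains is to explain why the column-sum hypothesis is even needed, since a priori the argument above would work for any replacement of $\tickmatrix^{(n)}$ whatsoever. The point is admissibility: without the column-sum constraint, the pair $\{\notickmatrix^{(n)}, \tickmatrix^{(n)\prime}\}$ could violate the stochasticity inequality \eqref{eq:stochasticcondition}, and then the modified family would not be a valid Markovian sequence of events in the sense of Def.~\ref{def:markoviansequence}, making the conclusion vacuous. Requiring $\sum_i [\tickmatrix^{(n)\prime}]_{ij} = \sum_i [\tickmatrix^{(n)}]_{ij}$ for all $j$ is exactly what is needed so that \eqref{eq:stochasticcondition} carries over verbatim to the modified pair, with the unchanged $\notickmatrix^{(n)}$. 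So there is no real technical obstacle here; the lemma is essentially a structural observation that information flows only downstream in the cascade, and the column-sum hypothesis plays the auxiliary role of keeping the modified construction a legitimate Markovian sequence.
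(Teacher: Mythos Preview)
Your argument is correct and mirrors the paper's: both read off from the recursive convolution representation (Lemma~\ref{lemma:sequenceconvolution}) that $V^{(k)}$ and $\tau^{(k)}$ for $k\le n$ are determined entirely by $\{\notickmatrix^{(j)}\}_{j\le k}$ and $\{\tickmatrix^{(j)}\}_{j\le k-1}$, so the substitution $\tickmatrix^{(n)}\mapsto\tickmatrix^{(n)\prime}$ is invisible to all of them.

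The one point of departure is the role you assign to the column-sum hypothesis. You treat it purely as an admissibility condition (so that $\{\notickmatrix^{(n)},\tickmatrix^{(n)\prime}\}$ still satisfies \eqref{eq:stochasticcondition}). The paper's proof instead carries an extra computation: it writes $\tau^{(n)}(t)=\sum_j\bigl(\sum_i[\tickmatrix^{(n-1)}]_{ij}\bigr)[V^{(n-1)}(t)]_j$ and concludes that $\tau^{(n)}$ depends on $\tickmatrix^{(n-1)}$ only through its column sums. This stronger observation --- that an event's delay function is invariant under column-sum-preserving changes to \emph{its own} generator, not merely to the next one --- is the form in which the lemma is actually invoked downstream (Theorem~\ref{theorem:markovianaccuracy} and Lemma~\ref{lemma:pathgeneratindelay2}, where one shifts the single nonzero row of an event generator to a different row). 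Your proof handles the lemma exactly as stated, but be aware that the paper's version packs in this additional content that the later arguments rely on.
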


\begin{proof}

From Lemma \ref{lemma:sequenceconvolution}, it is clear that the state $V^{(n)}(t)$ corresponding to the $n^{th}$ event is independent of $\tickmatrix^{(n)}$, and does not affect the states and delay functions of any of the events prior to the $n^{th}$. The independence of the $V^{(i)}(t)$ and $\tau^{(i)}(t)$, where $i<n$ follows from the fact that the initial conditions and differential equations for all of the $i<n$ are independent of $i \geq n$.

As for the delay function of the $n^{th}$ event, from Eq. \ref{eq:sequencedelay},
\begin{align}
	\tau^{(n)}(t) &= \onenorm{ \tickmatrix^{(n-1)} V^{(n-1)}(t) } \\
	&= \sum_{ij} \mat{\tickmatrix^{(n-1)}}_{ij} \mat{V^{(n-1)}(t)}_j = \sum_j \left( \sum_i \mat{\tickmatrix^{(n-1)}}_{ij} \right) \mat{V^{(n-1)}(t)}_j,
\end{align}
and is thus invariant under any operation that leaves the column sums of $\tickmatrix^{(n-1)}$ unchanged.
	
\end{proof}

\begin{remark}

Lemma \ref{lemma:resetinvariance} reflects the fact that the tick generator $\tickmatrix^{(n)}$ encodes both the amount of probability of the event being generated from each canonical state (as reflected by each of the column sums) as well as the state following the generation of an event (encoded in the individual elements of $\tickmatrix^{(n)}$), the first of which affects the dynamics of the $n^{th}$ event, while the second only affects subsequent events.

\end{remark}

\subsection{Independent sequences of events}\label{app:indisequences}

Here we discuss a special class of Markovian sequences, those in which the state of the system immediately following an event is a fixed state, invariant w.r.t. the state of the system prior to the event. Such sequences have useful properties that we use later in the proof.

\begin{definition}\label{def:independentsequence}

A Markovian sequence of events (Def. \ref{def:markoviansequence}) is called an \textbf{independent Markovian sequence of events} if every one of its event generators $\tickmatrix^{(n)}$ is a rank-1 linear operator.
	
\end{definition}

\begin{remark}[\it Properties of an independent sequence]

If $\tickmatrix$ is a rank-1 linear operator, then all of its columns are proportional to one non-zero column, and therefore, given any vector $V$ in the domain of $\tickmatrix$, the product $\tickmatrix V$ is also proportional to the same fixed column. For a Markovian sequence of events, the term in the dynamical equation $\tickmatrix^{(n-1)} V^{(n-1)}(t)$ (see Eqs. \ref{eq:markovdynamics} and \ref{eq:sequencedelay}) represents the state following the occurrence of the $n^{th}$ event, and thus an independent sequence is one in which this product is proportional to some fixed state for all $t\geq 0$.
\end{remark}

\begin{definition}\label{def:eventresetstate}

For an independent Markovian sequence of events (Def. \ref{def:markoviansequence}), we define the sequence of \textbf{event reset states} $\{V_R^{(n)}\}$ as follows: for each $n$, $V_R^{(n)}$ is the unique normalised vector that every column of $\tickmatrix^{(n)}$ is proportional to.

\end{definition}

\begin{remark}

Note that as every element of $\tickmatrix^{(n)}$ is non-negative, it follows that every event reset state is a normalised population vector (Def. \ref{def:popvector}).

\end{remark}

\begin{lemma}\label{lemma:eventresetstate}

For an independent Markovian sequence,
\begin{align}
	\tickmatrix^{(n-1)} V^{(n-1)}(t) &= \tau^{(n)}(t) V_R^{(n)}.	
\end{align}

\end{lemma}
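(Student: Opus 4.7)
The plan is to exploit the rank-one structure of the event generator and reduce the claim to a direct factorisation. Since the sequence is independent, $\tickmatrix^{(n-1)}$ is rank-$1$ by Def.~\ref{def:independentsequence}, so every column is a scalar multiple of the normalised vector $V_R^{(n)}$ from Def.~\ref{def:eventresetstate}. First I would write $\tickmatrix^{(n-1)}$ as an outer product $\tickmatrix^{(n-1)} = V_R^{(n)}\, \vec{w}^{\,T}$, where $\vec{w}$ is a row vector whose $j$-th entry equals the proportionality constant between the $j$-th column of $\tickmatrix^{(n-1)}$ and $V_R^{(n)}$. Non-negativity of all entries of $\tickmatrix^{(n-1)}$ (Eq.~\eqref{eq:tickcondition}) together with the non-negativity of $V_R^{(n)}$ forces every component of $\vec{w}$ to be non-negative.

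Next I would apply this factorisation to $V^{(n-1)}(t)$, giving
\begin{equation}
\tickmatrix^{(n-1)} V^{(n-1)}(t) = V_R^{(n)}\bigl(\vec{w}^{\,T} V^{(n-1)}(t)\bigr),
\end{equation}
so that the image is proportional to $V_R^{(n)}$ with a non-negative scalar prefactor (since $V^{(n-1)}(t)$ is a population vector by Lemma~\ref{lemma:markovstatedelay}).

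To identify this scalar with $\tau^{(n)}(t)$, I would take the e-sum of both sides and use the fact that $V_R^{(n)}$ is normalised, i.e.\ $\onenorm{V_R^{(n)}} = 1$. This yields
\begin{equation}
\onenorm{\tickmatrix^{(n-1)} V^{(n-1)}(t)} = \vec{w}^{\,T} V^{(n-1)}(t),
\end{equation}
while by definition (Eq.~\eqref{eq:sequencedelay}) the left-hand side is exactly $\tau^{(n)}(t)$. Substituting back into the factorisation above gives the claimed identity. There is no real obstacle here; the proof is a one-line consequence of the rank-one structure, and the only subtlety is just verifying that the scalar extracted from the outer product coincides with the delay function, which is immediate from normalisation of $V_R^{(n)}$.
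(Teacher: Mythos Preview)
Your proof is correct and follows essentially the same approach as the paper, which simply cites the definitions of the event reset state and the event delay function; you have merely made explicit the outer-product factorisation $\tickmatrix^{(n-1)} = V_R^{(n)}\vec{w}^{\,T}$ and the e-sum computation that those definitions immediately imply. The appeal to non-negativity via Lemma~\ref{lemma:markovstatedelay} is harmless but unnecessary, since the e-sum $\onenorm{\cdot}$ is linear and so $\onenorm{c\,V_R^{(n)}}=c$ for any real scalar $c$.
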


\begin{proof}
	The proof follows from the definition of the event reset state, Def. \ref{def:eventresetstate} and the event delay function, Def. \ref{def:markoviansequence}, Eq. \ref{eq:sequencedelay}.
\end{proof}

\begin{definition}\label{def:canonicalsequence}

A \textbf{canonical independent Markovian sequence of events} is an independent sequence (Def. \ref{def:independentsequence}) in which the initial state and every reset state (Def. \ref{def:eventresetstate}) in the sequence is a canonical state (Def. \ref{def:canonicalbasis}). Alternatively, it is a sequence in which the initial state is a canonical state, and every event generator $\tickmatrix^{(n)}$ has a single non-zero row.

\end{definition}

\begin{definition}\label{def:resetsubeventdelay}

For an independent sequence of Markovian events (Def. \ref{def:independentsequence}), one defines the \textbf{sub-event delay function} as
\begin{align}
	\nu^{(n)}(t) &= \onenorm{ \tickmatrix^{(n-1)} e^{\notickmatrix^{(n-1)} t} V_R^{(n-1)} }.
\end{align}

\begin{remark}

The $n^{th}$ sub-event delay function refers to the delay function of the $n^{th}$ event \textit{from the time of occurrence of event $n-1$}, as opposed to the $n^{th}$ delay function, which is constructed w.r.t. the initial time $t=0$.

\end{remark}

\begin{remark}

That $\nu^{(n)}$ is indeed a delay function follows from Corollary \ref{cor:validdelayfunction}. Note that $\nu^{(n)}(t)$ is independent of all other quantities in the sequence other than those that determine the occurrence of the $n^{th}$ event, namely the stochastic generators $\{\notickmatrix^{(n-1)},\tickmatrix^{(n-1)}\}$ and the event reset state $V_R^{(n-1)}$.

\end{remark}

\end{definition}

\begin{lemma}[The states and delay functions of an independent Markovian sequence]\label{lemma:resetsequenceconvolution}

For an independent Markovian sequence of events (Def. \ref{def:independentsequence}), every vector and delay function in the sequence can be expressed recursively for $n \geq 1$ as
\begin{align}
	V^{(n)}(t) &= \int_0^t e^{\notickmatrix^{(n)} (t-t^\prime)} V_R^{(n)} \tau^{(n)}(t^\prime) dt^\prime \label{eq:resetsequencestate} \\
	\tau^{(n)}(t) &= \left( \tau^{(n-1)} \conv \nu^{(n)} \right) (t), \label{eq:resetsequencedelay}
\end{align}
where $\nu^{(n)}(t)$ is the $n^{th}$ sub-event delay function, Def. \ref{def:resetsubeventdelay}. Note that $\tau^{(0)}(t) = \delta(t)$ by definition (Def. \ref{def:markoviansequence}), and we equate the reset state for the zeroth event $V_R^{(0)}$ to the initial state $V^{(0)}(0)$.

It follows that the delay function of the $n^{th}$ event is the sequential convolution of the delay function of all previous events,
\begin{align}\label{eq:resetsequencedelayconvolution}
	\tau^{(n)}(t) &= \left( \nu^{(1)} \conv \nu^{(2)} \conv ... \conv \nu^{(n)} \right) (t),
\end{align}
and the state of the $n^{th}$ event is similarly given by the sequential convolution
\begin{align}\label{eq:resetsequencestateconvolution}
	V^{(n)}(t) &= \int_0^t e^{\notickmatrix^{(n)} (t-t^\prime)} V_R^{(n)} \left( \nu^{(1)} \conv \nu^{(2)} \conv ... \conv \nu^{(n)} \right) (t^\prime) dt^\prime.
\end{align}

\end{lemma}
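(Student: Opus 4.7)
My plan is to derive the two recursive identities first and then iterate them to obtain the full convolution expressions.

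First I would establish Eq.~\eqref{eq:resetsequencestate}. Starting from Lemma~\ref{lemma:sequenceconvolution}, which holds for any Markovian sequence, we have
\begin{align*}
V^{(n)}(t) = \int_0^t e^{\notickmatrix^{(n)}(t-t')}\, \tickmatrix^{(n-1)} V^{(n-1)}(t')\, dt'.
\end{align*}
Now I invoke the independence hypothesis through Lemma~\ref{lemma:eventresetstate}, which tells us that $\tickmatrix^{(n-1)} V^{(n-1)}(t') = \tau^{(n)}(t')\, V_R^{(n)}$. Substituting this into the integrand gives Eq.~\eqref{eq:resetsequencestate} directly, since $V_R^{(n)}$ is a fixed vector and commutes past the scalar $\tau^{(n)}(t')$.

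Next I would derive the convolution identity Eq.~\eqref{eq:resetsequencedelay}. For $n\geq 2$, I substitute the formula for $V^{(n-1)}(t)$ just obtained into the definition $\tau^{(n)}(t)=\onenorm{\tickmatrix^{(n-1)} V^{(n-1)}(t)}$, getting
\begin{align*}
\tau^{(n)}(t) = \onenorm{ \int_0^t \tickmatrix^{(n-1)} e^{\notickmatrix^{(n-1)}(t-t')} V_R^{(n-1)}\, \tau^{(n-1)}(t')\, dt' }.
\end{align*}
Because the e-sum (Def.~\ref{def:norm}) commutes with finite integrals of nonnegative integrands, and because $\tau^{(n-1)}(t')\geq 0$ is a scalar, I can pull the e-sum inside to recognize $\onenorm{\tickmatrix^{(n-1)} e^{\notickmatrix^{(n-1)}(t-t')} V_R^{(n-1)}} = \nu^{(n)}(t-t')$ (Def.~\ref{def:resetsubeventdelay}), giving exactly the convolution $(\tau^{(n-1)}\conv \nu^{(n)})(t)$. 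For the base case $n=1$, I use the stated convention $V_R^{(0)}=V^{(0)}(0)$ together with Lemma~\ref{lemma:sequenceconvolution} (which gives $V^{(0)}(t)=e^{\notickmatrix^{(0)}t}V^{(0)}(0)$) to obtain $\tau^{(1)}(t)=\onenorm{\tickmatrix^{(0)} e^{\notickmatrix^{(0)}t} V_R^{(0)}} = \nu^{(1)}(t)$, which matches $(\tau^{(0)} \conv \nu^{(1)})(t) = (\delta * \nu^{(1)})(t) = \nu^{(1)}(t)$.

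Finally, a routine induction on $n$ using Eq.~\eqref{eq:resetsequencedelay} yields the full convolution expression Eq.~\eqref{eq:resetsequencedelayconvolution}: the base $n=1$ is precisely $\tau^{(1)}=\nu^{(1)}$, and the inductive step is immediate from the recursion. Substituting this back into Eq.~\eqref{eq:resetsequencestate} then yields Eq.~\eqref{eq:resetsequencestateconvolution}. I do not foresee a serious obstacle here — the content of the lemma is essentially the unfolding of two earlier results under the rank-one assumption on $\tickmatrix^{(n)}$; the only subtlety worth stating explicitly is the interchange of the e-sum with the integral, which is justified because every quantity involved is nonnegative and the integrand is bounded, and the careful handling of the $n=1$ base case through the convention $V_R^{(0)} = V^{(0)}(0)$ and $\tau^{(0)}=\delta$.
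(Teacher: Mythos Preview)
Your proposal is correct and follows essentially the same route as the paper's proof: both derive Eq.~\eqref{eq:resetsequencestate} by combining Lemma~\ref{lemma:sequenceconvolution} with Lemma~\ref{lemma:eventresetstate}, then obtain Eq.~\eqref{eq:resetsequencedelay} by substituting that expression for $V^{(n-1)}$ into the definition of $\tau^{(n)}$, pulling the e-sum through the integral, and recognising $\nu^{(n)}$, before finishing by induction. Your explicit treatment of the $n=1$ base case and the justification for exchanging the e-sum with the integral are welcome clarifications that the paper leaves implicit.
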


\begin{proof}

The first statement of the Lemma, Eq. \ref{eq:resetsequencestate}, follows from Lemmas \ref{lemma:sequenceconvolution} and \ref{lemma:eventresetstate}. For the second statement (Eq. \ref{eq:resetsequencedelay}), we take the definition of the delay function (Eq. \ref{eq:sequencedelay}). For $n \geq 1$,
\begin{align}
	\tau^{(n)}(t) &= \onenorm{ \tickmatrix^{(n-1)} \int_0^t e^{\notickmatrix^{(n-1)} (t-t^\prime)} V_R^{(n-1)} \tau^{(n-1)}(t^\prime) dt^\prime } \\
	&= \int_0^t \onenorm{ \tickmatrix^{(n-1)} e^{\notickmatrix^{(n-1)} (t-t^\prime)} V_R^{(n-1)} } \tau^{(n-1)}(t^\prime) dt^\prime \\
	&= \int_0^t \nu^{(n)}(t-t^\prime) \; \tau^{(n-1)}(t^\prime) dt^\prime \\
	&= \left( \tau^{(n-1)} \conv \nu^{(n)} \right) (t),
\end{align}
where $\nu^{(n)}$ is the sub-event delay function, Def. \ref{def:resetsubeventdelay}. Proceeding by induction, one recovers Eqs. \ref{eq:resetsequencedelayconvolution} and \ref{eq:resetsequencestateconvolution}.

\end{proof}

\begin{lemma}\label{lemma:resetsequencescale}

In an independent Markovian sequence of events (Def. \ref{def:independentsequence}), if a pair of stochastic generators is scaled as
\begin{align}
	\{\notickmatrix^{(n)},\tickmatrix^{(n)}\} \rightarrow  	\{a\notickmatrix^{(n)},a\tickmatrix^{(n)}\},
\end{align}
where $a>0$, the delay function of the $m^{th}$ event in the sequence, where $m>n$, is modified to be
\begin{align}
	\tau^{(m)}(t) &= \left( \nu^{(1)} \conv \nu^{(2)} \conv ... \conv \tilde{\nu}^{(n+1)} \conv ... \conv \nu^{(m)} \right) (t), \\
	\text{where} \quad \tilde{\nu}^{(n+1)}(t) &= a \cdot \nu^{(n+1)}(at).
\end{align}

For $m \leq n$, the delay function is left unchanged.

\end{lemma}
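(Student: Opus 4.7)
The plan is to invoke Lemma \ref{lemma:resetsequenceconvolution}, which expresses $\tau^{(m)}$ as the sequential convolution $\nu^{(1)}\conv\cdots\conv\nu^{(m)}$ of the sub-event delay functions, and then observe that scaling the pair $\{\notickmatrix^{(n)},\tickmatrix^{(n)}\}\to\{a\notickmatrix^{(n)},a\tickmatrix^{(n)}\}$ affects exactly one of these factors, namely $\nu^{(n+1)}$, while leaving all others intact. For the case $m\le n$, the factors $\nu^{(1)},\ldots,\nu^{(m)}$ depend only on generators with index strictly less than $n$, so the statement that $\tau^{(m)}$ is unchanged is immediate from Lemma \ref{lemma:resetsequenceconvolution}.

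For $m>n$, the first step is to verify that the event reset state $V_R^{(n)}$ (Def.~\ref{def:eventresetstate}) is invariant under the scaling. This follows because $V_R^{(n)}$ is defined to be the unique normalised vector to which every column of $\tickmatrix^{(n)}$ is proportional; multiplying $\tickmatrix^{(n)}$ by a positive constant $a$ scales each column by the same factor and therefore does not change the direction of proportionality. Equally, the reset states $V_R^{(k)}$ for $k\neq n$ are untouched because their defining generators are not scaled.

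The second step is a direct computation of the perturbed sub-event delay function. Using the definition in Def.~\ref{def:resetsubeventdelay} with the scaled generators,
\ba
\tilde{\nu}^{(n+1)}(t)
&= \onenorm{\,a\tickmatrix^{(n)}\, e^{a\notickmatrix^{(n)} t}\, V_R^{(n)}} \notag \\
&= a\,\onenorm{\tickmatrix^{(n)}\, e^{\notickmatrix^{(n)}(at)}\, V_R^{(n)}}
= a\,\nu^{(n+1)}(at),
\ea
where I have pulled the positive scalar $a$ out of the e-sum and noted that $e^{a\notickmatrix^{(n)}t}=e^{\notickmatrix^{(n)}(at)}$. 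All other sub-event delay functions $\nu^{(k)}$ for $k\neq n+1$ depend only on $\{\notickmatrix^{(k-1)},\tickmatrix^{(k-1)}\}$ and on $V_R^{(k-1)}$, none of which is affected by scaling the $n^{\text{th}}$ pair.

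Substituting into the convolution formula of Lemma~\ref{lemma:resetsequenceconvolution} then yields the claimed expression
\be
\tau^{(m)}(t)=\bigl(\nu^{(1)}\conv\cdots\conv\tilde\nu^{(n+1)}\conv\cdots\conv\nu^{(m)}\bigr)(t)
\ee
for $m>n$. There is no real obstacle here; the only point requiring mild care is the invariance of the reset states under positive rescaling of $\tickmatrix^{(n)}$, which must be noted explicitly so that $\nu^{(k)}$ for $k>n+1$ is genuinely unchanged rather than only apparently so.
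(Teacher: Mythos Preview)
Your proof is correct and follows essentially the same route as the paper: invoke Lemma~\ref{lemma:resetsequenceconvolution} to write $\tau^{(m)}$ as a convolution of sub-event delay functions, observe that only $\nu^{(n+1)}$ depends on the scaled pair, and compute $\tilde{\nu}^{(n+1)}(t)=a\,\nu^{(n+1)}(at)$ directly from the definition. Your explicit verification that the reset state $V_R^{(n)}$ is unchanged under positive scaling of $\tickmatrix^{(n)}$ is a point the paper handles only implicitly (by remarking that the rank is preserved), so your version is in fact slightly more careful.
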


\begin{proof}

The sequence is still independent, as scaling up the event generator $\tickmatrix^{(n)}$ by a positive constant does not change its rank. Thus Lemma \ref{lemma:resetsequenceconvolution} still applies, and by Eq. \ref{eq:resetsequencedelayconvolution}, one only has to consider the modification to the $(n+1)^{th}$ sub-event delay function, (see Def. \ref{def:resetsubeventdelay}),
\begin{align}
	\tilde{\nu}^{(n+1)} &= \onenorm{ a \cdot \tickmatrix^{(n-1)} e^{a \cdot \notickmatrix^{(n-1)} t} V_R^{(n-1)} } \\
	 &= a \onenorm{ \tickmatrix^{(n-1)} e^{\notickmatrix^{(n-1)} (at)} V_R^{(n-1)} } \\
	 &= a \cdot \nu^{(n+1)}(at).
\end{align}

\end{proof}

\begin{lemma}[The delay function of a sequence in terms of canonical independent sequences]\label{lemma:generalindependent}

Given a Markovian sequence (Def. \ref{def:markoviansequence}), the $n^{th}$ delay function and state (for $n \geq 1$ in the sequence may be expressed in terms of convolutions of sub-event delay functions (Def. \ref{def:resetsubeventdelay}) as follows:
\begin{align}
	\tau^{(n)}(t) &= \sum_{i_0=1}^{d^{(0)}} \sum_{i_1=1}^{d^{(1)}} ... \sum_{i_n=1}^{d^{(n)}} v_{i_0} \left( \nu^{(1)}_{i_0i_1} \conv \nu^{(2)}_{i_1i_2} \conv ... \conv \nu^{(1)}_{i_{n-1}i_n} \right)(t), \\
	V^{(n)}(t) &= \int_0^t e^{\notickmatrix^{(n)} (t-t^\prime)}  \sum_{i_0=1}^{d^{(0)}} \sum_{i_1=1}^{d^{(1)}} ... \sum_{i_n=1}^{d^{(n)}} \mathbf{e}^{(n)}_{i_n} v_{i_0} \left( \nu^{(1)}_{i_0i_1} \conv \nu^{(2)}_{i_1i_2} \conv ... \conv \nu^{(1)}_{i_{n-1}i_n} \right)(t^\prime) dt^\prime,
\end{align}
where $v_{i_0}$ are the coefficients of the initial state in its canonical basis,
\begin{align}\label{eq:initialstatecanonical}
	V^{(0)}(0) &= \sum_{i_0}^{d^{(0)}} v_{i_0} \mathbf{e}^{(0)}_{i_0},
\end{align}
the \textbf{canonical sub-event delay functions} $\nu_{i_{j-1}i_j}$ are defined as
\begin{align}\label{eq:canonicalsub-eventdelay}
	\nu^{j}_{i_{j-1}i_j} (t) &= \onenorm{ \tickmatrix^{(j-1)}_{i_j} e^{\notickmatrix^{(j-1)} t} \mathbf{e}^{(j-1)}_{i_{j-1}} }, \quad i_{j-1} \in \{1,2,...,d^{(j-1)}\}, \; i_j \in \{1,2,...,d^{(j)}\},
\end{align}
and in turn the \textbf{canonical event generators} $\tickmatrix^{(j-1)}_{i_j}$ are defined to be the original generator $\tickmatrix^{(j-1)}$ with all the rows except row $i_j$ set to zero. Thus
\begin{subequations}\label{eq:canonicalgenerators}
\begin{align}
	\tickmatrix^{(j-1)} &= \sum_{i_j=1}^{d^{(j)}} \tickmatrix^{(j-1)}_{i_j}, \\
	\text{where} \mat{\tickmatrix^{(j-1)}_{i_j}}_{kl} &= \delta_{i_j,k} \mat{\tickmatrix^{(j-1)}}_{kl}.
\end{align}
\end{subequations}

\end{lemma}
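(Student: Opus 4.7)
The plan is to iterate the convolution formula of Lemma~\ref{lemma:sequenceconvolution} all the way down to the initial state, and then use the row-decomposition \eqref{eq:canonicalgenerators} of each event generator together with the canonical-basis expansion \eqref{eq:initialstatecanonical} of $V^{(0)}(0)$ to turn the resulting expression into a sum over ``canonical trajectories'' $(i_0,i_1,\ldots,i_n)$ through the sequence of canonical bases. Applying Lemma~\ref{lemma:sequenceconvolution} recursively $n$ times gives
\begin{align*}
V^{(n)}(t) = \int_{0\le t'_1\le\cdots\le t'_n\le t} \! e^{\notickmatrix^{(n)}(t-t'_n)}\,\tickmatrix^{(n-1)}\, e^{\notickmatrix^{(n-1)}(t'_n-t'_{n-1})}\,\tickmatrix^{(n-2)}\cdots \tickmatrix^{(0)}\,e^{\notickmatrix^{(0)} t'_1}\,V^{(0)}(0)\, dt'_1\cdots dt'_n.
\end{align*}
Substituting $V^{(0)}(0)=\sum_{i_0} v_{i_0}\mathbf{e}^{(0)}_{i_0}$ and $\tickmatrix^{(j-1)} = \sum_{i_j}\tickmatrix^{(j-1)}_{i_j}$ distributes the integrand into a sum over all index tuples $(i_0,\ldots,i_n)$ by linearity.

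The key observation is that each canonical event generator $\tickmatrix^{(j-1)}_{i_j}$ has a single non-zero row (row $i_j$), so for any $W\in\mathbb{R}^{d^{(j-1)}}$ one has $\tickmatrix^{(j-1)}_{i_j} W = \alpha\,\mathbf{e}^{(j)}_{i_j}$ for some $\alpha\ge 0$, and since $\mathbf{e}^{(j)}_{i_j}$ has unit e-sum, $\alpha = \onenorm{\tickmatrix^{(j-1)}_{i_j} W}$. Applying this successively to the chain $\tickmatrix^{(j-1)}_{i_j} e^{\notickmatrix^{(j-1)}(t'_j - t'_{j-1})}\mathbf{e}^{(j-1)}_{i_{j-1}}$ collapses each such factor to the scalar $\nu^{j}_{i_{j-1}i_j}(t'_j - t'_{j-1})$ defined in \eqref{eq:canonicalsub-eventdelay}, times $\mathbf{e}^{(j)}_{i_j}$, which is then consumed by the next propagator. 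After all such collapses, only the final propagator $e^{\notickmatrix^{(n)}(t-t'_n)}$ acting on $\mathbf{e}^{(n)}_{i_n}$ survives, and the integrand becomes $v_{i_0}\,e^{\notickmatrix^{(n)}(t-t'_n)}\mathbf{e}^{(n)}_{i_n}\prod_{j=1}^n \nu^{j}_{i_{j-1}i_j}(t'_j-t'_{j-1})$.

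Finally, the iterated time-ordered integral
\begin{align*}
\int_{0\le t'_1\le\cdots\le t'_n\le t}\prod_{j=1}^n \nu^{j}_{i_{j-1}i_j}(t'_j-t'_{j-1})\,dt'_1\cdots dt'_n
\end{align*}
is by definition the $n$-fold convolution $(\nu^{(1)}_{i_0i_1}\conv\nu^{(2)}_{i_1i_2}\conv\cdots\conv\nu^{(n)}_{i_{n-1}i_n})(t)$ (see Eq.~\eqref{eq:delaysequence} in Appendix~\ref{app:delaysequence}). Summing over all index tuples yields the claimed formula for $V^{(n)}(t)$. The corresponding formula for $\tau^{(n)}(t)$ follows by applying the identical decomposition to $\tau^{(n)}(t)=\onenorm{\tickmatrix^{(n-1)}V^{(n-1)}(t)}$; the final outer propagator is absent here because one takes the e-sum of $\tickmatrix^{(n-1)}_{i_n}V^{(n-1)}(t)$ directly, which equals $\nu^{(n)}_{i_{n-1}i_n}$-weighted convolutions evaluated at $t$.

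The conceptual content is modest: the lemma says that any Markovian sequence is a mixture over canonical trajectories, with the branching at each event dictated by the rows of $\tickmatrix^{(j-1)}$. The only non-trivial work is bookkeeping — keeping the index ordering consistent, verifying that e-sum commutes with each finite sum and ordered integral, and recognising the time-ordered iterated integral as a convolution — which is where small indexing slips (such as the stray $\nu^{(1)}_{i_{n-1}i_n}$ in the statement, which should read $\nu^{(n)}_{i_{n-1}i_n}$) most easily occur.
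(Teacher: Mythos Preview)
Your approach is correct and is essentially the paper's proof unrolled: both rest on the observation that each canonical generator $\tickmatrix^{(j-1)}_{i_j}$ has a single non-zero row, so that $\tickmatrix^{(j-1)}_{i_j}W = \onenorm{\tickmatrix^{(j-1)}_{i_j}W}\,\mathbf{e}^{(j)}_{i_j}$, which collapses the matrix chain into a product of scalar sub-event delay functions. The paper packages this as a formal induction on $n$ (base case $n=1$, then $k\to k+1$), whereas you iterate Lemma~\ref{lemma:sequenceconvolution} $n$ times at once; the two presentations are interchangeable.

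There is, however, one bookkeeping slip in your final step. The $n$-fold time-ordered integral you display,
\[
\int_{0\le t'_1\le\cdots\le t'_n\le t}\prod_{j=1}^n \nu^{j}_{i_{j-1}i_j}(t'_j-t'_{j-1})\,dt'_1\cdots dt'_n,
\]
is \emph{not} equal to $(\nu^{(1)}_{i_0i_1}\conv\cdots\conv\nu^{(n)}_{i_{n-1}i_n})(t)$: the convolution of $n$ functions evaluated at a point is an $(n{-}1)$-fold integral, so the expression above is instead $\int_0^t(\nu^{(1)}\conv\cdots\conv\nu^{(n)})(s)\,ds$. Moreover, you cannot separate the $\nu$-product from the propagator factor $e^{\notickmatrix^{(n)}(t-t'_n)}$, since both depend on $t'_n$. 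The correct identification is to fix $t'_n$ and integrate over $t'_1,\ldots,t'_{n-1}$, which yields $(\nu^{(1)}_{i_0i_1}\conv\cdots\conv\nu^{(n)}_{i_{n-1}i_n})(t'_n)$; the remaining integral over $t'_n\in[0,t]$ against $e^{\notickmatrix^{(n)}(t-t'_n)}\mathbf{e}^{(n)}_{i_n}$ is then precisely the outer $\int_0^t\ldots\,dt'$ appearing in the lemma's formula for $V^{(n)}(t)$. With this correction your argument goes through; for $\tau^{(n)}$ there is no outer propagator and the $(n{-}1)$-fold integral already gives the convolution at $t$, as you say.
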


\begin{remark} \textbf{Explanation of Lemma \ref{lemma:generalindependent}.}
	
The difference between a canonical independent Markovian sequence (Def. \ref{def:canonicalsequence}) and a general Markovian sequence (Def. \ref{def:markoviansequence}) is as follows. In the first case, the initial state and the states following the occurrence of every event in the sequence are canonical states, which is equivalent to every event generator in the sequence having a single non-zero row. For a general sequence, this is not the case.

However, even in the case of a general sequence, one can split each event generator into the sum of matrices with single non-zero rows, each corresponding to a single canonical state (hence the term \textbf{canonical generators}). Note from Lemma \ref{lemma:sequenceconvolution} that the states and delay functions of every event are linear w.r.t. the event generators, and thus by splitting the event generators of a general sequence into canonical generators, we find that states and delay functions may be expressed as a linear combination of each of the terms in the split. Thus the general sequence turns into a tree-like graph, where each event corresponds to branching out w.r.t. all the possible canonical states after the event has occurred.

In other words, we decompose a Markovian sequence into the sum of all the possible canonical independent Markovian sequences that occur within it, and the above lemma expresses the states and delay functions of the original sequence as the same sum over the canonical sequences.

\end{remark}

\begin{proof}

Proof by induction. Consider the case of $n=1$. For the delay function $\tau^{(1)}$, from Eq. \ref{eq:sequencedelay},
\begin{align}\label{eq:firstdelayfunction}
	\tau^{(1)} &= \onenorm{ \tickmatrix^{(0)} V^{(0)}(t) }.
\end{align}

We may use the solution for $V^{(0)}(t)$ from Lemma \ref{lemma:sequenceconvolution}, and the decomposition of the initial state in the canonical basis (Eq. \ref{eq:initialstatecanonical},
\begin{align}
	V^{(0)}(t) &= e^{\notickmatrix^{(0)} t} V^{(0)}(0) = \sum_{i_0=1}^{d^{(0)}} e^{\notickmatrix^{(0)} t} v_{i_0} \mathbf{e}^{(0)}_{i_0}.
\end{align}

We also split the event generator $\tickmatrix^{(0)}$ into the sum of canonical generators (Eq. \ref{eq:canonicalgenerators}),
\begin{align}
	\tickmatrix^{(0)} &= \sum_{i_1=1}^{d^{(1)}} \tickmatrix^{(0)}_{i_1}.
\end{align}

Substituting these back into the delay function of the first event (Eq. \ref{eq:firstdelayfunction}), one obtains
\begin{align}
	\tau^{(1)} &= \sum_{i_0=1}^{d^{(0)}} \sum_{i_1=1}^{d^{(1)}} v_{i_0} \onenorm{ \tickmatrix^{(0)}_{i_1} e^{\notickmatrix^{(0)} t} \mathbf{e}^{(0)}_{i_0} } \\
	&= \sum_{i_0=1}^{d^{(0)}} \sum_{i_1=1}^{d^{(1)}} v_{i_0} \nu^{(1)}_{i_0i_1},
\end{align}
using the definition of the canonical sub-event delay function (Eq. \ref{eq:canonicalsub-eventdelay}). Thus $\tau^{(1)}$ satisfies the statement of the lemma.

For the state $V^{(1)}$, we apply Lemma \ref{lemma:sequenceconvolution},
\begin{align}
	V^{(1)}(t) &= \int_0^t e^{\notickmatrix^{(1)} \left( t - t^\prime \right)} \tickmatrix^{(0)} V^{(0)} \left( t^\prime \right) dt^\prime,
\end{align}
and follow the same procedure of decomposing the state and event generator, resulting in
\begin{align}\label{eq:tempfirststate}
		V^{(1)}(t) &= \int_0^t e^{\notickmatrix^{(1)} \left( t - t^\prime \right)} \sum_{i_0=1}^{d^{(0)}} \sum_{i_1=1}^{d^{(1)}}  v_{i_0} \left( \tickmatrix^{(0)}_{i_1} e^{\notickmatrix^{(0)} t^\prime} \mathbf{e}^{(0)}_{i_0} \right) dt^\prime.
\end{align}

The term in parentheses above is proportional to $\mathbf{e}^{(1)}_{i_1}$, as the generator $\tickmatrix^{(0)}_{i_1}$ has only a single non-zero row, which is row $i_1$, corresponding to the canonical state $\mathbf{e}^{(1)}_{i_1}$. Furthermore, the e-sum of the term has already been defined as the canonical sub-event delay function (Eq. \ref{eq:canonicalsub-eventdelay}), and thus we can express it as
\begin{align}\label{eq:associatesub-event}
	\tickmatrix^{(0)}_{i_1} e^{\notickmatrix^{(0)} t^\prime} \mathbf{e}^{(0)}_{i_0} = \mathbf{e}^{(1)}_{i_1} \nu^{(1)}_{i_0i_1}(t^\prime).
\end{align}

Substituting this back into Eq. \ref{eq:tempfirststate},
\begin{align}
	V^{(1)}(t) &= \int_0^t e^{\notickmatrix^{(1)} \left( t - t^\prime \right)} \sum_{i_0=1}^{d^{(0)}} \sum_{i_1=1}^{d^{(1)}} v_{i_0} \mathbf{e}^{(1)}_{i_1} \nu^{(1)}_{i_0i_1}(t^\prime)  dt^\prime,
\end{align}
which satisfies the statement of the lemma.

Proceeding, assume that the lemma is satisfied up to and including $n=k$. For the next delay function $\tau^{(k+1)}$, from Eq. \ref{eq:sequencedelay},
\begin{align}
	\tau^{(k+1)} &= \onenorm{ \tickmatrix^{(k)} V^{(k)}(t) } \\
	&= \onenorm{ \sum_{i_{k+1}=1}^{d^{(k+1)}} \tickmatrix^{(k)}_{i_{k+1}} \int_0^t e^{\notickmatrix^{(k)} (t-t^\prime)}  \sum_{i_0=1}^{d^{(0)}} \sum_{i_1=1}^{d^{(1)}} ... \sum_{i_k=1}^{d^{(k)}} \mathbf{e}^{(k)}_{i_k} v_{i_0} \left( \nu^{(1)}_{i_0i_1} \conv \nu^{(2)}_{i_1i_2} \conv ... \conv \nu^{(k)}_{i_{k-1}i_k} \right)(t^\prime) dt^\prime } \\
	&= \sum_{i_0=1}^{d^{(0)}} \sum_{i_1=1}^{d^{(1)}} ... \sum_{i_{k+1}=1}^{d^{(k+1)}} v_{i_0} \int_0^t \onenorm{ \tickmatrix^{(k)}_{i_{k+1}} e^{\notickmatrix^{(k)} (t-t^\prime)} \mathbf{e}^{(k)}_{i_k} } \left( \nu^{(1)}_{i_0i_1} \conv \nu^{(2)}_{i_1i_2} \conv ... \conv \nu^{(k)}_{i_{k-1}i_k} \right)(t^\prime) dt^\prime,
\end{align}
where we have applied the lemma for the state $V^{(k)}(t)$, and then simply rearranged terms (all of the states, sums are finite, and so is the integral).

Identifying the term in the e-sum as another canonical sub-event delay function (Eq. \ref{eq:canonicalsub-eventdelay}), we get
\begin{align}
	\tau^{(k+1)} &= \sum_{i_0=1}^{d^{(0)}} \sum_{i_1=1}^{d^{(1)}} ... \sum_{i_{k+1}=1}^{d^{(k+1)}} v_{i_0} \int_0^t \nu^{(k+1)}_{i_ki_{k+1}}(t-t^\prime) \left( \nu^{(1)}_{i_0i_1} \conv \nu^{(2)}_{i_1i_2} \conv ... \conv \nu^{(k)}_{i_{k-1}i_k} \right)(t^\prime) dt^\prime \\
	&= \sum_{i_0=1}^{d^{(0)}} \sum_{i_1=1}^{d^{(1)}} ... \sum_{i_{k+1}=1}^{d^{(k+1)}} v_{i_0} \left( \nu^{(1)}_{i_0i_1} \conv \nu^{(2)}_{i_1i_2} \conv ... \conv \nu^{(k)}_{i_{k-1}i_n} \conv \nu^{(k+1)}_{i_ki_{k+1}} \right)(t).
\end{align}
Thus $\tau^{(k+1)}$ also satisfies the lemma.

Finally, we express $V^{(k+1)}$ using Lemma \ref{lemma:sequenceconvolution},
\begin{align}
	V^{(k+1)}(t) &= \int_0^t e^{\notickmatrix^{(k+1)} \left( t - t^\prime \right)} \tickmatrix^{(k)} V^{(k)} \left( t^\prime \right) dt^\prime \\
	&= \int_0^t e^{\notickmatrix^{(k+1)} \left( t - t^\prime \right)} \tickmatrix^{(k)} \int_0^{t^\prime} e^{\notickmatrix^{(k)} (t^\prime-t^{\prime\prime})}  \sum_{i_0=1}^{d^{(0)}} \sum_{i_1=1}^{d^{(1)}} ... \sum_{i_k=1}^{d^{(k)}} \mathbf{e}^{(k)}_{i_k} v_{i_0} \left( \nu^{(1)}_{i_0i_1} \conv \nu^{(2)}_{i_1i_2} \conv ... \conv \nu^{(k)}_{i_{k-1}i_k} \right)(t^{\prime\prime}) dt^{\prime\prime} dt^\prime,
\end{align}
where we have applied the lemma to $V^{(k)}$. Once again, we can split the generator $\tickmatrix^{(k)}$ into canonical generators (Eq. \ref{eq:canonicalgenerators}), and rearrange the integral and sums appropriately,
\begin{align}
	V^{(k+1)}(t) &= \int_0^t e^{\notickmatrix^{(k+1)} \left( t - t^\prime \right)} \sum_{i_0=1}^{d^{(0)}} \sum_{i_1=1}^{d^{(1)}} ... \sum_{i_{k+1}=1}^{d^{(k+1)}} \int_0^{t^\prime} \tickmatrix^{(k)}_{i_{k+1}} e^{\notickmatrix^{(k)} (t^\prime-t^{\prime\prime})}   \mathbf{e}^{(k)}_{i_k} v_{i_0} \left( \nu^{(1)}_{i_0i_1} \conv \nu^{(2)}_{i_1i_2} \conv ... \conv \nu^{(k)}_{i_{k-1}i_k} \right)(t^{\prime\prime}) dt^{\prime\prime} dt^\prime.
\end{align}

Once again, we can associate one of the terms above to a canonical sub-event delay function (Eq. \ref{eq:canonicalsub-eventdelay}), as we did in Eq. \ref{eq:associatesub-event},
\begin{align}
	\tickmatrix^{(k)}_{i_{k+1}} e^{\notickmatrix^{(k)} (t^\prime-t^{\prime\prime})}   \mathbf{e}^{(k)}_{i_k} &= \mathbf{e}^{(k+1)}_{i_{k+1}} \nu^{(k+1)}_{i_ki_{k+1}} (t^\prime - t^{\prime\prime}),
\end{align}
which results in
\begin{align}
	V^{(k+1)}(t) &= \int_0^t e^{\notickmatrix^{(k+1)} \left( t - t^\prime \right)} \sum_{i_0=1}^{d^{(0)}} \sum_{i_1=1}^{d^{(1)}} ... \sum_{i_{k+1}=1}^{d^{(k+1)}} \int_0^{t^\prime} \mathbf{e}^{(k+1)}_{i_{k+1}} \nu^{(k+1)}_{i_ki_{k+1}} (t^\prime - t^{\prime\prime}) v_{i_0} \left( \nu^{(1)}_{i_0i_1} \conv \nu^{(2)}_{i_1i_2} \conv ... \conv \nu^{(1)}_{i_{k-1}i_k} \right)(t^{\prime\prime}) dt^{\prime\prime} dt^\prime \\
	&= \int_0^t e^{\notickmatrix^{(k+1)} \left( t - t^\prime \right)} \sum_{i_0=1}^{d^{(0)}} \sum_{i_1=1}^{d^{(1)}} ... \sum_{i_{k+1}=1}^{d^{(k+1)}} \mathbf{e}^{(k+1)}_{i_{k+1}} v_{i_0} \left( \nu^{(1)}_{i_0i_1} \conv \nu^{(2)}_{i_1i_2} \conv ... \conv \nu^{(k)}_{i_{k-1}i_k} \conv \nu^{(k+1)}_{i_ki_{k+1}} \right) (t^\prime) dt^\prime,
\end{align}
which also satisfies the lemma. By induction, the lemma applies for all $n \geq 1$.

\end{proof}

\subsection{Result on the precision of Markovian sequences}\label{app:resetresult}

\begin{theorem}\label{theorem:markovianaccuracy}
	
Given a Markovian sequence (Def. \ref{def:markoviansequence}), there exists a canonical independent Markovian sequence (Def. \ref{def:canonicalsequence}) such that
\begin{itemize}
	\item the vectors and generators of the canonical sequence are of the same dimension as the original sequence, i.e. $d^{(n)}$ remain the same for all $n$, and
	\item the precisions (Def. \ref{def:accuracydelayfunction}) of every event in the canonical sequence, i.e. the precisions of each of the event delay functions (Eq. \ref{eq:sequencedelay}) upper bounds those of the original sequence.
\end{itemize}
	
Furthermore, one such canonical independent Markovian sequence can be explicitly constructed from the original sequence by
\begin{itemize}
	\item changing the initial state to a well chosen canonical state,
	\item making all of the event generators $\tickmatrix^{(n)}$ into rank-1 matrices by keeping only a single non-zero row, possibly shifted to a different row, and
	\item scaling each pair of event generators $\{\notickmatrix^{(n)},\tickmatrix^{(n)}\}$ by well chosen positive constants.	
\end{itemize}

\end{theorem}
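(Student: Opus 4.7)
The plan is to reduce an arbitrary Markovian sequence to a canonical independent one by choosing a single ``path'' through the space of canonical states. The starting point is Lemma \ref{lemma:generalindependent}, which expands each delay function $\tau^{(n)}(t)$ as a finite mixture indexed by canonical paths $\vec i = (i_0, i_1, \ldots, i_n)$,
\begin{align*}
  \tau^{(n)}(t) = \sum_{\vec i} v_{i_0}\, \bigl( \nu^{(1)}_{i_0 i_1} \conv \nu^{(2)}_{i_1 i_2} \conv \cdots \conv \nu^{(n)}_{i_{n-1} i_n} \bigr)(t),
\end{align*}
where each term is a convolution of canonical sub-event delay functions that arise from the rank-1 restrictions of the original event generators. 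Because accuracy is invariant under multiplication of a delay function by a positive scalar (Lemma \ref{lemma:delayscale}), each term in the mixture has the same accuracy as the corresponding bare convolution.

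Applying Lemma \ref{lemma:delaymixture} to this mixture then yields, for each $n$, at least one path $\vec i^{\,*,n}$ of length $n$ whose associated convolution has accuracy at least $R^{(n)} = R[\tau^{(n)}]$. To upgrade this per-$n$ statement into a \emph{single} canonical sequence that dominates every $R^{(n)}$ simultaneously, I would invoke a compactness argument. The tree of canonical paths is finitely branching at each depth (the children at depth $n$ are indexed by $\{1,\ldots,d^{(n+1)}\}$), and by the preceding step the subtree of depth-$n$ prefixes meeting the accuracy bound at depths $1,\ldots,n$ is non-empty for every $n$. K\"onig's Lemma then produces an infinite path $\vec i^{\,*} = (i_0^*, i_1^*, \ldots)$ along which every depth-$n$ prefix has accuracy at least $R^{(n)}$.

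From this path I would construct the desired canonical independent Markovian sequence by taking $\mathbf{e}^{(0)}_{i_0^*}$ as the initial state, replacing each event generator $\tickmatrix^{(n-1)}$ by its restriction to the single non-zero row indexed by $i_n^*$ (which remains a valid pair of stochastic generators by Corollary \ref{cor:rowgenerators}, and is rank-1 as required), and allowing per-step positive rescaling of the pair $(\notickmatrix^{(n)}, \tickmatrix^{(n)})$ — such rescaling preserves both the stochastic structure (Corollary \ref{cor:scaledgenerators}) and all accuracies (Lemmas \ref{lemma:resetsequencescale} and \ref{lemma:delayscale}). The dimensions $d^{(n)}$ are untouched by construction, and by Lemma \ref{lemma:resetsequenceconvolution} the $n$-th event delay function of the constructed sequence is exactly $(\nu^{(1)}_{i_0^* i_1^*} \conv \cdots \conv \nu^{(n)}_{i_{n-1}^* i_n^*})(t)$, whose accuracy is $\geq R^{(n)}$ by the choice of $\vec i^{\,*}$.

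The main obstacle is the simultaneous selection of one canonical path that is optimal at every event: the per-$n$ existence follows immediately from the mixture expansion and scaling invariance, but weaving these per-depth choices into a single coherent infinite sequence requires the compactness step. The remaining ingredients — checking that the restricted and rescaled generators remain stochastic, identifying the resulting sub-event delay functions with the canonical ones $\nu^{(j)}_{i_{j-1}^* i_j^*}$, and reading off the event delay functions via Lemma \ref{lemma:resetsequenceconvolution} — are then routine consequences of the corollaries proved earlier in the appendix.
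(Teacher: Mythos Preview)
Your argument has a genuine gap at the K\"onig's Lemma step. From Lemma~\ref{lemma:delaymixture} applied at depth $n$ you only obtain \emph{some} length-$n$ path whose depth-$n$ convolution has accuracy $\geq R[\tau^{(n)}]$. You then assert that ``the subtree of depth-$n$ prefixes meeting the accuracy bound at depths $1,\ldots,n$ is non-empty for every $n$'', but this is a strictly stronger statement: the depth-$n$ maximiser need not have prefixes that dominate at earlier depths, and conversely no extension of a good depth-$(n-1)$ path need dominate at depth $n$. Nothing in your argument rules out the good subtree becoming empty at some finite level, so K\"onig's Lemma does not apply. Concretely, if the step-$2$ sub-events starting from the state reached by the best depth-$1$ path all have very low accuracy, while those starting from a depth-$1$-suboptimal state have very high accuracy, then every path meeting the depth-$2$ bound could fail the depth-$1$ bound.

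The paper's proof avoids precisely this coupling problem by inserting two additional ingredients you did not use. First, after the mixture lemma it applies Lemma~\ref{lemma:delaysequence} to the winning convolution, obtaining $R[\tau^{(n)}]\le \sum_{j=1}^n R^{\max}_j$ with $R^{\max}_j=\max_{i_{j-1},i_j}R[\nu^{(j)}_{i_{j-1}i_j}]$; the crucial relaxation is that the maximisation over $(i_{j-1},i_j)$ is now \emph{independent} of the other steps, so the per-step optima need not fit together into a single consistent path. Second, it uses Lemma~\ref{lemma:resetinvariance} to shift the single retained row of each $\tickmatrix^{(j-1)}$ so that the reset state of step $j$ is the optimal \emph{input} state for step $j+1$ (which generally differs from the optimal output state of step $j$), and then uses the scaling freedom not merely as a harmless operation but to force the means to satisfy $\mu_j/R_j=\text{const}$, which by Lemma~\ref{lemma:delaysequence} makes the canonical sequence achieve accuracy exactly $\sum_{j\le n}R^{\max}_j\ge R[\tau^{(n)}]$ at every level simultaneously. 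Your proposal uses neither the convolution bound, nor the row shift, nor scaling in this essential way, and without them the compactness route does not close.
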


\begin{remark}
	
The major implication of this theorem is that if one's interest is in upper bounding the precision of Markovian sequences, then it suffices to optimize over the much smaller subset of canonical independent Markovian sequences, which are far more tractable.
	
Our main use for this theorem is to straightforwardly apply it to imply that ``reset" clocks, i.e. those that go to a fixed state every time they tick, are the most precise. However, the content of this theorem is more general than its application to clocks.
	
\end{remark}

\begin{proof}

We begin by taking restating the first part of Lemma \ref{lemma:generalindependent}, which states that each of the delay functions of a Markovian sequence of events may be expressed as 
\begin{align}\label{eq:prooftempdelayconvolution}
	\tau^{(n)}(t) &= \sum_{i_0=1}^{d^{(0)}} \sum_{i_1=1}^{d^{(1)}} ... \sum_{i_n=1}^{d^{(n)}} v_{i_0} \left( \nu^{(1)}_{i_0i_1} \conv \nu^{(2)}_{i_1i_2} \conv ... \conv \nu^{(1)}_{i_{n-1}i_n} \right)(t),
\end{align}
where the canonical sub-event delay functions $v_{i_{j-1}i_j}$ are defined in Eq. \ref{eq:canonicalsub-eventdelay}, with the associated canonical event generators subsequently in Eq. \ref{eq:canonicalgenerators}.

Applying Lemma \ref{lemma:delaymixture} regarding the precision of mixtures of delay functions to Eq. \ref{eq:prooftempdelayconvolution},
\begin{align}
	R \left[ \tau^{(n)} \right] \leq \max_{i_0,i_1,...,i_n} R \left[ \nu^{(1)}_{i_0i_1} \conv \nu^{(2)}_{i_1i_2} \conv ... \conv \nu^{(1)}_{i_{n-1}i_n} \right],
\end{align}
where the range of each $i_j$ above (and in what follows) is $\{1,2,...,d^{(j)}\}$ . We follow by applying Lemma \ref{lemma:delaysequence} regarding the precision of convolutions of delay functions, and find that
\begin{align}
	R \left[ \tau^{(n)} \right] \leq \max_{i_0,i_1,...,i_n} \left( R \left[ \nu^{(1)}_{i_0i_1} \right] + R \left[ \nu^{(2)}_{i_1i_2} \right] + ... + R \left[ \nu^{(n)}_{i_{n-1}i_n} \right] \right).
\end{align}

In the maximization above, each $i_j$ for $1 \leq j \leq n-1$ appears twice, in $\nu^{(j)}_{i_{j-1}i_j}$ and $\nu^{(j+1)}_{i_ji_{j+1}}$. Relaxing the maximization to allow for independent maximization of each precision, we end up with
\begin{align}
	R \left[ \tau^{(n)} \right] \leq \max_{i_0,i_1^\prime,i_1,i_2^\prime,i_2,...,i_{n-1}^\prime,i_{n-1},i_n} \left( R \left[ \nu^{(1)}_{i_0i_1^\prime} \right] + R \left[ \nu^{(2)}_{i_1i_2^\prime} \right] + ... + R \left[ \nu^{(n-1)}_{i_{n-2}i_{n-1}^\prime} \right] + R \left[ \nu^{(n)}_{i_{n-1}i_n} \right] \right).
\end{align}

In other words, if we label by $R^{max}_j$ the best precision from among the sub-event delay functions of the $j^{th}$ event,
\begin{align}\label{eq:accuracymaximization}
	R^{max}_j &= \max_{i_{j-1},i_j} R \left[ \nu^{(j)}_{i_{j-1}i_j} \right],
\end{align}
then the precision of the $n^{th}$ event in the sequence is bounded by
\begin{align}\label{eq:accuracyresetbound}
	R \left[ \tau^{(n)} \right] \leq \sum_{j=1}^n R^{max}_j.
\end{align}

To complete the proof, we construct a canonical independent Markovian sequence from the original sequence, that saturates the above inequality.

Consider that one performs the maximization in Eq. \ref{eq:accuracymaximization} for every event $j \in \{1,2,...,n\}$. Doing so for the $j^{th}$ event will return the optimal indices $i_{j-1}$ and $i_j$, that we label as $l_{j-1}$ and $m_j$ respectively. We also label the optimal canonical sub-event delay function as $\zeta^{(j)}$,
\begin{align}\label{eq:optimalsub-eventdelay}
	\zeta^{(j)} &= \nu^{(j)}_{l_{j-1}m_j}.
\end{align}

In other words, the optimal canonical sub-event delay function for the $j^{th}$ event is obtained by starting in the canonical state $\mathbf{e}^{(j-1)}_{l_{j-1}}$, and associated with the canonical event generator $\tickmatrix^{(j)}_{m_j}$ (Eq. \ref{eq:canonicalgenerators}) that results in the state $\mathbf{e}^{(j)}_{m_j}$ when the event occurs.

Our first step in the construction of the new independent sequence is to take the original Markovian sequence, and change all of the event generators $\tickmatrix^{(j)}$ into the optimal generators $\tickmatrix^{(j)}_{m_j}$, by setting all of the rows except row $m_j$ to zero. The new sequence is now a canonical independent Markovian sequence (Def. \ref{def:independentsequence}).

However, this is not enough, as we note that $m_j \neq l_j$ in general, i.e. the optimal canonical generator for the $j^{th}$ event may not correspond to the optimal initial canonical state for event $j+1$. However, by Lemma \ref{lemma:resetinvariance}, a delay function is left unchanged under operations that leave the column sums of its event generator the same. Since $\tickmatrix^{(j)}_{m_j}$ has a single non-zero row, we shift the row from the $m_j$ position to the $l_j$ position, to obtain a new canonical generator $\tilde{\tickmatrix}^{(j)}$, that is both optimal w.r.t. its own event as well as the initial canonical state for the next event.

By construction, we now have a canonical independent Markovian sequence, whose individual sub-event delay functions (Def. \ref{def:resetsubeventdelay}) are the ones with the optimal precision, i.e. the $\zeta^{(j)}$ from Eq. \ref{eq:optimalsub-eventdelay}. From Lemma \ref{lemma:resetsequenceconvolution}, the delay function of the $n^{th}$ event in the new sequence is now
\begin{align}
	\tilde{\tau}^{(n)} (t) &= \left( \zeta^{(1)} \conv \zeta^{(2)} \conv ... \conv \zeta^{(n)} \right) (t).
\end{align}

Applying Lemma \ref{lemma:delaysequence} to the above delay function, we get that the precision is bounded by
\begin{align}
	R \left( \tilde{\tau}^{(n)} \right) &\leq \sum_{j=1}^n R \left( \zeta^{(j)} \right) = \sum_{j=1}^n R^{max}_j.
\end{align}

From Lemma \ref{lemma:delaysequence}, equality holds only if the mean $\mu$ of each optimal delay function $\zeta^{(j)}$ satisfies
\begin{align}\label{eq:havetosatisfy}
	\frac{ \mu \left( \zeta^{(j)} \right) }{ R \left( \zeta^{(j)} \right)} &= \frac{ \mu \left( \zeta^{(j)} \right) }{ R \left( \zeta^{(j)} \right)} \quad \forall j,k \in \{1,2,...,n\}.
\end{align}

This is not automatically true from our construction so far. However, from Lemma \ref{lemma:resetsequencescale}, we can scale the $j^{th}$ pair of event generators by a positive constant $a_j$ in the sequence to change the sub-event delay function from $\zeta^{(j)}(t)$ to $a \zeta^{(j)}(at)$, leaving every other sub-event delay function unchanged. Furthermore, by Lemma \ref{lemma:delayscale}, such a scaling operation does not affect the precision of the sub-event delay function, but does scale its mean (the new mean is divided by $a_j$).

Thus we can pick a series of positive constants $\{a_j\}$ to scale each pair of generators in the independent sequence we have constructed so that every sub-event delay function has the same ratio of mean to precision, and thus satisfies Eq. \ref{eq:havetosatisfy}.

As a result, we are left with the canonical independent sequence satisfying the statement of the lemma, whose precision saturates Eq. \ref{eq:accuracyresetbound}, which itself is an upper bound to the precision of the original sequence.

\end{proof}

\section{Classical clocks}\label{Sec:classical clocks theorem proof}

In this part of the appendix, we review the behaviour of classical clocks, showing that reset clocks (those with a fixed state after ticking) can reach the highest precision, and that, for a classical clock of dimension $d$, the precision is upper bound by its dimension. Finally, we discuss a simple classical clock that saturates this bound.

In fact, the dynamics of classical clocks, at least those that are self-contained, is a subset of more general \textit{Markovian dynamics}, described by the Kolmogorov equation\cite{kolmo}. The ticks of a finite-dimensional clock may be understood as a sequence of events generated by Markovian dynamics on a finite-dimensional vector space. As such, we have included Appendix \ref{app:markovianpart} that introduces the mathematics of Markovian sequences that we require to prove our main results on clocks. Furthermore, we will draw on some lemmas on the behaviour of the precision of mixtures and sequences of delay functions, which are covered in Appendix \ref{app:delayfunction} (where delay functions and the precision $R$ are discussed).

Appendices \ref{app:delayfunction} and \ref{app:markoviandynamics} derive from standard theory on stochastic process and random variables, and are included here to place clocks within the larger context of the theory of stochastic processes, and for ease of reading and understanding of the proofs of our main results, which rely on multiple lemmas within this theory. For an in depth discussion on stochastic processes, see for example \cite{probability}.

In Appendix \ref{app:classicalclocks}, we introduce classical clocks and they were discussed in the main text, and relate them to Markovian sequences. Reset clocks, those that tick to a fixed state, are discussed and shown to be the same as independent Markovian sequences.

In Appendix \ref{app:resetclockaccuracy}, we first apply Theorem \ref{theorem:markovianaccuracy} to clocks to obtain our first result on clocks, that for every classical clock, there is a reset clock of at least as high precision. Furthermore, in Appendix \ref{sec:justifyR}, by focusing on the precision of reset clocks, we find that the quantity $R$ that we have denoted as the precision is in fact a good quantifier of how long the clock can run before being in error.

Our final result on classical clocks, that their precision $R$ is upper bound by their dimension, is stated and proven in Appendix \ref{app:upperbound}.

\subsection{Classical clocks}\label{app:classicalclocks}

\begin{definition}\label{def:c-clock}

As discussed in the main text (Section \ref{sec:classicalspecialcase}, a \textbf{self-contained stochastic finite-dimensional classical clock}, or for simplicity, a \textbf{classical clock}, is represented by a time-dependent vector $V(t) \in \mathbb{R}^d$, that represents the state of the clock, such that $V(0)$ is a population vector (Def. \ref{def:popvector}), together with a pair of time-independent stochastic generators $\{\notickmatrix, \tickmatrix\}$ (Def. \ref{def:stochasticgenerators}), that generate the dynamics of the clock via the relation
\begin{align}\label{eq:clockstate}
	\frac{d}{dt} V(t) &= \left( \notickmatrix + \tickmatrix \right) V(t),
\end{align}
and where the probability per unit time of a tick being observed, that we label the ``tick density" and denote by $p_{tick}(t)$, is given by
\begin{align}\label{eq:tickdensity}
	p_{tick}(t) &= \onenorm{\tickmatrix V(t)}.
\end{align}

\end{definition}

\begin{remark}

The clock is \textbf{self-contained} because its generators are time-independent, and \textbf{finite-dimensional} as the vector space of clock states is taken to be finite dimensional. Finally, the clock is \textbf{stochastic/classical}, as the states are population vectors evolving under stochastic generators, which is equivalent, in the context of quantum theory, to restricting the initial state to be diagonal in some preferred basis, and restricting the dynamical generators to Lindbladian operators that keep the states diagonal in the same basis, as discussed in Section \ref{sec:classicalspecialcase} of the main text.

In the entirety of this work, we only consider self-contained and finite-dimensional clocks. For simplicity, we continue for the remainder of this appendix by shortening the term to simply \textbf{classical clocks}, with the implicit understanding that they are also self-contained, and finite-dimensional, and that the term classical implies stochastic.

\end{remark}

\begin{remark}

In the context of classical clocks, since $\tickmatrix$ is associated to the generation of ticks, we shall refer to it as the \textbf{tick generator}, and to $\notickmatrix$ as the \textbf{non-tick generator}.

\end{remark}

\subsubsection{Tick-states and tick delay functions}

The state $V(t)$ and tick density $p_{tick}(t)$ in the above description of classical clocks (Def. \ref{def:c-clock}) do not contain any information about how many times the clock has already ticked in the past. In order to distinguish between each tick, one can split the state into a sequence $\{V^{(n)}\}$ of ``tick-states", each corresponding to a fixed number of ticks.

\begin{definition}\label{def:tickstates}

The \textbf{tick-states} of a classical clock (Def. \ref{def:c-clock} are a sequence of states $\{V^{(n)}(t)\}$, $n \in \{0,1,2,...\}$, where the initial conditions and dynamics for each state in the sequence are determined w.r.t. the pair of stochastic generators $\{\notickmatrix,\tickmatrix\}$ of the clock as
\begin{subequations}\label{eq:tickstates}
\begin{align}
	\quad V^{(n)}(0) &= \begin{cases}
			V(0) & \text{for $n=0$}, \\
			0 & \text{for $n>0$}.
		\end{cases}, \label{eq:tickstateinitial}\\
	\text{and} \quad \frac{d}{dt} V^{(n)}(t) &= \begin{cases}
			\notickmatrix V_0(t) & \text{for $n=0$}, \\
			\notickmatrix V^{(n)}(t) + \tickmatrix V^{(n-1)}(t) & \text{for $n>0$}.
		\end{cases} \label{eq:tickstatedynamics}
\end{align}
\end{subequations}

\end{definition}

\begin{remark}

The choice of initial conditions (Eq. \ref{eq:tickstateinitial}) corresponds to the fact that the clock has not ticked yet, and thus the entire state of the clock is associated with $V^{(0)}$, the state corresponding to no ticks.

For $t>0$, in every infinitesimal time interval, the tick generator moves probability from a state in the sequence to the next, while the non-tick generator keeps the probability within the same tick subspace (albeit moving it around, and possibly decreasing it). Thus each $V^{(n)}(t)$ evolves due to only two contributions: firstly, its own evolution, as $\notickmatrix V^{(n)}(t)$, and because of a tick, from $\tickmatrix V^{(n-1)}(t)$.

\end{remark}

In an analogous manner to the state, one can split the tick density (Eq. \ref{eq:tickdensity}) of the clock into the contributions of the first, second, and following ticks.

\begin{definition}\label{def:tickdelay}
	
We define the \textbf{tick delay function} of the $n^{th}$ tick to be
\begin{align}\label{eq:tickdelayfunction}
	\tau^{(n)}(t) &= \onenorm{\tickmatrix V^{(n-1)}(t)}.
\end{align}
	
\end{definition}

\begin{remark}

The expression for the $n^{th}$ tick delay function reflects the fact that the $n^{th}$ tick can only occur after $n-1$ ticks, and the corresponding state of the clock is $V^{(n-1)}(t)$.

\end{remark}

\begin{remark}

By construction, the tick-states (Def. \ref{def:tickstates} and the tick delay functions (Def. \ref{def:tickdelay}) of a classical clock form a Markovian sequence of events (Def. \ref{def:markoviansequence}). In the case of clocks however, the dynamics of every event (tick) is identical to the previous one, featuring the same pair of generators on the same vector space.

\end{remark}

\subsubsection{Reset clocks}

In our discussion on Markovian sequences of events, it was useful to discuss independent Markovian sequences (Def. \ref{def:independentsequence}), those that had fixed states following the occurrence of each event. The equivalent for clocks is a ``reset clock", which we now characterize.

\begin{definition}\label{def:resetclock}
	
A \textbf{classical reset clock}, (in this appendix simply a \textbf{reset clock}), is a classical clock (Def. \ref{def:c-clock}) for which the event of ticking always causes the clock to return to a fixed state which is also the initial state, i.e.
\begin{align}\label{eq:resetdefinition}
	\tickmatrix V \propto V(0) \quad \forall V \in \mathbb{R}^d.
\end{align}
Equivalently, a reset clock is one for which the tick generator $\tickmatrix$ is a rank-1 operator, all of whose columns are proportional to the initial state $V(0)$.

\end{definition}

\begin{remark}

By definition, the tick-states and tick delay functions of a reset clock form an independent Markovian sequence of events, Def. \ref{def:independentsequence}. Since the events in the case of clocks are ticks, and the dynamics of each tick is generated identically to all the others, the ticks of a reset clock correspond to an \textit{i.i.d.} sequence, i.e. a sequence of independent and identically distributed events \cite{probability}.

\end{remark}

\subsection{The precision of reset clocks}\label{app:resetclockaccuracy}

\subsubsection{The precision $R$ quantifies the average run-time of reset clocks}\label{sec:justifyR}

If we apply Lemma \ref{lemma:resetsequenceconvolution} to the case of reset clocks (Def. \ref{def:resetclock}), we find that the delay function of the $n^{th}$ tick of the clock is given by the $n$-fold convolution of the delay function of the first tick,
\begin{align}
	\tau^{(n)}(t) &= \left( \tau_1 \conv \tau_1 \conv... \conv \tau_1 \right) (t).
\end{align}

If we denote the zeroth moment, first moment, second moment, variance and precision (Defs. \ref{def:delayfunctionmoments}, \ref{def:accuracydelayfunction}) of a single tick of a reset clock (i.e. the delay function of a single tick) by $Q,\mu,\chi,\sigma,R$ respectively, and those of the $n^{th}$ tick by $Q^{(n)},\mu^{(n)},\chi^{(n)},\sigma^{(n)},R^{(n)}$, we find from the application of Eq. \ref{eq:sequencemoments} for the moments of a convolution of delay functions, and Eq. \ref{eq:accuracygamma} for the precision of a delay function, that
\begin{align}\label{eq:resetmomentscaling}
	\left\{ \begin{array}{c}
		Q^{(n)} \\
		\mu^{(n)} \\
		\chi^{(n)} \\
		\sigma^{(n)} \\
		R^{(n)}
	\end{array} \right\} &= 
	\left\{ \begin{array}{c}
		Q^n \\
		n \mu \\
		n \chi + n(n-1) \mu^2 \\
		\sqrt{n} \cdot \sigma \\
		n R
	\end{array} \right\}.
\end{align}

Consider that one asks the question \textit{``How many ticks can the clock produce until the uncertainty in the time of occurrence of the next tick has grown to be equal to the time interval between ticks?"}, which is a universally accepted mark of a clock's precision.

For reset clocks, the average time interval between ticks is $\mu^{(n)} - \mu^{(n-1)} = \mu$,  and is independent of which tick we are at. On the other hand, the uncertainty in the time of occurrence of the $n^{th}$ tick, $\sigma^{(n)} = \sqrt{n} \sigma$, grows with the number of ticks. If we denote the average number of ticks before the uncertainty equals the interval between ticks as $N$,
\begin{align}
	\sigma^{(n)} &= \mu^{(n)} - \mu^{(n-1)} \\
	\therefore \; \sqrt{N} \sigma &= \mu \\
	\therefore \; N &= \frac{\mu^2}{\sigma^2} = R.
\end{align}

\subsubsection{Result: The precision of all classical clocks is bound by that of reset classical clocks}

\begin{theorem}\label{theorem:resetclockaccuracy}

For every classical clock (Def. \ref{def:c-clock}), there exists a reset clock (Def. \ref{def:resetclock}) of the same dimension such that the precisions of the delay functions of every tick of the original clock are upper bounded by the corresponding precisions of the reset clock. Furthermore, such a reset clock can be obtained from the original clock by
\begin{itemize}
	\item picking a single well chosen canonical state to be the initial state,
	\item setting all but one of the rows of the tick generator $\tickmatrix$ to be zero, and shifting the single non-zero row to the location corresponding to the initial canonical state.
\end{itemize}

\end{theorem}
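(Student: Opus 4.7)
The plan is to reduce this theorem directly to Theorem~\ref{theorem:markovianaccuracy}. First I would observe that a classical clock $(V(0),\notickmatrix,\tickmatrix)$ canonically induces a Markovian sequence of events in the sense of Def.~\ref{def:markoviansequence}: the sequence of tick-states $\{V^{(n)}(t)\}$ (Def.~\ref{def:tickstates}) together with the constant choice of pairs $\{\notickmatrix^{(n)},\tickmatrix^{(n)}\} = \{\notickmatrix,\tickmatrix\}$ for every $n$. The tick delay functions $\tau^{(n)}$ of the clock (Def.~\ref{def:tickdelay}) coincide with the event delay functions of this sequence. Theorem~\ref{theorem:markovianaccuracy} therefore produces a canonical independent Markovian sequence of the same per-level dimension whose event accuracies upper bound those of the clock.

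The crucial second step is to verify that this induced independent sequence can in fact be realised by a time-homogeneous reset clock, i.e., a single pair of generators applied at every step, rather than a genuinely $n$-dependent sequence. Here I would exploit time-homogeneity: the canonical sub-event delay functions $\nu^{(j)}_{i_{j-1}i_j}$ from Eq.~\eqref{eq:canonicalsub-eventdelay} are in fact independent of $j$ since $\notickmatrix^{(j-1)}$ and $\tickmatrix^{(j-1)}$ do not depend on $j$. Hence the optimisation in Eq.~\eqref{eq:accuracymaximization} returns a pair of indices $(l^*,m^*)$ that is the same at every event; consequently the optimal canonical sub-event delay function $\zeta^{(j)}$ is the \emph{same} function $\zeta$ for every $j$, and the maximum accuracy $R^{\max}_j$ is a common value $R^{\max}$.

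Given this uniformity, the construction in the proof of Theorem~\ref{theorem:markovianaccuracy} simplifies: choose the initial state $V(0) = \mathbf{e}_{l^*}$, keep only row $m^*$ of $\tickmatrix$, and then shift that row to position $l^*$ (which Lemma~\ref{lemma:resetinvariance} guarantees leaves the event delay function unchanged while making the clock return to $\mathbf{e}_{l^*}$ after each tick). The resulting operator $\tilde{\tickmatrix}$ is rank-one with range $\operatorname{span}(\mathbf{e}_{l^*})$, so $(\mathbf{e}_{l^*},\notickmatrix,\tilde{\tickmatrix})$ satisfies Def.~\ref{def:resetclock}. Because the same pair of generators is used at every step, the scaling factors $a_j$ that Theorem~\ref{theorem:markovianaccuracy} introduces to equalise mean-to-accuracy ratios are automatically all equal (indeed equal to $1$), so no further adjustment is needed. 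Since every sub-event delay function equals $\zeta$, Lemma~\ref{lemma:resetsequenceconvolution} gives $\tilde\tau^{(n)} = \zeta^{\conv n}$ and Remark~\ref{remark:delaysequence} yields $R[\tilde\tau^{(n)}] = n\,R[\zeta] = n\,R^{\max}$, which by Eq.~\eqref{eq:accuracyresetbound} dominates $R[\tau^{(n)}]$ for every $n$.

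The main obstacle I anticipate is purely bookkeeping rather than mathematical: making sure that the row-shifting step of Theorem~\ref{theorem:markovianaccuracy}, which in general only leaves the current event delay function invariant, is consistent with having \emph{one} time-homogeneous tick generator that also realises the optimal initial state for \emph{every} subsequent tick. This is precisely where time-homogeneity of the classical clock pays off: because $l^*$ does not depend on the event index, shifting the non-zero row to position $l^*$ once produces a generator that simultaneously optimises every event, making the sequence genuinely i.i.d.\ and hence realisable as a single reset clock.
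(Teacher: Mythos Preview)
Your proposal is correct and follows exactly the paper's approach: identify the tick-states of the classical clock as a Markovian sequence and invoke Theorem~\ref{theorem:markovianaccuracy}. In fact, you are more careful than the paper's one-line proof, which simply asserts that the theorem is a ``direct application'' of Theorem~\ref{theorem:markovianaccuracy} without spelling out why the resulting canonical independent sequence collapses to a single time-homogeneous reset clock; your observation that the sub-event delay functions $\nu^{(j)}_{i_{j-1}i_j}$ are $j$-independent (and hence that the optimising indices and scaling factors coincide across all ticks) is precisely the missing bookkeeping.
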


\begin{proof}

The theorem follows from the direct application of Theorem \ref{theorem:markovianaccuracy} to the case of classical clocks, when one identifies the tick-states (Def.  \ref{def:tickstates}) and tick delay functions (Def. \ref{def:tickdelay}) as a Markovian sequence of events (Def. \ref{def:markoviansequence}), while those of a reset clock (Def. \ref{def:resetclock}) are an independent Markovian sequence (Def. \ref{def:independentsequence}).

\end{proof}

\subsection{An upper bound on the precision of classical clocks, proof of Theorem \ref{thm_classicalbound}}\label{app:upperbound}

\begin{theorem}[Theorem \ref{thm_classicalbound} from the main text]\label{theorem:classicalclockaccuracy}

For a classical clock of dimension $d$ (Def. \ref{def:c-clock}), where $d \in \mathbb{N}^+$, the precision of its $n^{th}$ tick, i.e. the precision $R$ (Def. \ref{def:accuracydelayfunction}) of the delay function of the $n^{th}$ tick (Def. \ref{def:tickdelay}), is upper bound by
\begin{align}
	R\left[ \tau^{(n)} \right] \leq n d,
\end{align}

In particular, for reset clocks of dimension $d$, which upper bound the precisions of the ticks of arbitrary $d$-dimensional clocks (see Theorem \ref{theorem:resetclockaccuracy}), the precision of every single tick w.r.t. the previous one is upper bound by
\begin{align}
	R \left[ \tau \right] \leq d,
\end{align}
which, as discussed in Sec. \ref{sec:justifyR}, is a quantifier for the number of ticks outputted before the clock is expected to fail.

\end{theorem}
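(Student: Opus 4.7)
The plan is to prove the theorem via a two-stage reduction, after which the bound collapses to a single inequality for phase-type distributions. First, Theorem~\ref{theorem:resetclockaccuracy} lets me replace an arbitrary $d$-dimensional classical clock by a $d$-dimensional reset clock whose per-tick accuracies are at least as large, so it suffices to prove $R[\tau^{(n)}]\le n d$ for reset clocks. Second, for a reset clock Lemma~\ref{lemma:resetsequenceconvolution} identifies $\tau^{(n)}$ with the $n$-fold self-convolution of $\tau^{(1)}$, and the moment formulae~\eqref{eq:resetmomentscaling} (equivalently, the equality case of Lemma~\ref{lemma:delaysequence} with identical factors) give $R[\tau^{(n)}]=n R[\tau^{(1)}]$ exactly. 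So the entire theorem reduces to the single bound $R[\tau^{(1)}]\le d$ on the first-tick accuracy of any $d$-dimensional reset clock.

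To attack that core bound, I would use the explicit form of $\tau^{(1)}$ produced by the Theorem~\ref{theorem:resetclockaccuracy} reduction: the initial state is a canonical state $\mathbf{e}_{i_0}$ and $\tickmatrix$ has a single non-zero row $w^T\ge 0$, so $\tau^{(1)}(t)=w^T e^{\notickmatrix t}\mathbf{e}_{i_0}$. The resolvent identity $\int_0^\infty t^k e^{\notickmatrix t}\,dt = k!\,(-\notickmatrix)^{-(k+1)}$ then expresses the moments as $\mu = \mathbf{1}^T A \mathbf{e}_{i_0}$ and $\chi = 2\,\mathbf{1}^T A^2\mathbf{e}_{i_0}$, where $A:=(-\notickmatrix)^{-1}$ is the entry-wise non-negative fundamental matrix of the associated absorbing Markov chain, and $R[\tau^{(1)}]\le d$ becomes the algebraic inequality $2d\,\mathbf{1}^T A^2\mathbf{e}_{i_0} \ge (d+1)\,(\mathbf{1}^T A\mathbf{e}_{i_0})^2$. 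The probabilistic route to this inequality is cleaner: interpret $T$ as the absorption time of the CTMC starting at $\mathbf{e}_{i_0}$, decompose it as $T=\sum_{i=1}^d U_i$ where $U_i$ is the \emph{total} sojourn at canonical state $i$ before absorption, and use the standard fact that a geometric sum of i.i.d.\ exponentials is again exponential to conclude $U_i\sim\mathrm{Exp}(\cdot)$ and hence $\operatorname{Var}(U_i)=E[U_i]^2$. If the $U_i$'s are uncorrelated or positively correlated, Cauchy--Schwarz then gives $\operatorname{Var}(T)\ge \sum_i\operatorname{Var}(U_i)=\sum_i E[U_i]^2 \ge \tfrac{1}{d}\bigl(\sum_i E[U_i]\bigr)^2 = E[T]^2/d$, equivalent to $R[\tau^{(1)}]\le d$; equality forces all $E[U_i]$ equal and the $U_i$ independent, matching the Ladder Clock of Appendix~\ref{app:optimalclassical}.

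The main obstacle is the last step: controlling the cross-covariances of the $U_i$'s. A naive path-by-path decomposition using only Lemmas~\ref{lemma:delaysequence} and~\ref{lemma:delaymixture} fails, because paths in the embedded jump chain can have arbitrary length (the chain cycles) and Lemma~\ref{lemma:delaymixture} bounds a mixture only by the \emph{maximum} accuracy of its components, not a weighted average. Aggregating all sojourns at each canonical state into the single exponential $U_i$ collapses the unbounded path decomposition into a fixed sum of at most $d$ terms, but one must then verify $\operatorname{Cov}(U_i,U_j)\ge 0$; intuitively a realisation with many visits to one transient state tends to have many visits to the states communicating with it, but a rigorous proof---perhaps via an explicit covariance formula in terms of $A$, or by a coupling argument on the embedded chain---is the crux. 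A minor subsidiary wrinkle is that Theorem~\ref{theorem:resetclockaccuracy}'s construction may yield a non-conservative reset clock (strict inequality in Eq.~\eqref{eq_sumticknotick}); this is harmless because $R$ depends only on the normalised delay function and is invariant under an overall loss of probability.
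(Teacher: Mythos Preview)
Your two-stage reduction is correct and matches the paper: Theorem~\ref{theorem:resetclockaccuracy} reduces to reset clocks, and Eq.~\eqref{eq:resetmomentscaling} gives $R[\tau^{(n)}]=nR[\tau^{(1)}]$, so everything hinges on $R[\tau^{(1)}]\le d$.

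The sojourn-time route, however, has a genuine gap that cannot be filled as stated. First a minor point: $U_i$ is exponential only conditional on the state $i$ being visited at all; unconditionally $U_i$ has an atom at $0$, so $\operatorname{Var}(U_i)\ge E[U_i]^2$ rather than equality. That actually works in your favour. The real problem is that the crucial hypothesis $\operatorname{Cov}(U_i,U_j)\ge 0$ is \emph{false}. Take $d=3$, start at $\mathbf{e}_1$, with $\mathcal{N}_{21}=a$, $\mathcal{N}_{31}=b$, $\mathcal{N}_{11}=-(a+b)$, $\mathcal{N}_{22}=\mathcal{N}_{33}=-1$, and let the tick generator absorb from states $2$ and $3$. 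Then the chain branches from $1$ to either $2$ or $3$ and ticks; exactly one of $U_2,U_3$ is zero, so $E[U_2U_3]=0$ while $E[U_2]E[U_3]=\tfrac{ab}{(a+b)^2}>0$, giving $\operatorname{Cov}(U_2,U_3)<0$. Your intuition (``many visits to one state means many visits to communicating states'') applies to chains that cycle, but fails whenever the embedded chain branches. Since the negative covariances can be order one, the chain of inequalities $\operatorname{Var}(T)\ge\sum_i\operatorname{Var}(U_i)\ge\sum_iE[U_i]^2$ breaks at the first step; in the example above $\operatorname{Var}(T)=s^2+1<s^2+2-p^2-(1-p)^2=\sum_i\operatorname{Var}(U_i)$ for $p=a/(a+b)\in(0,1)$.

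The paper circumvents this by an entirely different argument: induction on $d$. One splits $\mathbb{R}^d$ into the one-dimensional span $S_0$ of the initial canonical state and its $(d-1)$-dimensional complement $S_1$, and tracks the alternating transitions $S_0\to S_1\to S_0\to\cdots$ as an independent Markovian sequence. The delay for an $S_0\to S_1$ step is an honest exponential ($R=1$), while the delays for an $S_1\to S_0$ step and for ticking from within $S_1$ are delay functions of a $(d-1)$-dimensional clock (Lemma~\ref{lemma:pathgeneratindelay2}), so by the inductive hypothesis have $R\le d-1$. Summing the resulting geometric series for the moments explicitly (Section~\ref{proof:delayexplicit}) and maximising over the remaining free parameters yields $R[\tau^{(1)}]\le d$. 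This induction neatly sidesteps the covariance issue because at each step only \emph{one} exponential is peeled off; there is no attempt to control all $d$ sojourn times simultaneously.
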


\textbf{Proof.} The rest of this section is dedicated to the proof of Theorem \ref{theorem:classicalclockaccuracy}. We prove the theorem by fixing the dimension $d$ and optimizing the precision over classical clocks of that dimension. From Theorem \ref{theorem:resetclockaccuracy}, we know that the precisions of the ticks of any clock are upper bound by that of at least one reset clock with a canonical initial (and reset) state, and thus we may restrict our optimization to the case of reset clocks of dimension $d$ whose initial and reset state is the same canonical state.

Furthermore, as discussed in Sec. \ref{sec:justifyR}, the precision of the $n^{th}$ tick of a reset clock is simply $n$ times the precision of the first tick, and the delay functions of every tick w.r.t. the time of occurrence of the previous tick are identical to each other, and equal to the delay function of the first tick.

Thus the quantity we are left to optimize is simply the precision of the first tick, and the restricted set we optimize over is that of reset clocks with a canonical initial state, and a canonical tick generator chosen so that the reset state is the initial state.

More precisely, we work in a $d$-dimensional real vector space, spanned by the canonical basis $\mathbf{e}_i$, where $i \in \{0,1,...,d-1\}$. Since the initial and reset state must be a canonical state, we label this as $\mathbf{e}_0$ without loss of generality. Thus the tick generator $\tickmatrix$ has only one non-zero row, i.e. its first row, that corresponds to $\mathbf{e}_0$. On the other hand, the non-tick generator $\notickmatrix$ is arbitrary. The state and corresponding delay function of a single tick of the clock are thus
\begin{subequations}\label{eq:singletickstatedelay}
\begin{align}
	V(t) &= e^{\notickmatrix t} \mathbf{e}_0, \\
	\tau(t) &= \onenorm{ \tickmatrix e^{\notickmatrix t} \mathbf{e}_0 }.
\end{align}
\end{subequations}

To prove that the precision of the above delay function is bounded by the dimension $d$ for all $d \in \mathbb{N}^+$, we employ the method of induction, starting with the case $d=1$. In this case, both $\notickmatrix$ and $\tickmatrix$ are real  numbers, that we label $\notickmatrix = -p$ and $\tickmatrix = g$ respectively, where $p\geq g > 0$ in order to ensure that $\{\notickmatrix,\tickmatrix\}$ are a pair of stochastic generators (Def. \ref{def:stochasticgenerators}). The delay function is thus
\begin{align}
	\tau(t) &= g e^{-pt}.
\end{align}

One can calculate all of the moments (Def. \ref{def:delayfunctionmoments}) explicitly,
\begin{align}
	Q &= \braket{t^0} = \frac{g}{p} \\
	\mu &= \frac{\braket{t^1}}{Q} = \frac{1}{p} \\
	\chi &= \frac{\braket{t^2}}{Q} = \frac{2}{p^2},
\end{align}
from which the $\gamma$-value (Eq. \ref{eq:accuracygamma}) is $\gamma = \chi/\mu^2 = 2$, and the precision is
\begin{align}
	R \left[ \tau \right] &= \frac{1}{\gamma - 1} = 1,
\end{align}
which satisfies the statement of the theorem.

We continue by assuming the theorem applies for dimensions $\{1,2,...,d-1\}$, and prove that it applies to a $d$-dimensional clock.

The rest of the proof is structured as follows. In Sec. \ref{proof:splitdimension}, we divide the $d$ dimensional vector space of the clock into two subspaces, a one-dimensional space corresponding to the initial (canonical) state, and the complementary $d-1$ dimensional space, and define sequences of states and delay functions that correspond to this division. In Sec. \ref{proof:sequenceproperties}, we prove that we can recover the dynamics of the state $V(t)$ and the delay function $\tau(t)$ of a single tick from these defined sequences. In Sec. \ref{proof:delayexplicit}, we calculate the moments of the delay function $\tau(t)$ explicitly using the defined sequences, and upper bound the precision of $\tau(t)$.

\subsubsection{Dividing the vector space of the clock into a sum of one-dimensional and $d-1$ dimensional spaces}\label{proof:splitdimension}

In order to use the result for $d-1$ dimensional clocks, we divide the $d$-dimensional space of states into the direct sum of two subspaces, firstly, the one-dimensional space corresponding to the initial state $\mathbf{e}_0$, that we label $S_0$, and its complement, that is spanned by the rest of the canonical basis, that we label $S_1$. Thus $S_0$ is one-dimensional, while $S_1$ is $d-1$ dimensional. We denote the projectors onto these spaces as $\Pi_0$ and $\Pi_1$. In matrix form w.r.t. the canonical basis, these projectors are
\begin{align}
	\Pi_0 &= 	\left( \begin{array}{c|ccc}
	1 & & \mathbf{0} &\\
	\hline & & &\\
	\mathbf{0} & & \mathbf{0} & \\
	& & &
	\end{array} \right), & 
	\Pi_1 &= 	\left( \begin{array}{c|ccc}
	0 & & \mathbf{0} &\\
	\hline & & &\\
	\mathbf{0} & & \mathds{1}_{d-1} & \\
	& & &
	\end{array} \right),
\end{align}
where $\mathds{1}_{d-1}$ represents the identity operator on a $d-1$ dimensional real vector space. Note that $\Pi_0 + \Pi_1 = \mathds{1}_d$, corresponding to $S_0 \oplus S_1 = \mathbb{R}^d$.

One may split the non-tick generator $\notickmatrix$ into the corresponding sum of four matrices, w.r.t. the subspaces $S_0$ and $S_1$,
\begin{subequations}\label{eq:spacegenerators}
\begin{align}
	\notickmatrix &= \left( \Pi_0 + \Pi_1 \right) \notickmatrix \left( \Pi_0 + \Pi_1 \right) \\
	&= \notickmatrix_{00} + \notickmatrix_{01} + \notickmatrix_{10} + \notickmatrix_{11}, \\
	\text{where} \quad \notickmatrix_{xy} &= \Pi_x \; \notickmatrix \; \Pi_y. 
\end{align}
\end{subequations}

Visually, this corresponds to the expressing $\notickmatrix$ w.r.t. the canonical basis as
\begin{align}\label{eq:generatorvisual}
\notickmatrix &\equiv \quad \left( \begin{array}{c|ccc}
\star & & \mathbf{0} &\\
\hline & & &\\
\mathbf{0} & & \mathbf{0} & \\
& & &
\end{array} \right)
\quad + \quad \left( \begin{array}{c|ccc}
0 & & \star &\\
\hline & & &\\
\mathbf{0} & & \mathbf{0} & \\
& & &
\end{array} \right)
\quad + \quad \left( \begin{array}{c|ccc}
0 & & \mathbf{0} &\\
\hline & & &\\
\star & & \mathbf{0} & \\
& & &
\end{array} \right)
\quad + \quad \left( \begin{array}{c|ccc}
0 & & \mathbf{0} &\\
\hline & & &\\
\mathbf{0} & & \star & \\
& & &
\end{array} \right),
\end{align}
where $\star$ denote the original elements, and the matrices above are respectively $\notickmatrix_{00},\notickmatrix_{01},\notickmatrix_{10}\notickmatrix_{11}$.

One may understand each of the above generators $\notickmatrix_{xy}$ as the part of the non-tick generator that is responsible for moving the state \emph{from} the subspace $S_y$ \emph{into} the subspace $S_x$. More precisely, given the state of the clock at some time is $V(t)$, the part of the state that is in the subspace $S_y$ is given by the projection $\Pi_y V(t)$. On this part of the state, the infinitesimal change generated by the non-tick generator is $\notickmatrix \Pi_y V(t)$. Finally, the part of this infinitesimal state-change that is in the space $S_x$ is the projection $\Pi_x \notickmatrix \Pi_y V(t)$.

We follow the same procedure for the tick generator $\tickmatrix$. However, $\tickmatrix$ only takes states to the reset state $\mathbf{e}_0$, i.e. for all $V \in \mathbb{R}^d$,
\begin{align}
	\tickmatrix V \propto \mathbf{e}_0 \quad \Longleftrightarrow \quad \tickmatrix V \in S_0 \quad \Longleftrightarrow \quad \Pi_1 \tickmatrix = 0,
\end{align}
and thus
\begin{align}\label{eq:spacetickers}
	\tickmatrix &= \tickmatrix_{00} + \tickmatrix_{01}, \\
	\text{where} \quad \tickmatrix_{xy} &= \Pi_x \; \tickmatrix \; \Pi_y, \\
	\text{and} \quad \tickmatrix_{10} &= \tickmatrix_{11} = 0.
\end{align}

At this point, we have split both of the generators into components that describe the movement of the state within and between the subspaces $S_0$ and $S_1$. We proceed to do the same for the state of the clock, and construct an independent Markovian sequence of states and corresponding delay functions that does this. Each state in the sequence must be distinguished by two indices, first, an $x \in \{0,1\}$ to denote that the state belongs to $S_x$, and $n \in \mathbb{N}^0 = \{0,1,2,...\}$ to mean that the state corresponds to population having moved from the initial state $\mathbf{e}_0 \in S_0$ to the space $S_1$ and back $n$ times.

\begin{definition}\label{def:pathstates}

We define the set of \textbf{path-specific clock states} $v_{n,x}(t)$, where $n \in \{0,1,2,...\}$ and $x \in \{0,1\}$, by the initial conditions
\begin{align}\label{eq:spacestateinitial}
	v_{n,x} (0) &= \begin{cases}
		\mathbf{e}_0 & \text{if $n=x=0$}, \\
		\mathbf{0} &\text{otherwise},
	\end{cases},
\end{align}
and the dynamics
\begin{align}\label{eq:spacestatedynamics}
	\frac{d}{dt} v_{n,x}(t) &= \begin{cases}
		\notickmatrix_{00} v_{0,0}(t) &\text{if $n=x=0$}, \\
		\notickmatrix_{xx} v_{n,x}(t) + \notickmatrix_{x\bar{x}} v_{n-\bar{x},\bar{x}}(t), &\text{otherwise}.
		\end{cases}
\end{align}
where $\bar{x} = x \oplus 1$ is the complement of $x$.

\end{definition}

\begin{remark}

Roughly speaking, $v_{n,x}(t)$ is the part of the state of the clock (of the first tick) corresponding to being in the subspace $S_x$ and having gone from $S_0$ to $S_1$ and back $n$ times.

\end{remark}

\begin{definition}\label{def:pathdensities}
	
We define the set of \textbf{path-specific delay functions} 
$\xi_{n,x}(t)$, where $n \in \{0,1,2,...\}$ and $x \in \{0,1\}$, barring $n=x=0$, by
\begin{align}
	\xi_{n,x}(t) &= \onenorm{\notickmatrix_{x\bar{x}} v_{n-\bar{x},\bar{x}}(t)},
\end{align}
	
\end{definition}

\begin{remark}

$\xi_{n,x}(t)$ refers to the delay function of arriving at the $\{n,x\}$ state $v_{n,x}$ at time $t$.

\end{remark}

In anticipation of dividing the delay function of a single tick of the clock using the above sequences, we define the following ``component delay functions".

\begin{definition}\label{def:componentdelay}
	
We define the set of \textbf{component delay functions} $\tau_{n,x}(t)$, where $n \in \{0,1,2,...\}$ and $x \in \{0,1\}$, barring $n=x=0$, by
\begin{align}
	\tau_{n,x}(t) &= \onenorm{ \tickmatrix_{0\bar{x}} v_{n-\bar{x},\bar{x}}(t) }.
\end{align}

\end{definition}

\begin{remark}

The term ``delay function" (Def. \ref{def:delayfunction}) for the above is appropriate. We will shortly show that the $v_{n,x}(t)$ are an independent Markovian sequence of states, and the event generator $\tickmatrix_{x\bar{x}}$, taken together with $\notickmatrix_{\bar{x}\bar{x}}$, which is the relevant non-event generator for the state $v_{n-\bar{x},\bar{x}}(t)$, form a pair of stochastic generators $\{\notickmatrix_{\bar{x}\bar{x}},\tickmatrix_{x\bar{x}}\}$ (Def. \ref{def:stochasticgenerators}). From Corollary \ref{cor:validdelayfunction}, we conclude that the $\tau_{n,x}(t)$ are delay functions.

\end{remark}

Finally, as we will eventually prove, the sequences above are found to involve repetitive convolutions of a small set of delay functions, which we proceed to define.

\begin{definition}\label{def:spaceresetstates}
	
We define the pair of \textbf{path-generating reset states} $\mathbf{u}_x$, where $x \in \{0,1\}$ in the following manner. $\mathbf{u}_0 = \mathbf{e}_0$, and $\mathbf{u}_1$ is constructed by taking the single non-zero column of $\notickmatrix_{10}$ (Eq. \ref{eq:generatorvisual}), and normalising it. 

\end{definition}

\begin{definition}\label{def:pathgeneratingdelay}

We define the pair of \textbf{path-generating delay functions} $\Theta_x$, where $x \in \{0,1\}$ by
\begin{align}
	\Theta_x &= \onenorm{ \notickmatrix_{x\bar{x}} e^{\notickmatrix_{\bar{x}\bar{x}} t} \mathbf{u}_{\bar{x}} },
\end{align}
where $\mathbf{u}_x$, are the path-generating reset states (Def. \ref{def:spaceresetstates}).

\end{definition}

\begin{definition}\label{def:tickgeneratingdelay}
	
We define the pair of \textbf{tick-generating delay functions} $\Gamma_x$, where $x \in \{0,1\}$ by
\begin{align}
	\Gamma_x &= \onenorm{ \tickmatrix_{0\bar{x}} e^{\notickmatrix_{\bar{x}\bar{x}} t} \mathbf{u}_{\bar{x}} },
\end{align}
where $\mathbf{u}_x$, are the path-generating reset states (Def. \ref{def:spaceresetstates}).
	
\end{definition}

\begin{remark}

Put very simply, $\Theta_x$ is the delay function of the event of moving from $S_{\bar{x}}$ to $S_x$, while $\Gamma_x$ is the delay function of ticking from the subspace $S_{\bar{x}}$.

\end{remark}

\begin{remark}

The above two definitions are justified in using the terminology ``delay function" as both the pair $\{\notickmatrix_{\bar{x}\bar{x}},\notickmatrix_{x\bar{x}}\}$ as well as the pair $\{\notickmatrix_{\bar{x}\bar{x}},\tickmatrix_{x\bar{x}}\}$ can be proven to be pairs of stochastic generators, and from Corollary \ref{cor:validdelayfunction}, it follows that the $\Theta_x$ and $\Gamma_x$ defined above satisfy the requirements of a delay function (Def. \ref{def:delayfunction}).

\end{remark}

\subsubsection{Proving the necessary properties of the path-specific states and delay functions.}\label{proof:sequenceproperties}

\begin{lemma}\label{lemma:pathmarkoviansequence}

The set of path-specific clock states (Def. \ref{def:pathstates}) and delay functions (Def. \ref{def:pathdensities}) form an independent Markovian sequence of events (Defs. \ref{def:markoviansequence}, \ref{def:independentsequence}) w.r.t. the ordering where the event $\{n,x\}$ is followed by $\{n+x,\bar{x}\}$, corresponding to $\{ \{0,0\}, \{0,1\}, \{1,0\}, \{1,1\}, \{2,0\}, \{2,1\}, \{3,0\}, ... \}$; and via the identification of $\{\notickmatrix_{xx},\notickmatrix_{\bar{x}x}\}$ as the pair of event and non-event stochastic generators for the event $\{n,x\}$.

Furthermore, the state $v_{n,x}(t) \in S_x$ for all $t \geq 0$, i.e.
\begin{align}
	\Pi_0 v_{n,0}(t) &= v_{n,0}(t) \quad \text{and} \quad \Pi_1 v_{n,1}(t) = v_{n,1}(t) \quad \text{and} \quad \Pi_0 v_{n,1}(t) = \Pi_1 v_{n,0}(t) = \mathbf{0}.
\end{align}

\end{lemma}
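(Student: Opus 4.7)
The plan is to verify, one defining property at a time, that the family $\{v_{n,x}(t)\}$ together with the generator pair $\{\notickmatrix_{xx}, \notickmatrix_{\bar x x}\}$ attached to the slot $(n,x)$ satisfies Defs.~\ref{def:markoviansequence} and \ref{def:independentsequence} under the proposed ordering $(0,0), (0,1), (1,0), (1,1), \ldots$, with successor $(n,x) \mapsto (n+x, \bar x)$. First I would check that each pair $\{\notickmatrix_{xx}, \notickmatrix_{\bar x x}\}$ is a pair of stochastic generators in the sense of Def.~\ref{def:stochasticgenerators}: the off-diagonal entries of $\notickmatrix_{xx}$ are inherited from $\notickmatrix$ and hence non-negative; its diagonal entries are precisely those of $\notickmatrix$ restricted to $S_x$, and so are non-positive; every entry of $\notickmatrix_{\bar x x} = \Pi_{\bar x}\notickmatrix \Pi_x$ is a strictly off-diagonal entry of $\notickmatrix$ and therefore non-negative; and for any column index $j$ belonging to $S_x$,
\begin{equation*}
  \sum_{i} [\notickmatrix_{xx}]_{ij} + \sum_{i} [\notickmatrix_{\bar x x}]_{ij} = \sum_{i} \notickmatrix_{ij} \le 0,
\end{equation*}
because $\notickmatrix$ has non-positive column sums (recall it is already a stochastic non-event generator by Corollary~\ref{cor:zerotickmatrix}).

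Next I would prove the support claim $v_{n,x}(t) \in S_x$ by induction along the ordering. The base case is $v_{0,0}(t) = e^{\notickmatrix_{00} t}\mathbf{e}_0 \in S_0$, since $\notickmatrix_{00}$ leaves $S_0$ invariant. For the induction step, the differential equation~\eqref{eq:spacestatedynamics} with zero initial condition is solved, by exactly the Duhamel argument used to prove Lemma~\ref{lemma:sequenceconvolution}, as
\begin{equation*}
  v_{n,x}(t) = \int_0^t e^{\notickmatrix_{xx}(t-t')}\, \notickmatrix_{x\bar x}\, v_{n-\bar x, \bar x}(t')\, dt'.
\end{equation*}
By the induction hypothesis $v_{n-\bar x, \bar x}(t') \in S_{\bar x}$, so $\notickmatrix_{x\bar x}\, v_{n-\bar x, \bar x}(t') = \Pi_x \notickmatrix\, v_{n-\bar x, \bar x}(t')$ lies in $S_x$, and the exponential $e^{\notickmatrix_{xx}(t-t')}$ then keeps the integrand inside $S_x$.

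With these two facts established, matching Def.~\ref{def:markoviansequence} is immediate: with $V^{(k)} := v_{n(k),x(k)}$, $\notickmatrix^{(k)} := \notickmatrix_{x(k) x(k)}$, and the event generator out of slot $k$ taken to be $\notickmatrix_{\bar x(k) x(k)}$, the initial datum $V^{(0)}(0) = \mathbf{e}_0$ is a normalised population vector, every later $V^{(k)}(0)$ vanishes, and Eq.~\eqref{eq:spacestatedynamics} reproduces Eq.~\eqref{eq:markovdynamics} verbatim; the support claim is what guarantees that using the full generator $\notickmatrix_{xx}$ is equivalent to using its restriction to $S_x$, so no spurious dynamics are introduced. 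The independence condition of Def.~\ref{def:independentsequence} then amounts to the observation that every event generator $\notickmatrix_{\bar x x}$ has rank at most one: when $x = 0$ its effective domain $S_0 = \mathrm{span}(\mathbf{e}_0)$ is one-dimensional, and when $x = 1$ its effective codomain $S_0$ is one-dimensional. The only real difficulty is notational bookkeeping --- one must keep distinct the event generator $\notickmatrix_{\bar x x}$ that takes $v_{n,x}$ to its successor from the generator $\notickmatrix_{x\bar x}$ that feeds $v_{n,x}$ from its predecessor --- but once the indexing is arranged carefully the verification is purely mechanical.
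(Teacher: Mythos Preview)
Your proposal is correct and follows essentially the same route as the paper: verify that each $\{\notickmatrix_{xx},\notickmatrix_{\bar x x}\}$ is a pair of stochastic generators via the inherited sign/column-sum properties of $\notickmatrix$, observe that $\notickmatrix_{10}$ has a single nonzero column and $\notickmatrix_{01}$ a single nonzero row (hence rank~$\le 1$), and establish $v_{n,x}(t)\in S_x$. The only minor variation is that you prove the support claim by induction along the sequence using the Duhamel formula, whereas the paper argues it directly by writing $v_{n,x}(t)=\int_0^t \big(\Pi_x\notickmatrix\Pi_x v_{n,x}+\Pi_x\notickmatrix\Pi_{\bar x}v_{n-\bar x,\bar x}\big)\,dt'$ and noting that the left $\Pi_x$ already forces the integrand into $S_x$, so no induction is needed.
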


\begin{proof}

Following the definition of a Markovian sequence of events, Def. \ref{def:markoviansequence}, we note that the definition of the initial states in the sequence satisfy the definition by construction. In addition, we require the pairs of event generators for each event in the sequence, $\{\notickmatrix_{xx},\notickmatrix_{\bar{x}x}\}$ to be a pair of stochastic generators, Def. \ref{def:stochasticgenerators}.

First off, to prove that $\notickmatrix_{xx}$ is a non-event generator, note that since $\notickmatrix_{xx} = \Pi_x \notickmatrix \Pi_x$, each element of $\notickmatrix_{xx}$ is equal to either the original value of the element in $\notickmatrix$ or to zero. Thus Eq. \ref{eq:nontickcondition} is still satisfied because $\notickmatrix$ is a non-event generator. On the other hand, $\notickmatrix_{\bar{x}x} = \Pi_{\bar{x}} \notickmatrix \Pi_x$, and thus all of its diagonal elements are zero. Its off-diagonal elements are either equal to those of $\notickmatrix$ which are non-negative, or zero. Thus Eq. \ref{eq:tickcondition} is satisfied, proving that $\notickmatrix_{\bar{x}x}$ is an event generator. Finally, the sum of the generators satisfies (from Eq. \ref{eq:spacegenerators})
\begin{align}
	\notickmatrix_{xx} + \notickmatrix_{\bar{x}x} &= \Pi_{\bar{x}} \notickmatrix \Pi_x + \Pi_x \notickmatrix \Pi_x = \notickmatrix \Pi_x, \quad \text{because $\Pi_{\bar{x}} + \Pi_x = \mathds{1}$}.
\end{align}
Thus the sum of the two generators is the original non-event generator $\notickmatrix$ with some of its columns set to zero (those outside the support of $\Pi_x$). Thus adding the column sums of $\notickmatrix_{xx}$ and $\notickmatrix_{\bar{x}x}$ gives either the corresponding column sum of $\notickmatrix$, or zero. In either case, Eq. \ref{eq:stochasticcondition} is satisfied, completing the conditions that determine that $\{\notickmatrix_{xx}, \notickmatrix_{\bar{x}x}\}$ is a pair of stochastic generators.

To prove that the sequence is an independent one (Def. \ref{def:independentsequence}), one has to show that both of the non-event generators in the sequence, the $\notickmatrix_{\bar{x}x}$, are rank-1. From Eq. \ref{eq:generatorvisual}, one observes that $\notickmatrix_{01}$ has a single non-zero row, while $\notickmatrix_{10}$ has a single non-zero column. Thus Def. \ref{def:independentsequence} is satisfied.

To demonstrate that each $v_{n,x}(t) \in S_x$, we express them as (for all cases except $n=x=0$) the solution to their differential equations (Eq. \ref{eq:spacestatedynamics}),
\begin{align}
v_{n,x}(t) &= v_{n,x}(t=0) + \int_0^t \frac{d}{dt^\prime} v_{n,x}(t^\prime) dt^\prime \\
&= \int_0^t \notickmatrix_{xx} v_{n,x}(t^\prime) + \notickmatrix_{x\bar{x}} v_{n-\bar{x},\bar{x}}(t^\prime) dt^\prime \\
&= \int_0^t \Pi_x \notickmatrix \Pi_x v_{n,x}(t^\prime) + \Pi_x \notickmatrix \Pi_{\bar{x}} v_{n-\bar{x},\bar{x}}(t^\prime) dt^\prime,
\end{align}
using the definition of the $\notickmatrix_{xy}$ (Eq. \ref{eq:spacegenerators}, and since the initial states are all zero vectors except for $n=x=0$. Left-multiplying the above expression by the projects $\Pi_x$ and $\Pi_{\bar{x}}$ respectively recovers the statement of the lemma.

For the special case $n=x=0$, the initial state $v_{0,0}(0)$ is the canonical state $\mathbf{e}_0$ and thus satisfies the lemma. Furthermore, the derivative of the state is proportional only to $\notickmatrix_{00} = \Pi_0 \notickmatrix \Pi_0$. The proof thus follows in an analogous manner.

\end{proof}

\begin{lemma}[Decomposing the path-specific delay functions w.r.t. the path-generating delay functions]\label{lemma:pathdensityconv}

The path-specific delay functions $\xi_{n,x}(t)$ (Def. \ref{def:pathdensities}) are sequential convolutions of the path-generating delay functions $\Theta_x$ (Def. \ref{def:pathgeneratingdelay}),
\begin{align}\label{eq:tempdelayconvolution}
\xi_{n,x}(t) &= \left( \Theta_1 \conv \Theta_0 \conv \Theta_1 \conv \Theta_0 \conv ... \conv \Theta_x \right) (t),
\end{align}
where $\Theta_1$ appears $n+x$ times and $\Theta_0$ appears $n$ times.

\end{lemma}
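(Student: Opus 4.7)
My plan is to derive this lemma as a direct corollary of Lemma~\ref{lemma:resetsequenceconvolution}, applied to the independent Markovian sequence of events that was identified in Lemma~\ref{lemma:pathmarkoviansequence}. First I will verify that the $\xi_{n,x}(t)$, as defined in Def.~\ref{def:pathdensities}, are exactly the event delay functions $\tau^{(k)}$ of that sequence: indeed, the event $\{n,x\}$ is preceded by $\{n-\bar{x},\bar{x}\}$ and has event generator $\notickmatrix_{x\bar{x}}$, so $\tau^{(k)}(t) = \onenorm{\notickmatrix_{x\bar{x}} v_{n-\bar{x},\bar{x}}(t)} = \xi_{n,x}(t)$, matching Eq.~\eqref{eq:sequencedelay}.

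Next I will identify the event reset states (Def.~\ref{def:eventresetstate}) with the path-generating reset states $\mathbf{u}_0, \mathbf{u}_1$ (Def.~\ref{def:spaceresetstates}). Because $\notickmatrix_{01}$ has only its first row non-zero, any output of a transition $S_1 \to S_0$ is proportional to $\mathbf{e}_0 = \mathbf{u}_0$; because $\notickmatrix_{10}$ has only its first column non-zero (Eq.~\eqref{eq:generatorvisual}), any transition $S_0 \to S_1$ produces a vector proportional to $\mathbf{u}_1$. Hence the reset states after each transition in the sequence alternate between $\mathbf{u}_0$ and $\mathbf{u}_1$. With this identification, the sub-event delay function (Def.~\ref{def:resetsubeventdelay}) for a transition $S_x \to S_{\bar{x}}$ is
\[
  \onenorm{\notickmatrix_{\bar{x}x} e^{\notickmatrix_{xx} t} \mathbf{u}_x} = \Theta_{\bar{x}}(t),
\]
by comparison with Def.~\ref{def:pathgeneratingdelay}.

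It then remains to apply Eq.~\eqref{eq:resetsequencedelayconvolution} of Lemma~\ref{lemma:resetsequenceconvolution}, which expresses $\xi_{n,x}$ as the sequential convolution of all sub-event delay functions encountered along the unique path $\{0,0\} \to \{0,1\} \to \{1,0\} \to \{1,1\} \to \cdots \to \{n,x\}$ in the Markovian ordering. Since transitions in this path alternate $S_0 \to S_1 \to S_0 \to \cdots$, the sub-event delay functions alternate $\Theta_1, \Theta_0, \Theta_1, \Theta_0, \ldots$ starting with $\Theta_1$. I will finish by counting: reaching $\{n,x\}$ requires $n$ transitions into $S_0$ (contributing factors of $\Theta_0$) and $n+x$ transitions into $S_1$ (contributing factors of $\Theta_1$), which yields precisely Eq.~\eqref{eq:tempdelayconvolution}.

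The argument is essentially bookkeeping; there is no serious obstacle, since all the analytical content is already packaged in Lemma~\ref{lemma:resetsequenceconvolution}. The only step requiring minor care is the identification of the event reset states with $\mathbf{u}_0$ and $\mathbf{u}_1$, because it rests on the rank-one structure of $\notickmatrix_{01}$ and $\notickmatrix_{10}$ that follows from the canonical form of $\tickmatrix$ (single non-zero row corresponding to the reset canonical state $\mathbf{e}_0$) and the splitting in Eq.~\eqref{eq:generatorvisual}.
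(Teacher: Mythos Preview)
Your proposal is correct and follows essentially the same approach as the paper's proof: invoke Lemma~\ref{lemma:pathmarkoviansequence} to place the $\xi_{n,x}$ in an independent Markovian sequence, apply Lemma~\ref{lemma:resetsequenceconvolution} to write them as convolutions of sub-event delay functions, identify the reset states with $\mathbf{u}_0,\mathbf{u}_1$, and recognise the resulting sub-event delay functions as the $\Theta_x$. The paper carries out exactly this argument, with the same identification of $\mathbf{w}_{n,0}=\mathbf{u}_0$ and $\mathbf{w}_{n,1}=\mathbf{u}_1$ from the rank-one structure of $\notickmatrix_{01}$ and $\notickmatrix_{10}$.
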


\begin{proof}

As Lemma \ref{lemma:pathmarkoviansequence} has proven that $\xi_{n,x}(t)$ are the delay functions of an independent Markovian sequence, we may apply Lemma \ref{lemma:resetsequenceconvolution}, to express them as
\begin{align}
\xi_{n,x}(t) &= \left( \nu_{0,1} \conv \nu_{1,0} \conv \nu_{1,1} \conv \nu_{2,0} \conv ... \conv \nu_{n,x} \right) (t),
\end{align}
where the sub-event delay functions $\nu_{n,x}$ are defined in Def. \ref{def:resetsubeventdelay}, and for the present sequence, take on the form
\begin{align}\label{eq:tempgeneratingdelay}
	\nu_{n,x}(t) &= \onenorm{ \notickmatrix_{x\bar{x}} e^{\notickmatrix_{\bar{x}\bar{x}} t} \mathbf{w}_{n,\bar{x}} },
\end{align}
where $\mathbf{w}_{n,\bar{x}}$ is the reset state (Def. \ref{def:eventresetstate} for the event $\{n,x\}$ in the sequence.

Consider the reset state $\mathbf{w}_{n,0}$. This is defined to be (Def. \ref{def:eventresetstate}) the unique normalised state that is proportional to every column of the corresponding generator in the sequence, which in this case is $\notickmatrix_{01}$ (see Lemma \ref{lemma:pathmarkoviansequence}). From the definition of $\notickmatrix_{xy}$ (see Eq. \ref{eq:generatorvisual}), we conclude that this is simply the canonical state $\mathbf{e}_0$, and thus $\mathbf{w}_{n,0} = \mathbf{u}_0$, the path-generating reset state defined in Def. \ref{def:spaceresetstates}.

In a similar manner, the reset state $\mathbf{w}_{n,1}$ is the unique normalised state proportional to every column of $\notickmatrix_{10}$. This operator has only a single non-zero column, and the corresponding normalised state has already been defined to be the other path-generating reset state, $\mathbf{u}_1$ (Def. \ref{def:spaceresetstates}).

Thus, returning to Eq. \ref{eq:tempgeneratingdelay}, we see that the sub-event delay functions $\nu_{n,x}(t)$ are in fact, indpendent of $n$, and equal to the path-generating delay functions $\Theta_x$ defined in Def. \ref{def:pathgeneratingdelay}. Substituting these back into Eq. \ref{eq:tempdelayconvolution}, we recover the statement of the lemma.

\end{proof}

\begin{lemma}[Explicit form and properties of the path-generating delay function $\Theta_1$ (Def. \ref{def:pathgeneratingdelay}) and tick-generating delay function $\Gamma_1$ (Def. \ref{def:tickgeneratingdelay}).]\label{lemma:pathgeneratingdelay1}

\begin{align}\label{eq:theta1}
	\Theta_1(t) &= A e^{-gt}, \\
	\Gamma_1(t) &= B e^{-gt},
\end{align}
where $A \geq 0$ is the sum of the first (and only non-zero) column of $\notickmatrix_{10}$, $B \geq 0$ is the only (possibly) non-zero element of $\tickmatrix_{00}$, and $g \geq 0$ is the negation of the singular non-zero element in $\notickmatrix_{00}$. (see Eq. \ref{eq:generatorvisual}). If $g=0$, then $\Theta_1(t) = \Gamma_1(t) = 0$ for all $t$. Denoting the moments (Def. \ref{def:delayfunctionmoments}) of $\Theta_1$ by $\{Q_1,\mu_1,\chi_1\}$ and those of $\Gamma_1$ by $\{Q_3,\mu_3,\chi_3\}$, they are (in the case $g > 0$)
\begin{subequations}\label{eq:1dmoments}
	\begin{align}
		Q_1 &= \frac{A}{g} \\
		Q_3 &= \frac{B}{g} \\
		\mu_1 &= \mu_3 =  \frac{1}{g} \\
		\chi_1 &= \chi_3 = \frac{2}{g^2},
	\end{align}
\end{subequations}
and their $\gamma$-values and precisions $R$ (Def. \ref{def:accuracydelayfunction} and Eq. \ref{eq:accuracygamma}) are therefore
\begin{align}
	\gamma_1 &= \gamma_3 = 2, \\
	R \left[ \Theta_1 \right] &= R \left[ \Gamma_1 \right] = 1.
\end{align}

\end{lemma}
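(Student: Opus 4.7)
The plan is to exploit the fact that $\notickmatrix_{00}$, $\mathbf{u}_0$, and $\tickmatrix_{00}$ all live on the one-dimensional subspace $S_0 = \text{span}\{\mathbf{e}_0\}$, so both expressions in Def.~\ref{def:pathgeneratingdelay} and Def.~\ref{def:tickgeneratingdelay} reduce to scalar exponentials. I would begin by observing that $\mathbf{u}_0 = \mathbf{e}_0$ by Def.~\ref{def:spaceresetstates}, and that $\notickmatrix_{00} = \Pi_0 \notickmatrix \Pi_0$ acts on $S_0$ as multiplication by a single scalar, which from the visual form of Eq.~\eqref{eq:generatorvisual} and the sign conditions Eq.~\eqref{eq:nontickcondition} on non-event generators must equal $-g$ for some $g \geq 0$. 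Hence
\begin{equation}
e^{\notickmatrix_{00} t}\mathbf{u}_0 = e^{-g t}\,\mathbf{e}_0.
\end{equation}

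Next I would apply $\notickmatrix_{10}$ and $\tickmatrix_{00}$ to this vector. Since $\notickmatrix_{10}$ has its only non-zero column in position $0$ (see Eq.~\eqref{eq:generatorvisual}), the product $\notickmatrix_{10}\mathbf{e}_0$ is a vector in $S_1$ whose e-sum (Def.~\ref{def:norm}) is exactly the column sum $A \geq 0$; non-negativity of $A$ follows from Eq.~\eqref{eq:nontickcondition}. Analogously, $\tickmatrix_{00}\mathbf{e}_0 = B\,\mathbf{e}_0$ where $B\geq 0$ is the sole (possibly) non-zero entry of $\tickmatrix_{00}$, by the sign condition Eq.~\eqref{eq:tickcondition}. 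Pulling the scalar $e^{-gt}$ out of the e-sum immediately yields $\Theta_1(t) = A e^{-g t}$ and $\Gamma_1(t) = B e^{-g t}$.

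From these explicit forms, the moments in Eq.~\eqref{eq:1dmoments} follow by direct evaluation of $\int_0^\infty t^k e^{-g t}dt = k!/g^{k+1}$ for $k=0,1,2$, assuming $g>0$. The $\gamma$-value and accuracy then reduce, via Eq.~\eqref{eq:accuracygamma}, to
\begin{equation}
\gamma = \frac{\chi}{\mu^2} = \frac{2/g^2}{1/g^2} = 2, \qquad R = \frac{1}{\gamma - 1} = 1,
\end{equation}
for both $\Theta_1$ and $\Gamma_1$. The degenerate case $g=0$ is handled by noting that the stochasticity column-sum constraint Eq.~\eqref{eq:stochasticcondition} applied to the $0$-th column of $\{\notickmatrix,\tickmatrix\}$ gives $-g + A + B \leq 0$, so $g = 0$ forces $A = B = 0$ and both delay functions vanish identically.

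The proof is almost entirely bookkeeping; the only subtlety worth flagging is making sure the non-negativity of $A$, $B$, $g$ is correctly read off from the right entries of the block decomposition of $\notickmatrix$ and $\tickmatrix$ and tied back to Def.~\ref{def:stochasticgenerators}, since this is what justifies calling $\Theta_1$ and $\Gamma_1$ delay functions (Def.~\ref{def:delayfunction}) in the first place and what gives a clean treatment of the $g=0$ edge case.
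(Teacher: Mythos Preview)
Your proposal is correct and follows essentially the same approach as the paper: identify $\mathbf{u}_0=\mathbf{e}_0$, reduce $e^{\notickmatrix_{00}t}\mathbf{e}_0$ to the scalar $e^{-gt}\mathbf{e}_0$, read off $A$ and $B$ from the relevant block entries, and compute the moments by direct integration. Your treatment of the $g=0$ edge case via the single column-sum inequality $-g+A+B\leq 0$ from Eq.~\eqref{eq:stochasticcondition} is slightly more direct than the paper's, which handles $A=0$ and $B=0$ separately (the first via Corollary~\ref{cor:zerotickmatrix}, the second by appealing to $\{\notickmatrix_{00},\tickmatrix_{00}\}$ being a stochastic pair); both routes are valid and the difference is purely cosmetic.
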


\begin{proof}

Via the definition of $\Theta_1$ in Def. \ref{def:pathgeneratingdelay}, and since $\mathbf{u}_0 = \mathbf{e}_0$ (Def. \ref{def:spaceresetstates},
\begin{align}
	\Theta_1(t) &= \onenorm{ \notickmatrix_{10} e^{\notickmatrix_{00} t} \mathbf{e}_0 }.
\end{align}

However, the operator $\notickmatrix_{00}$, (see Eq. \ref{eq:generatorvisual}), has only a single non-zero element, on the diagonal, and corresponding to $\mathbf{e}_0$. We label this element by , $-g$, where $g \geq 0$ (recall that the diagonal elements of $\notickmatrix$ are non-positive (Def. \ref{def:stochasticgenerators}). One thus simplifies the action of $e^{\notickmatrix_{00} t}$ on $\mathbf{e}_0$, obtaining
\begin{align}
	\Theta_1(t) &= \onenorm{ \notickmatrix_{10} e^{-g t} \mathbf{e}_0 } \\
	&= A e^{-gt},
\end{align}
where $A = \onenorm{ \notickmatrix_{10} \mathbf{e}_0 }$ is the sum of the first column of $\notickmatrix_{10}$.

If $g=0$, then from the definition of stochastic generators, Def. \ref{def:stochasticgenerators} and Corollary \ref{cor:zerotickmatrix}, it follows that the entire first column of $\notickmatrix$ is zero, and thus $A$ is also zero, leading to $\Theta_1$ being the zero function.

In a similar manner, using the definition of $\Gamma_1$ (Def. \ref{def:tickgeneratingdelay}),
\begin{align}
	\Gamma_1 &= \onenorm{ \tickmatrix_{00} e^{\notickmatrix_{00} t} \mathbf{e}_0 } = B e^{-gt},
\end{align}
where $B$ is the single (possibly) non-zero element in $\tickmatrix_{00}$. If $g=0$, then it follows from the fact that $\{\notickmatrix_{00},\tickmatrix_{00}\}$ is also a pair of stochastic generators (Def. \ref{def:stochasticgenerators}), that $B=0$ as well, and therefore $\Gamma_1$ is the zero function in this case.

The rest of the lemma follows from the direct application of the definition of the moments (Def. \ref{def:delayfunctionmoments}), the precision $R$ (Def. \ref{def:accuracydelayfunction}), and the $\gamma$-value (Eq. \ref{eq:accuracygamma}), in the case that $g>0$.

\end{proof}

\begin{corollary}\label{cor:boundedpartialnorm}

The partial norm (Def. \ref{def:partialnorm}) of the path-generation delay function $\Theta_1$ is strictly smaller than $1$ for all $t \geq 0$.

\end{corollary}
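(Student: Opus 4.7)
The plan is to exploit the explicit closed form of $\Theta_1$ provided by Lemma~\ref{lemma:pathgeneratingdelay1}, namely $\Theta_1(t) = A\, e^{-gt}$, where $A \geq 0$ is the sum of the first column of $\notickmatrix_{10}$ and $-g \leq 0$ is the $(0,0)$-entry of $\notickmatrix_{00}$. The partial norm is then a one-line integral, and the whole task reduces to bounding the ratio $A/g$ using the stochastic-generator conditions (Def.~\ref{def:stochasticgenerators}) applied to the first column of $\notickmatrix + \tickmatrix$.

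First I would dispatch the degenerate case $g=0$: by Lemma~\ref{lemma:pathgeneratingdelay1}, $\Theta_1$ is identically zero in that case, so $P_t[\Theta_1]=0<1$ trivially. Henceforth assume $g>0$, and compute directly
\begin{equation*}
P_t[\Theta_1] \;=\; \int_0^t A\, e^{-gs}\, ds \;=\; \frac{A}{g}\bigl(1 - e^{-gt}\bigr).
\end{equation*}
Since $g>0$, the factor $1 - e^{-gt}$ is strictly less than $1$ for every finite $t \geq 0$ (it equals $0$ at $t=0$, and grows strictly monotonically toward, but never reaches, $1$). So it suffices to show $A/g \leq 1$, i.e.\ $A \leq g$.

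The bound $A\leq g$ comes directly from the stochastic-generator condition~\eqref{eq:stochasticcondition} applied to the column indexed by $\mathbf{e}_0$. That column of $\notickmatrix$ has the entry $-g$ in the diagonal (the $\notickmatrix_{00}$ block) and the column sum of the off-diagonal part, which by construction of the block decomposition \eqref{eq:spacegenerators}--\eqref{eq:generatorvisual} is precisely the sum $A$ of the single non-trivial column of $\notickmatrix_{10}$. Thus the first-column sum of $\notickmatrix$ equals $A - g$. Since the corresponding first-column sum of $\tickmatrix$ is non-negative (entries of $\tickmatrix$ are non-negative by Eq.~\eqref{eq:tickcondition}), condition~\eqref{eq:stochasticcondition} yields
\begin{equation*}
A - g \;\leq\; A - g + \sum_i [\tickmatrix]_{i0} \;\leq\; 0,
\end{equation*}
hence $A\leq g$ and so $A/g \leq 1$. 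Combining with the strict inequality $1-e^{-gt}<1$ gives $P_t[\Theta_1] < 1$ for all $t\geq 0$, as claimed.

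There is no real obstacle here: the only thing to be careful about is keeping the strict inequality by separating the two sources of ``slack'' (the factor $1-e^{-gt}$ being strictly less than $1$ for finite $t$, versus the possibly non-strict bound $A\leq g$), and handling the trivial $g=0$ case separately so that dividing by $g$ is legitimate.
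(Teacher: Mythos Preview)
Your proof is correct and follows essentially the same approach as the paper: compute $P_t[\Theta_1]=\frac{A}{g}(1-e^{-gt})$ from the explicit form in Lemma~\ref{lemma:pathgeneratingdelay1}, treat $g=0$ separately, and use $1-e^{-gt}<1$ together with $A/g\leq 1$. The only cosmetic difference is in how $A/g\leq 1$ is justified: the paper observes that $A/g=Q_1$ is the zeroth moment of the delay function $\Theta_1$, hence $\leq 1$ by Def.~\ref{def:delayfunction}, whereas you re-derive this inequality directly from the column-sum condition~\eqref{eq:stochasticcondition} on the first column of $\notickmatrix$ and $\tickmatrix$---the same fact, just unpacked one level deeper.
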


\begin{proof}

From the explicit form of $\Theta_1$ (Lemma \ref{lemma:pathgeneratingdelay1}), we can calculate its partial norm (Def. \ref{def:partialnorm}) explicitly,
\begin{align}
	P_t \left[ \Theta_1 \right] &= \int_0^t \Theta_1(t) dt = \begin{cases}
		0 & \text{if $g=0$}, \\
		\frac{A}{g} \left( 1 - e^{-gt} \right) & \text{if $g>0$.}
	\end{cases}
\end{align}

The case $g=0$ satisfies the corollary trivially. For the case $g>0$, note that
\begin{align}
	P_t \left[ \Theta_1 \right] &= \frac{A}{g} \left( 1 - e^{-gt} \right) = Q_1 \left( 1 - e^{-gt} \right) < Q_1 \quad \forall t,
\end{align}
where $Q_1$ is the zeroth moment of $\Theta_1$ (Lemma \ref{lemma:pathgeneratingdelay1}), and is itself upper bounded by $1$ (see Def. \ref{def:delayfunction}). Thus the partial norm is strictly less than $1$ for all $t$.

\end{proof}

\begin{lemma}\label{lemma:pathgeneratindelay2}

The delay functions $\Theta_0$ (Def. \ref{def:pathgeneratingdelay}) and $\Gamma_0$ (Def. \ref{def:tickgeneratingdelay}) can be generated by $d-1$ dimensional clocks.

\end{lemma}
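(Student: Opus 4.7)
The plan is to exhibit, for each of $\Gamma_0$ and $\Theta_0$, an explicit $(d{-}1)$-dimensional classical clock whose first-tick delay function equals it. The key observation is that in the $d$-dimensional setup already fixed by the proof, $\mathcal{N}_{11}$ acts entirely within the $(d{-}1)$-dimensional subspace $S_1$, the vector $\mathbf{u}_1$ is a normalized population vector supported on $S_1$, and the two maps $\mathcal{N}_{01},\mathcal{T}_{01}:S_1\to S_0$ are each a single row of non-negative numbers. So all the ingredients of a reduced clock are already present; one only needs to relocate the single row of $\mathcal{T}_{01}$ (respectively $\mathcal{N}_{01}$) into $S_1$ to serve as the new tick generator.

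Concretely, I would identify $S_1$ with $\mathbb{R}^{d-1}$ through its canonical basis $\{\mathbf{e}_1,\ldots,\mathbf{e}_{d-1}\}$ and define a $(d{-}1)$-dimensional clock with non-tick generator $\mathcal{N}' := \mathcal{N}_{11}|_{S_1}$ and initial state $\mathbf{u}_1$. For $\Gamma_0$, take the tick generator $\mathcal{T}'_\Gamma$ to be any rank-$1$, element-wise non-negative matrix on $\mathbb{R}^{d-1}$ whose $j^{\text{th}}$ column sum equals $[\mathcal{T}_{01}]_{0j}$ (e.g.\ place the entire column sum in one chosen row). For $\Theta_0$, do the same with column sums equal to $[\mathcal{N}_{01}]_{0j}$. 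The resulting first-tick delay function is $\tau(t) = \onenorm{\mathcal{T}' e^{\mathcal{N}' t}\mathbf{u}_1}$, and since the e-sum depends only on the column sums of $\mathcal{T}'$ (cf.\ Lemma~\ref{lemma:resetinvariance} applied to clocks), this equals $\Gamma_0(t)$ or $\Theta_0(t)$ as required.

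The only step requiring real verification is that $\{\mathcal{N}',\mathcal{T}'\}$ is a valid pair of stochastic generators in the sense of Definition~\ref{def:stochasticgenerators}. Off-diagonal non-negativity and diagonal non-positivity of $\mathcal{N}'$ are automatic because $\mathcal{N}_{11}$ is a principal submatrix of $\mathcal{N}$, and element-wise non-negativity of $\mathcal{T}'$ is true by construction. The substantive check is the column-sum inequality \eqref{eq:stochasticcondition}. For the $\Gamma_0$ clock, for each $j\in\{1,\ldots,d-1\}$,
\[
\sum_i[\mathcal{N}']_{ij} + \sum_i[\mathcal{T}'_\Gamma]_{ij}
= \sum_{i\in S_1}\mathcal{N}_{ij} + [\mathcal{T}_{01}]_{0j}
\le \sum_i \mathcal{N}_{ij} + \sum_i \mathcal{T}_{ij}
\le 0,
\]
where the first inequality uses $[\mathcal{N}_{01}]_{0j}=\mathcal{N}_{0j}\ge 0$ (an off-diagonal entry of the original $\mathcal{N}$) and the second is the stochastic-generator condition on the original clock. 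For the $\Theta_0$ clock one gets the even simpler $\sum_i[\mathcal{N}']_{ij} + [\mathcal{N}_{01}]_{0j} = \sum_i \mathcal{N}_{ij} \le 0$, without needing the contribution of $\mathcal{T}$.

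I do not expect any genuine obstacle: the construction is essentially bookkeeping, and the only thing one must be careful about is the identification of $S_1$ with $\mathbb{R}^{d-1}$ and the statement that the first-tick delay function is insensitive to how the column sums of $\mathcal{T}'$ are distributed among rows, so that the freedom in choosing $\mathcal{T}'$ is enough to realize both $\Gamma_0$ and $\Theta_0$.
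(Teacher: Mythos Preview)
Your proposal is correct and takes essentially the same approach as the paper: both restrict the dynamics to the $(d{-}1)$-dimensional subspace $S_1$, note that $\mathbf{u}_1$, $\mathcal{N}_{11}$ already live there, and then use the column-sum invariance of the delay function (Lemma~\ref{lemma:resetinvariance}) to relocate the single non-zero row of the event generator into $S_1$. The paper phrases this as ``swap the first row of $\mathcal{N}_{01}$ (resp.\ $\mathcal{T}_{01}$) with any other row and then delete the $\mathbf{e}_0$ coordinate,'' while you phrase it as ``choose any $\mathcal{T}'$ on $\mathbb{R}^{d-1}$ with the prescribed column sums''; these are the same maneuver. Your explicit verification of the column-sum inequality~\eqref{eq:stochasticcondition} for the reduced pair is a welcome addition that the paper leaves implicit.
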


\begin{proof}

We prove the statement for $\Theta_0$, the proof for $\Gamma_0$ is analogous. From Def. \ref{def:pathgeneratingdelay},
\begin{align}\label{eq:pathdelay0}
\Theta_0(t) &= \onenorm{ \notickmatrix_{01} e^{\notickmatrix_{11} t} \mathbf{u}_1 }.
\end{align}

While the expression above appears to involve the entire vector space $\mathbb{R}^d$, in fact, one can generate the same delay function with only $d-1$ dimensional objects, as we proceed to show.

First off, take the event generator $\notickmatrix_{01}$. From Eq. \ref{eq:generatorvisual}, we observe that only the first row is non-zero, and of this row, the first element is zero. Consider the modified event generator $\notickmatrix_{01}^\prime$, formed by swapping the first row with any other. By Lemma \ref{lemma:resetinvariance}, this leaves the delay function $\Theta_0(t)$ unchanged. The entire first row and column of the modified generator are zero. Next, we note that $\notickmatrix_{11}$ by construction (see Eq. \ref{eq:generatorvisual}) already has a zero first row and column. Thus both $\notickmatrix_{11}$ and $\notickmatrix_{01}^\prime$ act trivially on the canonical state $\mathbf{e}_0$.

Finally, $\mathbf{u}_1$ (Def. \ref{def:spaceresetstates}) is constructed from the first column of $\notickmatrix_{10}$, that has a zero element at the top (see Eq. \ref{eq:generatorvisual}), and thus $\mathbf{u}_1$ has no component from the canonical state $\mathbf{e}_0$. Thus we may simply remove this space entirely from $\mathbf{u}_1$, and correspondingly from the operators $\notickmatrix_{11}$ and $\notickmatrix_{01}^\prime$, and still generate the same delay function $\Theta_0$.

\end{proof}

\begin{corollary}\label{cor:previousbound}

Assuming that the precision of a single tick of a $d-1$ dimensional clock is upper bound by $R \leq d-1$, as is done during this proof, one can lower bound the second moments of $\Theta_0$ and $\Gamma_0$ w.r.t. their first moments,
\begin{subequations}
\begin{align}
	\chi_0 &\geq \mu_0^2 \left( 1 + \frac{1}{d-1} \right) \\
	\chi_2 &\geq \mu_2^2 \left( 1 + \frac{1}{d-1} \right),
\end{align}
\end{subequations}
if the first moments $\mu_0$ and $\mu_2$ do not diverge.
\end{corollary}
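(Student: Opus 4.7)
The plan is to invoke the inductive hypothesis of the surrounding proof. Lemma~\ref{lemma:pathgeneratindelay2} has already established that $\Theta_0$ and $\Gamma_0$ can be generated by $d{-}1$ dimensional classical clocks: more precisely, each is the delay function of the first tick of a classical clock in dimension $d-1$, with non-tick generator given by (the reduction of) $\notickmatrix_{11}$, tick generator given by (the reduction of) $\notickmatrix_{01}$ for $\Theta_0$, or by $\tickmatrix_{01}$ for $\Gamma_0$, and initial population vector $\mathbf{u}_1$. So both are delay functions of single ticks of some $d{-}1$ dimensional classical clock (note that the initial state $\mathbf{u}_1$ need not be canonical, but the outer inductive hypothesis applies to \emph{all} $d{-}1$ dimensional clocks, since Theorem~\ref{theorem:classicalclockaccuracy} is stated without restriction on the initial state).

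By the inductive hypothesis (taken for granted at this stage of the proof of Theorem~\ref{theorem:classicalclockaccuracy}), the accuracy of such a delay function is bounded by $d-1$, so
\begin{equation*}
R[\Theta_0] \leq d-1, \qquad R[\Gamma_0] \leq d-1.
\end{equation*}
The remaining work is a direct algebraic translation from accuracy to raw moments. By Def.~\ref{def:accuracydelayfunction} and Eq.~\eqref{eqdef:delayfunctionvariance}, whenever $\mu$ is finite we have $R=\mu^2/(\chi-\mu^2)$, which rearranges to
\begin{equation*}
\chi = \mu^2\Bigl(1 + \tfrac{1}{R}\Bigr).
\end{equation*}
If $R\leq d-1$ then $1/R \geq 1/(d-1)$, giving $\chi \geq \mu^2(1 + \tfrac{1}{d-1})$.

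I would only need to handle the edge case where $\sigma^2$ might be infinite (equivalently, $R=0$): in that situation $\chi$ is itself infinite, so the lower bound is satisfied trivially. The case $\sigma=0$ does not occur for these delay functions because they arise from a continuous Markovian dynamics on a finite-dimensional space and are not Dirac distributions; more concretely, $\Theta_0$ and $\Gamma_0$ have finite $\mu$ by hypothesis and strictly positive variance, so $R$ is strictly positive and finite. Applying this argument separately to $(\mu_0,\chi_0)$ and to $(\mu_2,\chi_2)$ yields both stated inequalities.

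I do not expect any real obstacle here: the nontrivial work has been packaged into Lemma~\ref{lemma:pathgeneratindelay2} (which handled the reduction of the $d$-dimensional construction to a genuine $d{-}1$ dimensional clock) and into the inductive hypothesis itself. The only thing to be careful about is matching the notational conventions: $\{Q_0,\mu_0,\chi_0\}$ refers to $\Theta_0$ while $\{Q_2,\mu_2,\chi_2\}$ refers to $\Gamma_0$, consistent with the indexing in Lemma~\ref{lemma:pathgeneratingdelay1} where $\Theta_1$ and $\Gamma_1$ carry indices $1$ and $3$.
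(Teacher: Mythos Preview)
Your proposal is correct and follows essentially the same approach as the paper: invoke Lemma~\ref{lemma:pathgeneratindelay2} to realise $\Theta_0$ and $\Gamma_0$ as first-tick delay functions of $(d{-}1)$-dimensional clocks, apply the inductive hypothesis to bound their accuracies by $d-1$, and translate this into the moment inequality via $\chi=\mu^2(1+1/R)$. The paper's proof is just a terser version of exactly this argument, citing the $\gamma$-value relation (Eq.~\eqref{eq:accuracygamma}) in place of your explicit rearrangement.
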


\begin{proof}

From Lemma \ref{lemma:pathgeneratindelay2}, we know that both $\Theta_0$ and $\Gamma_0$ can be generated by $d-1$ dimensional clocks, but we have assumed the precision $R$ (Def. \ref{def:accuracydelayfunction}) of these clocks to be upper bound by $R \leq d-1$. The corollary then follows from the definition of the $\gamma$-value (Eq. \ref{eq:accuracygamma}) and its relationship with the precision $R$.

\end{proof}

\begin{lemma}[Recovering the clock state and delay function from the path-specific states and component delay functions]\label{lemma:recoverstatedelay}

The state of the clock $V(t)$ corresponding to the first tick (Eq. \ref{eq:singletickstatedelay}) is the series sum of the path-specific states (Def. \ref{def:pathstates}),
\begin{align}
	V(t) &= \sum_{n=0}^\infty \sum_{x\in\{0,1\}} v_{n,x}(t).
\end{align}

Furthermore, the delay function $\tau(t)$ of a single tick of the clock (Eq. \ref{eq:singletickstatedelay}) is the series sum of the component delay functions (Def. \ref{def:componentdelay}),
\begin{align}
	\tau(t) &= \tau_{0,1}(t) + \sum_{n=1}^\infty \sum_{x\in\{0,1\}} \tau_{n,x}(t).
\end{align}
	
\end{lemma}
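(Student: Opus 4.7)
The plan is to prove both identities by comparing the sum $W(t) := \sum_{n,x} v_{n,x}(t)$ to the clock state $V(t) = e^{\cN t}\mathbf{e}_0$ through the uniqueness of solutions to linear ODEs, and then to exploit the subspace-containment property of Lemma~\ref{lemma:pathmarkoviansequence} together with the element-wise non-negativity of $\cT$ to reassemble $\tau(t) = \onenorm{\cT V(t)}$ from the component delay functions.

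First, I would verify the initial condition: $W(0) = v_{0,0}(0) = \mathbf{e}_0 = V(0)$, since all other $v_{n,x}(0)$ vanish by Eq.~\eqref{eq:spacestateinitial}. Next, I would compute $\frac{d}{dt}W(t)$ by summing Eq.~\eqref{eq:spacestatedynamics} over all $(n,x)$. Grouping the $\cN_{xx}$ contributions with $x=0$ gives $\cN_{00}\sum_{n\geq 0} v_{n,0}(t)$; similarly for $x=1$. The cross-terms $\cN_{x\bar x} v_{n-\bar x,\bar x}(t)$, after reindexing, produce $\cN_{01}\sum_{m\geq 0} v_{m,1}(t)$ and $\cN_{10}\sum_{m\geq 0} v_{m,0}(t)$. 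The crucial observation is that, by Lemma~\ref{lemma:pathmarkoviansequence}, $v_{n,0}(t)\in S_0$ and $v_{n,1}(t)\in S_1$, so that $\cN_{00} v_{n,1}=\cN_{10} v_{n,1}=0$ and $\cN_{11} v_{n,0}=\cN_{01} v_{n,0}=0$. Consequently each partial sum can be extended to the full $W(t)$ without changing its value, yielding
\begin{align*}
\tfrac{d}{dt}W(t) = (\cN_{00}+\cN_{01}+\cN_{10}+\cN_{11})W(t) = \cN W(t).
\end{align*}
Uniqueness of solutions to the linear ODE $\dot u = \cN u$ with $u(0)=\mathbf{e}_0$ then gives $W(t)=V(t)$.

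For the second identity, I would start from $\tau(t) = \onenorm{\cT V(t)}$ and use $\cT = \cT_{00}+\cT_{01}$ (since $\Pi_1\cT=0$). Substituting $V(t)=\sum_{n,x} v_{n,x}(t)$ and again invoking $v_{n,0}\in S_0$, $v_{n,1}\in S_1$, the terms $\cT_{01} v_{n,0}$ and $\cT_{00} v_{n,1}$ vanish, leaving $\cT V(t)=\sum_{n\geq 0}\cT_{00} v_{n,0}(t) + \sum_{n\geq 0}\cT_{01} v_{n,1}(t)$. Because every $\cT_{0\bar x} v_{n,\bar x}(t)$ is element-wise non-negative and lies entirely in $S_0$, the e-sum is additive across these terms, so $\tau(t)$ equals $\sum_{n\geq 0}\onenorm{\cT_{00} v_{n,0}(t)} + \sum_{n\geq 0}\onenorm{\cT_{01} v_{n,1}(t)}$. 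Matching these against Definition~\ref{def:componentdelay} identifies them as $\tau_{0,1}(t)+\sum_{n\geq 1}\tau_{n,1}(t)$ and $\sum_{n\geq 1}\tau_{n,0}(t)$ respectively, yielding the claimed expression.

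The main obstacle is the rigorous handling of the infinite series: convergence of $W(t)$ and justification of interchanging $\frac{d}{dt}$ with $\sum_{n,x}$. I would address this by truncating at $K$, setting $W_K(t) := \sum_{n=0}^K\sum_x v_{n,x}(t)$, showing (by the same bookkeeping argument as in Lemma~\ref{lemma:markovstatedelay}) that $V(t)-W_K(t)$ is a non-negative vector whose e-sum is dominated by the tail probability of having undergone more than $K$ transitions between $S_0$ and $S_1$ by time $t$. Because this tail is bounded by the probability mass that has left the no-tick sector after $K$ steps in an independent Markovian sequence, monotone convergence (together with Lemma~\ref{lemma:stochasticgenerators} guaranteeing that each transition strictly contracts the norm through the factor $A/g<1$ from Corollary~\ref{cor:boundedpartialnorm}) shows $W_K(t)\to V(t)$ and validates term-by-term summation; the same bound propagates to the delay-function identity.
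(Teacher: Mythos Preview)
Your proposal is correct in substance and lands on the same convergence mechanism as the paper (the tail bound via Corollary~\ref{cor:boundedpartialnorm}), but the execution differs in one useful way that is worth pointing out.

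The paper does \emph{not} work with the infinite sum $W(t)$ directly, nor does it attempt to justify interchanging $\tfrac{d}{dt}$ with $\sum_{n,x}$. Instead it defines, for each $M$, a \emph{modified} truncated sequence $\{v^{(M)}_{n,x}\}$ that agrees with $\{v_{n,x}\}$ for every index except the terminal one $v^{(M)}_{M,1}$, whose dynamics are altered to use the full generator $\cN$ rather than $\cN_{11}$. With this modification, the \emph{finite} sum $\sum_{n=0}^{M}\sum_x v^{(M)}_{n,x}(t)$ satisfies the same ODE and initial condition as $V(t)$, hence equals $V(t)$ \emph{exactly} for every $M$. The limit $M\to\infty$ then reduces to showing $\|v^{(M)}_{M,1}(t)\|_\Sigma\to 0$, which follows from $\|v^{(M)}_{M,1}(t)\|_\Sigma \le (P_t[\Theta_1])^{M+1}$ and Corollary~\ref{cor:boundedpartialnorm}. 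The delay-function identity is handled analogously. This device sidesteps entirely the issue you identify as ``the main obstacle''.

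Your route---computing $W_K'=\cN W_K - \cN_{01}v_{K,1}$ for the \emph{unmodified} truncation, hence $V-W_K = \int_0^t e^{\cN(t-s)}\cN_{01}v_{K,1}(s)\,ds \ge 0$, and then bounding this tail---is equally valid and leads to the same estimate. Note, though, that your write-up is somewhat redundant: once the truncation argument gives $W_K\to V$, the identity $W=V$ is established directly and the formal ODE computation for the infinite sum is no longer needed. One small correction: the contraction factor you cite as ``$A/g<1$'' need not be strict (one can have $Q_1=A/g=1$); what Corollary~\ref{cor:boundedpartialnorm} actually gives is $P_t[\Theta_1]=\tfrac{A}{g}(1-e^{-gt})<1$ for every finite $t$, which is what the tail bound requires.
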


\begin{proof}

Consider the following sequence of states, $\{v^{(M)}_{n,x}(t)\}$, defined for $x \in \{0,1\}$ and $n \in \{0,1,...,M\}$, where $M \in \mathbb{N}^+$, and whose initial states and dynamics are identical to those of the path-specific states $v_{n,x}(t)$ (see Def. \ref{def:pathstates}) except for the last one $v^{(M)}_{M,1}(t)$, that evolves as
\begin{align}\label{eq:endstatedynamics}
	\frac{d}{dt} v^{(M)}_{M,1} &= \notickmatrix_{10} v^{(M)}_{M,0} + \notickmatrix v^{(M)}_{M,1}.
\end{align}
where the last term above differentiates it from the dynamics of the original sequence, by including the entire non-event generator $\notickmatrix$ rather than only $\notickmatrix_{11}$. Intuitively this corresponds to interrupting the sequence at $n=M,x=1$ by stopping the flow of of population to further states. It is straightforward to verify that this too is an independent Markovian sequence of events (Def. \ref{def:independentsequence}).

Since the initial states and dynamical equations of all but the last state in the sequence $v^{(M)}_{n,x}$ are identical to those of the sequence $v_{n,x}$, and because the dynamics of every state in a Markovian sequence only depend on the previous states in the sequence, it follows that $v^{(M)}_{n,x}(t) = v_{n,x}(t)$ for all $x$ and $M$ except for the final state $v^{(M)}_{M,1}(t)$. Furthermore, this also implies that the corresponding delay functions of the sequence of states $v^{(M)}_{n,x}$ are the path-specific delay functions $\xi_{n,x}$ (Def. \ref{def:pathdensities}), upto and including $n=M,x=1$.

Consider the vector $V^{(M)}$, defined as the sum
\begin{align}\label{eq:tempstatesum}
	V^{(M)}(t) &= \sum_{n=0}^{M} \sum_{x\in\{0,1\}} v^{(M)}_{n,x}(t),
\end{align}
that we proceed to show is equal to $V(t)$, the state of the clock for a single tick (Eq. \ref{eq:singletickstatedelay}). For $t=0$, by construction $V^{(M)}(0) = V(0)$. For $t \geq 0$, the dynamics of the sum is given by (Eqs. \ref{eq:spacestatedynamics} and \ref{eq:endstatedynamics})
\begin{align}
	\frac{d}{dt} V^{(M)}(t) &= \sum_{n=0}^{M} \sum_{x\in\{0,1\}} \frac{d}{dt} v^{(M)}_{n,x}(t) \\
	&= \sum_{n=0}^{M-1} \sum_{x\in\{0,1\}} \left( \notickmatrix_{xx} + \notickmatrix_{\bar{x}x} \right) v^{(M)}_{n,x}(t) + \left( \notickmatrix_{00} + \notickmatrix_{10} \right) v^{(M)}_{M,0}(t) + \notickmatrix v^{(M)}_{M,1}(t) \\
	&= \sum_{n=0}^{M-1} \sum_{x\in\{0,1\}} \left( \notickmatrix_{xx} + \notickmatrix_{\bar{x}x} \right) v_{n,x}(t) + \left( \notickmatrix_{00} + \notickmatrix_{10} \right) v_{M,0}(t) + \notickmatrix v^{(M)}_{M,1}(t). \label{eq:tempref1}
\end{align}

We may use the fact that each $v_{n,x}(t) \in S_x$ for all $t \geq 0$ (Lemma \ref{lemma:pathmarkoviansequence}), together with the definition of the generators $\notickmatrix_{xy}$ (Eq. \ref{eq:spacegenerators}) to conclude that for all $n$ and $x$,
\begin{align}
	\left( \notickmatrix_{\bar{x}\bar{x}} + \notickmatrix_{x\bar{x}} \right) v_{n,x}(t) = \mathbf{0}.
\end{align}
Adding these trivial zeros to Eq. \ref{eq:tempref1}, and using the decomposition of $\notickmatrix$, (Eq. \ref{eq:spacegenerators}),
\begin{align}
	\frac{d}{dt} V^{(M)}(t) &= \sum_{n=0}^M \sum_{x\in\{0,1\}} \notickmatrix v^{(M)}_{n,x}(t) \\
	&= \notickmatrix V^{(M)}(t). \\
	\therefore \; V^{(M)}(t) &= e^{\notickmatrix t} V^{(M)}(0) = e^{\notickmatrix t} V(0) = V(t). \label{eq:tempstateequal}
\end{align}

Finally, we investigate the limit $M \rightarrow \infty$ of $V^{(M)}(t)$,
\begin{align}
	\lim_{M\rightarrow\infty} V^{(M)}(t) &= \lim_{M\rightarrow\infty} \left( \sum_{n=0}^{M-1} \sum_{x\in\{0,1\}} v_{n,x}(t) + v_{M,0}(t) + v^{(M)}_{M,1}(t) \right) \\
	&= \sum_{n=0}^\infty \sum_{x\in\{0,1\}} v_{n,x}(t) + \lim_{M\rightarrow\infty} v^{(M)}_{M,1}(t). \label{eq:tempstateMinfty}
\end{align}

To prove that the state $v^{(M)}_{M,1}(t)$ goes to zero, we first apply Lemma \ref{lemma:resetsequenceconvolution} to express it as
\begin{align}
	v^{(M)}_{M,1}(t) &= \int_0^t e^{\notickmatrix (t-t^\prime)} \mathbf{w}_{M,1} \; \xi_{M,1}(t^\prime) (t^\prime) dt^\prime,
\end{align}
where $\mathbf{w}_{M,1}$ is the event reset state (Def. \ref{def:eventresetstate}) for the $n=M,x=1$ event, that has already been proven to be equal to $\mathbf{u}_1$ (see discussion after Eq. \ref{eq:tempgeneratingdelay}). Calculating the norm (Def. \ref{def:norm}) of the above,
\begin{align}
	\onenorm{v_{M,1}(t)} &= \int_0^t \onenorm{ e^{\notickmatrix_{11} (t-t^\prime)} \mathbf{u}_1 } \xi_{M,1}(t^\prime) dt^\prime.
\end{align}
However the term in the e-sum above is a population vector for all $t-t^\prime \geq 0$ (see Lemma \ref{lemma:stochasticgenerators}), and thus the norm is upper bounded by $1$, leading to
\begin{align}
	\onenorm{v_{M,1}(t)} &\leq \int_0^t \xi_{M,1}(t^\prime) dt^\prime = P_t \left[ \xi_{M,1} \right].
\end{align}
where the partial norm $P_t[\cdot]$ is defined in Def. \ref{def:partialnorm}. As $\xi_{n,x}$ is itself a convolution of delay functions (Lemma \ref{lemma:pathdensityconv}), we can apply Lemma \ref{lemma:delaynorm}, to obtain
\begin{align}
	\onenorm{v_{M,1}(t)} &\leq \left( P_t \left[ \Theta_1 \right] \right)^{M+1} \left( P_t \left[ \Theta_0 \right] \right)^M \\
	&\leq \left( P_t \left[ \Theta_1 \right] \right)^{M+1},
\end{align}
using the fact that the partial norm is upper bounded by 1.

However, from Corollary \ref{cor:boundedpartialnorm}, the partial norm of $\Theta_1$ is srictly less than $1$ for all $t \geq 0$, and thus
\begin{align}\label{eq:templimitzero}
	\lim_{M\rightarrow\infty} \onenorm{ v_{M,1}(t) } = 0,
\end{align}
which in turn implies that every element of $v_{M,1}(t)$ must individually go to zero. Combining this with Eqs. \ref{eq:tempstateequal} and \ref{eq:tempstateMinfty}, we recover
\begin{align}
	V(t) &= \sum_{n=0}^\infty \sum_{x\in\{0,1\}} v_{n,x}(t).
\end{align}

The proof of the second statement of the lemma, concerning the delay function, proceeds in an analogous manner. Consider the following sequence of delay functions, $\{\tau^{(M)}(t)\}$, defined w.r.t. the sequence of states $v^{(M)}_{n,x}(t)$ as
\begin{align}
	\tau^{(M)}_{n,x}(t) &= \onenorm{ \tickmatrix_{0\bar{x}} v^{(M)}_{n-\bar{x},\bar{x}}(t) },
\end{align}
for all $n,x$ from $n=0,x=1$ up to and including $n=M,x=1$. Finally, the end delay function of the sequence is defined as
\begin{align}
	\tau^{(M)}_{M+1,0}(t) &= \onenorm{ \tickmatrix v^{(M)}_{M,1}(t) }.
\end{align}

Note immediately that since $v^{(M)}_{n,x}(t) = v_{n,x}(t)$ for all the states except $v^{(M)}_{M,1}(t)$, it follows that $\tau^{(M)}_{n,x}(t) = \tau_{n,x}(t)$ (Def. \ref{def:pathdensities}) except for the final delay function $\tau^{(M)}_{M+1,0}(t)$.

Consider the sum
\begin{align}
	f^{(M)}(t) &= \tau^{(M)}_{0,1}(t) + \sum_{n=1}^{M} \sum_{x\in\{0,1\}} \tau^{(M)}_{n,x}(t) + \tau^{(M)}_{M+1,0}(t),
\end{align}
which we proceed to show is equal to the delay function $\tau(t)$ of a single tick of the clock, Eq. \ref{eq:singletickstatedelay}.
\begin{align}
	f^{(M)}(t) &= \onenorm{ \tickmatrix_{00} v^{(M)}_{0,0}(t) } + \sum_{n=1}^{M-1} \sum_{x\in\{0,1\}} \onenorm{ \tickmatrix_{0\bar{x}} v^{(M)}_{n-\bar{x},\bar{x}}(t) } + \onenorm{ \tickmatrix_{00} v^{(M)}_{M,0}(t) } + \onenorm{ \tickmatrix v^{(M)}_{M,1}(t) } \\
	&= \onenorm{ \tickmatrix_{00} v_{0,0}(t) } + \sum_{n=1}^{M} \sum_{x\in\{0,1\}} \onenorm{ \tickmatrix_{0\bar{x}} v_{n-\bar{x},\bar{x}}(t) } + \onenorm{ \tickmatrix_{00} v_{M,0}(t) }  + \onenorm{ \tickmatrix v^{(M)}_{M,1}(t) }
\end{align}

Using the definition of $\tickmatrix_{xy} = \Pi_x \tickmatrix \Pi_y$ (Eq. \ref{eq:spacetickers}), and also the fact that each $v_{n,x} \in S_x$ (Lemma \ref{lemma:pathmarkoviansequence}), we can trivially add $\tickmatrix_{0 \bar{x}} v^{(M)}_{n,x} = \mathbf{0}$ to any expression, and doing so to the above equation results in
\begin{align}
	f^{(M)}(t) &= \onenorm{ \sum_{n=0}^{M-1} \sum_{x\in\{0,1\}} \left( \tickmatrix_{0x} + \tickmatrix_{0\bar{x}} \right) v^{(M)}_{n,x}(t) } + \onenorm{ \left( \tickmatrix_{0x} + \tickmatrix_{0\bar{x}} \right) v^{(M)}_{M,0}(t) } + \onenorm{ \tickmatrix v^{(M)}_{M,1}(t) } \\
	&= \onenorm{ \tickmatrix V^{(M)}(t) } = \onenorm{ \tickmatrix V(t) } = \tau(t). \label{eq:temppartialsumdelay}
\end{align}

Thus $f^{(M)}(t)$ is the delay function of a single tick of the clock (Eq. \ref{eq:singletickstatedelay}). To complete the proof, we investigate the limit as $M \rightarrow \infty$,
\begin{align}
	\lim_{M\rightarrow\infty} f^{(M)} &= \lim_{M\rightarrow\infty} \left( \tau^{(M)}_{0,1}(t) + \sum_{n=1}^{M} \sum_{x\in\{0,1\}} \tau^{(M)}_{n,x}(t) + \tau^{(M)}_{M+1,0}(t) \right) \\
	&= \lim_{M\rightarrow\infty} \left( \tau_{0,1}(t) + \sum_{n=1}^{M} \sum_{x\in\{0,1\}} \tau_{n,x}(t) \right) + \lim_{M\rightarrow\infty} \tau^{(M)}_{M+1,0}(t) \\
	&= \sum_{n=0}^\infty \sum_{x\in\{0,1\}} \tau_{n,x}(t) + \lim_{M\rightarrow\infty}  \onenorm{ \tickmatrix v^{(M)}_{M,1}(t) }.
\end{align}

But the remaining term on the right goes to zero, as we have already proved that $v^{(M)}_{M,1}(t)$ goes to zero (see Eq. \ref{eq:templimitzero}), and $\tickmatrix$ is a bounded linear operator. From the above expression and Eq. \ref{eq:temppartialsumdelay}, we recover
\begin{align}
	\tau(t) &= \sum_{n=0}^\infty \sum_{x\in\{0,1\}} \tau_{n,x}(t),
\end{align}
which completes the proof of the lemma.

\end{proof}

\subsubsection{The moments of the delay function of a single tick of the clock}\label{proof:delayexplicit}

We proceed by calculating the moments of the delay function of a single tick of a clock by using the path-specific delay functions and component delay functions.

Consider the component delay function (Def. \ref{def:componentdelay}),
\begin{align}
	\tau_{n,x}(t) &= \onenorm{ \tickmatrix_{0\bar{x}} v_{n-\bar{x},\bar{x}}(t) } \\
	&= \onenorm{ \int_0^t \tickmatrix_{0\bar{x}} e^{\notickmatrix_{\bar{x}\bar{x}} (t-t^\prime)} \mathbf{u}_{\bar{x}} \xi_{n-\bar{x},\bar{x}}(t^\prime) dt^\prime },
\end{align}
where we have applied Lemma \ref{lemma:resetsequenceconvolution} to express the path-specific state $v_{n-\bar{x},\bar{x}}(t)$ w.r.t. it's corresponding path-specific delay function, as they form an independent Markovian sequence (Lemma \ref{lemma:pathmarkoviansequence}). Continuing, we simplify the above to
\begin{align}
	\tau_{n,x}(t) &= \int_0^t \onenorm{ \tickmatrix_{0\bar{x}} e^{\notickmatrix_{\bar{x}\bar{x}} (t-t^\prime)} \mathbf{u}_{\bar{x}} } \xi_{n-\bar{x},\bar{x}}(t^\prime) dt^\prime \\
	&= \int_0^t \Gamma_x(t-t^\prime)  \xi_{n-\bar{x},\bar{x}}(t^\prime) dt^\prime,
\end{align}
using the definition of the tick-generating delay functions $\Gamma_x$ (Def. \ref{def:tickgeneratingdelay}). Finally,
\begin{align}
	\tau_{n,x}(t) &= \left( \Gamma_x \conv \xi_{n-\bar{x},\bar{x}} \right) (t) \\
	&= \left( \Gamma_x \conv \Theta_1 \conv \Theta_0 \conv \Theta_1 \conv \Theta_0 \conv ... \conv \Theta_{\bar{x}} \right) (t),
\end{align}
using Lemma \ref{lemma:pathdensityconv}. In the above convolution, $\Theta_1$ appears $n$ times, and $\Theta_0$ appears $n-\bar{x}$ times.

We can now calculate the moments of the component delay function $\tau_{n,x}$ using Eq. \ref{eq:sequencemoments} for a convolution of a sequence of delay functions. The moments (Def. \ref{def:delayfunctionmoments}) of $\Theta_1$ and $\Gamma_1$ have already been labelled and calculated in Lemma \ref{lemma:pathgeneratingdelay1}. For now, we label them (in order of zeroth, first and second moments) as $\{Q_1,\mu_1,\chi_1\}$ for $\Theta_1$ and $\{Q_3,\mu_1,\chi_1\}$ for $\Gamma_1$ (the first and second moments of these two delay functions were proven equal in Lemma \ref{lemma:pathgeneratingdelay1}). The moments of $\Theta_0$ (Def. \ref{def:pathgeneratingdelay}) are denoted as $\{Q_0,\mu_0,\chi_0\}$ and the moments of $\Gamma_0$ (Def. \ref{def:tickgeneratingdelay}) by $\{Q_2,\mu_2,\chi_2\}$.

Finally, denoting the zeroth, first and second moments of the component delay function $\tau_{n,x}(t)$ by $Q_{n,x}$, $\mu_{n,x}$ and $\chi_{n,x}$ respectively, from Eq. \ref{eq:sequencemoments},
\begin{align}
	Q_{n,0} &= Q_1^n Q_0^{n-1} Q_2 \\
	\mu_{n,0} &= n \mu_1 + (n-1) \mu_0 + \mu_2 \\
	\chi_{n,0} &= n \chi_1 + (n-1) \chi_0 + \chi_2 + n(n-1) \mu_1^2 + (n-1)(n-2) \mu_0^2 + 2 n(n-1) \mu_1 \mu_0 + 2 n \mu_1 \mu_2 + 2 (n-1) \mu_0 \mu_2 \\
	Q_{n,1} &= Q_1^n Q_0^n Q_3 \\
	\mu_{n,1} &= (n+1) \mu_1 + n \mu_0 \\
	\chi_{n,1} &= (n+1) \chi_1 + n \chi_0 + n(n+1) \mu_1^2 + n(n-1) \chi_0^2 + 2 n(n+1) \mu_1 \mu_0. \\
\end{align}

We are now in a position to calculate the precision of a single tick of the clock. First off, we split the delay function of a single tick of the clock into two parts, w.r.t. the some over component delay functions,
\begin{align}
	\tau(t) &= \tau_{0,1}(t) + \sum_{n=0}^\infty \sum_{x\in\{0,1\}} \tau_{n,x}(t) \\
	&= \sum_{n=1}^\infty \tau_{n,0}(t) + \sum_{n=0}^\infty \tau_{n,1}(t) \\
	&= \tau^{(1)}(t) + \tau^{(0)}(t).
\end{align}

We may understand the above division in the following manner. The component delay function $\tau_{n,x}$ is generated by $\tickmatrix_{0\bar{x}}$ (see Def. \ref{def:componentdelay}), and thus represents a tick being generated from the $S_{\bar{x}}$ subspace of the clock vector space. Thus the two $\tau^{(x)}(t)$ above each correspond to the delay function of a single tick generated from $S_{x}$.

From Lemma \ref{lemma:delaymixture}, the precision of the sum of two delay functions is upper bound by the maximum precision from among the two, and we proceed to bound each of the precisions individually.

For $\tau^{(0)}$, (that corresponds to the tick arising from the $S_0$ subspace), we find the zeroth, first and second moments of the entire tick density to be (on application of Def. \ref{def:delayfunctionmoments}),
\begin{align}
	Q^{(0)} &= \sum_{n=0}^\infty Q_{n,1} = \frac{Q_3}{1 - Q_1 Q_0} \\
	\mu^{(0)} &= \frac{\sum_{n=0}^\infty Q_{n,1} \mu_{n,1}}{Q^{(0)}} = \frac{Q_0 Q_1 \mu_0 + \mu_1}{1 - Q_0 Q_1} \\
	\chi^{(0)} &= \frac{\sum_{n=0}^\infty Q_{n,1} \chi_{n,1}}{Q^{(0)}} = \frac{(Q_0 Q_1 \mu_0 + \mu_1)^2 + Q_0 Q_1 \chi_0 (1 - Q_0 Q_1)}{(1 - Q_0 Q_1)^2},
\end{align}
where we have replaced $\chi_1 = 2 \mu_1^2$ (Eq. \ref{eq:1dmoments}) to simplify the expressions. Proceeding, we find that the precision is given by (Eq. \ref{eq:accuracygamma})
\begin{align}
	R \left[ \tau^{(0)} \right] &= \left( \frac{\chi^{(0)}}{\left( \mu^{(0)} \right)^2} - 1 \right)^{-1} = \frac{\left( Q_0 Q_1 \mu_0 + \mu_1 \right)^2}{ \left( Q_0 Q_1 \mu_0 + \mu_1 \right)^2 + Q_0 Q_1 \chi_0 \left( 1 - Q_0 Q_1 \right) }.
\end{align}

As $Q_0 Q_1$ is the product of the zeroth moments of delay functions and thus $\leq 1$, we see that the precision decreases monotonically w.r.t. $\chi_0$. As such, we may upper bound it by replacing $\chi_0$ by its minimum value using Corollary \ref{cor:previousbound}. Furthermore, noting that the moments $Q_0$ and $Q_1$ only appear as the product, and that $\mu_0$ and $\mu_1$ only affect the precision via their relative ratio, we can simplify the expression to
\begin{align}
	R \left[ \tau^{(0)} \right] &\leq \frac{(d-1)(1+pr)^2}{d + dpr(2+r) - (1+pr)^2},
\end{align} 
where $p = Q_0 Q_1 \in (0,1]$ and $r=\mu_0/\mu_1 \in (0,\infty)$. One can analytically optimize the above expression w.r.t. $p$ and $r$, and we find that the maximal value of the precision is
\begin{align}
	R \left[ \tau^{(0)} \right] &\leq 1,
\end{align}
which is only the precision of a one-dimensional clock, and not very interesting.

On the other hand, consider the delay function $\tau^{(1)}(t)$, that corresponds to the tick being generated from $S_1$. We find the moments of the delay function to be
\begin{align}
Q^{(1)} &= \sum_{n=0}^\infty Q_{n,0} = \frac{Q_1 Q_2}{1 - Q_0 Q_1} \\
\mu^{(1)} &= \frac{\sum_{n=0}^\infty Q_{n,0} \mu_{n,0}}{Q^{(1)}} = \frac{Q_0 Q_1 \mu_0 + \mu_1}{1 - Q_0 Q_1} + \mu_2 \\
\chi^{(1)} &= \frac{\sum_{n=0}^\infty Q_{n,0} \chi_{n,0}}{Q^{(1)}} = \frac{2 \left( Q_0 Q_1 \mu_0 + \mu_1 \right) \left( \mu_1 + Q_0 Q_1 \left( \mu_0 - \mu_2 \right) + \mu_2 \right) + \left( 1 - Q_0 Q_1 \right) \left( Q_0 Q_1 \chi_0 + \left( 1 - Q_0 Q_1 \right) \chi_2 \right) }{(1 - Q_0 Q_1)^2}.
\end{align}

Here too, the second moment is seen to increase monotonically w.r.t both $\chi_0$ and $\chi_2$. As the precision itself is inversely related to the second moment, we may substitute the minimum $\chi_0$ and $\chi_2$ by their minimum values using Corollary \ref{cor:previousbound} to upper bound the precision. Furthermore, we replace $Q_0 Q_1 = p \in (0,1]$, $\mu_0/\mu_1 = r \in (0,\infty)$ and $\mu_2/\mu_1 = s \in (0,\infty)$ to find that the bound on the precision takes on the form
\begin{align}
	R \left[ \tau^{(1)} \right] &= \left( \frac{\chi^{(1)}}{\left( \mu^{(1)} \right)^2} - 1 \right)^{-1} \leq \frac{ (d-1) (1 + p(r-s) + s)^2}{ d + dpr(2+r) - (1-pr)^2 + (1-p)^2 s^2}.
\end{align}

One may analytically optimize the above expression w.r.t. $\{p,r,s\}$, and doing so returns the maximal value
\begin{align}
	R \left[ \tau^{(1)} \right] &\leq d.
\end{align}

Thus the precision of $\tau^{(1)}$, and thus $\tau$, the delay function of a single tick of a $d$-dimensional reset clock, is upper bound by $d$. By induction the result follows for all $d$. From Sec. \ref{sec:justifyR}, we can then conclude that the precision of the $n^{th}$ tick of the reset clock (and thus any clock, from Theorem \ref{theorem:resetclockaccuracy}) is upper bound by $n d$, completing the statement of the theorem.

\section{Quantum Clocks | Proof of Theorem \ref{thm:quantum lower bound}}\label{sec:Quantum Clocks appendix}

%\usepackage{mathtools}
%\usepackage{amsmath,amssymb,graphicx,color,braket,hyphenat,comment,amsthm,hyperref,bbm}\DeclareMathOperator{\Tr}{Tr}
%\usepackage{dsfont}

%\theoremstyle{definition}
%\newtheorem{result}{Result}
%\input{Commands}

%\begin{document}

%\section{Proof of Theorem \ref{thm:quantum lower bound}; | A bound for $R_1$ for the \wso~Clock}

%\begin{abstract}
  
%\end{abstract}

%\author{Mischa P. Woods}
%\affiliation{University College of London, Department of Physics \& Astronomy, London WC1E 6BT, United Kingdom}
%\author{(in discussions with R. Silva, Gilles, R. Renner and Sandra)}
%\maketitle

The purpose of this section is to prove Theorem \ref{thm:quantum lower bound} in the main text. For the sake of generality, the version of the theorem which we will derive, Theorem \ref{Thm:R lower quant bound}, is more general than Theorem \ref{thm:quantum lower bound}, which has not been stated in the main text since it requires some more technical definitions. After summarising the setup in the example section \ref{sec:example}, we will start by introducing the \wso~Clock and stating the necessary theorems about its dynamical properties from \cite{WSO16}. Then we will summarise the exact form of the potential to which Theorem \ref{Thm:R lower quant bound} applies. Then we will proceed by proving the necessary lemmas, and mathematical statements  before proving Theorem \ref{Thm:R lower quant bound} in Section \ref{Final Theorem}.
\subsection{Setup}\label{sec:setup}
In this section, we will briefly recall the important results from the example of a quantum clock in Section \ref{sec:example}, since they will form the starting point of the proof.

The probability of not getting a tick in time interval $[0,t]$, is $\tr[\rho(t)]$, where $\ket{\bar{\psi}_t}$ is given by Eq. \eqref{eq:psi bar t ex}, namely
\be 
\rho(t):=\ketbra{\bar{\psi}_t}{\bar{\psi}_t} =\me^{-\mi t H }\rho_0  \me^{\mi t H^\dag}, 
\ee
where $\rho_0=\ketbra{\psi_0}{\psi_0}$ is the initial state, and 
\be \label{H total clock plus V}
H=\hat H_\cl-\mi \hat V_\cl,
\ee 
describes the evolution of the system with $\hat H_\cl$ being the Hamiltonian and $\hat V_\cl$ the potential. 
The probability of getting the 1st tick in the infinitesimal time step $\delta>0$ is then 
\be\label{eq:def P(t)}
P(t)= 2\, \tr[\hat V_C \rho(t)] = \frac{d}{dt}\left(1- \tr[\rho(t)]\right)=-\frac{d}{dt}\tr[\rho(t)], \quad t\geq 0,
\ee  
where the last equality follows by taking the derivative inside the trace and noting that 
\be 
\frac{d}{dt} \rho(t)= -\mi  \hat H \rho(t) +\mi \rho(t) \hat H^\dag.
\ee 
The mean and standard deviation of the tick distribution is
\be 
\mu:= \int_0^{\infty} t P(t) dt, \quad  \underline{\sigma}:= \sqrt{\int_0^{\infty} (t-\mu)^2 P(t) dt}.
\ee
Since this example of a quantum clock is a reset clock, we are only interested in the precision of the 1st tick, ${R_1}$ since the precision of later ticks is determined solely from it; $R_j= j R_1$ (see Section \ref{Accuracy of Quantum Clocks}, or \app~\ref{app:delaysequence}, and Remark \ref{remark:delaysequence} for a detailed argument). Namely, we are interested in bounding the quantity
\be 
{R_1}:= \frac{\mu^2}{\underline{\sigma}^2}.
\ee 

\subsection{Overview of the \wso~Clock}\label{Overview of the quasi-ideal clock}
In this section, we will recall some of the definitions and the core theorem from \cite{WSO16} which we will need in this \app. This section will set some of the terminology and definitions needed for the proof of Theorem \ref{Thm:R lower quant bound}. 
\subsubsection{The generator of dynamics and potential function $\bar V_0$}\label{The generator of dynamics and potential function bar V0}

We start by introducing a generator of dynamics for the \wso~Clock.
\be \label{eq:qusi ideal clock gen}
\hat H=\hat V_d +\hat H_\cl, 
\ee 
The clock's free Hamiltonian, $\hat H_\cl$ is a truncated Harmonic Oscillator Hamiltonian. Namely, $\hat H_\cl=\sum_{n=0}^{d-1}\omega n \ketbra{n}{n}$. The free evolution of any initial clock state under this Hamiltonian has a period of $T_0=2\pi/\omega$, specifically, $\me^{-\mi T_0 \hat H_\cl}\rho_\cl\me^{\mi T_0\hat H_\cl} = \rho_\cl$ for all $\rho_\cl$. The clock interaction term $\hat V_d$, takes the form,
\be 
\hat{V}_d = \frac{d}{T_0} \sum_{k=0}^{d-1} {V}_d(k) \ketbra{\theta_k}{\theta_k},
\ee 
where the basis $\{\ket{\theta_k}\}_{k=0}^{d-1}$ is the Fourier transform of the energy eigenbasis $\{\ket{n}\}_{n=0}^{d-1}$, specifically,
\begin{align}\label{finitetimestates_main}
\ket{\theta_k} &= \frac{1}{\sqrt{d}} \sum_{n=0}^{d-1} e^{-i2\pi n k/d} \ket{E_n}.
\end{align}
 In other words, we choose $\{\ket{\theta_k}\}$ to be the time basis. It will also be useful later to have the range of $k$ extended to $\zz$.\footnote{Note that $k$ will belong to a set of only $d$ consecutive integers so that $\{\ket{\theta_k}\}$ form a complete orthonormal basis without repetition.} Extending the range of $k$ in Eq.  \eqref{finitetimestates_main} it follows that $\ket{\theta_k}=\ket{\theta_{k \textup{ mod. } d}}$ for $k\in\zz$. It is this basis which we identify with the time basis, i.e. we set $\ket{\theta_k}=\ket{t_k}$, $k\in\nno$.
The function ${V}_d:\rr\mapsto\rr\cup\hh^{-}$ (where $\hh^{-}:= \{ a_0+\mi b_0 : a_0\in\rr, b_0 < 0 \}$ denotes the lower-half complex plane) is defined by 
\be 
V_d(x)= \frac{2\pi}{d} V_0\left( \frac{2\pi}{d}x \right),
\ee 
where $V_0$ is any infinitely differentiable periodic function of period $2\pi$. We will only be interested in a specialised case, and from here on will choose $V_d$ to map $ \rr \mapsto -\mi\, \rr_{\geq 0}$. As one might imagine, it will be more convenient to work with a real function. We will also want to make explicit the normalisation. We will thus write the potential function in terms of an explicitly positive, normalised function by defining $\bar V_d:\rr\rightarrow \rrp$ and $\bar V_0:\rr\rightarrow \rrp$,
\ba
-\mi \delta \,\bar{V}_d(x)&:=V_d(x),\\
-\mi \delta \,\bar{V}_0(x)&:=V_0(x),\quad\quad \int_0^{2\pi} dx \bar{V}_0(x)=1, \quad\quad \bar V_0(x)>0\quad \forall x\in\rr,\label{eq:pod eqs def}
\ea 
where we will specialise to the case $\delta\geq 1$. The last constraint ensures that $\bar V_0$ has full support. Secondly, it will be useful to let $\bar V_0$ have a unique global maximum in the interval $x\in[0,2\pi]$ at $x=x_0$. Let the parameters $\tilde\epsilon_V, x_{vl}, x_{xr}$ be such that
\be\label{eq:tilde ep V def}
1-\tilde\epsilon_V=\int_{x_{vl}}^{x_{vr}} dx\, \bar{V}_0(x+x_0)
\ee
for some $-\pi\leq x_{vl}< 0< x_{vr}\leq \pi$, where, for simplicity, we set $x_{vl}=- x_{vr}$. Furthermore, we will find that it is convenient to set $x_0=\pi+x_{vr}+\pi\gamma$. These are all the properties of $\bar V_0$ which we will need for the lemmas in Sections \ref{Approximating R1}, \ref{Upper and lower bounds for trace rho}, \ref{Calculating Delta(0) and Delta(1) and an explicit lower bound on R1}, \ref{R1 to leading order}. In Section \ref{Showing that the limit requirements}, we will find explicit parametrisations of $x_{vr}$ and $\gamma$ in terms of $d$.

\subsubsection{Explicit form on the potential $\bar V_0$}\label{Explicit form on the potential bar V0}
In Section \ref{Showing that the limit requirements} we will need to introduce an explicit function for the potential $\bar V_0$ satisfying the criteria from Section \ref{The generator of dynamics and potential function bar V0}. The function will be written in terms of parameters $\delta \geq 1$, $n\geq 1$, $N\in\nnp$. Specific values for these parameters will be chosen in Section \ref{Showing that the limit requirements} in order to satisfy specific conditions. Here we summarise the final form of the potential $\bar V_0$ to which Theorem \ref{Thm:R lower quant bound} applies.

For $d\geq 3$, let
\be \label{eq:def of rapped pot 0}
\bar{V}_0(x):=\frac{1}{\delta d^2}+n A_0 \sum_{p=-\infty}^{\infty} V_B\left (n(x-x_0-2\pi p) \right),
\ee
where
\be 
A_0=\frac{1-2\pi/(\delta d^2)}{\int_{-\infty}^\infty dx V_B(x)},
\ee 
is a normalization parameter chosen such that
\be \label{eq:norr V0 requirement 0}
\int_0^{2\pi} \bar{V}_0(x)\,dx =1,
\ee
and
\be 
V_B(x):=\textup{sinc}^{2 N}(x):=\left( \frac{\sin(\pi x)}{\pi x}\right)^{2 N},
\ee 
where $V_B(0)=1$ and $N\in \nnp$. $N$ is given by
\be 
N=\left\lceil\frac{3-4\epsilon_5-\epsilon_9}{2(\epsilon_7-\epsilon_8-\epsilon_5)}\right\rceil,
\ee
with $\lceil\cdot\rceil$ the ceiling function, and where $\epsilon_7=\eta/4,$ $\epsilon_5=\epsilon_8=\eta/16,$ and $\epsilon_9=\eta/2,$ and the role of $\eta\in(0,1)$ becomes apparent in Theorem \ref{eq:lower quantum}. The parametre $\delta$ is given by
\be 
\delta=d^{\epsilon_5}.
\ee 
Let $C_0: \nnp\mapsto\rrp$ be any function which is only a function of $N$, (i.e. independent of $n$, $d$, and $k$), such that
\be \label{eq:main eq lemma dev bound 0}
\max_{x\in[0,2\pi]} \bigg{|} \frac{d^k}{dx^k}  \bar{V}_0(x)\bigg{|} \leq n^{k+1} C_0^{k+1},\quad \forall\, k\in\nno, \forall n\geq 1,
\ee
holds. We will prove later by explicit construction that such functions exist.
We now define $n$ to be
\be\label{eq:n in terms of d def 0} 
n= \frac{\ln(\pi \alpha_0\sigma^2)}{2\pi C_0\alpha_0\kappa} \frac{d^{1-\epsilon_5}}{\delta \sigma},
\ee
where $\kappa=0.792$ and $\alpha_0\in(0,1]$ depends on the mean energy of the initial  \wso~Clock state and is defined in Section \ref{The wso clock states}. Finally, the location of the peak of the potential, $x_0$, is given by $x_0=\pi+x_{vr}+\pi\gamma$, where
\ba \label{eq:x vr limit contraint 0}
x_{vr}&=d^{\epsilon_7} \frac{\sigma}{\pi\,d},\\
\gamma&= \frac{m-2}{d},
\ea 
where 
\be\label{Eq: bar m def 1 0}
m=
\begin{cases}
	2\lfloor \bar m \rfloor &\mbox{ if } d=2,4,6,\ldots,\\ \vspace{-0.3cm}\\
	2\lfloor \bar m \rfloor+1 &\mbox{ if } d=3,5,7,\ldots,
\end{cases}
\quad\quad 
\bar m =
\begin{cases}
	\frac{d^{\eta/2}}{2}\sigma +1  &\mbox{ if } d=2,4,6,\ldots,\\ \vspace{-0.3cm}\\
	\frac{d^{\eta/2}}{2}\sigma +\frac{1}{2}   &\mbox{ if } d=3,5,7,\ldots.
\end{cases}
\ee
With the above definitions, one can write the function $\bar V_0(x)$ solely in terms of parameters $d$, $\eta$, $\sigma$, and $\alpha_0$.

\subsubsection{The \wso~Clock states}\label{The wso clock states}

Recall that for the quasi-ideal clock, the initial state is pure $\rho_\cl=\ketbra{\Psi_\textup{nor}(k_0)}{\Psi_\textup{nor}(k_0)},$ where 
\begin{align}\label{gaussianclock}
\ket{\Psi_\textup{nor}(k_0)} &= \sum_{\mathclap{\substack{k\in \mathcal{S}_d(k_0)}}}\psi_\textup{nor}(k_0;k) \ket{\theta_k},\\
\psi_\textup{nor}(k_0;x) &= A e^{-\frac{\pi}{\sigma^2}(x-k_0)^2} e^{i2\pi n_0(x-k_0)/d}, \quad x\in\rr.\label{eq:psi qudi ideal def}
\end{align}
with $\sigma \in (0,d)$, $n_0 \in (0,d-1)$, $k_0\in\rr$, $A \in \rr^+$, and $\mathcal{S}_d(k_0)$ is the set of $d$ integers closest to $k_0$, defined as
\begin{align}\label{eq: mathcal S def}
\mathcal{S}_d(k_0) = \left\{ k \; : \; k\in \mathbb{Z} \text{   and  }  -\frac{d}{2} \leq  k_0-k < \frac{d}{2} \right\}.
\end{align}
$A$ is defined so that the state is normalised, namely
\be\label{eq:A normalised}
A=A(\sigma;k_0)=\frac{1}{\sqrt{\sum_{k\in \mathcal{S}_d(k_0)} \me^{-\frac{2\pi}{\sigma^2}(k-k_0)^2}}},
\ee
which is of order $A=\bo\left( \left(\frac{2}{\sigma^2}\right)^{1/4} \right)$ as $d\rightarrow \infty$ for all $k_0\in\rr$.
$\omega n_0$ is approximately the mean energy of the initial clock state. We use $n_0$ to define  $\alpha_0\in(0,1]$. It is defined in Def. 1 in \cite{WSO16} as 
\begin{align}\label{eq:alpha_0 def}
\alpha_0&=\left(\frac{2}{d-1}\right) \min\{n_0,(d-1)-n_0\}\\
&=1-\left|1-n_0\,\left(\frac{2}{d-1}\right)\right|\in(0,1].
\end{align}
Physically, it is a measure of the distance of the mean energy from the edge of the energy spectrum, and has its maximum value $\alpha =1$ when the the mean energy is in the midpoint of the spectrum, namely when $n_0=(d-1)/2$. We pick $\alpha_0$ to be a fixed constant thought this \doc. This means that $n_0$ takes on the value $n_0=\tilde n_0(d-1)$ with $\tilde n_0\in(0,1)$ fixed constant.

In the proof of Theorem \ref{thm:quantum lower bound}, we will need the core theorem in \cite{WSO16} (Theorem IX.1, page 35). For brevity, we will state it in a slightly reduced which is nevertheless adequate for our purposes. Intuitively, the theorem states that the evolution of \wso~Clock states under the generator $\hat H$ (Eq. \eqref{eq:qusi ideal clock gen}) mimics the evolution of the Idealised clock to a good approximation. In order to state it, we need to recall some definitions from \cite{WSO16}. Namely Definitions 8 and 9 on page 21 \cite{WSO16}:

Let $b$ be any real number satisfying
\be\label{eq:b def eq 0}
b\geq\; \sup_{k\in\nn^+}\left(2\max_{x\in[0,2\pi]} \left|  V_0^{(k-1)}(x) \right|\,\right)^{1/k},
\ee
where $ V_0^{(p)}(x)$ is the $p^\textup{th}$ derivative with respect to $x$ of $ V_0(x)$ and $V_0^{(0)}:=V_0$.
We can use $b$ to define $\mathcal{N}\in\nn^0$ as follows
\be \label{eq:mathcal N def}
\mathcal{N}=
	\left\lfloor \frac{\pi\alpha_0^2}{2\left(\bar \upsilon+\frac{d}{\sigma^2}\right)^2} \left(\frac{d}{\sigma}\right)^2 \right\rfloor,
\ee
where $\kappa=0.792$ and

\be \label{eq:upsilon def}
\bar\upsilon= \frac{\pi\alpha_0\kappa}{\ln\left(\pi\alpha_0\sigma^2  \right)}b.
\ee

The theorem states that for all $t\geq 0$, $k_0\in\rr$, 
\ba
\me^{-\mi t \left(\hat V_d+\hat H_\cl\right)} \ket{\Psi_\textup{nor}(k_0)} &= \ket{\bar\Psi_\textup{nor}(k_0+td/T_0, t d/T_0)}+ \ket{\varepsilon_\nu} \\
&= \sum_{k\in\mathcal{S}_d (k_0+td/T_0)} \me^{-\mi \int_{k-t d/T_0}^k dy  V_d(y)} \psi_\textup{nor}(k_0+td/T_0;\,k)\ket{\theta_k} + \ket{\varepsilon_\nu}, \label{eq:non free evollution of Quasi ideal}
\ea 
where $\ket{\varepsilon_\nu}=\ket{\varepsilon_\nu}(t,d)$. The important question, is how small can the error term be, namely how does
\be\label{eq:def varepsiln nu}
\varepsilon_\nu(t,d):=\| \ket{\varepsilon_\nu}(t,d)\|_2
\ee
scale with the properties of the \wso~Clock and generator. The theorem puts bounds on this scaling. Namely if 
\ba 
\begin{split}\label{eq:mathcal N contraints}
	\bar \upsilon &\geq 0, \quad \quad	\mathcal{N} \geq 8,
\end{split}
\ea
are both satisfied,  the theorem tells us that in the limits $d\rightarrow \infty$, $(0,d)\ni \sigma\rightarrow \infty$,
\be \label{eq:epsilon t d}
 \varepsilon_\nu(t,d)=\,          |t| \frac{d}{T_0}\!\left(\bo\left( \frac{\sigma^3}{\bar\upsilon \sigma^2/d+1}\right)^{1/2}\!\!+\bo\left(\frac{d^2}{\sigma^2}\right)\right) \exp\left(-\frac{\pi}{4}\frac{\alpha_0^2}{\left(\frac{d}{\sigma^2}+\bar\upsilon\right)^2} \left(\frac{d}{\sigma}\right)^2 \right)+\bo\left(|t|\frac{d^2}{\sigma^2}+1\right)\me^{-\frac{\pi}{4}\frac{d^2}{\sigma^2}}+\bo\left( \me^{-\frac{\pi}{2}\sigma^2} \right).
\ee 

\subsubsection{An expression for ${R_1}$ for the \wso~Clock}\label{sec:expressing R in terms of WSO}
In this section, we will show how to express $\tr[\rho(t)]$ from Section \ref{sec:setup} in terms of the \wso~Clock from \cite{WSO16} discussed in Sections \ref{The generator of dynamics and potential function bar V0}, \ref{The wso clock states}.
Recalling the initial state of the \wso~Clock, namely $\rho_0=\ketbra{\psi_0}{\psi_0}=\ketbra{\psi_\textup{nor}(k_0)}{\psi_\textup{nor}(k_0)}$. From Eq. \eqref{eq:non free evollution of Quasi ideal} if follows
\ba\label{eq:tr rho t from WSO paper}
\tr[\rho(t)]=&\sum_{k\in\mathcal{S}_d (k_0+td/T_0)} \big{|} \bra{\theta_k}\me^{-\mi t \hat H}\ket{\psi_\textup{nor}(k_0)} \big{|}^2\\
=& \sum_{k\in\mathcal{S}_d (k_0+td/T_0)} \left|  \me^{-\mi \int_{k-t d/T_0}^k dy V_d(y)} \psi_\textup{nor}(k_0;k-t d/T_0) +\braket{\theta_k|\varepsilon_\nu}(t,d) \right|^2,
\ea
where we have used that $\psi_\textup{nor}(k_0+t d/T_0;k)=\psi_\textup{nor}(k_0;k-t d/T_0)$ which follows directly from Eq. \eqref{eq:psi qudi ideal def}. Due to the normalisation of $\ket{\psi_\textup{nor}(k_0)}$,  we have
\be \label{eq:normalisation}
\sum_{k\in\mathcal{S}_d (k_0+td/T_0)}\left| \psi_\textup{nor}(k_0;k-t d/T_0)\right|^2=1.
\ee

\subsection{Approximating ${R_1}$}\label{Approximating R1}
We will now proceed to lower bound ${R_1}$ in terms of dimension and mean clock energy for the \wso~Clock. Specifically in this section, up to additive error terms, the aim will be to express $R_1$ in terms of the 1st and 2nd moments of $\tr[\rho(t)]$ with integration range limited to one period of the free clock Hamiltonian, namely to the interval $[0,T_0]$.

\subsubsection{A more useful expression for ${R_1}$} 
\begin{lemma}\label{lem:1st up bound on R}
	Let $\Delta(0)$ and $\Delta(1)$ satisfy
\ba
\Delta^2(0)&\leq \left(\int_0^{T_0} dt \,\tr [\rho(t)]\right)^2,\quad\quad
\Delta(1)\geq\int_0^{T_0}dt\,t\,\tr [\rho(t)]\geq 0.%\quad \Delta(1)\geq 0.
\ea 
%If 
%\be\label{eq:dev  non condition}
%\epsilon_1+\epsilon_2+2\Delta(1)\geq 0,
%\ee
then 
\ba\label{eq:R lo bound 2} 
{R_1}&\geq \frac{\Delta^2(0)+\epsilon_2}{2\Delta(1)-\Delta^2(0)+\epsilon_1},%\\
%&= \frac{\Delta^2(0)+T_0^2 \epsilon_2}{2\Delta(1)-\Delta^2(0)-T_0^2 \epsilon_2- \big(2\mu T_0+(T_0-\mu)^2\big)\epsilon_1},
\ea
where the epsilon terms $\epsilon_2$, $\epsilon_1$ are defined by
\be \label{eq:epsilon 2}
\epsilon_2:= 2\mu_0\mu_\epsilon+\mu_\epsilon^2, 
\ee
with
\ba \label{eqdef:mu0 and mu epsilon}
\mu_0&:=\int_0^{T_0} dt\, \tr[\rho(t)],\\ 
\mu_\epsilon&:=-\lim_{t\rightarrow \infty} t\,\tr[\rho(t)] %\tr[\me^{-\mi T_0 \hat H}\rho \me^{\mi T_0\hat H}]
+\int_{T_0}^\infty dt\, \tr[\rho(t)],
\ea
and 
\be \label{eq:epsilon 1}
\epsilon_1:=-2\mu_0\mu_\epsilon-\mu_\epsilon^2 +2\int_{T_0}^\infty dt \, t\, \tr[\rho(t)] - \lim_{t\rightarrow \infty} \left( 2\mu t+(t-\mu)^2 \right)\tr[\rho(t)].
\ee
\end{lemma}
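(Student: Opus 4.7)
The strategy is to convert the moments $\mu = \int_0^\infty t\,P(t)\,dt$ and $\int_0^\infty t^2 P(t)\,dt = \underline{\sigma}^2+\mu^2$ from integrals against the tick density $P(t)$ into integrals against $\tr[\rho(t)]$, using the identity $P(t)=-\frac{d}{dt}\tr[\rho(t)]$ from Eq.~\eqref{eq:def P(t)} together with integration by parts. Once the moments are written in this form, the integration range $[0,\infty)$ will be split into $[0,T_0]$, which gives the ``good'' terms involving $\mu_0$ and $\int_0^{T_0}t\,\tr[\rho(t)]\,dt$, and $[T_0,\infty)$, which will be absorbed into the error terms $\mu_\epsilon$, $\epsilon_1$, $\epsilon_2$.

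More concretely, the first step is to integrate $\mu=-\int_0^\infty t\,\frac{d}{dt}\tr[\rho(t)]\,dt$ by parts; since $\mu$ is finite the boundary contribution at infinity is $-\lim_{t\to\infty}t\,\tr[\rho(t)]$, and the remaining integral splits as $\mu_0+\int_{T_0}^\infty\tr[\rho(t)]\,dt$. Comparing with Eq.~\eqref{eqdef:mu0 and mu epsilon} gives $\mu=\mu_0+\mu_\epsilon$, and squaring produces $\mu^2=\mu_0^2+\epsilon_2$ with $\epsilon_2$ exactly as in Eq.~\eqref{eq:epsilon 2}. Using the hypothesis $\Delta^2(0)\leq \mu_0^2$, this yields $\mu^2\geq\Delta^2(0)+\epsilon_2$, which is the numerator of the claimed bound.

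The second step is to repeat the same manoeuvre for $\int_0^\infty t^2 P(t)\,dt$. Integration by parts, combined with the initial condition $\tr[\rho(0)]=1$, gives
\begin{align*}
\underline{\sigma}^2+\mu^2 \;=\; 2\int_0^\infty t\,\tr[\rho(t)]\,dt \;-\; \lim_{t\to\infty}t^2\tr[\rho(t)].
\end{align*}
Splitting the integral at $T_0$ and using $\int_0^{T_0}t\,\tr[\rho(t)]\,dt\leq\Delta(1)$, together with the identity $t^2=(t-\mu)^2+2\mu t-\mu^2$ to rewrite the boundary term using the definition of $\epsilon_1$ in Eq.~\eqref{eq:epsilon 1}, one obtains (after collecting the $\mu^2=\mu_0^2+\epsilon_2$ identity)
\begin{align*}
\underline{\sigma}^2 \;\leq\; 2\Delta(1) - \mu_0^2 + \epsilon_1 \;\leq\; 2\Delta(1)-\Delta^2(0)+\epsilon_1,
\end{align*}
where the final step again uses $\Delta^2(0)\leq \mu_0^2$. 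Dividing the two bounds then yields ${R_1}\geq (\Delta^2(0)+\epsilon_2)/(2\Delta(1)-\Delta^2(0)+\epsilon_1)$, provided the denominator is positive (which will be ensured later by the explicit choice of $\hat V_C$).

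The main technical obstacle is the careful accounting of the two boundary terms at infinity, namely $\lim_{t\to\infty}t\,\tr[\rho(t)]$ in the first moment and $\lim_{t\to\infty}t^2\tr[\rho(t)]$ in the second; these need not vanish a priori and must be tracked through the algebra so that they cleanly combine into $\mu_\epsilon$, $\epsilon_1$, and $\epsilon_2$ as defined in the statement. The rest is routine integration by parts and bookkeeping.
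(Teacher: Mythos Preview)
Your plan is correct and matches the paper's proof almost exactly: integrate by parts against $P(t)=-\tfrac{d}{dt}\tr[\rho(t)]$, split the $t$-integrals at $T_0$, and collect the remainders into $\mu_\epsilon,\epsilon_1,\epsilon_2$. The only cosmetic difference is that the paper integrates $\int_0^\infty (t-\mu)^2 P(t)\,dt$ by parts directly (the boundary term at $t=0$ then contributes $\mu^2$, and the combination $(t-\mu)^2+2\mu t$ appears naturally in the $t\to\infty$ boundary, reproducing the stated $\epsilon_1$ without any rewriting), whereas you go via $\int t^2 P$; your identity $\int t^2 P=\underline\sigma^2+\mu^2$ tacitly uses $\int_0^\infty P=1$, i.e.\ $\lim_{t\to\infty}\tr[\rho(t)]=0$, so if you follow your route literally you will pick up an extra $\mu^2\lim_{t\to\infty}\tr[\rho(t)]$ when matching to the stated $\epsilon_1$ --- harmless here (that limit is shown to vanish in the very next lemma), but the paper's choice of integrating $(t-\mu)^2$ directly sidesteps this bookkeeping.
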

%\begin{remark}
%Note that due to the bounds on $\epsilon_1$ and $\epsilon_2$ that we will derive in Lemma \ref{lem: ep1 and ep2 bounds} in conjunction with the bound on $a$ in Corollary \ref{cor:a parametrisation for a} and the choice of potential $V_0$ that we will use in this manuscript (see Section \ref{Explicit form on the potential bar V0}), it will follow that Eq. \eqref{eq:dev  non condition} will always be satisfied for sufficiently large $d$.
%\end{remark}
\begin{proof}
By integration by parts,
\ba
\mu&= \int_0^{\infty} t P(t) dt=-\left(t\, \tr[\rho(t)]\right)\Big{|}_{0}^{\infty} + \int_0^{\infty} dt\, \tr[\rho(t)] \\
&= \mu_0+\mu_\epsilon.\label{mu simple}
\ea%\end{align}
Similarly,
\ba
\underline{\sigma}^2 &= \int_0^\infty (t-\mu)^2 P(t)dt = -\left( (t-\mu)^2\, \tr[\rho(t)]\right)\Big{|}_{0}^{\infty} + 2\int_0^{\infty} dt\, (t-\mu)\tr[\rho(t)] \\
&= -\lim_{t\rightarrow \infty}(t-\mu)^2 \tr [\rho(t)]+\mu^2+2\int_0^{\infty} dt\,t\,\tr[\rho(t)] -2\mu\big(\mu+\lim_{t\rightarrow \infty} t\,\tr [\rho(t)]\big)\\
&=-\mu^2+ 2\int_0^{\infty}dt\,t\,\tr [\rho(t)]-\lim_{t\rightarrow \infty} \big(2\mu t+(t-\mu)^2\big)\tr[\rho(t)]\\
&=-\mu_0^2+2\int_0^{T_0} dt \,t\,\tr[\rho(t)] +\epsilon_1\leq -\mu_0^2+2\Delta(1) +\epsilon_1\\
& \leq -\Delta^2(0)+2\Delta(1) +\epsilon_1 \label{sigmaSqrt}
\ea
From Eqs. \eqref{sigmaSqrt}, \eqref{mu simple}, we find
\ba
{R_1}&= \frac{\mu^2}{\underline\sigma^2}\geq \frac{\mu_0^2+2\mu_0\mu_\epsilon+\mu_\epsilon^2}{- \Delta^2(0)+2\Delta(1) +\epsilon_1}\\
&\geq \frac{\Delta^2(0)+2\mu_0\mu_\epsilon+\mu_\epsilon^2}{ -\Delta^2(0)+2\Delta(1) +\epsilon_1}.\label{eq:R lo bound inter}
\ea 
%Noting that 
%\be 
%\frac{d}{dx}\frac{x+a}{b-x}=\frac{a+b}{(b-x)^2},
%\ee 
%is non-negative for $a+b\geq 0$, with the identification $a=2\mu_0\mu_\epsilon+\mu_\epsilon^2$; $b=2\Delta(1)+\epsilon_1$, we have that if \be%\label{eq:dev  non condition}
%2\mu_0\mu_\epsilon+\mu_\epsilon^2+\epsilon_1+2\Delta(1)\geq 0,
%\ee
%then
%\ba%\label{eq:R lo bound 2} 
%{R_1}&\geq \frac{\Delta^2(0)+2\mu_0\mu_\epsilon+\mu_\epsilon^2}{2\Delta(1)-\Delta^2(0)+\epsilon_1}.%\\
%&= \frac{\Delta^2(0)+T_0^2 \epsilon_2}{2\Delta(1)-\Delta^2(0)-T_0^2 \epsilon_2- \big(2\mu T_0+(T_0-\mu)^2\big)\epsilon_1},
%\ea
%and made the definition
%\be 
%\epsilon_2:= \epsilon_1^2-2\frac{\Delta(0)}{T_0} \epsilon_1 d
%\ee
%where we have used Eq. \ref{mu simple} and Def. Eq. \eqref{def:ep1}.
\end{proof}
We will now bound $\epsilon_1$ and $\epsilon_2$ appearing in Lemma \ref{lem:1st up bound on R}. These terms are negligible in the large $d$ limit and originate from the tails of the integrals in the definition of $\mu$ and $\underline{\sigma}$.
\begin{lemma}\label{lem: ep1 and ep2 bounds} $\epsilon_1$ and $\epsilon_2$, defined in Eqs. \eqref{eq:epsilon 1} and \eqref{eq:epsilon 2} respectively, are bounded by
	\ba
	|\epsilon_1|&\leq \left( a^2\left(\me^{-\delta}+\varepsilon_\nu(T_0,d)\right)^2  +2(2a T_0+a^2)\right)\left(\me^{-\delta}+\varepsilon_\nu(T_0,d)\right)^2,\\
	0\leq\epsilon_2&\leq \left(2T_0 a+a^2 \left(\me^{-\delta}+\varepsilon_\nu(T_0,d)\right)^2\right) \left(\me^{-\delta}+\varepsilon_\nu(T_0,d)\right)^2 ,
	\ea 
	where $a$ is any parameter satisfying
	\be \label{eqdef: a}
	a\geq \frac{T_0}{4\pi\,\delta} \left(\min_{x\in[0,2\pi]} \bar{V}_0(x) \right)^{-1}.
	\ee 
	and $\varepsilon_\nu(T_0,d)$ is defined by Eq. \eqref{eq:def varepsiln nu}. Note that since we have chosen $\bar{V}_0$ to have full support, $a$ is finite.
\end{lemma}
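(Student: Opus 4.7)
The plan is to bound $\tr[\rho(t)]$ by a single exponentially decaying tail for all $t\geq 0$, sharpen this bound at the one‑period mark $t=T_0$ using the WSO theorem of \cite{WSO16} recalled in Eq. \eqref{eq:non free evollution of Quasi ideal}, and then evaluate the tail integrals and infinity limits appearing in $\mu_\epsilon$, $\epsilon_1$, $\epsilon_2$ by elementary calculus. The nontrivial content of the lemma is entirely in step two; steps one and three are essentially assembly.

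First, I will establish universal exponential decay. From Eq. \eqref{eq:def P(t)} one has $\frac{d}{dt}\tr[\rho(t)]=-2\tr[\hat V_\cl\rho(t)]\leq -2V_{\min}\tr[\rho(t)]$, where $V_{\min}$ is the smallest eigenvalue of $\hat V_\cl$. Unpacking the definition of $\hat V_d$ from Section \ref{The generator of dynamics and potential function bar V0} and using $H=\hat H_\cl-\mi\hat V_\cl=\hat V_d+\hat H_\cl$, one finds $\hat V_\cl=\frac{2\pi\delta}{T_0}\sum_k\bar V_0(2\pi k/d)\ketbra{\theta_k}{\theta_k}$, so $V_{\min}=\frac{2\pi\delta}{T_0}\min_x\bar V_0(x)$. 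The choice of $a$ in \eqref{eqdef: a} is then precisely such that $2V_{\min}\geq 1/a$, and the Gr\"onwall‑type ODE inequality yields $\tr[\rho(t_2)]\leq \tr[\rho(t_1)]\me^{-(t_2-t_1)/a}$ for $t_2\geq t_1\geq 0$. In particular $\tr[\rho(t)]\leq\me^{-t/a}$ for all $t\geq 0$, and since $\bar V_0>0$ has full support on $[0,2\pi]$, $a$ is finite.

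Second, I will apply \eqref{eq:non free evollution of Quasi ideal} at $t=T_0$ to improve this bound. Because $td/T_0=d$ at $t=T_0$, the phase integral $\mi\int_{k-d}^{k}V_d(y)dy=\delta\int_{k-d}^{k}\bar V_d(y)dy$ collapses, via the substitution $u=2\pi y/d$ and the normalization $\int_0^{2\pi}\bar V_0(u)du=1$, to exactly $\delta$ independently of $k$. Combined with the periodicity identity $\ket{\Psi_{\mathrm{nor}}(k_0+d)}=\ket{\Psi_{\mathrm{nor}}(k_0)}$, which follows from $\ket{\theta_{k+d}}=\ket{\theta_k}$ and the explicit form \eqref{eq:psi qudi ideal def} of $\psi_{\mathrm{nor}}$, this gives $\me^{-\mi T_0 H}\ket{\psi_0}=\me^{-\delta}\ket{\psi_0}+\ket{\varepsilon_\nu}$ with $\|\,\ket{\varepsilon_\nu}\,\|_2\leq \varepsilon_\nu(T_0,d)$. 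Hence $\tr[\rho(T_0)]\leq(\me^{-\delta}+\varepsilon_\nu(T_0,d))^2$, and composing with step one yields $\tr[\rho(t)]\leq(\me^{-\delta}+\varepsilon_\nu(T_0,d))^2\me^{-(t-T_0)/a}$ for all $t\geq T_0$.

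Third, assembly is direct. Exponential decay in $t$ kills every limit-at-infinity term appearing in $\mu_\epsilon$ and $\epsilon_1$, since the prefactors $t$, $t^2$, $(t-\mu)^2$ are only polynomial. Elementary integration gives $\int_{T_0}^{\infty}\tr[\rho(t)]dt\leq a(\me^{-\delta}+\varepsilon_\nu)^2$ and $\int_{T_0}^{\infty}t\,\tr[\rho(t)]dt\leq a(a+T_0)(\me^{-\delta}+\varepsilon_\nu)^2$; the trivial inequality $\mu_0\leq T_0$ (from $\tr[\rho(t)]\leq 1$) and the sign information $\mu_\epsilon\geq 0$ are read directly off \eqref{eqdef:mu0 and mu epsilon}. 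Substituting into \eqref{eq:epsilon 2} produces the stated bound on $\epsilon_2$ at once. For $\epsilon_1$ one takes absolute values and separately controls the positive tail $2\int_{T_0}^\infty t\,\tr[\rho(t)]dt\leq(2a^2+2aT_0)(\me^{-\delta}+\varepsilon_\nu)^2$ and the sign‑indefinite $-2\mu_0\mu_\epsilon-\mu_\epsilon^2$ of magnitude at most $2T_0 a(\me^{-\delta}+\varepsilon_\nu)^2+a^2(\me^{-\delta}+\varepsilon_\nu)^4$; summing gives precisely the quoted $(a^2(\me^{-\delta}+\varepsilon_\nu)^2+2(2aT_0+a^2))(\me^{-\delta}+\varepsilon_\nu)^2$. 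The only genuine care‑point is step two, where one must verify that the WSO theorem produces the factor $\me^{-\delta}$ \emph{exactly}, not merely as an upper bound, which relies on the sign structure $V_d=-\mi\delta\bar V_d$ combining correctly with the period‑$2\pi$ normalization of $\bar V_0$; everything else is bookkeeping.
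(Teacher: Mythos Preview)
Your proof is correct and follows essentially the same route as the paper: both hinge on the spectral lower bound $\hat V_\cl\geq \frac{1}{2a}\id$ to control the tail, the WSO theorem at $t=T_0$ to obtain $\tr[\rho(T_0)]\leq(\me^{-\delta}+\varepsilon_\nu)^2$, and then straightforward assembly with $\mu_0\leq T_0$. The only cosmetic difference is that the paper first invokes the Golden--Thompson inequality separately to dispose of the $t\to\infty$ limits and then uses $\tr[\rho(t)]\leq a\,P(t)$ together with integration by parts for the tail integrals, whereas you recognise $\tr[\rho(t)]\leq a\,P(t)$ as a Gr\"onwall differential inequality and integrate the resulting exponential bound directly; this is slightly more economical but yields identical estimates.
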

\begin{proof}
	We will start by bounding $\tr[\rho(t)]$. We find
	\ba 
	\tr[\rho(t)]&=  \tr[\me^{-\mi t\hat H}\rho_0\me^{\mi t\hat H^\dag}] \leq \| \me^{-\mi t\hat H}\me^{\mi t\hat H^\dag}\|_\infty\leq \| \me^{-\mi t\hat H}\|_\infty\|\me^{\mi t\hat H^\dag}\|_\infty,
	\ea 
	where $\|\cdot\|_\infty$ is the Schatten $p=\infty$ norm also called the operator norm. Recalling $\hat H=\hat H_C-\mi \hat V_C$, and applying the Golden-Thompson inequality, which holds for the operator norm \cite{Bhatia} it we find
	\ba 
	\tr[\rho(t)]&\leq \| \me^{-\mi t\hat H_C}\|_\infty\|\me^{\mi t\hat H_C}\|_\infty \|\me^{-t \hat V_C}\|_\infty^2 = \|\me^{-t \hat V_C}\|_\infty^2 =\exp\left[-\delta\, t \frac{4\pi}{T_0}\left( \min_{k\in \{0,1,2,\ldots,d-1\}} \bar{V}_0\left( \frac{2\pi}{d}k \right)\right)\right],
	\ea
	where we have used $ \hat V_C=\mi \hat V_d:=\delta \frac{d}{T_0}\sum_{k=0}^{d-1}\bar V_d(k)\ketbra{\theta_k}{\theta_k}$, where $\frac{d}{T_0} \bar V_d(k)= \frac{2\pi}{T_0} \bar{V}_0\left( \frac{2\pi}{d}k \right)$. Now, in this manuscript, $\bar{V}_0$ is positive (Eq. \eqref{eq:pod eqs def}). As such, for all $d\in\nnp$ 
	\be 
	\min_{k\in \{0,1,2,\ldots,d-1\}} \bar{V}_0\left( \frac{2\pi}{d}k \right)>0
	\ee 
	and thus
	\be \label{eq:lim powers rho}
	\lim_{t\rightarrow \infty} t^n \tr[\rho(t)]=0
	\ee 
	for all $n\geq 0$ and $d\in\nnp$.
	Our next task will be to bound the two integrals $\int_{T_0}^\infty dt\, \tr[\rho(t)]$ and $\int_{T_0}^\infty dt\,t\, \tr[\rho(t)]$. We will start by finding a $t$ independent parameter $a>0$ such that $\tr[\rho(t)]\leq a P(t)$ for all $t\geq 0$, where $P(t)=-\frac{d}{dt}\tr[\rho(t) ]$ was defined in Eq. \eqref{eq:def P(t)}.
	\ba 
	\frac{\tr[\rho(t)]}{P(t)}&=	%-\frac{\tr[\rho(t)]}{\frac{d}{dt}\tr[\rho(t)]}=
	\frac{\tr[\rho(t)]}{\mi\tr[\hat H\rho(t)]-\mi\tr[\hat H^\dagger\rho(t)]}=\frac{\tr[\rho(t)]}{\mi\tr[(\hat H_C-\mi\hat V_C)\rho(t)]-\mi\tr[(\hat H_C+\mi\hat V_C)\rho(t)]}=\frac{1}{2\tr [\hat V_C \bar{\rho}(t)]}\\
	&\leq \frac{1}{2}\left( \inf_{\ket{\psi}\in\mathcal{S}_p} \tr [\hat V_C \ketbra{\psi}{\psi}] \right)^{-1},\quad \forall\, t\geq 0
	\ea 
	where $\bar{\rho}(t):=\rho(t)/\tr[\rho(t)]$ is a normalised rank one density matrix for all $t\geq 0$ and $\mathcal{S}_p$ is the set of normalised pure quantum states (rank one density matrices). Crucially, note that $P(t)>0$ since $\tr [\hat V_C \bar{\rho}(t)]>0$ because $\hat V_C$ is positive-definite since $\bar{V}_0$ has full support. Hence taking the trace in the orthonormal basis $\{\ket{\theta_k}\}_{k=0}^{d-1}$,
	\ba 
	\frac{\tr[\rho(t)]}{P(t)}&\leq \frac{1}{2}\left( \min_{k\in\{0,1,2,\ldots,d-1\}} \bra{\theta_k}\hat V_C \ket{\theta_k} \right)^{-1}= \frac{T_0}{4\pi\,\delta} \left(\min_{k\in\{0,1,2,\ldots,d-1\}} \bar{V}_0\left(2\pi k/d\right) \right)^{-1}\\
	&\leq \frac{T_0}{4\pi\,\delta} \left(\min_{x\in[0,2\pi]} \bar{V}_0(x) \right)^{-1} \leq a ,\quad \forall\, t\geq 0. \label{eq:bound  tr/P}
	\ea 
	Using Eq. \eqref{eq:bound  tr/P} can now bound the first integral: 
	\be \label{eq:int rho}
	\int_{T_0}^\infty dt\, \tr [\rho(t)]\leq a \int_{T_0}^\infty dt P(t)=-a \Big[ \tr [\rho(t)]\Big]_{T_0}^\infty=a\left( \tr [\rho(T_0)]-\lim_{t\rightarrow \infty}\tr [\rho(t)] \right)=a\,\tr [\rho(T_0)].
	\ee 
	For the second integral, we will additionally have to integrate by parts and recall $T_0>0$\,:
	\ba \label{eq:int 1 rho}
	\int_{T_0}^\infty dt\,t\, \tr [\rho(t)]&\leq -a\int_{T_0}^\infty dt\,t \frac{d}{dt}\tr [\rho(t)]=-a \left( \Big[t\,\tr[\rho(t)]\Big]_{T_0}^\infty-\int_{T_0}^\infty dt\, \tr [\rho(t)] \right)=a T_0 \,\tr[\rho(T_0)]+ a\int_{T_0}^\infty dt\, \tr [\rho(t)]\\
	&\leq(a T_0+a^2)\tr[\rho(T_0)],
	\ea 
	where to achieve the last line, we have used Eq. \eqref{eq:int rho}. In order for Eqs. \eqref{eq:int rho} and Eq. \eqref{eq:int 1 rho} to be useful, we need to bound $\tr [\rho(T_0)]$. This can be achieved using Eq. \eqref{eq:tr rho t from WSO paper}. One finds	
	\be \label{def:ep1}
	\tr[\rho(T_0)]= |\me^{-\delta} + \braket{\theta_k|\varepsilon_\nu}(T_0,d) |^2\leq \left(\me ^{-\delta} +\varepsilon_\nu(T_0,d)\right)^2.
	\ee
	Finally, recalling that $-\frac{d}{dt}\tr[\rho(t) ]>0$, we have
	\be \label{eq:int mu}
	\mu_0=\int_0^{T_0} dt \tr [\rho (t)]\leq T_0 \max_{t\in[0,T_0]}  \tr [\rho (t)]=T_0 \tr [\rho(0)]=T_0.
	\ee 
	Thus using Eqs. \eqref{eq:lim powers rho}, \eqref{eq:int rho}, \eqref{eq:int 1 rho}, \eqref{def:ep1} and \eqref{eq:int mu}, it follows that $\epsilon_1$ defined in Eq. \eqref{eq:epsilon 1} 
	is bounded by
	\be
	|\epsilon_1|\leq 2T_0\mu_\epsilon+\mu_\epsilon^2 +2(aT_0+a^2) \left(\me ^{-\delta} +\varepsilon_\nu(T_0,d)\right)^2.
	\ee
	Similarly, we find that $\mu_\epsilon$ defined in Eq. \eqref{eqdef:mu0 and mu epsilon} is bounded by
	\be 
	\mu_\epsilon\leq a \left(\me ^{-\delta} +\varepsilon_\nu(T_0,d)\right)^2.
	\ee 
	Thus using the definition of $\epsilon_2$ in Eq. \eqref{eq:epsilon 2}, we complete the proof.
\end{proof}

\subsection{Upper and lower bounds for $\tr[\rho(t)]$}\label{Upper and lower bounds for trace rho}
\begin{lemma}\label{lem:1st up lowe on trace rho t}
	\ba \label{eq:trace rho bound}
	\left(\sum_{k\in\mathcal{S}_d(k_0+td/T_0)} \Delta_k\right) -\epsilon_0\leq \tr[\rho(t)] \leq  \left(\sum_{k\in\mathcal{S}_d (k_0+td/T_0)} \Delta_k\right) +\epsilon_0,
	\ea
	where
	\ba
	\Delta_k:=   \me^{-2\delta \int_{k-t d/T_0}^k dy {\bar V_d}(y)}\left| \psi_\textup{nor}(k_0;k-t d/T_0)   \right|^2, \quad \epsilon_0=\epsilon_0(t,d):= \varepsilon_\nu(t,d)\big(\varepsilon_\nu(t,d) +2\big)d,
	\ea
	and $\varepsilon_\nu(t,d)$ is defined by Eq. \eqref{eq:def varepsiln nu}. %eq/def/thm ppp in 

\end{lemma}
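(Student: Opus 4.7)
The starting point is the identity \eqref{eq:tr rho t from WSO paper}, which expresses $\tr[\rho(t)]$ as a sum of squared absolute values of terms of the form $a_k + b_k$, where
\[
a_k := \me^{-\mi \int_{k-td/T_0}^k dy\, V_d(y)}\,\psi_\textup{nor}(k_0;k-td/T_0), \qquad b_k := \braket{\theta_k|\varepsilon_\nu}(t,d).
\]
Recalling from Section \ref{The generator of dynamics and potential function bar V0} that $V_d(x)=-\mi\delta\,\bar V_d(x)$ with $\bar V_d\geq 0$, the phase in $a_k$ is real-exponential: $|a_k|^2 = \me^{-2\delta\int_{k-td/T_0}^k dy\,\bar V_d(y)}\,|\psi_\textup{nor}(k_0;k-td/T_0)|^2 = \Delta_k$. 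So the plan is simply to expand $|a_k+b_k|^2 = |a_k|^2 + 2\,\Re(a_k^\ast b_k) + |b_k|^2$ and to control the cross term and the $|b_k|^2$-term uniformly in $t$.

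The key bounds I would use are: $|a_k|\leq |\psi_\textup{nor}(k_0;k-td/T_0)|\leq 1$ (since $\bar V_d\geq 0$ and the wavefunction is normalised, each component has modulus at most one), and $|b_k|\leq \|\ket{\varepsilon_\nu}(t,d)\|_2 = \varepsilon_\nu(t,d)$ for every $k$, together with the fact that the index set $\mathcal{S}_d(k_0+td/T_0)$ has cardinality $d$. With these crude estimates,
\[
\Bigl|\sum_{k\in\mathcal{S}_d(k_0+td/T_0)} 2\,\Re(a_k^\ast b_k)\Bigr| \leq 2 d\, \varepsilon_\nu(t,d), \qquad \sum_{k\in\mathcal{S}_d(k_0+td/T_0)} |b_k|^2 \leq d\,\varepsilon_\nu(t,d)^2,
\]
so the total deviation of $\tr[\rho(t)]$ from $\sum_k \Delta_k$ is at most $d\,\varepsilon_\nu(t,d)(\varepsilon_\nu(t,d)+2) = \epsilon_0(t,d)$, giving both inequalities in \eqref{eq:trace rho bound} at once.

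There is no real obstacle here; the statement is essentially an expand-and-triangle-inequality argument once one notices the purely imaginary character of $V_d$. The only point worth care is the uniform-in-$k$ control of $|b_k|$ by the full $\ell_2$-norm $\varepsilon_\nu(t,d)$ of $\ket{\varepsilon_\nu}(t,d)$ — one could sharpen this by Cauchy–Schwarz to avoid one power of $d$, but the lemma as stated already allows the coarser estimate, which is the one I would use since it keeps the error bound free of cross-dependencies on $\bar V_0$ and matches the $d$-dependent prefactor appearing in subsequent applications of the lemma.
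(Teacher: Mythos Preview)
Your argument is correct and follows exactly the same route as the paper's proof: expand $|a_k+b_k|^2=|a_k|^2+|b_k|^2+2\,\mathfrak{Re}(a_k^\ast b_k)$, identify $|a_k|^2=\Delta_k$ via $V_d=-\mi\delta\bar V_d$, and bound each $|b_k|\leq\varepsilon_\nu(t,d)$ and $|a_k|\leq 1$ before summing over the $d$ indices. Your closing remark that Cauchy--Schwarz would save a factor of $d$ is also apt, though the cruder estimate is precisely what the lemma records.
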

\begin{proof}
	Here we will use the core theorem in \cite{WSO16} discussed in Section \ref{sec:expressing R in terms of WSO}.
	Using Eq. \eqref{eq:tr rho t from WSO paper}, Eq. \eqref{eq:trace rho bound} follows from noting $|a+\epsilon|^2= |a|^2+|\epsilon|^2+2\,\mathfrak{Re}(a\,\epsilon)$, $\mathfrak{Re}(a)\leq |a|$ for $a,\epsilon\in\cc$, and $|\braket{\theta_k|\varepsilon_\nu}(t,d)|\leq \varepsilon_\nu(t,d)$ for all $k$.
\end{proof}
We will now calculate time dependent upper and lower bounds for $\sum_{k\in\mathcal{S}_d (k_0+td/T_0)} \Delta_k$, where $\Delta_k$ is defined in Lemma \ref{lem:1st up lowe on trace rho t}.
\begin{lemma}\label{lem: t bouns for DeltaC}
	Let $k_0=0$. There exists $\Delta_L\geq 0$, $\Delta_R\geq 0$, $\Delta_C\geq 0$ such that 
	\ba 
	\sum_{k\in\mathcal{S}_d (k_0+td/T_0)} \Delta_k= \Delta_L+\Delta_C+\Delta_R,
	\ea
	%A consistent choice of $\Delta_L$, $\Delta_R$, $\Delta_C$ is
%	\ba 
%		\Delta_L&:= \sum_{k= -d/2+\bar k(t)}^{\bar k(t)-\gamma d/2}  \me^{-2\delta \int_{k}^{k+td/T_0} dy V_d(y)}\left| \psi_\textup{nor}(k_0;k)   \right|^2\\
%		\Delta_R&:= \sum_{k=\bar k(t) +\gamma d/2}^{\bar k(t)+ d/2}  \me^{-2\delta \int_{k}^{k+td/T_0} dy V_d(y)}\left| \psi_\textup{nor}(k_0;k)   \right|^2,\\
%		\Delta_C&:= \sum_{k=\bar k(t) -\gamma d/2+1}^{\bar k(t)+ \gamma d/2-1}  \me^{-2\delta \int_{k}^{k+td/T_0} dy V_d(y)}\left| \psi_\textup{nor}(k_0;k)   \right|^2,
%	\ea
	where
$\Delta_L$, $\Delta_R$ satisfy the bounds
	\ba 
	\Delta_L,\Delta_R &\leq A^2 \frac{\me^{-\frac{2\pi}{\sigma^2}(\gamma d/2-\bar k(t))^2}}{1-\me^{-4\pi|\gamma d/2- \bar k(t)|/\sigma^2}},
	\ea 
	where $	\bar{k}(t):=\lfloor -d/2 +td/T_0+1\rfloor +d/2 -td/T_0\in[0,1]$ and
	
		\ba\label{eq:gamma in term of m} 
	\gamma &=\gamma(m):= \frac{m-2}{d}\in(0,1), \quad \quad m=
	\begin{cases}
		4,6,8\ldots,d+2 \mbox{ if } d=2,4,6,\ldots\\
		3,5,7,\ldots,d+2 \mbox{ if } d=3,5,7,\ldots
	\end{cases}%\\
	%\bar{k}(t)&:=\lfloor -d/2 +k_0+td/T_0+1\rfloor +d/2 -k_0-td/T_0\in[0,1].
	\ea
 \\
	Furthermore, there exists $\Delta_C^{(-)}\geq 0,$ $\Delta_C^{(+)}\geq 0,$ such that $\Delta_C$ satisfies the bounds
	\ba \label{eq:Delta C def simplified}
\Delta_C^{(-)}-\epsilon_\textup{C}\leq	\Delta_C \leq \Delta_C^{(+)}
	\ea
	where
	\be\label{eq: ep C def} 
	%\Delta_C^{(+)}:= \max_{y\in \mathcal{I}_\gamma} \Big{\{} \me^{-2\delta \int_{y}^{y+td/T_0} dx V_d(x)} \Big{\}},\quad \Delta_C^{(-)}:= \min_{y\in \mathcal{I}_\gamma} \Big{\{} \me^{-2\delta \int_{y}^{y+td/T_0} dx V_d(x)} \Big{\}},\quad
	 \epsilon_\textup{C}=\epsilon_\textup{C}(t):=%|\epsilon_\textup{A}|+
	2  A^2 \frac{\me^{-\frac{2\pi}{\sigma^2}(\gamma d/2-\bar k(t))^2}}{1-\me^{-4\pi|\gamma d/2- \bar k(t)|/\sigma^2}},
	\ee 
	and if $x_0$ (defined in Section \ref{The generator of dynamics and potential function bar V0}) satisfies $x_{vr} +\pi\gamma \leq x_0 \leq 2\pi-x_{vr}-\pi\gamma$, then
	\ba
	\begin{cases} 
		\me^{-2\delta \tilde \epsilon_V} \leq \Delta_C ^{(-)}(t)\leq \Delta_C ^{(+)}(t) \leq 1 \quad&\textup{if}\quad 0 \leq t\frac{2\pi}{T_0} \leq x_0-x_{vr}-\pi\gamma,\\
		%	\mpwSMC{\me^{-2\delta}}\,\, 
	\quad\quad\;	0  \leq \Delta_C ^{(-)}(t)\leq \Delta_C ^{(+)}(t)  \leq 	\me^{-2\delta(1-\tilde\epsilon_V)}  \quad&\textup{if}\quad x_{vr}+\pi\gamma+x_0 \leq t\frac{2\pi}{T_0}.% \leq 2\pi-x_{vr}-\pi\gamma+x_0.
	\end{cases}
	\ea	
	When the value of $t$ is such that the above bounds do not hold, we can also use the bounds
	\ba 
	0 \leq \Delta_C (t) &\leq 1, \quad \forall\, t\geq 0,\,x_0\in\rr.
	\ea 
\end{lemma}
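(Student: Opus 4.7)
My plan is to partition the index set $\mathcal{S}_d(\tau)$, where $\tau := td/T_0$ and $k_0=0$, into three consecutive blocks: a centre block $\mathcal{C}$ of integers $k$ within distance $\gamma d/2$ of $\tau$, and complementary left/right blocks $\mathcal{L}, \mathcal{R}$. Restricting $\sum_k \Delta_k$ to each block defines $\Delta_L, \Delta_C, \Delta_R$. The Gaussian factor $|\psi_{\textup{nor}}(0; k-\tau)|^2 = A^2 e^{-(2\pi/\sigma^2)(k-\tau)^2}$ controls the tails, while the exponential $e^{-2\delta \int_{k-\tau}^k \bar V_d(y) dy}$ controls the centre.

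For $\Delta_L$ and $\Delta_R$ I would use that $\bar V_d \geq 0$ forces $e^{-2\delta \int} \leq 1$, leaving only the Gaussian to bound. The element of $\mathcal{L}$ closest to $\tau$ lies at distance $c := \gamma d/2 - \bar k(t)$ from $\tau$, with $\bar k(t)\in[0,1]$ encoding the fractional mismatch between the integer lattice and $\tau$, and successive elements of $\mathcal{L}$ shift this distance by integer amounts. Applying the elementary inequality $(c+j)^2 \geq c^2 + 2cj$ for $c, j \geq 0$ and summing the resulting geometric series yields $\sum_{j \geq 0} e^{-(2\pi/\sigma^2)(c+j)^2} \leq e^{-(2\pi/\sigma^2)c^2}/(1 - e^{-(4\pi/\sigma^2)c})$, which is exactly the stated bound; $\Delta_R$ follows by the symmetric argument with the same $c$, since $\bar k(t)$ is set up to measure the same distance on both sides.

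For $\Delta_C$ the key is to analyse the integral $\int_{k-\tau}^k \bar V_d(y) dy$. Substituting $u = 2\pi y/d$ rewrites it as $\int_{x_l}^{x_r} \bar V_0(u) du$ with $x_r = 2\pi k/d$ and $x_l = x_r - 2\pi t/T_0$, and for $k \in \mathcal{C}$ we have $x_l \in [-\pi\gamma, \pi\gamma]$ while $x_r \in [2\pi t/T_0 - \pi\gamma, 2\pi t/T_0 + \pi\gamma]$. In the regime $0 \leq 2\pi t/T_0 \leq x_0 - x_{vr} - \pi\gamma$ the window $[x_l, x_r]$ lies within $[-\pi\gamma, x_0 - x_{vr}]$; the hypothesis $x_{vr} + \pi\gamma \leq x_0 \leq 2\pi - x_{vr} - \pi\gamma$ together with $2\pi$-periodicity of $\bar V_0$ ensures that this window (read modulo $2\pi$) misses the mass region $[x_0 - x_{vr}, x_0 + x_{vr}]$ of $\bar V_0$, so by definition of $\tilde\epsilon_V$ the integral is at most $\tilde\epsilon_V$, giving $e^{-2\delta\int} \in [e^{-2\delta\tilde\epsilon_V}, 1]$. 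In the complementary regime $2\pi t/T_0 \geq x_0 + x_{vr} + \pi\gamma$ the window now engulfs the full mass region of $\bar V_0$, so the integral is at least $1 - \tilde\epsilon_V$ and $e^{-2\delta\int} \in [0, e^{-2\delta(1-\tilde\epsilon_V)}]$; outside both regimes only the trivial $e^{-2\delta\int} \in [0,1]$ is available, which gives the unconditional bound $0 \leq \Delta_C \leq 1$.

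Given these pointwise bounds on the exponential, I would take $\Delta_C^{(+)} := \bigl(\sup_{k \in \mathcal{C}} e^{-2\delta\int}\bigr) \sum_{k \in \mathcal{C}} |\psi_{\textup{nor}}(0; k-\tau)|^2$, so $\Delta_C \leq \Delta_C^{(+)}$ and $\Delta_C^{(+)}$ inherits the claimed ceiling ($1$ or $e^{-2\delta(1-\tilde\epsilon_V)}$). For the lower bound I would set $\Delta_C^{(-)} := \bigl(\inf_{k \in \mathcal{C}} e^{-2\delta\int}\bigr) \sum_{k \in \mathcal{S}_d(\tau)} |\psi_{\textup{nor}}(0; k-\tau)|^2$, which by $\sum_{\mathcal{S}_d(\tau)}|\psi_{\textup{nor}}|^2 = 1$ just equals that infimum, matching the claimed floor. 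The shortfall is $\Delta_C^{(-)} - \Delta_C \leq \bigl(\inf_{\mathcal{C}} e^{-2\delta\int}\bigr)\sum_{\mathcal{L}\cup\mathcal{R}} |\psi_{\textup{nor}}|^2$, which is bounded by twice the Gaussian-tail estimate already derived for $\Delta_L$ and equals $\epsilon_C$. The main obstacle is purely bookkeeping: tracking floor/ceiling offsets in $\bar k(t)$ when enumerating elements of $\mathcal{C}, \mathcal{L}, \mathcal{R}$, and handling the $\bmod\,2\pi$ wrap-around of $\bar V_0$ in the ``before the peak'' regime; once those are laid out the three regimes for $t$ combine to give the claimed piecewise bounds.
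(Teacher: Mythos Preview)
Your proposal is correct and mirrors the paper's proof essentially step for step: the paper also shifts the index, splits the sum into $\Delta_L+\Delta_C+\Delta_R$ along the same boundaries, bounds the tails by dropping the exponential factor and summing the Gaussian as a geometric series (their ``Lemma J.0.1''), defines $\Delta_C^{(\pm)}$ as the $\max/\min$ of $e^{-2\delta\int}$ over the central block, and extracts $\epsilon_C$ from the normalisation identity $\sum_{\mathcal{S}_d}|\psi|^2=1$ minus the two tails. The only cosmetic difference is that the paper sets $\Delta_C^{(+)}$ equal to the bare supremum rather than multiplying by $\sum_{\mathcal C}|\psi|^2\le 1$, and it carries the $\bar k(t)$ offsets explicitly through the $\mathcal I_\gamma$ endpoints when deriving the time conditions, which is precisely the ``bookkeeping'' you flag at the end.
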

\begin{proof}
	\ba 
	\sum_{k\in\mathcal{S}_d (k_0+td/T_0)} \Delta_k &=  	\sum_{k\in\mathcal{S}_d (k_0+td/T_0)} \me^{-2\delta \int_{k-t d/T_0}^k dy {\bar V_d}(y)}\left| \psi_\textup{nor}(k_0;k-t d/T_0)   \right|^2 \\
	&=  	\sum_{k=\min \{\mathcal{S}_d(k_0+td/T_0)\}-td/T_0}^{\max \{\mathcal{S}_d(k_0+td/T_0)\}-td/T_0}  \me^{-2\delta \int_{k}^{k+td/T_0} dy {\bar V_d}(y)}\left| \psi_\textup{nor}(k_0;k)   \right|^2\\
	&=  	\sum_{k=\lfloor -d/2+k_0+1+td/T_0\rfloor-td/T_0}^{\lfloor d/2+k_0+td/T_0\rfloor-td/T_0}  \me^{-2\delta \int_{k}^{k+td/T_0} dy {\bar V_d}(y)}\left| \psi_\textup{nor}(k_0;k)   \right|^2\\
		&=  	\sum_{k= k_0 -d/2+\bar k(t)}^{ k_0 +d/2+\bar k(t)-1}  \me^{-2\delta \int_{k}^{k+td/T_0} dy {\bar V_d}(y)}\left| \psi_\textup{nor}(k_0;k)   \right|^2,
	\ea 
	where recall $\bar{k}(t)=\lfloor -d/2 +k_0+td/T_0+1\rfloor +d/2 -k_0-td/T_0\in[0,1]$. This follows from noting that $\bar{k}(t)$ is a solution to both equations $\lfloor -d/2+k_0+1+td/T_0\rfloor-td/T_0=k_0 -d/2+\bar k(t)$ and $\lfloor d/2+k_0+td/T_0\rfloor-td/T_0= k_0 +d/2+\bar k(t)-1$, since we can use the identity $1=\lfloor-x+y+1\rfloor - \lfloor x+y\rfloor +2x$, for $2x\in\zz$, $y\in \rr$ and set $x=d/2$, $y=k_0+td/T_0$. For simplicity, we will now take into account that $k_0=0$. We will now break the sum up into three contributions where the first two will correspond to the ``Gaussian tails" to the ``left" ($\Delta_L$) and ``right" ($\Delta_R$) of $k=k_0=0$, and a ``central term" ($\Delta_C$) corresponding to the region $k\approx k_0=0$. Namely,
	\ba 
		\sum_{k\in\mathcal{S}_d (k_0+td/T_0)} \Delta_k= \Delta_L+\Delta_C+\Delta_R.
	\ea 
	We introduce $\gamma\in(0,1)$ and start with bounding $\Delta_L$:
	\ba\label{eq:bound for Delta L} 
	\Delta_L&:= \sum_{k= -d/2+\bar k(t)}^{\bar k(t)-\gamma d/2}  \me^{-2\delta \int_{k}^{k+td/T_0} dy {\bar V_d}(y)}\left| \psi_\textup{nor}(k_0;k)   \right|^2\leq \sum_{k= -d/2+\bar k(t)}^{\bar k(t)-\gamma d/2}  \left| \psi_\textup{nor}(k_0;k)   \right|^2= \sum_{y=\gamma d/2-\bar k(t)}^{d/2-\bar k(t)}A^2 \me^{-\frac{2\pi}{\sigma^2}(-y)^2}\\
	&\leq \sum_{y=\gamma d/2-\bar k(t)}^{\infty}A^2 \me^{-\frac{2\pi}{\sigma^2}(-y)^2}\leq A^2 \frac{\me^{-\frac{2\pi}{\sigma^2}(\bar k(t)-\gamma d/2)^2}}{1-\me^{-4\pi| \bar k(t)-\gamma d/2|/\sigma^2}}.\label{eq:last line}%\leq A^2 \frac{\me^{-\frac{2\pi}{\sigma^2}(\gamma d/2-\bar k(t))^2}}{1-\me^{-4\pi|\gamma d/2- \bar k(t)|/\sigma^2}}.
	\ea  
	where the summations are defined as,
	\be 
	\sum_{y=a}^{b} f(y)=f(a)+f(a+1)+\ldots +f(b),
	\ee
	where $a,b\in\rr$, $b-a\in\zz$ and the sum is defined to be zero if $b-a\leq 0$. We will use this convention throughout. We will constrain $\gamma$ appropriately  later in Eq. \eqref{eq:gamma in term of m in proof}. In the last line of Eq. \eqref{eq:last line} we have used Lemma J.0.1, page 59 from \cite{WSO16}. Similarly, for $\Delta_R$,
	\ba\label{eq:bound for Delta R} 
		\Delta_R&:= \sum_{k=\bar k(t) +\gamma d/2}^{\bar k(t)+ d/2}  \me^{-2\delta \int_{k}^{k+td/T_0} dy {\bar V_d}(y)}\left| \psi_\textup{nor}(k_0;k)   \right|^2\leq \sum_{k=\bar k(t) +\gamma d/2}^{\bar k(t)+ d/2}  \left| \psi_\textup{nor}(k_0;k)   \right|^2\leq \sum_{k=\bar k(t) +\gamma d/2}^{\infty}  \left| \psi_\textup{nor}(k_0;k)   \right|^2 \\
		&\leq \sum_{k=\bar k(t) -\gamma d/2}^{\infty}  \left| \psi_\textup{nor}(k_0;k)   \right|^2 \leq A^2 \frac{\me^{-\frac{2\pi}{\sigma^2}(\bar k(t)-\gamma d/2)^2}}{1-\me^{-4\pi| \bar k(t)-\gamma d/2|/\sigma^2}}= A^2 \frac{\me^{-\frac{2\pi}{\sigma^2}(\gamma d/2-\bar k(t))^2}}{1-\me^{-4\pi|\gamma d/2- \bar k(t)|/\sigma^2}}.
	\ea
	For $\Delta_C$, we have
	\ba \label{eq:Delta C def}
	\Delta_C&:= \sum_{k=\bar k(t) -\gamma d/2+1}^{\bar k(t)+ \gamma d/2-1}  \me^{-2\delta \int_{k}^{k+td/T_0} dy {\bar V_d}(y)}\left| \psi_\textup{nor}(k_0;k)   \right|^2\\
	&
	\begin{cases}
	 \leq \bigg(\max_{y\in \mathcal{I}_\gamma} \Big{\{} \me^{-2\delta \int_{y}^{y+td/T_0} dx {\bar V_d}(x)} \Big{\}}\bigg)\sum_{k=\bar k(t) -\gamma d/2+1}^{\bar k(t)+ \gamma d/2-1} \left| \psi_\textup{nor}(k_0;k)   \right|^2,\\ \vspace{-0.3cm}\\
	 \geq \bigg(\min_{y\in \mathcal{I}_\gamma} \Big{\{} \me^{-2\delta \int_{y}^{y+td/T_0} dx {\bar V_d}(x)} \Big{\}}\bigg)\sum_{k=\bar k(t) -\gamma d/2+1}^{\bar k(t)+ \gamma d/2-1} \left| \psi_\textup{nor}(k_0;k)   \right|^2,\label{eq: Delta C upper bottom bound split}
	\end{cases}
	\ea
	where $\mathcal{I}_\gamma:=\{ \bar k(t)-\gamma d/2+1, \bar k(t)-\gamma d/2+2,\ldots,\bar k(t)+\gamma d/2-1 \}.$ In order for the summation in the definitions of $\Delta_L, \Delta_C, \Delta_R$ to be well defined, we need the constraints, $-d\gamma
	/2+\bar k(t)-(-d/2+\bar k(t))=n_1$, $d\gamma
	/2+\bar k(t)-1-(-d\gamma/2+\bar k(t)+1)=n_2$, $d\gamma
	/2+\bar k(t)-(d/2+\bar k(t))=n_3$, for $n_1,n_2,n_3 \in \zz$. A solution for $\gamma$ is
	\be\label{eq:gamma in term of m in proof} 
	\gamma=\gamma(m)= \frac{m-2}{d}\in(0,1), \quad \text{where } m=
	\begin{cases}
		4,6,8,\ldots,d+2 \mbox{ if } d=2,4,6,\ldots\\
		3,5,7,\ldots,d+2 \mbox{ if } d=3,5,7,\ldots
	\end{cases}
	\ee
	Before proceeding to bound the maximization and minimization in Eq. \eqref{eq: Delta C upper bottom bound split}, we will bound the common factor term which is approximately one. We find
	\ba \label{eq:gam gam sum equality}
	\sum_{k=\bar k(t) -\gamma d/2+1}^{\bar k(t)+ \gamma d/2-1} \left| \psi_\textup{nor}(k_0;k)   \right|^2&=\sum_{k\in\mathcal{S}_d(k_0+t d/T_0)} \left| \psi_\textup{nor}(k_0;k-t d/T_0)   \right|^2-\sum_{k=\bar k(t) -d/2}^{\bar k(t)-\gamma d/2} \left| \psi_\textup{nor}(k_0;k)   \right|^2- \sum_{k=\bar k(t) +\gamma d/2}^{\bar k(t)+ d/2} \left| \psi_\textup{nor}(k_0;k)   \right|^2\\
	&=1-\sum_{k=\bar k(t) -d/2}^{\bar k(t)-\gamma d/2} \left| \psi_\textup{nor}(k_0;k)   \right|^2- \sum_{k=\bar k(t) +\gamma d/2}^{\bar k(t)+ d/2} \left| \psi_\textup{nor}(k_0;k)   \right|^2.
	\ea 
	where in the last line we have used Eq. \eqref{eq:normalisation}. %The first term on the r.h.s. of Eq. \eqref{eq:gam gam sum equality} is almost one, up to the fact that the state needs re-normalizing, namely
%	\be 
%	\sum_{k\in\mathcal{S}_d(k_0+t d/T_0)} \left| \psi_\textup{nor}(k_0;k-t d/T_0)   \right|^2= \frac{A}{A'} 	\sum_{k\in\mathcal{S}_d(k_0+t d/T_0)} A' \me^{-\frac{2\pi}{\sigma}(k-t d/T_0-k_0)^2}=\frac{A}{A'}=1+\epsilon_\textup{A},
%	\ee 
%	where 
%	\be 
%	|\epsilon_\textup{A}|\leq \frac{2\sqrt{2}}{\sigma} \left(\frac{2 e^{-\frac{\pi d^2}{2\sigma^2}}}{1 - e^{-\frac{2\pi d}{\sigma^2}}} + \frac{2 e^{-\frac{\pi \sigma^2}{2}}}{1 - e^{-\pi \sigma^2}} \right),
%	\ee 
%	where we have used the bounds from Subsection ``b: Normalizing    $\ket{\Psi(k_0)}$" from Section ``The norm of a discretized Gaussian (Normalization of $\ket{\Psi(k_0)}$)" in appendix J from \cite{WSO16}.
The remaining two terms in Eq. \eqref{eq:gam gam sum equality} have been bounded in Eqs. \eqref{eq:bound for Delta L}, \eqref{eq:bound for Delta R}. Noting that $0\leq  \me^{-2\delta \int_{y}^{y+td/T_0} dx {\bar V_d}(x)} \leq 1$, we can thus simplify Eq. \eqref{eq: Delta C upper bottom bound split}, to find 
	\ba \label{eq:Delta C def simplified in  proof}
\Delta_C
\begin{cases}
	\leq \Delta_C^{(+)} \\%+|\epsilon_\textup{A}|\\
	 \vspace{-0.3cm}\\
	\geq \Delta_C^{(-)}-\epsilon_\textup{C},
\end{cases}
\ea
where %$\Delta_C^{(+)}$, $\Delta_C^{(-)}$, and 
$\epsilon_\textup{C}$ is defined in the statement of the Lemma and
\be 
\Delta_C^{(+)}:= \max_{y\in \mathcal{I}_\gamma} \Big{\{} \me^{-2\delta \int_{y}^{y+td/T_0} dx {\bar V_d}(x)} \Big{\}},\quad \Delta_C^{(-)}:= \min_{y\in \mathcal{I}_\gamma} \Big{\{} \me^{-2\delta \int_{y}^{y+td/T_0} dx {\bar V_d}(x)} \Big{\}}.
\ee \\

	Our next aim will be to find bounds on $t$ which determine whether $\Delta_C$ is approximately $1$ or $\me^{-2\delta}$. First note,
	\be \label{eq:re writing int to t cond}
	\int_{y}^{y+td/T_0} dx {\bar V_d}(x)=\frac{2\pi}{d} \int_{y}^{y+td/T_0} dx \bar{V}_0\left(\frac{2\pi}{d}x\right)=  \int_{2\pi y/d-x_0}^{2\pi y/d+t2\pi/T_0-x_0} dx \bar{V}_0\left(x+x_0\right).
	\ee 
	\begin{figure}
		\includegraphics[scale=0.25]{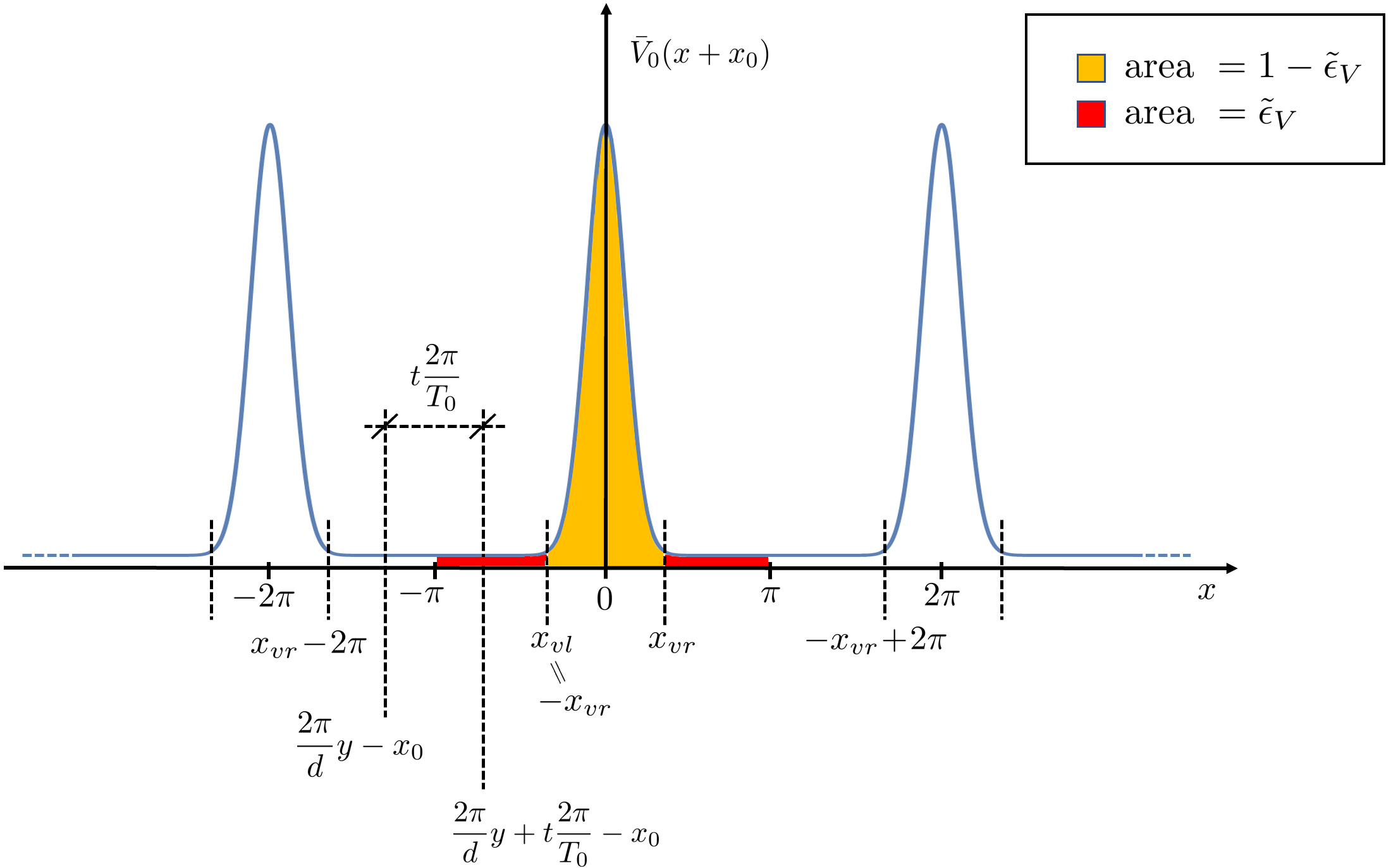}%[width=0.4\linewidth]{Pot_fig_for_proofV2.jpg}Pot_fig_for_proof.jpg}
		\centering
		\caption{Depiction of the periodic potential function $\bar{V}_0(x+x_0)$ with the parameters $x_{vl}$, $x_{vr}$ and $\tilde{\epsilon}_V$ introduced in Eq. \eqref{eq:tilde ep V def}. Observe that the area under the potential $\bar{V}_0(x+x_0)$ (i.e. its integral) between $x=x_{vr}-2\pi$ and $x=x_{vl}$ is $\tilde\epsilon_V$ due to symmetry. The quantities $\frac{2\pi}{d} y-x_0$ and $\frac{2\pi}{d} y+t\frac{2\pi}{T_0}-x_0$ are the lower and upper integration limits in Eq. \eqref{eq:re writing int to t cond}.\label{fig:pot pic proof} }
	\end{figure}
	We can now use Eq. \eqref{eq:re writing int to t cond} to find conditions for the time "before a tick is measured", namely conditions for which
	\be\label{eq:before mtick condition} 
	\int_{y}^{y+td/T_0} dx {\bar V_d}(x)\leq 1-(1-\tilde{\epsilon}_V)=\tilde{\epsilon}_V.
	\ee 
	From Fig. \ref{fig:pot pic proof} and Eq. \eqref{eq:re writing int to t cond} we observe that Eq. \eqref{eq:before mtick condition} is satisfied for all $y\in \mathcal{I}_\gamma$ if 
	\ba
	x_{vr}-2\pi&\leq \frac{2\pi}{d}y-x_0\\
	\frac{2\pi}{d}y+t\frac{2\pi}{T_0} -x_0&\leq x_{vl}=-x_{vr},
	\ea
	for all $y\in \mathcal{I}_\gamma$. Or equivalently, if
	\ba
	\frac{2\pi}{d} \min_{y\in\mathcal{I}_\gamma}\{y\} &\geq x_{vr}-2\pi+x_0\\
	\frac{2\pi}{d}\max_{y\in\mathcal{I}_\gamma}\{y\}&\leq x_0-\frac{2\pi}{T_0}t-x_{vr},
	\ea
	from which it follows,
	\ba 
	x_0 &\leq 2\pi -x_{vr}-\pi\gamma+\frac{2\pi}{d}(\bar k(t)+1)\label{eq:1LowBoundOnt pre}\\
	0\leq t\frac{2\pi}{T_0} &\leq x_0-x_{vr}-\pi\gamma-\frac{2\pi}{d}(\bar k(t)-1).\label{eq:2LowBoundOnt pre}
	\ea 
	Thus recalling that $\bar k(t)\in[0,1]$, sufficient conditions on $x_0$ and $t$ for Eq. \eqref{eq:before mtick condition} to be satisfied are
	\ba 
	x_0 &\leq 2\pi -x_{vr}-\pi\gamma\label{eq:1LowBoundOnt}\\
	0\leq t\frac{2\pi}{T_0} &\leq x_0-x_{vr}-\pi\gamma.\label{eq:2LowBoundOnt}
	\ea 
Similarly, we can work out conditions for the time ``after a tick has occurred", i.e. 
	\be \label{eq:after tick has happened condition}
	\int_{y}^{y+td/T_0} dx {\bar V_d}(x)\geq 1-\tilde{\epsilon}_V, 
	\ee
	for all $y\in \mathcal{I}_\gamma$ if 
	\ba 
	%x_{vr}-2\pi \leq& 
	\frac{2\pi}{d}y-x_0 &\leq -x_{vr}\\
	x_{vr} &\leq  \frac{2\pi}{d}y+t\frac{2\pi}{T_0}-x_0 % \leq 2\pi -x_{vr}
	\ea 
	for all $y\in \mathcal{I}_\gamma$. Or equivalently, if
	\ba 
	x_{vr}+\max_{y\in\mathcal{I}_\gamma} \{y\}\frac{2\pi}{d} &\leq x_0 %\leq 2\pi -x_{vr}+\frac{2\pi}{d}\min_{y\in\mathcal{I}_\gamma}\{y\}
	\\
		x_{vr}+\frac{2\pi}{d}\max_{y\in \mathcal{I}_\gamma}\{-y\}+x_0 &\leq t\frac{2\pi}{T_0}, %\leq 2\pi-x_{vr}+\frac{2\pi}{d}\min_{y\in\mathcal{I}_\gamma}\{-y\} +x_0,
	\ea
	from which it follows
	\ba 
	x_{vr} +\frac{2\pi}{d}(\bar k(t)-1)+\pi\gamma &\leq x_0 %\leq 2\pi-x_{vr}+\frac{2\pi}{d}(\bar k(t)+1)-\pi\gamma
	\label{eq:x0 contraint pre}\\
	x_{vr}-\frac{2\pi}{d}(\bar k(t)+1)+\pi\gamma+x_0 &\leq t\frac{2\pi}{T_0} %\leq 2\pi-x_{vr}-\frac{2\pi}{d}(\bar k(t)-1)-\pi\gamma+x_0. 
	\label{eq:afterclick pre}
	\ea 
		Thus recalling that $\bar k(t)\in[0,1]$, sufficient conditions on $x_0$ and $t$ for Eq. \eqref{eq:after tick has happened condition} to be satisfied are
\ba 
x_{vr} +\pi\gamma &\leq x_0, %\leq 2\pi-x_{vr}-\pi\gamma
\label{eq:x0 contraint}\\
x_{vr}+\pi\gamma+x_0 &\leq t\frac{2\pi}{T_0}. %\leq 2\pi-x_{vr}-\pi\gamma+x_0.
 \label{eq:afterclick}
\ea 
	Thus from Eqs. \eqref{eq:Delta C def}, \eqref{eq:1LowBoundOnt}, \eqref{eq:2LowBoundOnt},\eqref{eq:x0 contraint}, and recalling $\bar{V}_0\geq 0$, we conclude that if $x_0$ satisfies 
	\be \label{eq:up low bound on x0}
	x_{vr} +\pi\gamma \leq x_0 \leq 2\pi-x_{vr}-\pi\gamma,
	\ee
	then 
%	\be 
%	\Delta_C\leq \begin{cases}
%		\me^{-2\delta \, 0}=1 &\mbox{ if} 0 \leq t\frac{2\pi}{T_0} \leq x_0-x_{vr}-\pi\gamma-\frac{2\pi}{d}(\bar k(t)-1). \\
%		\me^{-2\delta(1-\tilde\epsilon_V)}  &\mbox{if } x_{vr}-\frac{2\pi}{d}(\bar k(t)+1)+\pi\gamma+x_0 \leq t\frac{2\pi}{T_0} \leq 2\pi-x_{vr}-\frac{2\pi}{d}(\bar k(t)-1)-\pi\gamma+x_0
%	\end{cases}
%	\ee 
	\ba 
	\me^{-2\delta \tilde \epsilon_V} \leq \Delta_C ^{(-)}(t)\leq \Delta_C ^{(+)}(t) &\leq 1 \quad\textup{if}\quad 0 \leq t\frac{2\pi}{T_0} \leq x_0-x_{vr}-\pi\gamma,\\
	%	\mpwSMC{\me^{-2\delta}}\,\, 
	0  \leq \Delta_C ^{(-)}(t)\leq \Delta_C ^{(+)}(t)  &\leq 	\me^{-2\delta(1-\tilde\epsilon_V)}  \quad\textup{if}\quad x_{vr}+\pi\gamma+x_0 \leq t\frac{2\pi}{T_0}.% \leq 2\pi-x_{vr}-\pi\gamma+x_0.
	\ea	
%	When the value of $t$ is such that the above bounds do not hold, we will use the bounds
%		\ba 
%	0 \leq \Delta_C (t) &\leq 1, \quad \forall\, t\geq 0,\,x_0\in\rr.
%	\ea 

%\sout{More generally, since the lower and upper bounds for $\Delta_C(t)$ in Eq. \ref{eq:Delta C def} are non-increasing in $t$ we also have}
%\be 
%\mpwSMC{ \Delta_C(t)\leq \me^{-2\delta(1-\tilde\epsilon_V)}  \quad\textup{if}\quad x_{vr}+\pi\gamma+x_0 \leq t\frac{2\pi}{T_0}\leq 2\pi. }
%\ee

\end{proof}

\subsection{Calculating $\Delta(0)$ and $\Delta(1)$ and an explicit lower bound on ${R_1}$}\label{Calculating Delta(0) and Delta(1) and an explicit lower bound on R1}
Define 
\be\label{eq:def delt t}
 t_1:=(x_0-x_{vr}-\pi\gamma)\frac{T_0}{2\pi},\quad t_2:=(x_0+x_{vr}+\pi\gamma)\frac{T_0}{2\pi},\quad \Delta t:= t_2-t_1=(x_{vr}+\pi\gamma)\frac{T_0}{\pi}.
\ee
Furthermore, we will want $t_1$ to be in the centre of the range $[0,T_0]$. As such, we set $t_1=T_0/2$ which taking into account the definition of $t_1$ in Eq. \eqref{eq:def delt t} implies
\be \label{eq:x0 def}
x_0=\pi+x_{vr}+\pi\gamma.
\ee
Physically, Eq. \eqref{eq:x0 def} means that the potential is peaked near the mid point $\pi$ and thus that the continuous measurements will occur approximately at a time $T_0/2$.
Note that Eq. \eqref{eq:x0 def} is consistent with Eq. \eqref{eq:up low bound on x0} as long as $x_{vr}+\pi\gamma\leq \pi/2$. As we will see later, both $\gamma$ and $x_{vr}$ will be parametrized such that $\gamma$, $x_{vr}$ $\rightarrow 0$ as $d\rightarrow \infty$.\footnote{See Eqs. \eqref{eq:up bound gamma even},\eqref{eq:bound for gamma d over sigma odd},\eqref{eq:x vr limit contraint} for explicit parametrization.} As such, this constraint will always be satisfied for sufficiently large $d$.\footnote{It is expected that one could actually weaken the constraint imposed on $x_{vr}$ and $\gamma$ if one choose a slightly different parametrization of $t_1$, say $t_1=T_0/2-(x_{vr}+\pi\gamma)T_0/\pi$. But for simplicity, we will not do this here.}

\begin{lemma}\label{lem:R lowe pre leading order}
$R_1$ satisfies the bound %	If condition Eq. \eqref{eq:dev  non condition} is satisfied\footnote{ We will later see that it is always satisfied for sufficiently large $d$.},
	\be 
{R_1}	\geq\frac{-\left( \left(2T_0\epsilon_0(T_0)+\int_0^{T_0}dt |\epsilon_\textup{C}(t)|\right) \left(t_1 \me^{-2\delta\tilde{\epsilon}_V}+\Delta t\right)  +\Delta t^2\right)/T_0^2+\me^{-4\delta \tilde{\epsilon}_V}/4}{\epsilon_4+\epsilon_3+\frac{3}{4}\me^{-\delta (1-\tilde{\epsilon}_V)}+(1-\me^{-4\delta \tilde{\epsilon}_V})/4+\frac{2}{T_0^2} \left( t_2-t_1\me^{-2\delta\tilde \epsilon_V}\right) \Delta t +\left(\frac{\Delta t}{T_0}\right)^2},
	\ee 
where we have defined
\ba \label{eq:ep 3 4 def}
\epsilon_3&:= \frac{2}{T_0^2}\int_0^{T_0} dt\,t (\Delta_L(t)+\Delta_R(t)),\\
\epsilon_4&:= \epsilon_0(T_0)  + \frac{1}{T_0^2}    \left(2T_0\epsilon_0(T_0)+\int_0^{T_0}dt \,\epsilon_\textup{C}(t)\right) \left(t_1 \me^{-2\delta\tilde{\epsilon}_V}+\Delta t\right)  +\epsilon_1/T_0^2.
\ea 
\end{lemma}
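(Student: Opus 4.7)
The plan is to combine the general lower bound of Lemma \ref{lem:1st up bound on R} with the sandwich relation of Lemma \ref{lem:1st up lowe on trace rho t} and the piecewise control on $\Delta_C$ supplied by Lemma \ref{lem: t bouns for DeltaC}. The choice $x_0=\pi+x_{vr}+\pi\gamma$ forces $t_1=T_0/2$ and $t_2-t_1=\Delta t$, so that the three regimes of Lemma \ref{lem: t bouns for DeltaC} split $[0,T_0]$ cleanly: on $[0,t_1]$ the ``before a tick'' bound $\me^{-2\delta\tilde\epsilon_V}\leq \Delta_C^{(-)}\leq \Delta_C^{(+)}\leq 1$ (modulo $\epsilon_C$) holds; on $[t_2,T_0]$ the ``after a tick'' bound $\Delta_C^{(+)}\leq \me^{-2\delta(1-\tilde\epsilon_V)}$ holds; on the middle interval $[t_1,t_2]$ of length $\Delta t$ only the trivial $0\leq \Delta_C\leq 1$ is available.

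The next step is to cook up admissible quantities $\Delta(0)$ and $\Delta(1)$ satisfying the hypotheses of Lemma \ref{lem:1st up bound on R}. For $\Delta(0)$ I would take the lower bound of $\int_0^{T_0}\tr[\rho(t)]\,dt$ obtained by dropping the (nonnegative) contribution of $[t_1,T_0]$ entirely and keeping only the left region, giving $\Delta(0)\geq t_1\me^{-2\delta\tilde\epsilon_V}-\int_0^{t_1}(\epsilon_C(t)+\epsilon_0(t,d))\,dt$. For $\Delta(1)$ I would take an upper bound of $\int_0^{T_0}t\,\tr[\rho(t)]\,dt$ by using $\Delta_C\leq 1$ on $[0,t_2]$ and $\Delta_C\leq \me^{-2\delta(1-\tilde\epsilon_V)}$ on $[t_2,T_0]$, while adding the contributions of $\Delta_L$, $\Delta_R$, $\epsilon_C$ and $\epsilon_0$ weighted by the factor $t$; the $\Delta_L+\Delta_R$ integrals are exactly what condenses into $\epsilon_3$, and the remaining ``from $\epsilon_0$, $\epsilon_C$, $\epsilon_1$'' tails condense into $\epsilon_4$, as defined in \eqref{eq:ep 3 4 def}.

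Substituting back into Lemma \ref{lem:1st up bound on R} and expanding $\Delta^2(0)$ via $(a-b)^2\geq a^2-2ab$ yields the dominant term $t_1^2\me^{-4\delta\tilde\epsilon_V}/T_0^2=\me^{-4\delta\tilde\epsilon_V}/4$ in the numerator, with the cross-terms $-(2T_0\epsilon_0(T_0)+\int_0^{T_0}|\epsilon_C|\,dt)(t_1\me^{-2\delta\tilde\epsilon_V}+\Delta t)/T_0^2$ tracking the first-order corrections. In the denominator, the combination $(2\Delta(1)-\Delta^2(0))/T_0^2$ produces the $(1-\me^{-4\delta\tilde\epsilon_V})/4$ piece from the explicit $t_1^2$ contributions, the $\tfrac{3}{4}\me^{-\delta(1-\tilde\epsilon_V)}$ piece from the right-region decay (using $\me^{-2\delta(1-\tilde\epsilon_V)}\leq \me^{-\delta(1-\tilde\epsilon_V)}$ and $(T_0^2-t_2^2)/T_0^2\leq 3/4$), the cross term $\tfrac{2}{T_0^2}(t_2-t_1\me^{-2\delta\tilde\epsilon_V})\Delta t$ from the middle region, and the $(\Delta t/T_0)^2$ from the quadratic pieces. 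Dividing numerator and denominator by $T_0^2$ gives the stated inequality.

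The main obstacle is purely the algebraic bookkeeping: every tail term $\Delta_L(t)+\Delta_R(t)$, every $\epsilon_C(t)$ and $\epsilon_0(t,d)$ has to be tracked consistently through a lower bound on a square and an upper bound on a weighted integral, and then repackaged into exactly the $\epsilon_3$ and $\epsilon_4$ defined in \eqref{eq:ep 3 4 def}. A delicate point is choosing $\Delta(0)$ tight enough that squaring still produces the full $\me^{-4\delta\tilde\epsilon_V}/4$ leading term, while keeping all $\epsilon_C$ and $\epsilon_0$ contributions linear (via $(a-b)^2\geq a^2-2ab$) so they match the explicit prefactor $t_1\me^{-2\delta\tilde\epsilon_V}+\Delta t$ appearing in the statement. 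With this in place, the subsequent lemmas in Sections \ref{R1 to leading order} and \ref{Showing that the limit requirements} will show that, under the explicit parametrisation of Section \ref{Explicit form on the potential bar V0}, all of $\epsilon_3,\epsilon_4,\Delta t^2/T_0^2$ and $\me^{-\delta(1-\tilde\epsilon_V)}$ become subleading, so the bound reduces to the advertised $d^{2-\eta}/\sigma_0^2$ scaling of Theorem \ref{thm:quantum lower bound}.
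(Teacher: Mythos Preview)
Your overall architecture is right, but the specific choices you announce for $\Delta(0)$ and $\Delta(1)$ do \emph{not} produce the bound in the lemma; they are too coarse on the middle interval $[t_1,t_2]$, and the term you claim appears from ``the middle region'' does not come out of your computation.

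Concretely: you say you drop the contribution of $[t_1,T_0]$ entirely in $\Delta(0)$, and use $\Delta_C\leq 1$ on all of $[0,t_2]$ in $\Delta(1)$. With those choices,
\[
\Delta^2(0)\approx t_1^2\me^{-4\delta\tilde\epsilon_V},\qquad 2\Delta(1)\approx t_2^2+\me^{-2\delta(1-\tilde\epsilon_V)}(T_0^2-t_2^2),
\]
so
\[
\frac{2\Delta(1)-\Delta^2(0)}{T_0^2}\;\approx\;\frac{t_2^2-t_1^2}{T_0^2}+\frac{1-\me^{-4\delta\tilde\epsilon_V}}{4}+\cdots
\;=\;\frac{\Delta t}{T_0}+\frac{\Delta t^2}{T_0^2}+\frac{1-\me^{-4\delta\tilde\epsilon_V}}{4}+\cdots.
\]
The leading contribution from the middle region is thus $\Delta t/T_0=\Theta(\gamma)$, not $\tfrac{2}{T_0^2}(t_2-t_1\me^{-2\delta\tilde\epsilon_V})\Delta t=\Theta(\gamma^2)$. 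That single extra factor of $\gamma$ in the denominator degrades the final bound to $R_1\gtrsim 1/\gamma$, i.e.\ only $d^{1-\eta/2}/\sigma$, which is not enough for Theorem \ref{thm:quantum lower bound}.

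The mechanism the paper actually uses is to \emph{retain the uncontrolled quantity} $\int_{t_1}^{t_2}\Delta_C(t)\,dt$ symbolically in the lower bound on $\int_0^{T_0}\tr[\rho(t)]\,dt$ \emph{and} to retain $\int_{t_1}^{t_2}t\,\Delta_C(t)\,dt$ symbolically in the upper bound on $\int_0^{T_0}t\,\tr[\rho(t)]\,dt$. When you then form $2\Delta(1)-\Delta^2(0)$, the middle-region pieces combine to
\[
2\int_{t_1}^{t_2}\bigl(t-t_1\me^{-2\delta\tilde\epsilon_V}\bigr)\Delta_C(t)\,dt
\;\leq\;2\bigl(t_2-t_1\me^{-2\delta\tilde\epsilon_V}\bigr)\int_{t_1}^{t_2}\Delta_C(t)\,dt
\;\leq\;2\bigl(t_2-t_1\me^{-2\delta\tilde\epsilon_V}\bigr)\Delta t,
\]
and since $t_2-t_1\me^{-2\delta\tilde\epsilon_V}=\Delta t+t_1(1-\me^{-2\delta\tilde\epsilon_V})$, this is $O(\Delta t^2)+O(\delta\tilde\epsilon_V\cdot\Delta t)$, both of order $\gamma^2$ under the hypotheses of Corollary \ref{corrolary w lims}. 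The numerator then drops the (nonnegative) leftover $2t_1\me^{-2\delta\tilde\epsilon_V}\int_{t_1}^{t_2}\Delta_C(t)\,dt+\epsilon_2$. This cancellation is the whole point of the lemma, and it is exactly what your ``drop the middle region from $\Delta(0)$'' step throws away. Everything else in your sketch (the role of $\epsilon_3,\epsilon_4$, the $(a-b)^2\geq a^2-2ab$ linearisation, the use of $t_1=T_0/2$) is correct.
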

%\ba
%\int_0^{T_0} dt \,\tr [\rho(t)]&\geq \int_0^{T_0} dt \left(\Delta_L(t)+\Delta_C(t)+\Delta_R(t)-\epsilon_0\right)\\
%&\geq -T_0(\epsilon_0(T_0) +\epsilon_3)+ \int_0^{t_1}dt\, \Delta_C(t)+\int_{t_1}^{t_2}dt\, \Delta_C(t)+\int_{t_2}^{T_0}dt\,\Delta_C(t)\\
%&\geq -T_0(\epsilon_0(T_0) +\epsilon_3)+ t_1 \me^{-2\delta\tilde{\epsilon}_V}+\int_{t_1}^{t_2}dt\, \Delta_C(t)+\int_{t_2}^{T_0}dt\,\Delta_C(t)
%\ea
\begin{proof}
Using Lemma \ref{lem:1st up lowe on trace rho t}, followed by Lemma \ref{lem: t bouns for DeltaC},	
\ba
\int_0^{T_0} dt \,\tr [\rho(t)]&\geq \int_0^{T_0} dt \left(\Delta_L(t)+\Delta_C(t)+\Delta_R(t)-\epsilon_0\right)\\
&\geq -T_0\epsilon_0(T_0)-\int_0^{T_0}dt \epsilon_\textup{C}(t) + \int_0^{t_1}dt\, \Delta_C^{(-)}(t)+\int_{t_1}^{t_2}dt\, \Delta_C(t)+\int_{t_2}^{T_0}dt\,\Delta_C(t)\\
&\geq -T_0\epsilon_0(T_0)-\int_0^{T_0}dt \epsilon_\textup{C}(t)+ t_1 \me^{-2\delta\tilde{\epsilon}_V}+\int_{t_1}^{t_2}dt\, \Delta_C(t).
\ea 
Thus,
\ba 
\left( \int_0^{T_0} dt \,\tr [\rho(t)]\right)^2 \geq& \left( -T_0\epsilon_0(T_0)-\int_0^{T_0}dt \epsilon_\textup{C}(t)+ t_1 \me^{-2\delta\tilde{\epsilon}_V}+\int_{t_1}^{t_2}dt\, \Delta_C(t)\right)^2\\
\geq& \left( -T_0\epsilon_0(T_0)-\int_0^{T_0}dt \epsilon_\textup{C}(t)\right)^2 -\left(2T_0\epsilon_0(T_0)+\int_0^{T_0}dt \epsilon_\textup{C}\right) \left(t_1 \me^{-2\delta\tilde{\epsilon}_V}+\int_{t_1}^{t_2}dt\, \Delta_C(t)\right)\\
&+  \left(t_1 \me^{-2\delta\tilde{\epsilon}_V}+\int_{t_1}^{t_2}dt\, \Delta_C(t)\right)^2\\
\geq&  -\left(2T_0\epsilon_0(T_0)+\int_0^{T_0}dt \epsilon_\textup{C}(t)\right) \left(t_1 \me^{-2\delta\tilde{\epsilon}_V}+\Delta t\right)+  t_1^2\me^{-4\delta\tilde \epsilon_V} +2\me^{-2\delta\tilde \epsilon_V}\,t_1 \int_{t_1}^{t_2}dt\, \Delta_C(t) \\&+\left(\int_{t_1}^{t_2}dt\, \Delta_C(t)\right)^2\\
\geq&  -\left(2T_0\epsilon_0(T_0)+\int_0^{T_0}dt \epsilon_\textup{C}(t)\right) \left(t_1 \me^{-2\delta\tilde{\epsilon}_V}+\Delta t\right)+  t_1^2 \me^{-4\delta\tilde \epsilon_V} +2\me^{-2\delta\tilde \epsilon_V}\,t_1 \int_{t_1}^{t_2}dt\, \Delta_C(t) -\Delta t^2\\
 =&\Delta^2(0),
\ea
where we have used Definition \eqref{eq:def delt t} and Lemma \ref{lem: t bouns for DeltaC}.
Similarly,
\ba 
\int_0^{T_0} dt \,t\,\tr [\rho(t)]& \leq \int_0^{T_0} dt \left(\Delta_L(t)+\Delta_C(t)+\Delta_R(t)+\epsilon_0\right)\\
&\leq \frac{T_0^2}{2}\left(\epsilon_0(T_0) +\epsilon_3\right)+ \int_0^{t_1}dt\, t\,\Delta_C(t)+\int_{t_1}^{t_2}dt\, t\,\Delta_C(t)+\int_{t_2}^{T_0}dt\,t\,\Delta_C(t)\\
&\leq \frac{T_0^2}{2}\left(\epsilon_0(T_0) +\epsilon_3\right)+ \frac{t_1^2}{2}+\int_{t_1}^{t_2}dt\, t\,\Delta_C(t)+\me^{-\delta (1-\tilde{\epsilon}_V)}\frac{T_0^2-t_2^2}{2}\\
&=\Delta(1),
\ea 
where $\epsilon_3$ is defined in Eq. \eqref{eq:ep 3 4 def}. Setting $t_1=T_0/2$ and simplifying Eq. \eqref{eq:R lo bound 2}, we achieve 
\ba 
{R_1}&\geq \frac{\Delta^2(0)+\epsilon_2}{2\Delta(1)-\Delta^2(0)+\epsilon_1}\\
&=\frac{\left( -\left(2T_0\epsilon_0(T_0)+\int_0^{T_0}dt \epsilon_\textup{C}(t)\right) \left(t_1 \me^{-2\delta\tilde{\epsilon}_V}+\Delta t\right)+  T_0^2\me^{-4\delta \tilde{\epsilon}_V}/4-\Delta t^2\right)/T_0^2 +\left(2\me^{-2\delta\tilde \epsilon_V}\,t_1 \int_{t_1}^{t_2}dt\, \Delta_C(t) +\epsilon_2\right)/T_0^2}{\epsilon_4+\epsilon_3+\frac{3}{4}\me^{-\delta (1-\tilde{\epsilon}_V)}+(1-\me^{-4\delta \tilde{\epsilon}_V})/4+\frac{2}{T_0^2}\int_{t_1}^{t_2}dt\left( t-t_1\me^{-2\delta\tilde \epsilon_V}\right) \Delta_C(t) +\left(\frac{\Delta t}{T_0}\right)^2}\\
&\geq\frac{-\left( \left(2T_0\epsilon_0(T_0)+\int_0^{T_0}dt \epsilon_\textup{C}(t)\right) \left(t_1 \me^{-2\delta\tilde{\epsilon}_V}+\Delta t\right)  +\Delta t^2\right)/T_0^2+\me^{-4\delta \tilde{\epsilon}_V}/4}{\epsilon_4+\epsilon_3+\frac{3}{4}\me^{-\delta (1-\tilde{\epsilon}_V)}+(1-\me^{-4\delta \tilde{\epsilon}_V})/4+\frac{2}{T_0^2} \left( t_2-t_1\me^{-2\delta\tilde \epsilon_V}\right)\int_{t_1}^{t_2}dt \Delta_C(t) +\left(\frac{\Delta t}{T_0}\right)^2}\\
&\geq\frac{-\left( \left(2T_0\epsilon_0(T_0)+\int_0^{T_0}dt |\epsilon_\textup{C}(t)|\right) \left(t_1 \me^{-2\delta\tilde{\epsilon}_V}+\Delta t\right)  +\Delta t^2\right)/T_0^2+\me^{-4\delta \tilde{\epsilon}_V}/4}{\epsilon_4+\epsilon_3+\frac{3}{4}\me^{-\delta (1-\tilde{\epsilon}_V)}+(1-\me^{-4\delta \tilde{\epsilon}_V})/4+\frac{2}{T_0^2} \left( t_2-t_1\me^{-2\delta\tilde \epsilon_V}\right) \Delta t +\left(\frac{\Delta t}{T_0}\right)^2},
\ea
where $\epsilon_4$ is defined in Eq. \eqref{eq:ep 3 4 def}.
\end{proof}

\subsection{${R_1}$ to leading order}\label{R1 to leading order}
%Before calculating our lower bound for $R$ introduced in Lemma \ref{lem:R lowe pre leading order} to leading order in $d$, we will need to introduce the following technical definitions. It will allow us to upper and lower bound $\gamma$  (defined in terms of the integer $m$ in Eq. \eqref{eq:gamma in term of m}) in terms of $\sigma$ and a power of $d$, which will be crucial for the next Lemma.\\
%For $\eta>0$ with $3/\sigma< d^{\eta/2}\leq d/\sigma$, parametrize $m$ by
%\be\label{Eq: bar m def 1}
%m=
%\begin{cases}
%	2\lfloor \bar m \rfloor &\mbox{ if } d=2,4,6,\ldots,\\ \vspace{-0.3cm}\\
%	2\lfloor \bar m \rfloor+1 &\mbox{ if } d=3,5,7,\ldots,
%\end{cases}
%\ee
%where
%\be\label{Eq: bar m def 2} 
%\bar m :=
%\begin{cases}
%	\frac{d^{\eta/2}}{2}\sigma +1 \in\big(5/2,d/2+1\big] &\mbox{ if } d=2,4,6,\ldots,\\ \vspace{-0.3cm}\\
%	\frac{d^{\eta/2}}{2}\sigma +\frac{1}{2} \in\big(2,(d+1)/2\big]  &\mbox{ if } d=3,5,7,\ldots.
%\end{cases}
%\ee 
%Note the consistency of the domains of $\bar m$ and $m$: the domain of $\bar m$ (which follows from the constraint $3/\sigma< d^{\eta/2}\leq d/\sigma$) in Eq. \eqref{Eq: bar m def 2} implies (via Eq. \eqref{Eq: bar m def 1}) a range for $m$ of $m\in\{5,6,7,\ldots,d+2\}$ for $d=2,4,6,\ldots,$ and $m\in\{5,6,7,\ldots,d+2\}$ for $d=3,5,7\ldots,$ which is within the domain of $m$ defined in Eq. \eqref{eq:gamma in term of m}.

\begin{lemma}\label{R to leading order}
%\be \label{eq:gamma scaling}
%\frac{d\gamma}{\sigma}=d^{\,\eta/2}, \quad \text{for some } \eta>0.
%\ee
Assume $\tilde{\epsilon}_V\rightarrow 0$, $\epsilon(T_0,d)\rightarrow 0$, $\delta\rightarrow \infty$, $a\rightarrow\infty$, and $\delta\,\tilde{\epsilon}_V\rightarrow 0$, in the limit $d\rightarrow \infty$, then using Big-O notation, $\bo$, to leading order in $\me^{-\delta},$ $\Delta t$, and $\delta\,\tilde{\epsilon}_V$; ${R_1}$ is lower bounded by
\be \label{eq:R with order terms}
{R_1} \geq\frac{1/4+\bo\left(\epsilon(T_0,d)\,d\right)+\bo(\Delta t^2)}{\bo\left( A^2 \left(\frac{\sigma}{d^{\eta/2}}+1\right)\me^{-\frac{\pi}{2}d^\eta}  \right)+\bo\left(a^2d\,\epsilon(T_0,d)\right)+\bo(a^2\,\me^{-\delta})+\bo(\delta\tilde{\epsilon}_V)+3\left(\frac{\Delta t}{T_0}\right)^2},
\ee 
for all fixed constants $\eta>0$ and $\sigma, d$ satisfying $4/\sigma< d^{\eta/2}\leq d/\sigma$.
\end{lemma}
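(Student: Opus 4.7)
The plan is to take the lower bound from Lemma~\ref{lem:R lowe pre leading order} as the starting point and perform a term-by-term asymptotic analysis under the stated limits $\tilde\epsilon_V\to 0$, $\delta\to\infty$, $a\to\infty$, $\delta\tilde\epsilon_V\to 0$, $\epsilon(T_0,d)\to 0$. The $1/4$ in the numerator of the claimed bound arises solely from the $\me^{-4\delta\tilde\epsilon_V}/4$ term, so the bulk of the work is to show that every other term in the numerator is either $\bo(\Delta t^2)$ or $\bo(\epsilon(T_0,d)\,d)$, and that the six remaining error contributions in the denominator collapse into the five $\bo$-groupings plus the $3(\Delta t/T_0)^2$ appearing in the statement.

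First I would process the numerator. A Taylor expansion gives $\me^{-4\delta\tilde\epsilon_V}/4=1/4+\bo(\delta\tilde\epsilon_V)$, and this $\bo(\delta\tilde\epsilon_V)$ correction is absorbed into the matching denominator grouping. For the subtracted contributions I substitute $\epsilon_0(T_0)=\varepsilon_\nu(T_0,d)(\varepsilon_\nu+2)d=\bo(\epsilon(T_0,d)\,d)$ from Lemma~\ref{lem:1st up lowe on trace rho t}, and use Lemma~\ref{lem: t bouns for DeltaC} with the parametrization of $\gamma$ from Eq.~\eqref{eq:gamma in term of m in proof} and the choice $\gamma d/2\approx d^{\eta/2}\sigma/2$ (see Eq.~\eqref{Eq: bar m def 1 0}) to conclude that $\int_0^{T_0}|\epsilon_\textup{C}(t)|\,dt$ is $\bo(A^2\me^{-\pi d^\eta/2})$, which is dominated by the other $\bo$-terms. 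Writing $t_1\me^{-2\delta\tilde\epsilon_V}+\Delta t=t_1+\bo(\Delta t)+\bo(\delta\tilde\epsilon_V\,T_0)$ and multiplying out, the total subtracted contribution in the numerator is $\bo(\epsilon(T_0,d)\,d)+\bo(\Delta t^2)$ as required.

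Next I would process the denominator. The term $\epsilon_3$ is controlled via the $\Delta_L,\Delta_R$ bounds in Lemma~\ref{lem: t bouns for DeltaC}, which, after integrating $t\,(\Delta_L+\Delta_R)$ over $[0,T_0]$ and using $\gamma d/2\approx d^{\eta/2}\sigma/2$, yields $\epsilon_3=\bo\bigl(A^2(\sigma/d^{\eta/2}+1)\me^{-\pi d^\eta/2}\bigr)$, matching the first grouping. The term $\epsilon_4$ splits as $\epsilon_0(T_0)+\bigl[2T_0\epsilon_0+\!\int|\epsilon_\textup{C}|\bigr]\bigl(t_1\me^{-2\delta\tilde\epsilon_V}+\Delta t\bigr)/T_0^2+\epsilon_1/T_0^2$; the first two pieces give $\bo(\epsilon(T_0,d)\,d)$, and Lemma~\ref{lem: ep1 and ep2 bounds} gives $\epsilon_1/T_0^2=\bo(a^2\me^{-\delta})+\bo(a^2 d\,\epsilon(T_0,d))$. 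The remaining entries are routine: $\tfrac{3}{4}\me^{-\delta(1-\tilde\epsilon_V)}=\bo(\me^{-\delta})$ (absorbed into $\bo(a^2\me^{-\delta})$ since $a\to\infty$), $(1-\me^{-4\delta\tilde\epsilon_V})/4=\bo(\delta\tilde\epsilon_V)$, and Taylor-expanding $t_2-t_1\me^{-2\delta\tilde\epsilon_V}=\Delta t+t_1\bo(\delta\tilde\epsilon_V)$ together with $t_1=T_0/2$ yields $\tfrac{2}{T_0^2}(t_2-t_1\me^{-2\delta\tilde\epsilon_V})\Delta t=2(\Delta t/T_0)^2+\bo(\delta\tilde\epsilon_V)$. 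Combining with the existing $(\Delta t/T_0)^2$ produces the $3(\Delta t/T_0)^2$ in the statement, and all remaining pieces fall into the listed $\bo$-groupings.

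The main obstacle is the bookkeeping around the Gaussian-tail quantities $\Delta_L$, $\Delta_R$ and $\epsilon_\textup{C}$: one needs the choice of $\gamma$ from Eq.~\eqref{eq:gamma in term of m in proof} to convert $\exp\bigl(-2\pi(\gamma d/2)^2/\sigma^2\bigr)$ into $\exp(-\pi d^\eta/2)$ uniformly in $\bar k(t)\in[0,1]$, and to verify that the geometric-series prefactor $(1-\me^{-4\pi|\gamma d/2-\bar k(t)|/\sigma^2})^{-1}$ is bounded by $\bo(\sigma/d^{\eta/2}+1)$ under the assumption $4/\sigma<d^{\eta/2}\leq d/\sigma$. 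Once these Gaussian rates are cleanly identified, everything else is routine Taylor expansion and absorption into the correct $\bo$-grouping.
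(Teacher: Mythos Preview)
Your proposal is correct and follows essentially the same route as the paper: start from Lemma~\ref{lem:R lowe pre leading order}, fix the parametrization of $\gamma$ via $m$ so that $\gamma d/(2\sigma)\approx d^{\eta/2}/2$, use this to turn the Gaussian-tail bounds on $\Delta_L,\Delta_R,\epsilon_{\textup C}$ into $\bo\!\bigl(A^2(\sigma/d^{\eta/2}+1)\me^{-\pi d^\eta/2}\bigr)$, and then do term-by-term order analysis on the remaining epsilons. Your treatment is in fact slightly more explicit than the paper's in showing how the $3(\Delta t/T_0)^2$ emerges from combining $(\Delta t/T_0)^2$ with the Taylor expansion of $\tfrac{2}{T_0^2}(t_2-t_1\me^{-2\delta\tilde\epsilon_V})\Delta t$; the paper leaves that step implicit.
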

\begin{proof}
	We start with a technical definition which allows us to upper and lower bound $\gamma$  (defined in terms of the integer $m$ in Eq. \eqref{eq:gamma in term of m}) in terms of $\sigma$ and a power of $d^{\eta/2}$, which will be crucial for the rest of the proof.\\
	
	For $\eta>0$ with $4/\sigma< d^{\eta/2}\leq d/\sigma$, parametrize $m$ by
	\be\label{Eq: bar m def 1}
	m=
	\begin{cases}
		2\lfloor \bar m \rfloor &\mbox{ if } d=2,4,6,\ldots,\\ \vspace{-0.3cm}\\
		2\lfloor \bar m \rfloor+1 &\mbox{ if } d=3,5,7,\ldots,
	\end{cases}
	\ee
	where
	\be\label{Eq: bar m def 2} 
	\bar m :=
	\begin{cases}
		\frac{d^{\eta/2}}{2}\sigma +1 \in\big(3,d/2+1\big] &\mbox{ if } d=2,4,6,\ldots,\\ \vspace{-0.3cm}\\
		\frac{d^{\eta/2}}{2}\sigma +\frac{1}{2} \in\big(5/2,(d+1)/2\big]  &\mbox{ if } d=3,5,7,\ldots.
	\end{cases}
	\ee 
	Note the consistency of the domains of $\bar m$ and $m$: the domain of $\bar m$ (which follows from the constraint $4/\sigma< d^{\eta/2}\leq d/\sigma$) in Eq. \eqref{Eq: bar m def 2} implies (via Eq. \eqref{Eq: bar m def 1}) a range for $m$ of $m\in\{6,8,10,\ldots,d+2\}$ for $d=2,4,6,\ldots,$ and $m\in\{5,7,9,\ldots,d+2\}$ for $d=3,5,7\ldots,$ which is within the domain of $m$ defined in Eq. \eqref{eq:gamma in term of m}.\\
	
	From the bound $x-1\leq \lfloor x \rfloor \leq x$, $x\in\rr$, it follows $2\bar m-2\leq m\leq 2\bar m$ and thus from Eqs. \eqref{eq:gamma in term of m}, \eqref{Eq: bar m def 1}, \eqref{Eq: bar m def 2}  it follows
	\be \label{eq:up bound gamma even}
	\gamma=\frac{m-2}{d} \leq \frac{2\bar m-2}{d}=d^{\eta/2}\frac{\sigma}{d}\quad \textup{for } d=2,4,6,\ldots,
	\ee
	and 
	\be \label{eq:bound for gamma d over sigma even}
	\gamma =\frac{m-2}{d}\geq \frac{2\bar m-4}{d}= d^{\eta/2}\frac{\sigma}{d}-\frac{2}{d} \quad \textup{for } d=2,4,6,\ldots
	\ee 
	Similarly, we find
	\ba\label{eq:bound for gamma d over sigma odd}
	d^{\eta/2} \frac{\sigma}{d} -\frac{1}{d} \leq \gamma \leq d^{\eta/2}\frac{\sigma}{d} \quad \text{for } d=1,3,5,
	\ea 
	We will now bound $\epsilon_3$ using Eqs. \eqref{eq:bound for gamma d over sigma even}, \eqref{eq:bound for gamma d over sigma odd},
	\ba 
	\epsilon_3 &=  \frac{2}{T_0^2}\int_0^{T_0} dt\,t (\Delta_L(t)+\Delta_R(t))\leq \left(\frac{2 A}{T_0}\right)^2\int_0^{T_0} dt   \frac{t\,\me^{-\frac{2\pi}{\sigma^2}(\gamma d/2-\bar k(t))^2}}{1-\me^{-4\pi|\gamma d/2- \bar k(t)|/\sigma^2}}\leq \left(\frac{2 A}{T_0}\right)^2\int_0^{T_0} dt  \frac{t\,\me^{-\frac{2\pi}{\sigma^2}(\gamma d/2-1)^2}}{1-\me^{-4\pi(\gamma d/2-1)/\sigma^2}}\\
	&= {2 A^2}  \frac{\me^{-\frac{2\pi}{\sigma^2}(\gamma d/2-1)^2}}{1-\me^{-4\pi(\gamma d/2-1)/\sigma^2}}	
	\leq 2 A^2 \frac{\me^{-\frac{\pi}{2}(d^{\eta/2}-4/\sigma)^2}}{1-\me^{-2\pi (d^{\eta/2}-4/\sigma)/\sigma}},\label{eq:line last line in ep 3 bound}
	\ea 
	where the denominator of line \eqref{eq:line last line in ep 3 bound} is finite since $d^{\eta/2}-4/\sigma>0$ and we have used the observation that $1/(1-\me^{-x})$ is monotonic decreasing in $x$ for $x>0$.
	In the $d\rightarrow\infty$ limit, there are a number of possibilities, depending on whether $d^{\eta/2}/\sigma$ tends to infinity, zero or a positive constant. By calculating these cases separately. 
	\begin{itemize}
	\item [1)] If $\lim_{d\rightarrow \infty} d^{\eta/2}/\sigma=\infty$ we find
		\be 
		\epsilon_3= \bo\left( A^2 \frac{\sigma}{d^{\eta/2}}\me^{-\frac{\pi}{2}d^\eta}  \right).
		\ee 
	\item [2)] 	If $\lim_{d\rightarrow \infty} d^{\eta/2}/\sigma$ converges to a constant,  we find
\be 
\epsilon_3= \bo\left( A^2 \me^{-\frac{\pi}{2}d^\eta}  \right).
\ee 
	\end{itemize}

	Thus recalling that Big-O notation captures the worse case scenario (i.e. it is an upper bound in the limit), combining the conclusions from 1) and 2) above we find for arbitrary $d^{\eta/2}/\sigma$,
	\be 
	\epsilon_3= \bo\left( A^2 \left(\frac{\sigma}{d^{\eta/2}}+1\right)\me^{-\frac{\pi}{2}d^\eta}  \right).
	\ee 
	Similarly, we find that $\epsilon_\textup{C}$ defined in Eq. \eqref{eq: ep C def} is of order
		\be 
	\epsilon_\textup{C}= \bo\left( A^2 \left(\frac{\sigma}{d^{\eta/2}}+1\right)\me^{-\frac{\pi}{2}d^\eta}  \right).
	\ee 
	
	Recalling the definition of $\epsilon_0$, $\epsilon_1$, $\epsilon_2$ and $\epsilon_4$, we find
	\ba 
	&\epsilon_0 = \bo(d\,\epsilon(T_0,d)),&\quad
	&\epsilon_1 = \bo(a^2\,\me^{-\delta})+\bo(a^2\,\epsilon(T_0,d)),&\quad
	%&\epsilon_2 = \bo(\me^{-\delta})+\bo(\epsilon),&\quad
	&\epsilon_4 = \bo(d\,\epsilon(T_0,d))+\bo(\me^{-\delta})+\bo(\epsilon_\textup{C}),&
	\ea
	Thus concluding the proof.
\end{proof}

\begin{corollary}\label{corrolary w lims}
	Recall that $\Delta t/T_0=x_{vr}/\pi+\gamma$. Let the conditions in Lemma \ref{R to leading order} be satisfied. In addition, if $x_{vr}$ is of higher order than $\gamma$, $(x_{vr}=\lo(\gamma))$; and the order terms in Eq. \eqref{eq:R with order terms} are of higher order than $\gamma^{2}$, $\big(\text{i.e.\,\,\,} A^2 (\sigma/d^{\eta/2}+1)e^{-\frac{\pi}{2}d^{\eta}}=\lo(\gamma)$, $a^2 d \,\epsilon(T_0,d)=\lo(\gamma)$, $a^2 \me^{-\delta}=\lo(\gamma)$, $\delta \tilde \epsilon_V=\lo(\gamma)\big)$ it follows from Eqs. \eqref{eq:up bound gamma even}, \eqref{eq:bound for gamma d over sigma odd}, that
	\be 
	{R_1}\geq \frac{1}{12} \gamma^{-2}+\lo(\gamma^{-2})\geq \frac{1}{12} \frac{d^{2-\eta}}{\sigma^2}+\lo\left(\frac{d^{2-\eta}}{\sigma^2}\right),
	\ee 
	where $\lo$ is little-o notation.
\end{corollary}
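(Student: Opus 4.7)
The plan is to take the lower bound Eq.~\eqref{eq:R with order terms} from Lemma~\ref{R to leading order} as the starting point, substitute in the quantitative hypotheses of the corollary, and track the asymptotic orders of numerator and denominator separately. There is essentially nothing to prove beyond careful bookkeeping, since all the hard analytic work is already encapsulated in Lemmas~\ref{lem:1st up bound on R}--\ref{R to leading order}.

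First, I would handle the denominator. Using $\Delta t/T_0 = x_{vr}/\pi + \gamma$ together with the hypothesis $x_{vr} = \lo(\gamma)$, one obtains $\Delta t/T_0 = \gamma\bigl(1+\lo(1)\bigr)$, hence $3(\Delta t/T_0)^2 = 3\gamma^2\bigl(1+\lo(1)\bigr)$. The remaining four contributions to the denominator are, by the stated hypotheses, each of smaller order than $\gamma^2$, so that the whole denominator equals $3\gamma^2\bigl(1+\lo(1)\bigr)$. Second, I would handle the numerator: since $\Delta t \to 0$ and $\epsilon(T_0,d) d \to 0$ (the latter from $a^2 d\,\epsilon(T_0,d)=\lo(\gamma)=\lo(1)$), the numerator is $1/4 + \lo(1)$.

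Dividing yields
\be
R_1 \;\geq\; \frac{1/4+\lo(1)}{3\gamma^{2}\bigl(1+\lo(1)\bigr)} \;=\; \frac{1}{12}\,\gamma^{-2} + \lo(\gamma^{-2}),
\ee
which is the first claimed inequality. The second inequality is an immediate consequence of the upper bounds on $\gamma$ already derived in Eqs.~\eqref{eq:up bound gamma even} and \eqref{eq:bound for gamma d over sigma odd}: in both parities of $d$ one has $\gamma \leq d^{\eta/2}\sigma/d$, so $\gamma^{-2}\geq d^{2-\eta}/\sigma^{2}$, and the little-$o$ term transports through since $\gamma^{-2}$ and $d^{2-\eta}/\sigma^2$ agree up to a factor $1+\lo(1)$ under the assumption $x_{vr}+2/d = \lo(\gamma)$, which is already implied by the other hypotheses for large~$d$.

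The only mildly delicate point is the consistency of the hypotheses on the error orders: one needs each of $A^2(\sigma/d^{\eta/2}+1)e^{-\pi d^{\eta}/2}$, $a^{2}d\,\epsilon(T_0,d)$, $a^{2}\me^{-\delta}$, and $\delta\tilde\epsilon_V$ to vanish faster than $\gamma^{2}\asymp d^{\eta-2}\sigma^{2}$ in order for their contribution to be absorbed into the $\lo(\gamma^{-2})$ remainder, rather than merely $\lo(\gamma)$. In effect, the content of the corollary is therefore just a quantitative restatement of Lemma~\ref{R to leading order} in the regime where $(\Delta t/T_0)^2$ dominates the denominator; the genuine obstacle (choosing $\hat V_C$, $n$, $N$, $\sigma$, $\delta$ so that all these hypotheses are simultaneously satisfiable for some prescribed $\eta$) is deferred to Section~\ref{Showing that the limit requirements}, where the explicit potential of Section~\ref{Explicit form on the potential bar V0} is shown to realise them, and then the main Theorem~\ref{thm:quantum lower bound} follows by combining this corollary with that realisability statement.
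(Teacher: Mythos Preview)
Your proof is correct and follows exactly the same approach as the paper, whose entire proof reads ``Follows directly from Lemma~\ref{R to leading order} by keeping leading order terms only.'' Your explicit bookkeeping of numerator and denominator, together with your observation that the hypotheses ought strictly to read $\lo(\gamma^{2})$ rather than $\lo(\gamma)$ for the denominator terms to be absorbed, are apt elaborations of what the paper leaves implicit.
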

\begin{proof}
	Follows directly from Lemma \ref{R to leading order} by keeping leading order terms only.
\end{proof}
\subsection{Showing that the limit requirements of Corollary \ref{corrolary w lims} can be met}\label{Showing that the limit requirements}
The difficulty in proving the conditions given in Lemma \ref{R to leading order} can be satisfied, and that $x_{vr}$ can be of higher order than $\gamma$, consists in finding a potential $\bar{V}_0$ which can be parametrized in terms of $d$ adequately. In this section, we will demonstrate via a particular $\bar{V}_0$ that this can indeed be achieved. Recall that the potential $\bar V_0$ which we will introduce here, has been summarised in Section \ref{Explicit form on the potential bar V0} along with the particular parametrizations which we will pick in this section, as need be.

\subsubsection{Generic definition of $\bar{V}_0$}
We start by defining $\bar{V}_0:\rr\rightarrow\rr$ appropriately. Let\footnote{One could also exchange the term $1/(\delta d^2)$ with $1/(\delta d^K)$ for some constant $K>0$ and the main result, Theorem \ref{Thm:R lower quant bound} would still go through.}
\be \label{eq:def of rapped pot}
\bar{V}_0(x):=\frac{1}{\delta d^2}+n A_0 \sum_{p=-\infty}^{\infty} V_B\left (n(x-x_0-2\pi p) \right),
\ee
where $n\geq 1$ is a free parameter, $A_0$ a normalization constant such that
\be \label{eq:norr V0 requirement}
\int_0^{2\pi} \bar{V}_0(x)\,dx =1,
\ee
and
\be 
V_B(x):=\textup{sinc}^{2 N}(x):=\left( \frac{\sin(\pi x)}{\pi x}\right)^{2 N},
\ee 
where $V_B(0)=1$ (defined by continuity in $x$) and $N\in \nnp$ is another free parameter. In the following if statements about $\bar{V}_0$ or $V_B$ are made without mentioning either $n$ or $N$, it is to be understood that these properties hold for all $n\geq 1$ and $N\in \nnp$.\\
We will now show some useful properties of $\bar{V}_0$. Later we will parameterize $n$ in terms of $d$, while $N$ will be a constant dependent on the small ($d$ independent) parameter $\eta$ introduced in Lemma \ref{R to leading order}. Such parametrizations will eventually lead to the explicitly form of the potential summarised in Section \ref{Explicit form on the potential bar V0}.

\subsubsection{Finding a value for $a$}
It will now become apparent why we included the term $1/(\delta d^2)$ in the definition of the potential $\bar V_0$.
\begin{corollary}\label{cor:a parametrisation for a}
	The constant $a$ defined in Eq. \eqref{eqdef: a} can be set equal to the following 
	\be\label{eq:a set to}
	a= \frac{T_0}{4\pi} d^2.
	\ee 
\end{corollary}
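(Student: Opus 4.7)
The plan is to exploit the fact that the potential $\bar V_0(x)$ was designed (in Eq.~\eqref{eq:def of rapped pot}) as the sum of a strictly positive constant offset $1/(\delta d^2)$ and a manifestly non-negative periodic term. A straightforward pointwise lower bound on $\bar V_0$ then plugs directly into the defining inequality \eqref{eqdef: a} to yield the stated value of $a$.

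Concretely, I would proceed as follows. First, I would note that $V_B(x) = \mathrm{sinc}^{2N}(x)$ is non-negative for every $x \in \rr$, because it is a real-valued function raised to an even power $2N$ with $N \in \nnp$. Second, the normalization constant $A_0$ is positive: indeed, Eq.~\eqref{eq:norr V0 requirement} together with the fact that $\int_0^{2\pi} \frac{1}{\delta d^2}\,dx = \frac{2\pi}{\delta d^2} < 1$ forces the coefficient in front of the $V_B$-sum to be strictly positive, and since the sum itself is strictly positive (it is a sum of non-negative terms, at least one of which is non-zero), one concludes $A_0 > 0$. Combined with $n \geq 1 > 0$, this gives
\be
\bar V_0(x) \;\geq\; \frac{1}{\delta d^2}, \quad \forall\, x \in \rr.
\ee

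Third, taking the minimum over $x \in [0,2\pi]$ and inverting,
\be
\left(\min_{x\in[0,2\pi]} \bar V_0(x)\right)^{-1} \;\leq\; \delta d^2 .
\ee
Substituting this bound into the defining condition \eqref{eqdef: a} on $a$ yields
\be
\frac{T_0}{4\pi\,\delta}\left(\min_{x\in[0,2\pi]} \bar V_0(x)\right)^{-1} \;\leq\; \frac{T_0}{4\pi\,\delta}\cdot \delta d^2 \;=\; \frac{T_0}{4\pi} d^2,
\ee
so the choice $a = \frac{T_0}{4\pi} d^2$ satisfies Eq.~\eqref{eqdef: a}, as claimed.

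There is essentially no obstacle to overcome; the only thing that requires any care is verifying that $A_0$ is indeed strictly positive, which as noted above follows immediately from the normalization condition \eqref{eq:norr V0 requirement} because the constant offset $1/(\delta d^2)$ alone integrates to $2\pi/(\delta d^2) < 1$ for all $d \geq 3$ and $\delta \geq 1$. The whole point of including the additive $1/(\delta d^2)$ in the definition of $\bar V_0$ (rather than letting $\bar V_0$ dip arbitrarily close to zero between the peaks of the $\mathrm{sinc}^{2N}$-train) was precisely to produce this explicit polynomial-in-$d$ lower bound on $\min \bar V_0$, so that $a$ in turn scales only polynomially with $d$ — which will be essential for absorbing the $a^2$-dependent error terms $\epsilon_1$, $\epsilon_4$ into $\lo(\gamma)$ contributions in Corollary \ref{corrolary w lims}.
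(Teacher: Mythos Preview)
Your proof is correct and follows essentially the same approach as the paper's own proof: bound $\bar V_0(x)$ below by the constant offset $1/(\delta d^2)$ using non-negativity of the remaining term, then plug into \eqref{eqdef: a}. You are somewhat more careful than the paper in explicitly justifying $A_0 > 0$ (the paper simply asserts that ``all the terms in the definition of $\bar V_0$ are non-negative''), but the core argument is identical.
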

\begin{proof}
	Since all the terms in the definition of $\bar V_0$ are non-negative, it follows $\min_{x\in[0,2\pi]} \bar V_0(x)\geq 1/(\delta d^2)$. Hence
	\be \label{eq:theorem q clock}
	\frac{T_0}{4\pi\delta}\frac{1}{\min_{x\in[0,2\pi]}\bar V_0(x)}\leq \frac{T_0}{4\pi}d^2.
	\ee
	Eq. \eqref{eq:a set to} is thus a direct consequence of Eq. \eqref{eqdef: a}.
\end{proof}

\subsubsection{Properties of $\bar{V}_0$}
First note that $\bar{V}_0$ is periodic with period $2\pi$, and is infinitely differentiable. The first of these properties is clear from Eq. \eqref{eq:def of rapped pot} while the second will be demonstrated in Lemma \ref{lem: devs grow like powers}. These two properties are required by definition in \cite{WSO16}. We will now show some additional properties, specific to this particular choice of potential function $\bar{V}_0$. 
\begin{lemma}[Normalization]\label{lem:Normalization A0}
$A_0$ is $n$ independent. In particular, it satisfies $A_0<\underline{A}_0$, where $\underline{A}_0$ is only a function of $N$.
\end{lemma}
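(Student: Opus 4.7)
The plan is to evaluate the normalisation integral in Eq.~\eqref{eq:norr V0 requirement} in closed form and simply read off from the result that $A_0$ does not depend on $n$. Concretely, I would substitute the explicit expression Eq.~\eqref{eq:def of rapped pot} into Eq.~\eqref{eq:norr V0 requirement} and split off the constant background, which contributes $2\pi/(\delta d^2)$, from the periodised sinc tower. Since $V_B\ge 0$, monotone convergence allows the sum and integral to be interchanged, so the task reduces to computing
\be
I_n := \sum_{p=-\infty}^{\infty}\int_0^{2\pi} V_B\!\left(n(x-x_0-2\pi p)\right)\,dx.
\ee

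The key step is then a tiling observation. Substituting $u = n(x-x_0-2\pi p)$ in the $p^\textup{th}$ summand, the image interval is $[n(-x_0-2\pi p),\,n(2\pi-x_0-2\pi p)]$; as $p$ runs over $\zz$ these intervals partition $\rr$ into abutting pieces, each of length $2\pi n$. Consequently the sum telescopes to a single integral over the real line,
\be
I_n \;=\; \frac{1}{n}\int_{-\infty}^{\infty} V_B(u)\,du,
\ee
and the only $n$-dependence on the right is the prefactor $1/n$, which is exactly cancelled by the $n$ sitting in front of $A_0$ in Eq.~\eqref{eq:def of rapped pot}. The integral $\int_{-\infty}^{\infty}\textup{sinc}^{2N}(u)\,du$ is finite because $V_B(u) = O(|u|^{-2N})$ as $|u|\to\infty$ and $N\geq 1$, so this step is rigorous.

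Solving the resulting equation for $A_0$ gives the closed form
\be
A_0 \;=\; \frac{1 - 2\pi/(\delta d^2)}{\int_{-\infty}^{\infty} V_B(u)\,du},
\ee
which depends on $N$ (through $V_B$) and on the product $\delta d^2$, but manifestly not on $n$, proving the first claim. The desired $n$-independent bound then follows by discarding the non-positive correction $-2\pi/(\delta d^2)$ in the numerator, setting
\be
\underline A_0 \;:=\; \Bigl(\textstyle\int_{-\infty}^{\infty} V_B(u)\,du\Bigr)^{-1},
\ee
so that $A_0 < \underline A_0$ and $\underline A_0$ depends only on $N$. I do not anticipate any real obstacle: the entire argument rests on the elementary tiling identity above and on the integrability of $\textup{sinc}^{2N}$ for $N\geq 1$, and the explicit formula for $A_0$ falls out immediately.
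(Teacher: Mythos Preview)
Your proposal is correct and follows essentially the same route as the paper: split off the constant background, interchange the sum with the integral, tile $\rr$ by the translated intervals to collapse the sum to $\int_{-\infty}^\infty V_B$, and read off the closed form $A_0=(1-2\pi/(\delta d^2))/\int_{-\infty}^\infty V_B$ with $\underline A_0:=(\int_{-\infty}^\infty V_B)^{-1}$. The only cosmetic difference is that the paper justifies the sum--integral interchange via the Weierstrass M-test (uniform convergence of the periodised sum on $[0,2\pi]$), whereas you invoke monotone convergence from $V_B\ge 0$; both are valid and the resulting computation is identical.
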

\begin{proof}
	Since for all $x\in[0,2\pi]$,
	\be 
	\sum_{p=-\infty}^\infty\big{|}V_B(n(x-x_0-2\pi p))\big{|}\leq 1+\sum_{p\in\zz\backslash\{0\}} \frac{1}{ p^{2N}}<\infty,
	\ee 
	it follows from the Weierstrass M-test (see Theorem 7.10 in \cite{rudin1976principles}), that the sum in Eq. \eqref{eq:def of rapped pot} converges uniformly for all $x\in[0,2\pi]$. We will use this in the following to exchange summation and integration limits over a finite interval. 
Using Eq. \eqref{eq:norr V0 requirement}, we have
\ba 
1&=\int_0^{2\pi} dx \bar{V}_0(x)=\int_0^{2\pi}dx \bar{V}_0(x+x_0)=A_0 n  \int_0^{2\pi}dx \sum_{p=-\infty}^\infty V_B(n(x-2\pi p))+\int_0^{2\pi}\frac{dx}{\delta d^2}\\
&=A_0 n \sum_{p=-\infty}^\infty \int_0^{2\pi}dx V_B(n(x-2\pi p))+\frac{2\pi}{\delta d^2}= A_0 n \sum_{p=-\infty}^\infty \int_{-2\pi p}^{2\pi(1-p)}dx V_B(nx)+\frac{2\pi}{\delta d^2}= A_0 n \int_{-\infty}^\infty dx V_B(n x)+\frac{2\pi}{\delta d^2}\\
&= A_0  \int_{-\infty}^\infty dx V_B(x)+\frac{2\pi}{\delta d^2}.
\ea 
Hence,
\be 
A_0=\frac{1-2\pi/(\delta d^2)}{\int_{-\infty}^\infty dx V_B(x)}< \frac{1}{\int_{-\infty}^\infty dx V_B(x)}=:\underline{A}_0 ,
\ee 
which is $n$ independent, since $V_B$ is, and well defined since $V_B\in L^1$ with a non-zero integral.
\end{proof}
\begin{lemma}[Technical Lemma needed for Lemma \ref{lem: devs grow like powers}]\label{tech lemm dev interchange} 
\be 
\frac{d^k}{dx^k} \sum_{p=-\infty}^\infty  V_B(x n -2\pi n p)= 	 \sum_{p=-\infty}^\infty \frac{d^k}{dx^k} V_B(x n -2\pi n p),\label{lem stament pre power}
\ee
	for all $k\in\nnp$ and $x\in[0,2 \pi]$.

\end{lemma}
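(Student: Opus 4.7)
The plan is to prove the interchange by induction on $k$, applying the standard theorem on term-by-term differentiation of a series of functions (Rudin, Theorem 7.17): if $\sum f_p$ converges at some point and $\sum f_p'$ converges uniformly on the interval, then $\sum f_p$ is differentiable and its derivative is $\sum f_p'$. Iterating this gives the statement for all $k\in\nnp$. The base case (pointwise convergence of the series itself on $[0,2\pi]$) has already been established in the proof of Lemma~\ref{lem:Normalization A0} via the Weierstrass M-test with majorants $1/p^{2N}$, so all the work is concentrated in the inductive step, namely establishing uniform convergence of the series of $k$-th derivatives on $[0,2\pi]$.

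The main technical input is a decay estimate on the derivatives of $V_B$. Since $\mathrm{sinc}(x)=\sin(\pi x)/(\pi x)$ extends to an entire function, so does its $2N$-th power, hence $V_B$ is smooth on all of $\rr$ and in particular its derivatives are bounded on any compact set, e.g.\ on $[-1,1]$. For $|y|\geq 1$, applying Leibniz's rule to $V_B(y)=\sin^{2N}(\pi y)\cdot (\pi y)^{-2N}$ yields
\[
V_B^{(k)}(y)=\sum_{j=0}^{k}\binom{k}{j}\Bigl(\sin^{2N}(\pi y)\Bigr)^{(j)}\cdot\Bigl((\pi y)^{-2N}\Bigr)^{(k-j)},
\]
and each term is bounded by $C_{k,j}\,|y|^{-(2N+k-j)}\leq C_{k,j}\,|y|^{-2N}$ because $(\sin^{2N}(\pi y))^{(j)}$ is a bounded trigonometric polynomial in $y$ of magnitude at most $(2N\pi)^j$ in absolute value. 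Combining the two regimes, there exists a constant $C_{k,N}>0$ such that
\[
\bigl|V_B^{(k)}(y)\bigr|\leq \frac{C_{k,N}}{(1+|y|)^{2N}}\qquad\forall\,y\in\rr.
\]

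With this estimate in hand, the inductive step is immediate. The $p$-th term of the series of derivatives is $n^k V_B^{(k)}(nx-2\pi n p)$, and for $x\in[0,2\pi]$ and $|p|\geq 2$ one has $|nx-2\pi n p|\geq 2\pi n(|p|-1)$, whence
\[
\Bigl|n^{k}V_B^{(k)}(nx-2\pi n p)\Bigr|\leq \frac{n^k\,C_{k,N}}{\bigl(1+2\pi n(|p|-1)\bigr)^{2N}}\leq \frac{C'_{k,N,n}}{|p|^{2N}},
\]
for a constant independent of $x$. Since $2N\geq 2$, the majorant series $\sum_{|p|\geq 2} |p|^{-2N}$ converges, and the finitely many remaining terms ($|p|\leq 1$) are continuous and hence bounded on $[0,2\pi]$. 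By the Weierstrass M-test the series $\sum_{p\in\zz} n^k V_B^{(k)}(nx-2\pi n p)$ converges uniformly on $[0,2\pi]$, so Rudin's theorem applies.

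The only subtlety—and the part requiring the most care—is the derivative decay bound itself: one must confirm that near $y=0$ the bound $(1+|y|)^{-2N}$ is not violated. This is where the smoothness of $V_B$ at the origin (removable singularity of $\sin(\pi y)/(\pi y)$) is crucial; the derivatives $V_B^{(k)}(0)$ are finite, and since $V_B^{(k)}$ is continuous on $[-1,1]$ it is bounded there, so the constant $C_{k,N}$ absorbs this compact-set bound. Given this, the induction runs through without further difficulty and yields the claim for every $k\in\nnp$.
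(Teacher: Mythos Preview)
Your proof is correct and follows essentially the same approach as the paper: both arguments establish uniform convergence of $\sum_p \tfrac{d^k}{dx^k}V_B(nx-2\pi np)$ via the Weierstrass M-test with majorants $\sim |p|^{-2N}$ obtained from the Leibniz rule applied to $\sin^{2N}(\pi y)\cdot(\pi y)^{-2N}$, then invoke Rudin Theorem~7.17 and iterate in $k$. Your presentation is slightly cleaner in that you package the Leibniz computation into a single global estimate $|V_B^{(k)}(y)|\leq C_{k,N}(1+|y|)^{-2N}$ before specialising to the series, whereas the paper bounds each term $\tfrac{d^k}{dx^k}V_B(nx-2\pi np)$ directly for $|p|\geq 2$; but the content is the same.
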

\begin{proof}
	For $p=\pm 2,\pm 3,\pm 4,\ldots$, $x\in[0,2\pi]$ and for all $k\in\nnp$,
\ba \label{eq:M p k}
\bigg{|}\frac{d^k}{dx^k} V_B&(x n -2\pi n p)\bigg{|}\\
&\leq \Bigg{|} \frac{n^k}{(\pi n ( x-2\pi\ p))^{2N}} \sum_{q=0}^k \binom{k}{q} \frac{ (-{2N})(-{2N}-1)\ldots (-{2N}-q+1) }{n^q (x -2 \pi p)^q} \bigg{[}\frac{d^{k-q}}{dz^{k-q}} \sin^{2N}(\pi z)\bigg{]}_{z=n(x-2\pi p)}	\Bigg{|}\\ 
& \leq  \frac{n^k}{(\pi^2 n |p|)^{2N}} \sum_{q=0}^k \binom{k}{q} \frac{ \big{|}(-{2N})(-{2N}-1)\ldots (-{2N}-q+1) \big{|} }{n^q(2 \pi)^q} D_{k-q}=:M_p^k,
\ea
where we have defined $D_k:=\sup_{z\in\rr} \big{|}{d^k}/{dz^k} \,\sin^{2N}(\pi z)	\big{|}<\infty$ for $k\in\nno$. Hence, since \be \label{eq:uniform converg M test}
\sum_{p\in\zz,\, |p|\geq 2} M_p^k<\infty,
\ee
from the Weierstrass M-test (see Theorem 7.10 in \cite{rudin1976principles}), uniform convergence of $\sum_{p=-\infty}^\infty \, {d^k}/{dx^k} V_B(x n -2\pi n p)$ follows for all $k\in\nnp$.

If $f_n:\rr\rightarrow\rr$ is a sequence of differentiable functions with domain $[a,b]$, $a,b\in\rr$ with derivatives $f_n'$, and $\sum_{n=0}^\infty f_n'$ converges uniformly on $[a,b]$, then if we define $f:=\sum_{n=0}^\infty f_n$ and $|f|<\infty$, it follows that $f'=\sum_{n=0}^\infty f_n'$, where $f'$ is the derivative of $f$. This is a well known Theorem and can be found in e.g. Theorem 7.17 \cite{rudin1976principles}. Thus from Eq. \eqref{eq:uniform converg M test} for $k=1$, Eq. \eqref{lem stament pre power} follows immediate for $k=1$. Proceeding inductively, we prove Eq. \eqref{lem stament pre power} for all $k\in\nnp$.
%and thus since $V_B(x)$, $x\in\rr$ is a smooth function\footnote{This follows from the fact that \textup{sinc} is a smooth function.}, .
\end{proof}

%\begin{lemma}[Technical Lemma needed for Lemma \ref{lem: devs grow like powers}]\label{lem:finite set of discont}
%	Let $f:\rr\rightarrow\rr$, $f\in L^1$, $\textup{support}(f)=[a,b]$, have well defined left and right derivatives at $x$ for all $x\in[a,b]$, but not differentiable (i.e. left and right derivatives differ) at a finite set of points $x_1,x_2,\ldots,x_k\in[a,b]$. Then,
%	\be \label{eq:lech lemm Conve}
% \underbrace{	f\star  f \star \ldots \star	f}_{N \text{-times correlation product}}\in L^1,
%	\ee 
%	where $\star$ denotes Convolution; has well defined left and right derivatives at $x$ for all $x\in\rr$ and might not be differentiable (i.e. left and right derivatives differ) at a finite set of points on its domain.
%\end{lemma}
%\begin{proof}
%	Suppose $g$ has the same properties as $f$ in the Lemma but possibly for a different $a,b, k$ and points $x_1,x_2,\ldots$. We will prove that $g\star f$ has the desired properties of Eq. \eqref{eq:lech lemm Conve}.
%	\be 
%	(f\star g) (y)=\int_{-\infty}^{\infty} dx f(x) g(y-x) =\int_{a}^{b} dx f(x) g(y-x).
%	\ee 	
%Wherefore, using Leibniz integral rule 
%\end{proof}
\begin{lemma}[Technical Lemma needed for Lemma \ref{lem: devs grow like powers}]\label{lem:finite set of discont}
	Let $f$ denote the \textit{Triangle function}, namely
	\be \label{eq:tringle func tech lem}
	f(x):=
	\begin{cases}
		0 &\mbox{ if } |x|\geq 1\\
		1-|x| &\mbox{ otherwise},
	\end{cases}
	\ee 
	and let $\star$ denote Convolution. Then,
		\be \label{eq:lech lemm Conve}
	f^{\,\mathlarger{\star} N}:=\underbrace{	f\star  f \star \ldots \star	f}_{N \text{-times convolution product}}\in L^1,
	\ee 
	is a continuous function which is infinitely differentiable on the intervals $(-N,-N+1)\cup(-N+1,-N+2)\cup\ldots\cup(N-1,N)$ and zero on the intervals $(-\infty,-N)\cup(N,+\infty)$ for all $N\in\nnp$.
\end{lemma}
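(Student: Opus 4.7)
The plan is to proceed by induction on $N$, strengthening the claim slightly to: $f^{\,\mathlarger{\star} N}$ is continuous on $\rr$, vanishes outside $[-N,N]$, and coincides with a single polynomial on each open interval $(k,k+1)$ for $k\in\{-N,-N+1,\ldots,N-1\}$. This piecewise-polynomial statement is strictly stronger than the claimed infinite differentiability on those intervals, and it makes the induction close cleanly.

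For the base case $N=1$, the assertion is immediate from the definition in Eq.~\eqref{eq:tringle func tech lem}: on $(-1,0)$ we have $f(x)=1+x$, on $(0,1)$ we have $f(x)=1-x$, and $f$ vanishes outside $[-1,1]$ and is continuous at the three breakpoints $-1,0,1$.

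For the inductive step, set $g:=f^{\,\mathlarger{\star}N}$ and consider $g\star f$. Support: from $\mathrm{supp}(g\star f)\subseteq\mathrm{supp}(g)+\mathrm{supp}(f)\subseteq[-(N+1),N+1]$, the vanishing outside $[-(N+1),N+1]$ is immediate. Continuity: $g$ is continuous with compact support, hence uniformly continuous, so for any $x_n\to x$ the integrands $g(x_n-y)f(y)$ converge uniformly in $y$ to $g(x-y)f(y)$ on the compact set $\mathrm{supp}(f)$, yielding $(g\star f)(x_n)\to(g\star f)(x)$. Piecewise-polynomial: fix $k\in\{-(N+1),\ldots,N\}$ and $x\in(k,k+1)$, and write
\begin{align*}
(g\star f)(x)=\int_{-1}^{0}g(x-y)(1+y)\,dy+\int_{0}^{1}g(x-y)(1-y)\,dy.
\end{align*}
Substituting $u=x-y$ converts each term into an integral of $g(u)$ against a polynomial in $(u,x)$ over an interval of the form $[x-1,x]$ or $[x,x+1]$. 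Since $x\in(k,k+1)$, the interval $[x-1,x+1]$ contains exactly two integers, $k$ and $k+1$, each in its interior; splitting each integral at these points produces subintegrals over intervals of the form $[x-1,k]$, $[k,x]$, $[x,k+1]$, $[k+1,x+1]$. By the inductive hypothesis, on each such subinterval $g$ equals a fixed polynomial in $u$, so each subintegral is the evaluation of a polynomial antiderivative in $u$ at endpoints that are either integers or equal to $x$, $x\pm 1$; this is a polynomial in $x$. Summing finitely many such contributions gives a polynomial on $(k,k+1)$, hence $C^\infty$ there.

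The only delicate step is the bookkeeping of integration limits in the inductive step: identifying precisely which integer breakpoints of $u\mapsto g(u)$ enter the integration range for each $x\in(k,k+1)$, and checking that the induced split yields polynomial expressions on each piece. Everything else (support, continuity, $L^1$ statement) is a one-line consequence of standard facts about convolutions of continuous compactly supported functions.
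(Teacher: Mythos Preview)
Your proof is correct and follows essentially the same route as the paper's: both strengthen the claim to ``piecewise polynomial on each unit integer interval,'' proceed by induction on $N$, and verify the inductive step by splitting the convolution integral at the integer breakpoints lying in the integration range so that on each piece the integrand is a product of polynomials whose antiderivative, evaluated at endpoints that are affine in $x$, yields a polynomial in $x$. The paper packages this via an auxiliary class $\mathcal{S}_n$ of continuous piecewise-polynomial functions and shows $g_1\star g_n\in\mathcal{S}_{n+1}$, but the content of the argument is the same as yours.
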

\begin{proof}
	The proof is by induction. Let us start by defining the continuous functions $g_n:\rr\rightarrow\rr$ for $n\in \nnp$,
	\be \label{eq:gn def}
	g_n(x):= P_{n,m}(x)\quad \mbox{if } m\leq x \leq m+1 \text{ for } m\in\zz, 
	\ee
	where
	\be \label{eq:gn def 2}
	P_{n,m}(x)=0 \quad\mbox{if  } m\in(-\infty, -n-1)]\cup[n,+\infty),
	\ee 
	and $\{P_{n,m}(x)\}$ are a set of real polynomials in $x$. Continuity of $g$ implies 
	\be\label{eq:continuity Pnm} 
	P_{n,m}(m+1)=P_{n,m+1}(m+1)\quad \text{for all } m\in\zz.
	\ee
	We denote the set of such functions $g_n$ by $\mathcal{S}_n$. Note that $f$ in Eq. \eqref{eq:tringle func tech lem} belongs to $\mathcal{S}_1$. We will start by showing that $(g_1\star g_n)(y)\in\mathcal{S}_{n+1}$ for all $n\in\nnp$, $g_n\in\mathcal{S}_n$ and $g_1\in\mathcal{S}_1$. For this we will calculate $(g_1\star g_n)(y)$ for $y\in[q,q+1]$; for $q\in\zz$ and define $\delta_y:=y-q$, $0\leq \delta_y\leq 1$. We find
	\ba \label{eq:g1 star gn 1sr eq}
	(g_1\star g_n)(y)=& \int_{-\infty}^\infty dx g_1(x) g_n(x-y)=\sum_{m=-1,0} \int_m^{m+1}dx P_{1,m}(x) g_n(x-y)\\
	=& \sum_{m=-1,0}\left( \int_m^{m+\delta_y}dx P_{1,m}(x) g_n(x-q-\delta_y)+\int_{m+\delta_y}^{m+1}dx P_{1,m}(x) g_n(x-q-\delta_y) \right)\\
	=& \sum_{m=-1,0}\left( \int_m^{m-q+y}dx P_{1,m}(x) P_{n,m-q-1}(x-y)+\int_{m-q+y}^{m+1}dx P_{1,m}(x) P_{n,m-q}(x-y) \right)\label{line: g1 gn}\\
	=:&P'(n,q;y),
	\ea 
	where $P'(n,q;y)$ is a polynomial in $y$ with coefficient depending on $n$ and $q$. This follows by noting: 1) the integral of $P_{1,m}(x) P_{n,m-q-1}(x-y)$ and $P_{1,m}(x) P_{n,m-q}(x-y)$ w.r.t. $x$ are polynomials in both $x$ and $y$ (this is trivial to see by a formal power-law expansion). 2) The $x$ variable is then evaluated at a linear function in $y$ which leaves us with a polynomial in $y$. 3) these polynomials are then summed over the coefficients in $m$, which again is another polynomial in $y$. $(g_1\star g_n)(y)$ is continuous in $y\in\rr$, since both $g_1$ and $g_n$ are continuous and the convolution of two continuous functions is continuous.%This follows by noting 
	%\be 
%	\lim_{\epsilon\rightarrow 0^+} (g_1\star g_n)(m-\epsilon)-(g_1\star g_n)(m+\epsilon)=0.
	%\ee
%	for all $m\in\zz$; which is easily verified from the expression in the penultimate line of Eq. \eqref{eq:g1 star gn 1sr eq} using Eq. \eqref{eq:continuity Pnm}.\\

	All that is left to show, to prove that $(g_1\star g_n)(y)\in\mathcal{S}_{n+1}$, is to show that $P'(n,q;y)=P_{n+1,q}(y)$ for some $P_{n+1,q}(y)$ obeying Eqs. \eqref{eq:gn def}, \eqref{eq:gn def 2}. That $P'(n,q;y)$ obeys Eq. \eqref{eq:gn def} follows from the relationship between $y$ and $q$, namely $y\in[q,q+1]$. Eq. \eqref{eq:gn def 2} is true for $P'(n,q;y)$ if
	\be \label{eq:to be verified for P'}
	P'(n,q;y)=0 \quad \text{if } q\in(-\infty, -n-2] \cup [n+1,\infty).
	\ee
	We now verify that Eq. \eqref{eq:to be verified for P'} is satisfied. $q\leq -n-2$ implies $m-q-1\geq m+n+1\geq n$ for $m=-1,0$. Thus it follows using Eq. \eqref{eq:gn def 2} that the term $P_{n,m-q-1}$ in line \eqref{line: g1 gn} is zero for $q\leq -n-2$. Furthermore $q\leq -n-2$ implies $m-q\geq m+n+2\geq n+1$ for $m=-1,0$. Thus it follows using Eq. \eqref{eq:gn def 2} that the term $P_{n,m-q}$ in line \eqref{line: g1 gn} is zero for $q\leq -n-2$. Hence $P'(n,q;y)=0$ for $q\leq -n-2$. Similarly we can verify that $P'(n,q;y)=0$ for $q\geq n+1$. This concludes the proof that $(g_1\star g_n)(y)\in\mathcal{S}_{n+1}$ for all $n\in\nnp$, $g_n\in\mathcal{S}_n$ and $g_1\in\mathcal{S}_1$.
	
	Now work inductively to conclude that $g_1^{\,\mathlarger{\star} N}\in\mathcal{S}_{N}$. Thus recalling that $f\in\mathcal{S}_1$ and noting that all functions in $\mathcal{S}_N$ satisfy the conditions on $f^{\,\mathlarger{\star} N}$ of the Lemma, we conclude the proof.

\end{proof}

\begin{lemma}\label{lem: devs grow like powers}
	There exists $C_0=C_0(N)>0$ which is only a function of $N$, i.e. independent of $n$, $d$, and $k$, such that
\be \label{eq:main eq lemma dev bound}
\max_{x\in[0,2\pi]} \bigg{|} \frac{d^k}{dx^k}  \bar{V}_0(x)\bigg{|} \leq n^{k+1} C_0^{k+1},\quad \forall\, k\in\nno,\,\, \forall\, n \geq 1.
\ee
\end{lemma}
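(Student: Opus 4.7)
The plan is to apply the Poisson summation formula to the function $V_B$, converting the infinite sum of sharply peaked translates in the definition of $\bar V_0$ into a trigonometric polynomial in $x$ of degree $O(nN)$, after which an elementary term-by-term bound will finish the job.

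The starting observation is the Fourier-theoretic identification $V_B = \mathrm{sinc}^{2N} = \widehat{\Phi_N}$, where $\Phi_N := \phi^{*2N}$, $\phi := \mathbbm{1}_{[-1/2,1/2]}$, and the convention $\hat g(\xi) := \int g(t) e^{-2\pi i t \xi}\,dt$ is adopted. One has $\Phi_N = f^{*N}$ with $f$ the triangle function, so Lemma \ref{lem:finite set of discont} supplies that $\Phi_N$ is continuous with support $[-N,N]$ and $\|\Phi_N\|_\infty$ depends only on $N$. Because $V_B$ is continuous and lies in $L^1$, and $\widehat{V_B}(\xi) = \Phi_N(-\xi)$ is continuous with compact support, the Poisson summation formula applies pointwise and gives
\be
n\sum_{p\in\zz} V_B(n(x-x_0-2\pi p)) = \frac{1}{2\pi}\sum_{m\in\zz} \Phi_N\!\left(-\frac{m}{2\pi n}\right) e^{im(x-x_0)}.
\ee
The compact support of $\Phi_N$ restricts the right-hand side to integers $m$ with $|m| \leq 2\pi n N$, so substituting into Eq.~\eqref{eq:def of rapped pot} represents $\bar V_0(x) - 1/(\delta d^2)$ as a trigonometric polynomial of degree at most $\lfloor 2\pi n N\rfloor$.

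With this representation in hand, term-by-term differentiation together with the trivial bounds $|m|^k \leq (2\pi n N)^k$, $|\Phi_N(\cdot)| \leq \|\Phi_N\|_\infty$ and ``number of nonzero Fourier coefficients is at most $4\pi n N + 1 \leq 5\pi n N$'' (valid for $nN \geq 1$) will produce
\be
\max_{x\in[0,2\pi]} |\bar V_0^{(k)}(x)| \leq \tfrac{5}{2} A_0 N \|\Phi_N\|_\infty \cdot n^{k+1}(2\pi N)^k
\ee
for $k\geq 1$, after using Lemma \ref{lem:Normalization A0} to absorb $A_0$ into the purely $N$-dependent constant $\underline A_0$. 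Setting $C_0 := \max\{1 + \tfrac{5}{2}\underline A_0 N\|\Phi_N\|_\infty,\, 2\pi N\}$, the claim $|\bar V_0^{(k)}(x)| \leq n^{k+1} C_0^{k+1}$ will follow for every $k \geq 1$; the $k=0$ case is handled analogously, with the residual $1/(\delta d^2)\leq 1 \leq n \leq n C_0$ term being absorbed into the first factor of $C_0^{k+1}$.

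The main obstacle is placing the Poisson summation step on a rigorous footing and carefully bookkeeping the finiteness of the Fourier support. Once this is secured, the trigonometric-polynomial structure makes the derivative bound effortless and bypasses both the interchange argument of Lemma \ref{tech lemm dev interchange} and the awkward growth-in-$k$ factors that would arise from attempting to bound each translate $V_B^{(k)}(n(x-x_0-2\pi p))$ directly via Leibniz combined with asymptotic decay estimates, where the combinatorial factors of rising factorials like $(2N)_j$ would otherwise obstruct a clean $C_0^{k+1}$ bound.
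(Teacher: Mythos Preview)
Your argument is correct and takes a genuinely different, more economical route than the paper's own proof. The paper also passes through the Fourier transform of $V_B$ (which it notes is the $N$-fold self-convolution of the triangle function, hence compactly supported in $[-N,N]$), but it never applies Poisson summation. Instead it keeps the sum over translates $p$ explicit, writes each $V_B^{(k)}$ via $\mathcal F^{-1}((2\pi iy)^k\mathcal F(V_B))$, and then bounds the $p\neq 0$ terms by integrating by parts twice in $y$ to extract $1/p^2$ decay. This works but produces polynomial-in-$k$ prefactors that must be crudely majorised via $k^2\le e^{2k}$ before a single $C_0$ can absorb everything, and it relies on the separate interchange-of-limits Lemma~\ref{tech lemm dev interchange}.

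Your Poisson-summation step collapses the infinite sum at once into a trigonometric polynomial of degree $\lfloor 2\pi nN\rfloor$ with coefficients $\Phi_N(-m/2\pi n)$, after which differentiation and a box count give the clean bound $\tfrac{5}{2}\underline A_0 N\|\Phi_N\|_\infty\,n^{k+1}(2\pi N)^k$ with no $k$-dependent overhead at all. This buys a sharper constant (roughly $C_0\sim 2\pi N$ in place of the paper's $C_0\sim 2\pi Ne^2$) and eliminates both the interchange lemma and the two-fold integration by parts. The only point to nail down carefully is the pointwise validity of Poisson summation here; as you note, $V_B$ is continuous and in $L^1$ with $|V_B(x)|=O(|x|^{-2N})$, and $\widehat{V_B}=\Phi_N(-\cdot)$ is continuous and compactly supported, so the right-hand side is a finite sum and both sides are continuous periodic functions with matching Fourier coefficients, which suffices.
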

\begin{proof} The proof will consist in writing the $k$th derivative in Fourier space and using properties of the Fourier Transform to calculate and upper bound the resultant expression.\\
	We will start by simplifying the expression for the derivative.%By the change of variable $z=x n -2\pi n p$, we find 

	We start by noting that the Fourier Transform of $V_B\in L^1$ can be computed via the Convolution theorem, 
	\be \label{F trans of VB}
	\mathcal{F}(V_B) (y)=	\mathcal{F}((\textup{sinc}^2)^N) (y)= \underbrace{[	\mathcal{F}(\textup{sinc}^2)\star  	\mathcal{F}(\textup{sinc}^2)\star \ldots \star	\mathcal{F}(\textup{sinc}^2)]}_{N \text{-times convolution product}}(y),
	\ee
	which is well-defined, since the Convolution Theorem maps two $L^1$ functions to an $L^1$ function, and $\textup{sinc}^2\in L^1$. 
	By direct calculation, we have that
	\be 
	\mathcal{F}(\textup{sinc}^2)(y)=
	\begin{cases}
	0 &\mbox{ if } |y|\geq 1\\
	1-|y| &\mbox{ otherwise},
	\end{cases}
	\ee
	thus since the Convolution of two finite support functions, has finite support, we conclude by induction from Eq. \eqref{F trans of VB} that $\mathcal{F}(V_B) (y)$ has finite support. We denote the finite interval containing the support of $\mathcal{F}(V_B) (y)$ by $[y_\textup{min}, y_\textup{max}]$. Furthermore, note that $\mathcal{F}(\textup{sinc}^2)$ has a discontinuous derivative at three points on its domain $(0, y_\textup{min},y_\textup{max})$. When $	\mathcal{F}(\textup{sinc}^2)$ is convoluted with itself, the resultant function might also have a finite set of points in its support interval at which it is not differentiable (see Lemma \ref{lem:finite set of discont})\footnote{In fact, it is not differentiable at these points, but we will not need to prove this for our purposes.\label{refnote}}. Similarly due to Eq. \eqref{F trans of VB}, $\mathcal{F}(V_B) (y)$ may have a finite set of points contained in $[y_\textup{min}, y_\textup{max}]$ at which the function is not differentiable\footref{refnote}. From Lemma \ref{lem:finite set of discont}, we conclude that a set containing all such points is $\{y_l=-N-1+l\}_{l=1}^{2N+1}$, where $y_1=y_\textup{min}$, $y_{2N+1}=y_\textup{max}$. Finally, the last property of $\mathcal{F}(V_B)(y)$ which we will need; is that since $\mathcal{F}(\textup{sinc}^2)(y)$ has finite right and left 1st and 2nd derivatives in the interval $y\in[y_\textup{min},y_\textup{max}]$, it follows that $\mathcal{F}(V_B)(y)$ also has finite left and right 1st and 2nd derivatives in the interval $y\in[y_\textup{min},y_\textup{max}]$ (see Lemma \ref{lem:finite set of discont}).\\
	Using Lemma \ref{tech lemm dev interchange} and the change of variable $z=nx -2\pi p$, we have for $k\in\nno$
	\ba\label{eq:max over x kth up inter}
	\max_{x\in[0,2\pi]} \bigg{|} \frac{d^k}{dx^k} \bar{V}_0(x)\bigg{|}&=\max_{x\in[0,2\pi]} \bigg{|} \frac{d^k}{dx^k} \bar{V}_0(x+x_0)\bigg{|}&\\
	&=
	nA_0	\max_{x\in[0,2\pi]} \bigg{|} \frac{d^k}{dx^k}\sum_{p=-\infty}^\infty  V_B(x n -2\pi n p) \bigg{|}\\
		&=
	nA_0	\max_{x\in[0,2\pi]} \bigg{|} \sum_{p=-\infty}^\infty \frac{d^k}{dx^k} V_B(x n -2\pi n p) \bigg{|}\\
	&= n^{k+1} 	A_0	\max_{x\in[0,2\pi]} \bigg{|}	\sum_{p=-\infty}^\infty \bigg{[} \frac{d^k}{dz^k} V_B(z) \bigg{]}_{z=nx -2\pi n p}\bigg{|}\\
	&=  n^{k+1} 	A_0	\max_{x\in[0,2\pi]} \bigg{|} 	\sum_{p=-\infty}^\infty \bigg{[} \mathcal{F}^{-1}\bigg( (2 \pi \mi y)^k \mathcal{F}(V_B) \bigg)(z) \bigg{]}_{z=nx -2\pi n p}\bigg{|}\label{line:kth dev intermediate}\\
	&=  n^{k+1} 	A_0	\max_{x\in[0,2\pi]} \bigg{|}	\sum_{p=-\infty}^\infty  \int_{-\infty}^\infty dy  (2 \pi \mi y)^k \mathcal{F}(V_B)(y) \,\me^{-2\pi \mi (nx -2\pi n p)y}\bigg{|} \\
	& \leq  2  n^{k+1} 	A_0	\max_{x\in[0,2\pi]} 	\sum_{p=0}^\infty \bigg{|} \int_{y_\textup{min}}^{y_\textup{max}} dy  (2 \pi \mi y)^k \cos(4\pi^2 n p y)\mathcal{F}(V_B)(y)\, \me^{-2\pi \mi nxy}\bigg{|} .
	\ea  
	Note that the inverse Fourier Transform is well defined on the domain $L^2$. Thus  line \eqref{line:kth dev intermediate} is well justified since $(2 \pi \mi y)^k \mathcal{F}(V_B)\in L^2$ $\forall\,k\in\nno$, because $\mathcal{F}(V_B)$ has finite support.\\
	For the term $p=0$ we have
	\ba \label{eq:p is zero term}
	\bigg{|}\int_{y_\textup{min}}^{y_\textup{max}} dy  (2 \pi \mi y)^k \mathcal{F}(V_B)(y) \me^{-2\pi \mi nxy}	\bigg{|}&\leq (y_\textup{max}-y_\textup{min}) |2\pi y_\textup{max}|^k \max_{x\in [y_\textup{max},y_\textup{min}]} \big{|}\mathcal{F}(V_B)(x) \big{|}\\
	&\leq y_\textup{M} (2\pi  y_\textup{M})^k \mathcal{F}_\textup{max}^{(0)}(V_B),
	\ea 
	where we have introduced the notation $\max\{|y_\textup{min}|,|y_\textup{max}|,1 \}=N=:y_\textup{M}$,  $d^k/dy^k \mathcal{F}(V_B)(y)=:\mathcal{F}^{(k)}(V_B)(y)$, and $\sup_{y \in(y_1,y_2)\cup (y_2,y_3)\cup\ldots\cup(y_{2N},y_{2N+1})} |\mathcal{F}^{(k)}(V_B)(y)|=: \mathcal{F}^{(k)}_\textup{max}(V_B)$. We will now bound the other terms $p\neq 0$. In the following we will denote, $\lim_{\varepsilon\rightarrow 0^+} \mathcal{F}^{(k)}(V_B)(y_0\pm \varepsilon)=: \mathcal{F}^{(k)}_{\pm}(V_B)(y_0)$, and use primes to denote 1st derivatives w.r.t. $y$. We will proceed by integration by parts twice, after splitting the integral up into sections in which $\mathcal{F}(V_B)(y)$ has continuous derivatives. We find
	\ba 
	\bigg{|}\int_{y_\textup{min}}^{y_\textup{max}} &dy  (2 \pi \mi y)^k\cos(4\pi^2 n py) \mathcal{F}(V_B)(y) \me^{-2\pi \mi nxy}	\bigg{|}=	\bigg{|}\sum_{r=1,\ldots,2N}\int_{y_r}^{y_{r+1}} dy  (2 \pi \mi y)^k\cos(4\pi^2 n py) \mathcal{F}(V_B)(y)\, \me^{-2\pi \mi nxy}	\bigg{|}\\
	=&\bigg{|}\underbrace{\sum_{r=1,\ldots,2N}\bigg{[}  (2 \pi \mi y)^k\frac{\sin(4\pi^2 n py)}{4\pi^2 n p} \mathcal{F}(V_B) (y)\,\me^{-2\pi \mi nxy}\bigg{]}_{y_r}^{y_{r+1}}}_{=\,0 \text{ Since $\mathcal{F}(V_B)(y)$ is continuos}}\\ &-\sum_{r=1,\ldots,2N}\int_{y_r}^{y_{r+1}} dy \frac{\sin(4\pi^2 n py)}{4\pi^2 n p} (2 \pi \mi y)^{k-1} \me^{-2\pi \mi nxy}\,2\pi \mi\bigg{(}(k-2\pi \mi y nx)  \mathcal{F}(V_B)(y)+  \mathcal{F}(V_B)(y)+  y\mathcal{F}^{(1)}(V_B)(y)    \bigg{)}	\bigg{|}.\\
	=&	\bigg{|}\lim_{\varepsilon\rightarrow 0^+}\underbrace{\sum_{r=1,\ldots,2N} \bigg{[}\frac{-\cos(4\pi^2 n py)}{16\pi^4 n^2 p^2} (2 \pi \mi y)^{k-1} \me^{-2\pi \mi nxy}\,2\pi \mi\bigg{(}(k-2\pi \mi y nx)  \mathcal{F}(V_B)(y)+  \mathcal{F}(V_B)(y)+  y\mathcal{F}^{(1)}(V_B)(y)     \bigg{)}	\bigg{]}_{y_r+\varepsilon}^{y_{r+1}-\varepsilon}}_{\text{May not be zero since $\mathcal{F}^{(1)}(V_B)(y)$ may be discontinuos at points $y_1,y_2\ldots,y_{2N+1}$}}\\
	&-\sum_{r=1,\ldots,2N} \int_{y_r}^{y_{r+1}}dy \frac{-\cos(4\pi^2 n py)}{16\pi^4 n^2 p^2} \bigg{(} (2 \pi \mi y)^{k-1} \me^{-2\pi \mi nxy}\,2\pi \mi\big{(}(k-2\pi \mi y nx)  \mathcal{F}(V_B)(y)+  \mathcal{F}(V_B)(y)+  y\mathcal{F}^{(1)}(V_B)(y)\big{)} \bigg{)}'\,\bigg{|}\\
	\leq 
&\sum_{r=1,\ldots,2N+1}\bigg{|}	\frac{-\cos(4\pi^2 n py_r)}{16\pi^4 n^2 p^2} (2 \pi \mi y_r)^{k-1} \me^{-2\pi \mi nxy_r}\,2\pi \mi  y_r\Big{(}\mathcal{F}^{(1)}_{-}(V_B)(y_r)-\mathcal{F}^{(1)}_{+}(V_B)(y_r)     \Big{)}\bigg{|}\\
&+ \sum_{r=1,\ldots,2N} \frac{y_{r+1}-y_{r}}{16\pi^4 n^2 p^2} \sup_{y \in(y_{r},y_{r+1}) } \bigg{|}\bigg{(} (2 \pi \mi y)^{k-1} \me^{-2\pi \mi nxy}\,2\pi \mi\big{(}(k-2\pi \mi y nx)  \mathcal{F}(V_B)(y)+  \mathcal{F}(V_B)(y)+  y\mathcal{F}^{(1)}(V_B)(y)\big{)} \bigg{)}'\,\bigg{|}\\
\leq&  \frac{1}{16\pi^4 n^2 p^2} \Bigg( (2\pi y_\textup{M})^{k-1} y_\textup{M} 4\pi (2N+1) \mathcal{F}^{(1)}_\textup{max}(V_B)   \label{inter step 106}\\
&\quad\quad\quad\quad\quad +(y_\textup{max}-y_\textup{min}) 2\pi (2\pi y_\textup{M} )^{k-1} \bigg{[} \big{(}(k-1)(2\pi y_\textup{M})^{-1} +n x\big{)} \big{(}(k+2\pi y_\textup{M} n x + 1) \mathcal{F}_\textup{max}(V_B) +y_\textup{M}\mathcal{F}^{(1)}_\textup{max}(V_B)\big{)}  2\pi \nonumber\\
&\quad\quad\quad\quad\quad+2\pi n x \mathcal{F}_\textup{max}(V_B) + (k+2\pi n x +2) \mathcal{F}_\textup{max}^{(1)}(V_B)+y_\textup{M} \mathcal{F}_\textup{max}^{(2)}(V_B)	\Bigg{]} \Bigg{)}.\nonumber
	\ea 
	Note the dependency on $n$ of the expression on the R.H.S. of the inequality line \eqref{inter step 106}. It is of the form $(c_{00}+n c_{01}+n^2 c_{02})/n^2$, where $c_{00}\geq 0$, $c_{01}\geq 0$, $c_{02}\geq 0$ are $n$ independent. Thus since $n\in[1,\infty)$, the R.H.S. of line \eqref{inter step 106} is upper bounded uniformly in $n$ by setting $n=1$. Similarly, the coefficients $c_{01}\geq 0$, $c_{02}\geq 0$ are upper bounded in $x\in[0,2 \pi]$, by setting $x=2 \pi$. We thus have for $|p|\in\nnp$,
	\ba\label{eq:penultimate inter for multi dev bound}
	\bigg{|}\int_{y_\textup{min}}^{y_\textup{max}} &dy  (2 \pi \mi y)^k\cos(4\pi^2 n py) \mathcal{F}(V_B)(y) \me^{-2\pi \mi nxy}	\bigg{|}\\
	\leq&  \frac{(2\pi N)^{k-1}}{16\pi^4  p^2} \Bigg(  N 4\pi (2N+1) \mathcal{F}^{(1)}_\textup{max}(V_B)   %\label{inter step 106}
	\\
	&\quad\quad\quad\quad\quad +N 2\pi  \bigg{[} \big{(}(k+1)(2\pi N)^{-1} + 2\pi \big{)} \big{(}(k+(2\pi)^2 N  + 1) \mathcal{F}_\textup{max}(V_B) +N\mathcal{F}^{(1)}_\textup{max}(V_B)\big{)}  2\pi \nonumber\\
	&\quad\quad\quad\quad\quad+(2\pi)^2 \mathcal{F}_\textup{max}(V_B) + (k+(2\pi)^2 +2) \mathcal{F}_\textup{max}^{(1)}(V_B)+N \mathcal{F}_\textup{max}^{(2)}(V_B)	\Bigg{]} \Bigg{)},
	\ea
	where we have used $y_\textup{M}=N$. The R.H.S. of the inequality Eq. \eqref{eq:penultimate inter for multi dev bound} can be written as
	\be 
	\frac{(2\pi N)^{k}}{16\pi^4  p^2}\left( c_{03}+c_{04}k +c_{05}k^2  \right),
	\ee 
	with the coefficients $c_{03}\geq 0$, $c_{04}\geq 0$, $c_{05}\geq 0$ are $k$, $x$ and $p$ independent. Thus from Eqs. \eqref{eq:max over x kth up inter}, \eqref{eq:p is zero term}, \eqref{eq:penultimate inter for multi dev bound},
	\ba
		\max_{x\in[0,2\pi]} \bigg{|} \frac{d^k}{dx^k} \bar{V}_0(x)\bigg{|}&=\max_{x\in[0,2\pi]} \bigg{|} \frac{d^k}{dx^k} \bar{V}_0(x+x_0)\bigg{|}\\
		&\leq  2  n^{k+1} 	A_0 \bigg{(} N (2\pi  N)^k \mathcal{F}_\textup{max}^{(0)}(V_B) + \sum_{p\in \zz \backslash \{0\}} 	\frac{(2\pi N)^{k}}{16\pi^4  p^2}\bigg( c_{03}+c_{04}k +c_{05}k^2  \bigg)\\
		&=  2  n^{k+1} 	A_0 (2\pi N)^{k} \bigg{(} N \mathcal{F}_\textup{max}^{(0)}(V_B) +	\frac{1}{16\pi^4}\frac{2 \pi^2}{6}\bigg( c_{03}+c_{04} \me^{\ln k} +c_{05}\me^{2\ln k}  \bigg)\bigg)\\
		&\leq  2  n^{k+1} 	A_0 (2\pi N)^{k} \bigg{(} N \mathcal{F}_\textup{max}^{(0)}(V_B) +	\frac{1}{48\pi^2}\bigg( c_{03}+c_{04} \me^{ k} +c_{05}\me^{2 k}  \bigg)\bigg),\label{eq:up bound mult deov intermed} 
	\ea
	for $k\in\nno$.
	Thus recalling from Lemma \ref{lem:Normalization A0} that $A_0$ is upper bounded by $\underline{A}_0$ which is only a function of $N$, from Eq. \eqref{eq:up bound mult deov intermed}, it follows that there exists a coefficient $C_0$ which only depends on $N$ such that Eq. \eqref{eq:main eq lemma dev bound} holds.
\end{proof} 
\subsubsection{Determining a parametrization of $n$ in terms of $d$ such that $\epsilon(T_0,d)\rightarrow0$ quicker than any polynomial in $d$.}%holds}

\iffalse

Recall Theorem E.1. in \cite{WSO16}:
where, in terms of order in $b$, $|t|$, $d$ and $\sigma$, 
\be%\label{eq:epsilon t d}
\epsilon(t,d)=
	|t| \frac{d}{T_0}\!\left(\bo\left( \frac{\sigma^3}{\bar\upsilon \sigma^2/d+1}\right)^{1/2}\!\!+\bo\left(\frac{d^2}{\sigma^2}+b\right)\right) \exp\left(-\frac{\pi}{4}\frac{\alpha_0^2}{\left(\frac{d}{\sigma^2}+\bar\upsilon\right)^2} \left(\frac{d}{\sigma}\right)^2 \right)+\bo\left(|t|\frac{d^2}{\sigma^2}+1\right)\me^{-\frac{\pi}{4}\frac{d^2}{\sigma^2}}+\bo\left( \me^{-\frac{\pi}{2}\sigma^2} \right), 
\ee
if $\mathcal{N}\geq 8$ and $\bar \upsilon\geq 0$, and where $\alpha_0\in(0,1]$, $b$, $\mathcal{N}$ is defined in Def. 8, and  $\bar{\upsilon}$ in Def. 9 of \cite{WSO16}. Specifically,
\be \label{eq: mathcal N def}
\mathcal N= \Bigg\lfloor \frac{\pi \alpha_0^2}{2\left(\bar\upsilon+\frac{d}{\sigma^2}\right)^2}\left(\frac{d}{\sigma}\right)^2\Bigg\rfloor.
\ee 
\fi

 Let us start by introducing the constraint
\ba
\frac{d}{\bar{\upsilon} \sigma} &=d^{\epsilon_5},\quad \text{for some fixed constant } 0<{\epsilon_5}. \label{eq:epsilon 5 def}
\ea
It will also be useful to introduce the variable $\epsilon_6$ which is a function of $d$ and $\sigma$ via
\ba\sigma &= d^{\epsilon_6},
\ea 
where $\epsilon_6$ uniformly bounded to the interval\footnote{By ``uniformly bounded'', it is meant that $\epsilon_5<\lim_{d\rightarrow \infty}\epsilon_6<1$} $\epsilon_5<\epsilon_6<1.$
We can re-write Eq. \eqref{eq:mathcal N def} as
\be
\mathcal N= \Bigg\lfloor \alpha_0^2\frac{\pi }{2}\,\frac{d^{2-2\epsilon_6}}{d^{2-2\epsilon_5-2\epsilon_6}+d^{2-4\epsilon_6}+d^{2-\epsilon_5-3\epsilon_6}}\Bigg\rfloor= \Bigg\lfloor \alpha_0^2\frac{\pi }{2}\,\frac{1}{d^{-2\epsilon_5}+d^{-2\epsilon_6}+d^{-\epsilon_5-\epsilon_6}}\Bigg\rfloor.
\ee 
Thus recalling that $\alpha_0\in(0,1]$ is a fixed constant, it follows that $\mathcal{N}=\infty$ and $\bar\upsilon\geq 0$ in $d \rightarrow \infty$ limit. Hence Eqs.  \eqref{eq:mathcal N contraints} are satisfied for sufficiently large $d$  and Eq. \eqref{eq:epsilon t d} holds for the parameterisations considered in this proof in the limit $d\rightarrow\infty$.
We can now re-write eq. \eqref{eq:epsilon t d} as
\be\label{eq:epsilon t d re write}
\epsilon(t,d)=
|t| \frac{d}{T_0}\!\left(\bo\left( \frac{\sigma^3}{\sigma d^{-\epsilon_5}+1}\right)^{1/2}\!\!+\bo\left(\frac{d^2}{\sigma^2}+b\right)\right) \exp\left(-\frac{\pi}{4}\frac{\alpha_0^2}{\left(1+d^{\epsilon_5}/\sigma\right)^2} d^{2\epsilon_5} \right)+\bo\left(|t|\frac{d^2}{\sigma^2}+1\right)\me^{-\frac{\pi}{4}\frac{d^2}{\sigma^2}}+\bo\left( \me^{-\frac{\pi}{2}\sigma^2} \right). 
\ee
thus since (as we will soon show), $b$ grows at most polynomially in $d$, $\epsilon(t,d)$ decays quicker than any polynomial in $d$ in the limit $d\rightarrow\infty$. We will now workout the implications of Eq. \eqref{eq:epsilon 5 def}. From Eq. \eqref{eq:b def eq 0} and the relation $-\mi \delta \,\bar{V}_0(x)=V_0(x)$ stated in Eq. \eqref{eq:pod eqs def}, we have that $b$ is any non-negative number satisfying
\be\label{eq:b def eq}
b\geq\; \sup_{k\in\nnp}\left(2\max_{x\in[0,2\pi]} \left|  \delta \bar{V}_0^{(k-1)}(x) \right|\,\right)^{1/k}.
\ee 
Thus from Lemma \ref{lem: devs grow like powers}, it follows 
\be 
\sup_{k\in\nnp}\left(2\max_{x\in[0,2\pi]} \left| \delta \bar{V}_0^{(k-1)}(x) \right|\,\right)^{1/k}\leq \sup_{k\in\nnp}\left(2\delta n^k C_0^k \right)^{1/k}=2 \delta n C_0,
\ee
hence we will set $b= 2\delta n C_0$. Using definition the of $\bar{\upsilon}$ (Eq. \eqref{eq:upsilon def}), we have
\be 
d^{\epsilon_5}= \frac{d}{\bar{\upsilon} \sigma}= \frac{d}{\sigma} \frac{\ln(\pi \alpha_0\sigma^2)}{\pi \alpha_0\kappa}\frac{1}{b}=\frac{d}{\sigma} \frac{\ln(\pi \alpha_0\sigma^2)}{2\pi C_0\alpha_0\kappa}\frac{1}{\delta n},
\ee
with $\kappa=0.792$ from which we find the constraint on $n$,
\be\label{eq:n in terms of d def} 
n= \frac{\ln(\pi \alpha_0\sigma^2)}{2\pi C_0\alpha_0\kappa} \frac{d^{1-\epsilon_5}}{\delta \sigma}.
\ee 

\subsubsection{Determining the constant $N$ and parametrization of $\delta$ in terms of $d$ such that conditions of Corollary \ref{corrolary w lims} are satisfied}% $\tilde{\epsilon}_V \rightarrow 0$,  $\delta \rightarrow \infty$, $\delta \tilde{\epsilon_V} \rightarrow 0$.}
From Eq. \eqref{eq:tilde ep V def} and using elementary properties of the potential $\bar{V}_0$ in Eq. \eqref{eq:def of rapped pot} (namely, integrates to unity over one period and is symmetric w.r.t. $x_0$, i.e. that $\bar V_0(x+x_0)=\bar V_0(-x+x_0)$ we find
\ba\label{eq:tilde ep V v2}
\tilde\epsilon_V&=1-\int_{-x_{vr}}^{x_{vr}} dx\, \bar{V}_0(x+x_0)\\
&= 1- \bigg{(} \int_{-\pi}^{\pi} dx\, \bar{V}_0(x+x_0)-\int_{-\pi }^{-x_{vr}} dx\, \bar{V}_0(x+x_0)-\int_{x_{vr}}^{\pi} dx\, \bar{V}_0(x+x_0) \bigg{)}\\
&= 1- \bigg{(} 1-\int_{x_{vr}}^{\pi} dx\, \bar{V}_0(-x+x_0)-\int_{x_{vr}}^{\pi} dx\, \bar{V}_0(x+x_0) \bigg{)}\\
&= 2\int_{x_{vr}}^{\pi} dx\, \bar{V}_0(x+x_0)\\
&\leq 2(\pi-x_{vr}) \max_{x\in[x_{vr},\pi]}\big\{ \bar{V}_0(x+x_0) \big\}\\
&\leq 2 A_0 n (\pi-x_{vr}) \sum_{p=-\infty}^\infty \max_{x\in[x_{vr},\pi]}\big\{| V_B(n x-2\pi p n)| \big\}+2\frac{(\pi-x_{vr})}{\delta d^2}\\
&\leq  2 \underline{A}_0 n (\pi-x_{vr}) \sum_{p=-\infty}^\infty\max_{x\in[x_{vr},\pi]}\big\{|\pi (n x-2\pi p n)|^{-2N} \big\}+2\frac{(\pi-x_{vr})}{\delta d^2}\\
&\leq  2 \underline{A}_0 n (\pi-x_{vr})\bigg{(} (n\pi x_{vr})^{-2 N}+ (2\pi^2 n)^{-2N}\Big{(}\sum_{p\in\zz\backslash \{0\}} p^{-2N} \Big) \bigg{)}+2\frac{(\pi-x_{vr})}{\delta d^2}\\
&\leq  2 \underline{A}_0  \pi\bigg{(} \frac{n}{(n \pi x_{vr})^{2 N}}+ \frac{2n\zeta(2N)}{(2\pi^2 n)^{2N}} \bigg{)}+\frac{2\pi}{\delta d^2}\\
&\leq  2 \underline{A}_0  \pi(1+\zeta(2N))\frac{n}{(n \pi x_{vr})^{2 N}}+\frac{2\pi}{\delta d^2}.
\ea
Now recall Corollary \ref{corrolary w lims}. One of the conditions needed for this Corollary to be satisfied is $x_{vr}=o\left(\gamma\right)$. Recalling the bound 
\be 
d^{\eta/2} \frac{\sigma}{d}-\frac{2}{d} \leq \gamma \leq d^{\eta/2} \frac{\sigma}{d},
\ee 
which was derived in Eqs \eqref{eq:up bound gamma even}, \eqref{eq:bound for gamma d over sigma odd}), we will find that the following parametrization of $x_{vr}$ and $\delta$ allows us to achieve this scaling and the other conditions introduced in Corollary \ref{corrolary w lims}.
% One way to achieve this, is for the dominant term in the denominator of Eq. \eqref{eq:R with order terms} to be of higher order than the upper bound $\gamma^2\leq d^{\eta-2}\sigma^2$ (see Eqs \eqref{eq:up bound gamma even}, \eqref{eq:bound for gamma d over sigma odd}) higher order than the term $(\Delta t/T_0)^2$. Since $\Delta t/T_0= x_{vr}/\pi+\gamma$, we will now show that one parametrization which allows this to be achieved is as follows.
Let
\ba \label{eq:x vr limit contraint}
&\pi x_{vr}=d^{\epsilon_7} \frac{\sigma}{d},\quad\quad \quad %&\delta \tilde \epsilon_V=\bo\left(d^{\epsilon_7+\epsilon_8}\left(\frac{\sigma}{d}\right)^2\right),&\quad
 \delta=d^{\epsilon_8},&
\ea 
for some fixed constants $0<\epsilon_7<\eta/2$, $\,0<\epsilon_8$. %$0<\epsilon_8$ such that $\epsilon_7+\epsilon_8<\eta.$
We will now show that indeed Eq. \eqref{eq:tilde ep V v2} permits the existence of constants $N$, $\epsilon_7$, $\epsilon_8$ such that Eqs. \eqref{eq:x vr limit contraint} are satisfied.

Substituting expressions for $\pi x_{vr}$ and $n$ from Eq. \eqref{eq:x vr limit contraint}, and Eq. \eqref{eq:n in terms of d def} into \eqref{eq:tilde ep V v2}, we find,
\ba 
\delta \tilde{\epsilon}_V &\leq  2 \underline{A}_0  \pi(1+\zeta(2N))\frac{\delta n}{(n \pi x_{vr})^{2 N}}+\frac{2\pi}{d^2}\\
&=  2 \underline{A}_0  \pi(1+\zeta(2N))\delta \frac{\ln(\pi \alpha_0 \sigma^2)}{2\pi C_0\alpha_0\kappa}\frac{d^{1-\epsilon_5}}{\delta \sigma} \left(\frac{2\pi C_0\alpha_0\kappa}{\ln(\pi \sigma^2)}\frac{\delta \sigma}{d^{1-\epsilon_5}}\frac{1}{d^{\epsilon_7}}\frac{d}{\sigma}  \right)^{2N}+\frac{2\pi}{d^2}\\
&=  2 \underline{A}_0  \pi(1+\zeta(2N))\left( \frac{2\pi C_0\alpha_0\kappa}{\ln(\pi\alpha_0\sigma^2)}\right)^{2N-1} \frac{d^{1-\epsilon_5+(\epsilon_8+\epsilon_5-\epsilon_7)2N}}{\sigma}+\frac{2\pi}{d^2}.
\ea 
Now define a constant $\epsilon_9$ such that $0<\epsilon_9< \eta$, and impose the constraints $\epsilon_7>\epsilon_5+\epsilon_8$ and
\be\label{eq:inter proof bound ep 9} 
\frac{d^{1-\epsilon_5+(\epsilon_8+\epsilon_5-\epsilon_7)2N}}{\sigma}\leq d^{\epsilon_9} \left(\frac{\sigma}{d}\right)^2.
\ee 
Recalling $\sigma=d^{\epsilon_6}$, it follows from Eq. \eqref{eq:inter proof bound ep 9}, 
\ba 
-3\epsilon_6-\epsilon_9+2&\leq -1+\epsilon_5+\underbrace{(\epsilon_7-\epsilon_8-\epsilon_5)}_{>0} 2N\\
\implies & N\geq \frac{-3\epsilon_6-\epsilon_9-\epsilon_5+3}{2(\epsilon_7-\epsilon_8-\epsilon_5)}.
\ea
Therefore, since $-\epsilon_5>-\epsilon_6$, we set the constant $N$ to be
\be 
 N:=\left\lceil\frac{3-4\epsilon_5-\epsilon_9}{2(\epsilon_7-\epsilon_8-\epsilon_5)}\right\rceil> \frac{3-\epsilon_5-\epsilon_9-3\epsilon_6}{2(\epsilon_7-\epsilon_8-\epsilon_5)},
\ee
with $3-4\epsilon_5-\epsilon_9>0$ and $\lceil\cdot\rceil$ denoting the \textit{Ceiling function}. Thus from Eqs. \eqref{eq:tilde ep V v2}, \eqref{eq:inter proof bound ep 9}, it follows
\be 
\delta\tilde{\epsilon}_V\leq 2 \underline{A}_0  \pi(1+\zeta(2N))\left( \frac{2\pi C_0\alpha_0\kappa}{\ln(\pi\alpha_0\sigma^2)}\right)^{2N-1} d^{\epsilon_9} \left(\frac{\sigma}{d}\right)^2+\frac{2\pi}{d^2}. 
\ee
Thus all the conditions for Corollary \ref{corrolary w lims} are satisfied as long as all the constraints on the epsilons we have introduced can simultaneously be satisfied. We will show this in the proof of Theorem \ref{Thm:R lower quant bound}. 

\subsection{Final Theorem}\label{Final Theorem}
\begin{theorem}\label{Thm:R lower quant bound}
	Consider the setup described in Section \ref{sec:setup} for \wso~clock and potential functions $\bar V_0$ introduced in Section \ref{Overview of the quasi-ideal clock} with fixed constant $\alpha_0\in(0,1]$, and $k_0=0$. For all fixed constants $0<\eta\leq1$, and $\sigma$ satisfying 
		\be\label{eq:baby constraint 2 thrm}
	d^{\eta/2}\leq \sigma<d,
	\ee
	 the clock precision ${R_1}$ is lower bounded by
	\be \label{eq:main thrm eq}
	{R_1}\geq \frac{d^{2-\eta}}{\sigma^2}+\lo\left(\frac{d^{2-\eta}}{\sigma^2}\right),
	\ee 
	in the large $d$ limit.
\end{theorem}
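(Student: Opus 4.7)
The plan is to chain together all of the lemmas developed in Sections \ref{Approximating R1}--\ref{Showing that the limit requirements} and then invoke Corollary \ref{corrolary w lims} with an explicit choice of parameters, verifying that the various smallness conditions are simultaneously satisfiable. I would first fix $k_0=0$ and pass from the integrals defining $\mu$ and $\underline{\sigma}^2$ to the survival probability $\tr[\rho(t)]$, which is the natural object produced by the \wso~Clock evolution from \cite{WSO16}. Using Lemma \ref{lem:1st up bound on R}, this turns the ratio $\mu^2/\underline{\sigma}^2$ into an expression in the truncated integrals $\Delta(0)$ and $\Delta(1)$ over $[0,T_0]$, plus tail error terms $\epsilon_1,\epsilon_2$ which Lemma \ref{lem: ep1 and ep2 bounds} controls via the explicit choice $a=\frac{T_0}{4\pi}d^2$ afforded by the constant floor $1/(\delta d^2)$ built into $\bar V_0$.

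Next I would apply Lemma \ref{lem:1st up lowe on trace rho t} to replace $\tr[\rho(t)]$ by the \wso~Clock expression up to an error $\epsilon_0$ proportional to $d\,\varepsilon_\nu(T_0,d)$, and then Lemma \ref{lem: t bouns for DeltaC} to split this into the Gaussian tails $\Delta_L,\Delta_R$ (which will be super-polynomially small in $d$ because $\gamma d/2-\bar k(t)\sim d^{\eta/2}$) and a central term $\Delta_C$ which, thanks to the choice $x_0=\pi+x_{vr}+\pi\gamma$ in Eq.~\eqref{eq:x0 def}, is $\geq e^{-2\delta\tilde\epsilon_V}$ for $t\in[0,t_1]$ and $\leq e^{-2\delta(1-\tilde\epsilon_V)}$ for $t\in[t_2,T_0]$ with $t_1=T_0/2$. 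Feeding this into Lemma \ref{lem:R lowe pre leading order} yields an explicit lower bound whose numerator is essentially $\tfrac14 e^{-4\delta\tilde\epsilon_V}$ and whose denominator is dominated by $3(\Delta t/T_0)^2\sim 3\gamma^2$ once the small terms are controlled.

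The core of the argument is then the asymptotic analysis of Lemma \ref{R to leading order} and Corollary \ref{corrolary w lims}: provided $\tilde\epsilon_V\to 0$, $\delta\to\infty$ with $\delta\tilde\epsilon_V\to 0$, $a^2\me^{-\delta}\to 0$, $a^2 d\,\varepsilon_\nu(T_0,d)\to 0$, and the Gaussian tails are $o(\gamma)$, one extracts $R_1\geq \tfrac{1}{12}\gamma^{-2}+o(\gamma^{-2})$, which by Eqs.~\eqref{eq:up bound gamma even}--\eqref{eq:bound for gamma d over sigma odd} gives the $d^{2-\eta}/\sigma^2$ scaling claimed in the theorem (the leading constant stated in Theorem \ref{Thm:R lower quant bound} is obtained by the further refinement available once one tightens the choice of $\gamma$ inside the window permitted by Eq.~\eqref{Eq: bar m def 1 0}; the key structural inequality is insensitive to this prefactor).

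The main obstacle is verifying that the explicit potential $\bar V_0$ of Eq.~\eqref{eq:def of rapped pot 0} actually meets all the simultaneous scaling requirements. On one hand, $\bar V_0$ must have derivatives growing no faster than $n^{k+1}C_0^{k+1}$ (Lemma \ref{lem: devs grow like powers}), so that the parameter $b$ of \cite{WSO16} is only polynomial in $d$ and $\varepsilon_\nu(T_0,d)$ decays faster than any polynomial via Eq.~\eqref{eq:epsilon t d re write}; on the other hand it must be sharply enough peaked around $x_0$ that $\tilde\epsilon_V$ decays polynomially fast while the peak width $x_{vr}$ is $o(\gamma)$. The delicate balancing is done by introducing the exponents $\epsilon_5,\epsilon_6,\epsilon_7,\epsilon_8,\epsilon_9$, fixing $N=\lceil(3-4\epsilon_5-\epsilon_9)/(2(\epsilon_7-\epsilon_8-\epsilon_5))\rceil$, and then specifying $\epsilon_5=\epsilon_8=\eta/16$, $\epsilon_7=\eta/4$, $\epsilon_9=\eta/2$, which lie in the admissible region $0<\epsilon_5+\epsilon_8<\epsilon_7<\eta/2$ and $0<\epsilon_9<\eta$ for every $\eta\in(0,1]$. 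I would conclude by checking that with these choices and with $\sigma\geq d^{\eta/2}$, every smallness condition of Lemma \ref{R to leading order} and Corollary \ref{corrolary w lims} is satisfied (using $\epsilon_6\in(\epsilon_5,1)$), so that Corollary \ref{corrolary w lims} applies and gives Eq.~\eqref{eq:main thrm eq}.
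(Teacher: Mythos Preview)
Your overall strategy is correct and coincides with the paper's: chain Lemmas~\ref{lem:1st up bound on R}--\ref{R to leading order}, choose the explicit potential of Section~\ref{Showing that the limit requirements}, fix $\epsilon_5=\epsilon_8=\eta/16$, $\epsilon_7=\eta/4$, $\epsilon_9=\eta/2$, and invoke Corollary~\ref{corrolary w lims}. This indeed yields
\[
R_1\;\geq\;\tfrac{1}{12}\,\gamma^{-2}+\lo(\gamma^{-2})\;\geq\;\tfrac{1}{12}\,\frac{d^{2-\eta}}{\sigma^2}+\lo\!\left(\frac{d^{2-\eta}}{\sigma^2}\right).
\]

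Where your proposal has a gap is in the last step, removing the factor $1/12$. You say this is achieved by ``tightening the choice of $\gamma$ inside the window permitted by Eq.~\eqref{Eq: bar m def 1 0}''. That does not work: the Gaussian tail bounds $\Delta_L,\Delta_R,\epsilon_C$ in Lemma~\ref{lem: t bouns for DeltaC} require $(\gamma d/2-1)/\sigma\to\infty$, so $\gamma$ cannot be pushed below the order $d^{\eta/2}\sigma/d$ already chosen in Eqs.~\eqref{Eq: bar m def 1}--\eqref{Eq: bar m def 2}; there is no extra slack in $\gamma$ to absorb a constant. The paper's actual device is different and simple: apply the preliminary bound with $\eta$ replaced by $\eta'/2$, obtaining
\[
R_1\;\geq\;\tfrac{1}{12}\,\frac{d^{2-\eta'/2}}{\sigma^2}
\;=\;\frac{d^{3\eta'/4}}{12}\cdot\frac{d^{2-\eta'}}{\sigma^2}
\;\geq\;\frac{d^{2-\eta'}}{\sigma^2}+\lo\!\left(\frac{d^{2-\eta'}}{\sigma^2}\right),
\]
for $d$ large, and then rename $\eta'$ to $\eta$. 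A related point you pass over is the range of $\sigma$: the raw constraint analysis (your ``$\epsilon_6\in(\epsilon_5,1)$'') actually forces $\sigma\leq d^{1-\eta'/4}$ rather than $\sigma<d$; the paper closes this by noting that the bound $R_1\geq d^{2-\eta'}/\sigma^2$ is monotone decreasing in $\eta'$, so for any $\sigma<d$ one may take $\eta'$ small enough that $\sigma\leq d^{1-\eta'/4}$ while still retaining the desired inequality for the original $\eta$.
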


\begin{proof}
	The bound
	\be \label{eq:main thrm eq almost}
	{R_1}\geq \frac{1}{12}\frac{d^{2-\eta}}{\sigma^2}+\lo\left(\frac{d^{2-\eta}}{\sigma^2}\right),
	\ee 
	follows directly from Corollary \ref{corrolary w lims} and the results from Section \ref{Showing that the limit requirements}, so long as the constraints on the epsilon terms can all be simultaneously satisfied. We will check this here. The constraints introduced are:
	\ba \label{eq:epsilons contrains 1}
	&0<\epsilon_5<\epsilon_6<1,&\quad &0<\epsilon_7<\frac{\eta}{2},& \quad & 0<\epsilon_8,& \quad &\epsilon_5+\epsilon_8<\epsilon_7,& \quad & 0<\epsilon_9<\eta,& 0<3-4\epsilon_5-\epsilon_9,&
	\ea 
	and 
	\be  \label{eq:epsilons contrains 2}
	\frac{4}{\sigma}< d^{\eta/2}\leq \frac{d}{\sigma}.% \textit{ with } \sigma=d^{\epsilon_6}.
	\ee 
	Let $\epsilon_7=\eta/4,$ $\epsilon_5=\epsilon_8=\eta/16,$ and $\epsilon_9=\eta/2.$ As long as
	\be \label{eq:baby constraint}
	\frac{\eta}{16}< \epsilon_6<1-\frac{\eta}{6},\quad\quad 0<\eta<4
	\ee 
	are satisfied, all constraints are met in Eq. \eqref{eq:epsilons contrains 1}. Similarly, Eq. \eqref{eq:epsilons contrains 2} is satisfied if
	\ba
	\frac{\ln 4}{\ln d}-\frac{\eta}{2}<\epsilon_6\leq 1-\frac{\eta}{2}.
	\ea
	Hence Eqs. \eqref{eq:epsilons contrains 1}, \eqref{eq:baby constraint} are satisfied for sufficiently large $d$ if 
	\be \label{eq:baby constraint 2}
	\eta< \epsilon_6\leq 1-\frac{\eta}{2},
	\ee 
	for $\eta>0$. Now make the substitution $\eta=\eta'/2$. Eq. \eqref{eq:main thrm eq almost} becomes
	\be\label{eq:R  wioth constant factor}
	{R_1}\geq \frac{d^{3\eta'/4}}{12}\left(\frac{d^{2-\eta'}}{\sigma^2}+\lo\left(\frac{d^{2-\eta'}}{\sigma^2}\right)\right)\geq \frac{d^{2-\eta'}}{\sigma^2}+\lo\left(\frac{d^{2-\eta'}}{\sigma^2}\right),
	\ee
	with constraint 
	\be \label{eq:baby constraint 3}
	d^{\frac{\eta'}{2}}< \sigma\leq d^{1-\frac{\eta'}{4}},\quad \eta'>0,
	\ee
	where we have used Eq. \eqref{eq:baby constraint 2} and recalled $\sigma=d^{\epsilon_6}$. Finally, note that w.l.o.g., we can replace the upper bound on $\sigma$ in Eq. \eqref{eq:baby constraint 3} by $< d$. This is because of two observations: 1) by choosing $\eta'$ arbitrarily close to zero, $\sigma$ is upper bounded by a number arbitrarily close to $d$. 2) The lower bound on ${R_1}$ in Eq. \eqref{eq:R  wioth constant factor} is monotonically decreasing w.r.t. $\eta'$. To finalise the proof, we simply re-name $\eta'$ by $\eta$.
\end{proof}

%\bibliography{Bibliography}%{Bibliography_appendix}
%\bibliographystyle{unsrt}

%\end{document}

%\fi %(remove if you want appedix to show, also remove the \iffalse command above)

\end{document}